\DeclareMathOperator{\Ex}{\mathbb{E}}% expected value
\newtheorem{theorem}{Theorem}%[section]
\newtheorem{definition}{Definition}
\newtheorem{observation}{Observation}
\newtheorem{lemma}{Lemma}
\newtheorem{example}{Example}
\renewcommand{\vec}[1]{\mathbf{#1}}
\newtheorem{mytheorem}[theorem]{Theorem}
\newtheorem{mylemma}[lemma]{Lemma}
\newtheorem{mydefinition}[definition]{Definition}
\crefname{figure}{Figure}{Figure}
\definecolor{mygreen}{rgb}{0.0, 0.55, 0.0}
\definecolor{blue-violet}{rgb}{0.54, 0.17, 0.89}
\newcommand{\eps}{\varepsilon}
\newcommand{\N}{\mathbb{N}}
\renewcommand{\R}{\mathbb{R}}
\newcommand{\cA}{\mathcal{A}}
\newcommand{\cB}{\mathcal{B}}
\newcommand{\cC}{\mathcal{C}}
\newcommand{\cO}{{O}}
\newcommand{\cH}{\mathcal{H}}
\renewcommand{\cP}{\mathcal{P}}
\newcommand{\cX}{\mathcal{X}}
\newcommand{\cW}{\mathcal{W}}
\newcommand{\cZ}{\mathcal{Z}}
\newcommand{\argmin}{\mathrm{argmin}}
\newcommand{\X}{\vec{x}}
\newcommand{\Y}{\vec{y}}
\newcommand{\Z}{\vec{z}}
\newcommand{\numPaths}[1]{\mathfrak{q}_{#1}}
\newcommand{\multiplicity}{\mu}
\newcommand{\explain}[1]{\tag{\textcolor{gray}{#1}}}
\title{The Sharp Power Law of Local Search on Expanders\footnote{Alphabetical author ordering. This work was done in part while the authors were visiting the Simons Institute for the Theory of Computing. This research/project is supported by the National Research Foundation, Singapore under its AI Singapore Programme (AISG Award No: AISG-PhD/2021-08-013). This research is  supported  by the US National Science Foundation under grant CCF-2238372.}}
\author{Simina Br\^anzei\footnote{Purdue University. E-mail: simina.branzei@gmail.com}
\and 
Davin Choo\footnote{National University of Singapore. E-mail: davin@u.nus.edu.}
\and 
Nicholas Recker\footnote{Purdue University. E-mail: nrecker@purdue.edu}
}
\date{}
\begin{document}

	\maketitle

\begin{abstract}
% 12 main pages, no page limit for appendix.

%Local search  is a powerful heuristic in optimization and computer science, the  complexity of which has been studied in  the white box and black box models. The latter type of complexity, also known as query complexity, is well   understood on the grid and the   hypercube, but much less is known beyond. In the query complexity setting, we are given  a graph $G = (V,E)$ and oracle access to a function $f : V \to \mathbb{R}$. The goal is to find a vertex $v$ that is a local minimum (i.e. with $f(v) \leq f(u)$ for all $(u,v) \in E$) using as few queries to the oracle as possible.

Local search  is a powerful heuristic in optimization and computer science, the  complexity of which has been studied in  the white box and black box models. In the black box model, we are given  a graph $G = (V,E)$ and oracle access to a function $f : V \to \mathbb{R}$. The local search problem is to find a vertex $v$ that is a local minimum, i.e. with $f(v) \leq f(u)$ for all $(u,v) \in E$, using as few queries to the oracle as possible. The query complexity is well   understood on the grid and the   hypercube, but much less is known beyond. 

\medskip 

We show  the query complexity  of local search on $d$-regular expanders with constant degree is $\Omega\left(\frac{\sqrt{n}}{\log{n}}\right)$, where $n$ is the number of vertices of the graph. This  matches within a logarithmic factor the upper bound of $\cO(\sqrt{n})$  for constant degree graphs from \cite{aldous1983minimization}, implying that steepest descent with a warm start is essentially an optimal algorithm for expanders. The best  lower bound known from prior literature was $\Omega\left(\frac{\sqrt[8]{n}}{\log{n}}\right)$,  shown by \cite{santha2004quantum} for  quantum and randomized algorithms.

\medskip

We obtain this result by considering a broader framework of graph features such as vertex congestion and separation number. 
%The vertex congestion $g$ of a graph is the smallest integer for which there exists an all-pairs set of paths such that each vertex is used at most $g$ times across all the paths. 
We show that for each graph, the randomized query complexity of local search is  $\Omega\left(\frac{n^{1.5}}{g}\right)$, where $g$ is the vertex congestion of the graph; and 
 $\Omega\left(\sqrt[4]{\frac{s}{\Delta}}\right)$, where $s$ is the separation number and $\Delta$ is the maximum degree. For separation number the previous  bound was 
  $\Omega\left(\sqrt[8]{\frac{s}{\Delta}} /\log{n}\right)$, given by  \cite{santha2004quantum} for  {quantum} and randomized  algorithms.

We also show a variant of the relational adversary method from \cite{Aaronson06}. Our variant is asymptotically at least as strong as the version in \cite{Aaronson06} for all randomized algorithms,  as well as  strictly stronger on some problems and easier to apply in our setting.

\medskip

%
%We use a novel variant of Aaronson's relational adversary approach to provide lower bounds for local search for randomized algorithms on general graphs in terms of two graph characteristics: congestion and separation number.
%For graphs where all-pairs routing is possible with vertex congestion $g$, we show a bound of $\Omega(n^{1.5}/g)$ for the randomized query complexity of local search.
%On constant-degree expanders in particular, this bound is $\Omega(\sqrt{n}/\ln(n))$, which is within a log factor of Aldous' generic upper bound.
%We also show a bound of $\Omega((s/\Delta)^{1/4})$ for the randomized query complexity of local search, where $s$ is the separation number of the graph and $\Delta$ is the maximum degree of the graph.
%This result improves a prior result by Santha and Szegedy of $\Omega((s/\Delta)^{1/8})$ for the \emph{quantum} query complexity in a randomized context.
\end{abstract}

\textbf{Keywords:} local search, local minimum, graph theory, query complexity, congestion, separation number,  expansion, expanders, relational adversary

\newpage 

\tableofcontents

 \newpage 

\section{Introduction}
\label{sec:intro}

% davin{: Talk about local search, connections with optimization. Optimization in non-euclidean space means there is a graph that is not necessarily the grid. Examples? Then discuss how local search hardness is related to the geometry of the graph.
%
% Mention aaronson's relational adversary statement here, at least an informal version (so that we can contrast/refer to it in our contributions subsection)}
Local search is a powerful heuristic for solving hard optimization problems, which works by  starting with an initial solution to a problem and then iteratively improving  it. 
 Its simplicity and ability to handle large and complex search spaces make it a useful tool in a wide range of fields, including computer science,  engineering, optimization,  economics, and finance. 

The complexity of local search has been extensively studied in both  the white box  model (see, e.g., \cite{DBLP:journals/jcss/JohnsonPY88}) and the black box model (see, e.g., \cite{aldous1983minimization}). The latter type of complexity, also known as  query complexity, is well   understood when the neighbourhood structure of the underlying graph is the  Boolean  hypercube or the $d$-dimensional grid, but much less is known for general graphs.

Many optimization techniques rely on gradient-based methods. The speed at which gradient methods find a stationary point of a function can be estimated by analyzing the complexity of local search on the corresponding discretized space.  Constructions for analyzing the hardness of computing stationary points are often similar to those for local search, modulo handling the smoothness of the function (see, e.g., \cite{vavasis1993black}). 
Meanwhile, the difficulty of local search itself is strongly related to the neighbourhood structure of the  underlying graph.
At one extreme, local search on a line graph on $n$ nodes is easy and can be solved  via binary search in $\cO(\log{n})$ queries.
At the other extreme, local search on a clique on $n$ nodes takes $\Omega(n)$ queries, thus requiring brute force.

In this paper, we consider the following high level question: 
\emph{How does the geometry of the graph influence the complexity of local search?}
In general, the neighbourhood graph search structure in optimization settings may correspond to more general graphs beyond the well-studied Boolean hypercubes and $d$-dimensional grids.
For example, when the data in low rank matrix estimation is subjected to adversarial corruptions, it is helpful to consider the function on a Riemannian manifold rather than Euclidean space.
That is, the discretization of an optimization search space may not necessarily always correspond to some $d$-dimensional grid.
% In this paper we consider the high level question: 
%     \emph{How does the geometry of the graph influence the complexity of local search?}
% For example, in optimization settings
% the neighbourhood structure may not always be the Boolean hypercube or the grid.
% When the data
% in low rank matrix estimation
% is subjected to adversarial corruptions for instance,
% % the function is actually defined on 
% it is helpful to consider the function on
% a Riemannian manifold rather than Euclidean space, and thus the discretization of the space will not yield a $d$-dimensional grid.
Multiple works consider optimization in non-Euclidean spaces, such as that of \cite{bonnabel2013stochastic}, which adapts stochastic gradient descent to work on Riemannian manifolds.
See \cite{amari_book} and \cite{Boumal_manifolds} for more discussion.
% For more discussion, see the books of \cite{amari_book} or \cite{Boumal_manifolds}.
% \cite{bonnabel2013stochastic}  adapts stochastic gradient descent to the setting of Riemannian manifolds, in order to improve performance.

Our paper tackles the challenge of understanding local search on general graphs and obtains several new results by considering a broader framework of graph features such as vertex congestion and separation number. A corollary is a  lower bound of the right order for expanders with constant degree.

Our methodology is strongly inspired by, and can be seen as a variant of, the relational adversary method of \cite{Aaronson06}.
% Our methodology is strongly inspired by, and can be seen as a variant of, Aaronson's relational adversary \cite{Aaronson06}.
However, where Aaronson's method focuses on the contribution of a query towards distinguishing two inputs from all others, our method considers the aggregate impact of a query across many inputs at once.
% This allows our method to be provably stronger on some problems, and at least easier to apply in the case of local search.
This allows our method to be asymptotically at least as strong as the version in \cite{Aaronson06} for all randomized algorithms,  as well as strictly stronger on some problems and easier to apply in our setting.
This strength also comes at a cost: we get results for randomized algorithms, whilst Aaronson's method works in  quantum settings.
% These strengths come at a cost: we only get results for a randomized context, while Aaronson's method works in the quantum setting.

\paragraph{Roadmap to the paper.}

The model is defined in \cref{sec:model}. Our contributions are given in \cref{sec:contributions}.
% Preliminary notions and related work are given in \cref{sec:prelim} and \cref{sec:related-work} respectively.
% Preliminary notions and related work are given in \cref{sec:prelim}.
Related work is discussed in \cref{sec:related-work}.
Our variant of the relational adversary method is stated together with an example and discussion in \cref{sec:relational-adv-method}  (with full material in \cref{sec-appendix-relational-adversary}).
% Applications of our variant are given in \cref{sec:applications}, where we show how to instantiate \cref{thm:main-result-informal} to yield \cref{thm:congestion-result-informal} and \cref{thm:separation-number-result-informal}.

Lower bounds via vertex congestion, as well as the corollary for expanders, are given in \cref{sec:application-congestion} (with full material in \cref{sec:appendix-congestion}). 
Lower bounds via the  separation number  
 are  in \cref{sec:application-separation-number} (with full material in \cref{sec:appendix-separation-number}).

% Applications of our variant are given in \cref{sec:applications}, where we show how to obtain lower bound results with respect to graph congestion and graph separation number.
% Full proofs are deferred to the appendix. simina{Say what is in which appendix.}
Finally, 
\cref{sec:appendix-expansion-and-congestion} reviews some known results from prior work that we  use.
 \cref{sec:appendix-valid-functions-have-unique-local-minimum} provides the proof for a   lemma used throughout the paper.

\section{Model} \label{sec:model}
Let $G = (V,E)$ be a connected undirected  graph and $f : V \to \mathbb{R}$ a function defined on the vertices.  
A vertex $v \in V$  is a local minimum if $f(v) \leq f(u)$ for all $\{u,v\} \in E$.
We will write $V = [n] = \{1, \ldots, n\}$.

Given as input a graph $G$ and oracle access to function $f$, the local search problem is to find a local minimum of $f$ on $G$ using as few queries as possible. {Each query is of the form: ``Given a vertex $v$, what is $f(v)$?''}.

\paragraph{Query complexity.} The \emph{deterministic query complexity} of a task is the total number of queries necessary and sufficient for a correct deterministic algorithm to find a solution.
The \emph{randomized query complexity} is the expected number of queries required to find a solution with probability at least $9/10$ for each input, where the expectation is taken over the coin tosses of the protocol.

\paragraph{Congestion.}
Let $\cP = \{P^{u,v}\}_{u,v \in V}$  be an all-pairs set of paths in $G$, where $P^{u,v}$ is a path from $u$ to $v$.  {For convenience, we assume $P^{u,u} = (u)$  for all $u \in V$; our results will  hold even if  $P^{u,u} = \emptyset$.}

For a path $Q = (v_1, \ldots, v_s)$ in $G$, let $c^Q_v$ be the number of times a vertex $v \in V$ appears in $Q$ and $c^Q_e$ the number of times an edge $e \in E$ appears in $Q$.
The \emph{vertex congestion} of the set of paths $\cP$ is $\max_{v \in V} \sum_{Q \in \cP} c_v^Q$, while the  \emph{edge congestion} of $\cP$ is $\max_{e \in E} \sum_{Q \in \cP} c_e^Q$.

The \emph{vertex congestion of $G$} is the smallest integer $g$ for which the graph has an all-pairs set of paths $\mathcal{P}$ with vertex  congestion $g$. Clearly, $g \geq n$ since each vertex belongs to at least $n$ paths in $\mathcal{P}$ and $g \leq n^2$ since each vertex appears at most once on each path and there are $n^2$ paths in $\mathcal{P}$. 
 The \emph{edge congestion} $g_e$ is similarly defined, but with respect to the edge congestion of a  set of paths $\mathcal{P}$.

\paragraph{Separation number.}
For each subset of vertices $A \subseteq V$, let $\delta(A) \subseteq V \setminus A$ be the set of vertices outside $A$ and adjacent to vertices in $A$.
The \emph{separation number} $s$ of $G$  (see, e.g., \cite{graph_notions_survey}) is  \footnote{For example, the separation number of a barbell graph (i.e., two cliques of size $n/2$ connected by a single edge) is $n/8$, since the maximization will choose $H$ to be one clique of size $n/2 = 4n/8$ and the minimization will choose $A$ to be an arbitrary subset of $H$ of size $3n/8$; then $\delta(A)$ is the rest of the clique of size $4n/8 - 3n/8 = n/8$.}:
\[
s = \max_{H \subseteq V} \min_{\substack{A \subseteq H :\\ |H|/4 \;\leq\; |A| \;\leq\; 3|H|/4}} |\delta(A)| \;.
\]

\paragraph{$d$-regular expanders.}
For each set of vertices $S \subseteq V$, the edges with one endpoint in $S$ and another in $V \setminus S$ are called \emph{cut edges} and denoted $E(S, V \setminus S) = \{ (u,v) \in E \mid  u \in S, v \not\in S \}$. 
The graph is a $\beta$-expander if $|E(S, V \setminus S)| \geq \beta \cdot |S|$, for all $S \subseteq V$ with $0 < |S| \leq n/2$ (see, e.g. \cite{alon04}).
The graph is \emph{$d$-regular} if each vertex has degree $d$.

\paragraph{Distance.} For each $u,v \in V$, let $dist(u,v)$ be the length of the shortest path from $u$ to $v$.
% \paragraph{Notation.} For all $u,v \in V$, we denote the length of the shortest path from $u$ to $v$ by $dist(u,v)$.

\section{Our contributions}
\label{sec:contributions}

%We consider the following high level-question: \emph{how does the geometry of the graph influence the complexity of local search?}

Guided by the high level question of understanding how graph geometry influences hardness of local search, we obtain the following results.
%
%\bigskip

%En route to analyzing the hardness of local search, we first show a variant of the relational adversary method due to \cite{Aaronson06}.

\subsection{Our variant of the relational adversary method} Our first contribution is to design a new variant of the relational adversary method of \cite{Aaronson06}.
While \cite{Aaronson06} relates the query complexity to the progress made on pairs of inputs, we relate the query complexity to  progress made on {subsets of inputs} via a different expression.

The precise statement of our variant is \cref{thm:our-variant} in \cref{sec:relational-adv-method}. 
In \cref{sec:relational-adv-method}, we also show an example on which our variant is strictly stronger,  giving a tight lower bound for the query complexity of a simple ``matrix game''. Then we prove our variant is asymptotically at least as strong in general, for randomized algorithms.

% Armed with \cref{thm:our-variant},

\subsection{Lower bounds for local search via congestion} %Our next contribution is to relate the hardness of local search with the  congestion of the underlying graph. 
Next we  give the first known lower bound for local search as a function of  (vertex) congestion, which is enabled by our \cref{thm:our-variant}.

%By invoking \cref{thm:our-variant} suitably, we obtain the {first known} lower bound for local search as a function of the (vertex) congestion of the underlying graph.

\begin{restatable}{mytheorem}{lowcongestionimplieslocalsearchhard}
\label{thm:low_congestion_implies_local_search_hard}
Let $G = (V, E)$ be a connected undirected graph with $n$ vertices.
Then the randomized query complexity of local search on $G$ is $\Omega\left(\frac{n^{1.5}}{g}\right)$, where  $g$ is the vertex congestion of the graph.
\end{restatable}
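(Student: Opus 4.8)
The plan is to instantiate our relational adversary method (\cref{thm:our-variant}) on a family of ``hidden descending walk'' functions built directly from a congestion-optimal path system of $G$. Fix an all-pairs path system $\cP=\{P^{u,v}\}_{u,v\in V}$ realizing the vertex congestion $g$. We may assume $g\le n^{1.5}$, since otherwise $n^{1.5}/g=\cO(1)$ and there is nothing to prove. Let $k$ be a parameter (the number of ``checkpoints''), to be fixed later so that the walks below have length $\Theta(\sqrt n)$. The hard distribution samples checkpoints $v_0,v_1,\dots,v_k\in V$ independently and uniformly, forms the walk $W_{\vec v}=P^{v_0,v_1}\cdot P^{v_1,v_2}\cdots P^{v_{k-1},v_k}$ by concatenating the corresponding paths of $\cP$, and takes $f_{\vec v}$ to decrease strictly along $W_{\vec v}$ (the value at a vertex being essentially the negative of its position along the walk), while off the walk $f_{\vec v}$ takes larger values that guide a greedy step back towards the walk. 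The expected length of $W_{\vec v}$ is $k\cdot\frac1{n^2}\sum_{a,b}|P^{a,b}|\le kg/n$, so choosing $k=\Theta(n^{1.5}/g)$ makes the walk occupy $\Theta(\sqrt n)\ll n$ vertices; its far endpoint $v_k$ is then the unique local minimum of $f_{\vec v}$, which we certify with the valid-function lemma (see \cref{sec:appendix-valid-functions-have-unique-local-minimum}), which also absorbs the bookkeeping caused by self-intersections of the concatenated walk. Any correct algorithm on input $f_{\vec v}$ must therefore output $v_k$, i.e., must in effect trace the hidden walk to its end.

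To bound this via \cref{thm:our-variant} we equip $\{f_{\vec v}\}$ with a relation that is blind to the prefix already discovered and maximally uncertain about the undiscovered suffix: relate $f_{\vec v}$ and $f_{\vec v'}$ precisely when their checkpoint sequences agree on an initial segment $v_0,\dots,v_j$ and then branch ($v_{j+1}\ne v'_{j+1}$), with a weight depending only on the branch level $j$, chosen so that the total weight is spread evenly over the $k$ levels. Two quantities then drive the bound. First, the total relatedness of a typical input is large, because a prefix of length $j$ has $\Theta(n^{k-j})$ relevant completions and the level-balancing makes every level contribute comparably. Second --- and this is where the hypothesis enters --- the relatedness that a single query can affect is small: querying a vertex $u$ changes $f_{\vec v}(u)$ relative to a related $f_{\vec v'}(u)$ only if $u$ lies on some segment $P^{v_{i-1},v_i}$ or $P^{v'_{i-1},v'_i}$ at or past the branch point, and by the definition of vertex congestion at most $g$ of the $n^2$ ordered pairs $(a,b)$ satisfy $u\in P^{a,b}$; over the random checkpoints this caps the fraction of checkpoint sequences a query can touch at $\cO(kg/n^2)$. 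Substituting these two estimates into \cref{thm:our-variant} and using $k=\Theta(n^{1.5}/g)$ gives the claimed $\Omega(n^{1.5}/g)$; morally, the bound certifies that the algorithm must spend $\Omega(1)$ queries to advance past each checkpoint and cannot learn the checkpoints faster than roughly one at a time.

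The step I expect to be the main obstacle is making the per-query estimate uniform enough to survive inside the adversary bound. A vertex that happens to sit on the hidden walk just before $v_k$ separates $f_{\vec v}$ from every one of its relatives at once, so the pairwise minimum taken by the relational adversary of \cite{Aaronson06} is ruined by even one such ``lucky'' query. This is exactly why \cref{thm:our-variant} is the right instrument: its bound aggregates the effect of a query across the whole input family rather than over a single worst pair, and over the random $\vec v$ only a vanishing fraction of vertices are lucky, so these locally catastrophic queries are diluted. Choosing the level weights so that this aggregation goes through cleanly --- and, in tandem, designing the off-walk part of $f_{\vec v}$ so that neither greedy descent nor a warm start can shortcut the walk, which is what forces the walk length to be $\Theta(\sqrt n)$ and hence $k=\Theta(n^{1.5}/g)$ --- is the technical heart of the argument; the remaining pieces (the expected-length estimate, the unique-minimum claim, and the final substitution) are comparatively routine.
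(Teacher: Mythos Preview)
Your high-level plan is on target: the paper also instantiates \cref{thm:our-variant} on staircase functions built from a congestion-optimal path system, with the relation weighted by the length of the shared checkpoint prefix. But two coupled choices in your sketch are off and would not yield $\Omega(n^{1.5}/g)$.

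\textbf{The number of checkpoints should be $\Theta(\sqrt n)$, not $\Theta(n^{1.5}/g)$.} Your rationale --- making the expected walk length $\Theta(\sqrt n)$ so that warm-start steepest descent cannot shortcut it --- is a constraint on the \emph{upper bound} algorithm, not on the lower-bound construction; the adversary argument never uses walk length. The real constraints on $k$ are (i) that a random $k$-tuple of checkpoints is ``good'' (all distinct) with constant probability, and (ii) that the $k^2/n$ term appearing in the $q(\cZ)$ bound below stays $O(1)$; both force $k\le \sqrt n$. With the correct per-query analysis one gets $M(\cZ)/q(\cZ)\in\Theta(kn/g)$, which is maximized at $k=\sqrt n$ and then equals $n^{1.5}/g$. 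With your $k=n^{1.5}/g$ this ratio is $n^{2.5}/g^2$, strictly weaker whenever $g>n$. Separately, your own back-of-envelope (``fraction touched $\le kg/n^2$'') gives $M/q\approx n^2/(kg)=\sqrt n$, not $k$; the ``$\Omega(1)$ per checkpoint, hence $\Omega(k)$'' intuition is not what the adversary actually produces.

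\textbf{The per-query estimate $kg/n^2$ is too optimistic and does not hold uniformly over $\cZ$.} Your bound treats every tail segment as having two random endpoints, but the \emph{first} segment after the branch at level $j$ has its start pinned at the shared checkpoint $x_j$: the probability that a query $u$ hits it is $\numPaths{u}(x_j)/n$, which can be as large as $1$, not $g/n^2$. This matters because \cref{thm:our-variant} takes a minimum over \emph{all} $\cZ$, so you cannot average this term away over random $\vec x$. The paper's fix is to restrict to ``good'' $\vec x$ (distinct checkpoints) and use the identity $\sum_{j}\numPaths{u}(x_j)\le \sum_{w\in[n]}\numPaths{u}(w)\le g$; this is exactly where the congestion hypothesis enters a second time and why the final $q(\cZ)$ bound is $O(|\cZ|\cdot g\cdot n^{k-1}(1+k^2/n))$ rather than the smaller quantity your averaging suggests. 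Once you adopt $k=\sqrt n$ and this pinned-endpoint accounting, the rest of your outline lines up with the paper's proof.
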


Since $g \in [n, n^2]$, \cref{thm:low_congestion_implies_local_search_hard} {cannot} be used to show a lower bound stronger than $\Omega(\sqrt{n})$ queries, matching a general upper bound  of $\cO(\sqrt{n})$ for graphs with bounded degree (\cite{aldous1983minimization}).
\cref{thm:low_congestion_implies_local_search_hard} gives meaningful results precisely when one can construct an all-pairs set of paths with vertex congestion $g = o(n^{1.5})$; e.g.\ our bound is vacuous on trees since $g \in \Theta(n^2)$ on trees.

\cref{thm:low_congestion_implies_local_search_hard} also implies a lower bound of $\Omega\left(\frac{n^{1.5}}{g_e \cdot \Delta} \right)$ on any graph $G$, where $g_e$ is the edge congestion and $\Delta$ the maximum degree of $G$.

\paragraph{{High level approach.}} To obtain the result in \cref{thm:low_congestion_implies_local_search_hard}, we apply Yao's lemma and design a hard input distribution where the input is  a random function $f : V \to \mathbb{R}$ induced by a ``staircase'': the value of the function  $f$ outside the staircase is equal to the distance to the entrance of the staircase, while the value at vertices  on the staircase  decreases as one walks away from the entrance. The local minimum is unique and can be found at the end of the staircase. This type of construction is classical (see, e.g., \cite{aldous1983minimization,vavasis1993black,Aaronson06}). We designate the space of such functions as $\mathcal{X}$.

A staircase is characterized by a sequence of vertices (milestones). Then we connect each pair $w,z$ of consecutive milestones with the path $P^{w,z}$ from the all-pairs set of paths $\mathcal{P} = \{ P^{u,v}\}_{u,v \in [n]}$ with ``low'' congestion (that is, with congestion as low as possible given the graph).

% To obtain a random staircase (and thus a random function induced by  this staircase), we  first select uniformly at random a number of vertices (milestones). Then  we connect each pair $w,z$ of consecutive milestones  with the path $P^{w,z}$ from the all-pairs set of paths  $\mathcal{P} = \{ P^{u,v}\}_{u,v \in [n]}$ with ``low'' congestion (that is, with congestion as low as possible given the graph).

Then, we design a function $r$ --- also called the relation --- that ``relates'' any two such functions.
Any such non-zero relation induces a distribution over functions where each function $F_1$ is sampled with likelihood equal to $\sum_{F_2 \in \mathcal{X}} r(F_1,F_2)$.

Our choice of $r$ increases according to the number of milestones the underlying staircases have in common (specifically, by the longest initial prefix of milestones shared by the two staircases).
With our choice of $r$, the distribution over staircases is as though the milestones were chosen uniformly at random without replacement.

Any two functions $F_1$ and $F_2$ with long initial prefix in their corresponding staircases are very similar and so will be hard to distinguish by an algorithm.
Roughly speaking, without querying sufficiently many vertices and chancing upon a vertex which $F_1$ and $F_2$ disagrees on, the algorithm will not be able to distinguish them.
In order to capture this difficulty of distinguishing  such $F_1$ and  $F_2$, the relation $r(F_1,F_2)$ will be set to a high value.
Our congestion lower bound then follows by invoking  our variant of the relational adversary (\cref{thm:our-variant}) with this choice of $r$. 
% }

Our  approach was  inspired by previous works such as   \cite{Aaronson06}, who gave lower bounds for the $d$-dimensional grid and the Boolean hypercube, and of  \cite{dinh2010quantum}, who gave lower bounds for Cayley and vertex transitive graphs by using a  system of \emph{carefully chosen} shortest paths rather than arbitrary shortest paths; this inspired our choice of the set of paths $\mathcal{P}$. We explain in more detail the comparison with \cite{Aaronson06} and \cite{dinh2010quantum} in Section  \ref{sec:comparison} of Related work.

% We also get a lower bound for local search with respect to the separation number.
\subsection{Lower bounds for local search via separation number} We also give an \emph{improved} lower bound for local search with respect to the graph separation number $s$.
Our construction is heavily inspired by the one in \cite{santha2004quantum} , who gave a lower bound of  $\Omega\left(\sqrt[8]{\frac{s}{\Delta}}/\log n\right)$, for both the quantum and classical randomized algorithms.
% davin{wait... the related work section has division by $\log n$. is something missing here?}hi nick :) hello
Adapting this construction within the framework of \cref{thm:our-variant} is non-trivial however. %We improve the exponent from $1/8$ to $1/4$ in the randomized setting, as well as drop  the $\log$.

% Using \cref{thm:our-variant}, we can also get an   improved lower bound for local search with respect to the separation number.

% Prior to our work, a quantum lower bound of $\Omega((s/\Delta)^{1/8})$ was given by \cite{santha2004quantum}, which clearly also applies to classical randomized algorithms.

% Our  contribution in this case is to improve   the exponent from $1/8$ to $1/4$ in the setting of randomized algorithms.

\begin{restatable}{mytheorem}{theoremseparationnumberlowerbound}
\label{thm:separation-number-result}
%\begin{mytheorem}[Separation number lower bound (Informal)]
%\label{thm:separation-number-result-informal}
Let $G = (V, E)$ be a connected undirected graph with $n$ vertices, maximum degree $\Delta$, and separation number $s$.  
Then the randomized query complexity of local search on $G$ is
$\Omega\left(\sqrt[4]{\frac{s}{\Delta}}\right)$.
% $\Omega\left(\sqrt[4]{\frac{s}{\Delta}}\right)$.
%\end{mytheorem}
\end{restatable}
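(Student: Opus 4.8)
The plan is to reduce the separation-number lower bound to the congestion lower bound of \cref{thm:low_congestion_implies_local_search_hard}, by extracting from a graph with separation number $s$ a suitably structured subgraph on which we can build a staircase construction with controlled vertex congestion. Concretely, let $H \subseteq V$ be the subset witnessing the separation number, so that every balanced cut of $H$ has boundary of size at least $s$. The separation number is closely tied to treewidth-like parameters: a graph whose every balanced separator (with respect to some vertex set $H$) is large must contain a large "well-connected" piece --- one should be able to find within $H$ a set of $\Theta(s/\Delta)$ vertices that are pairwise joined by paths forming an all-pairs path system of low vertex congestion. The target is congestion roughly $g = O(k^{1.5})$ on a piece of size $k = \Theta(s/\Delta)$, because then \cref{thm:low_congestion_implies_local_search_hard} applied to that piece yields $\Omega(k^{1.5}/g) = \Omega(k^{1.5}/k^{1.5}) \cdot$ (something) --- so in fact we want congestion $g = O(k \cdot \sqrt[2]{k})$... more carefully, to get the stated $\Omega(k^{1/4}) = \Omega((s/\Delta)^{1/4})$ bound out of the $n^{1.5}/g$ form, we need a path system on the $k$-vertex piece with vertex congestion $g = O(k^{5/4})$.

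The main steps, in order, would be: (1) Pass to the witness set $H$ with $|\delta(A)| \ge s$ for all balanced $A \subseteq H$. (2) Use this robust-separator property to build inside (a bounded-diameter neighborhood of) $H$ an "expander-like" or at least "low-congestion-routable" substructure $W$ of size $k = \Theta(s/\Delta)$; the factor $\Delta$ enters because boundary vertices can absorb up to $\Delta$ edges each, so vertex-expansion $s$ converts to edge-expansion roughly $s/\Delta$, and standard arguments then give a subset on which one can route all pairs with low congestion. (3) On $W$, invoke the machinery behind \cref{thm:low_congestion_implies_local_search_hard}: embed a staircase whose milestones lie in $W$ and whose connecting segments are the low-congestion paths, so that the induced hard distribution over functions $f : V \to \R$ (extending by distance-to-entrance outside the staircase, as in the high-level approach) is indistinguishable without many queries. (4) Conclude that the randomized query complexity is $\Omega(k^{1.5}/g) = \Omega((s/\Delta)^{1/4})$ with the congestion bound from step (2). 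One subtlety is that the construction must live inside the ambient graph $G$, not an abstract minor, so the paths must be genuine walks in $G$; this is exactly why following \cite{santha2004quantum}'s geometric construction (adapted to the relational-adversary framework of \cref{thm:our-variant}) rather than a pure minor-theoretic argument is the cleaner route, and the excerpt flags that "adapting this construction within the framework of \cref{thm:our-variant} is non-trivial."

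The step I expect to be the main obstacle is step (2): converting the min-max separation guarantee into an explicit low-congestion all-pairs path system on a piece of size $\Theta(s/\Delta)$. The separation number only promises that balanced cuts are large; turning this into routability (low vertex congestion) requires something like a multicommodity-flow / sparsest-cut duality argument or a recursive decomposition, and one has to be careful that the flow paths stay within the chosen piece and that vertex congestion (not just edge congestion) is controlled --- which is where the $\Delta$ loss and possibly a polylogarithmic slack would naturally appear, though the theorem statement is clean, so presumably the construction is direct enough to avoid logs. A secondary obstacle is bookkeeping in step (3)--(4): ensuring the extension of $f$ outside $W$ to all of $V$ preserves uniqueness of the local minimum (this is handled by the lemma referenced in \cref{sec:appendix-valid-functions-have-unique-local-minimum}) and that milestones chosen uniformly without replacement from $W$ still give the indistinguishability bound when $W$ is only a small fraction of $V$. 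I would handle that by noting the adversary argument is local to the staircase: queries outside any plausible staircase reveal only distance information common to all inputs in the support, so they do not help, and the effective instance size is $k$, not $n$.
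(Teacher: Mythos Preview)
Your primary plan---reduce to the congestion bound of \cref{thm:low_congestion_implies_local_search_hard} by extracting a $k=\Theta(s/\Delta)$-vertex piece with all-pairs vertex congestion $O(k^{5/4})$---is not what the paper does, and step~(2) as you yourself flag is a genuine gap. The separation number hands you a set $H$ in which every balanced cut has boundary at least $s$, but $|H|$ may be far larger than $s$ (for the $\sqrt{n}\times\sqrt{n}$ grid, $H=V$ while $s=\Theta(\sqrt{n})$), so $H$ is not expander-like, and you have not shown how to locate a smaller piece with the required routing guarantee. No multicommodity-flow argument in the paper supplies this, and I do not see a clean one that avoids log factors and lands at $O(k^{5/4})$.

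The paper follows exactly the ``cleaner route'' you mention in passing but do not develop: it takes the Santha--Szegedy \emph{path arrangement} structure (\cref{lem:m_lb}), which from separation number $s$ produces $m\ge\lfloor\sqrt{s/(2\Delta)}\rfloor$ disjoint connected clusters $N_1,\dots,N_m$ together with, for every pair $(i,j)$, a family of $m$ inter-cluster paths that are vertex-disjoint outside $N_i\cup N_j$. Staircases are then indexed by sequences $\X\in\{1\}\times[m]^{2c}$ with $c\approx\sqrt{m}/2$, where odd coordinates pick the next cluster and even coordinates pick which of the $m$ inter-cluster paths to use; intra-cluster shortest paths stitch consecutive inter-cluster hops together. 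This is \emph{not} an all-pairs path system on a vertex set, so \cref{thm:low_congestion_implies_local_search_hard} is not invoked; instead \cref{thm:our-variant} is applied directly with $r(g_{\X,b_1},g_{\Y,b_2})=m^j$ for $j$ the largest odd shared-prefix length. The disjointness of the inter-cluster paths is what replaces low congestion: for a fixed vertex $v$ outside the current cluster, at most one choice of the even coordinate puts $v$ on the next hop, so the tail-count satisfies $|T_j|\le(2c+1)m^{2c-j}$ and one obtains $q(\cZ)\le O(|\cZ|\,m^{2c+1})$ against $M(\cZ)\ge\Omega(|\cZ|\,c\,m^{2c+1})$, giving $\Omega(c)=\Omega(\sqrt{m})=\Omega((s/\Delta)^{1/4})$. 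The one subtlety beyond the congestion proof is that for the single odd $j$ with $v\in N_{x_j}$ the tail-count bound fails, and that term is handled separately by the crude estimate $m^{2c+1}$.
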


The best known upper bound with respect to graph separation number is $\cO((s + \Delta) \cdot \log n)$ due to \cite{santha2004quantum}, which was obtained via a refinement of the divide-and-conquer procedure of \cite{llewellyn1989local}.
It is an interesting open question whether the current upper and lower bounds can be improved.

% \cite{santha2004quantum} gave a quantum lower bound of $\Omega\left(\sqrt[8]{\frac{s}{\Delta}}\right)$, which clearly  also applies to classical randomized algorithms.  Thus our contribution is to improve the exponent from $1/8$ to $1/4$ in the case of randomized algorithms.
 
%davin{should the following upper bound paragraph just be in related work? I don't see how it relates to the discussion of the separation number result (as compared to how Aldous's algorithm indicates some kind of tightness)}
%nick{Agreed}
%simina{It makes sense here to give context for the upper bound}
% \cite{santha2004quantum} also showed an upper bound of $O\left(\left( s + \Delta \right) \cdot \log{n}\right)$, obtained via a  refinement of a divide-and-conquer procedure due to \cite{llewellyn1989local}.
 
% Our result is closely related to the lower bound of $\Omega\left(\left(\frac{s}{\Delta}\right)^{1/8}\right)$ for the \emph{quantum} query complexity for the same problem proved in \cite{santha2004quantum}; we use the same construction, but apply our novel lower bound technique (see \cref{thm:our-variant}) to get a tighter bound in the randomized setting.
 
\subsection{Corollaries for expanders, Cayley graphs, and the hypercube} Since $d$-regular $\beta$-expanders with constant $d$ and $\beta$ admit an all-pairs set of paths with congestion $\cO(n \cdot \log{n})$ (e.g., see \cite{BFU99}), we  get the next lower bound for constant degree expanders.

%davin{The expander result is only a corollary right? should we change the paper title to be more accurate? perhaps a shorter version of the following: ``local search hardness with respect to graph congestion and separation number via a new variant of the relational adversary method''?}
%
%simina{But this is the most striking result, so perhaps we can leave it.. We can discuss options.}

\begin{restatable}{mycorollary}{corollaryExpandersLowerBound}
% {corollary_expanders_lower_bound}
\label{cor:expanders_lower_bound}
%\begin{mycorollary}[Lower bound for local search on expanders] %\label{cor:expanders_lower_bound}
Let $G = (V, E)$ be an undirected $d$-regular $\beta$-expander with $n$ vertices, where $d$ and $\beta$ are constant.
Then the randomized query complexity of local search on $G$ is $\Omega\left(\frac{\sqrt{n}}{\log n}\right)$.
% $\Omega\left(\frac{\sqrt{n}}{\ln{n}}\right)$.
%\end{mycorollary}
\end{restatable}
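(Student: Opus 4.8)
The plan is to obtain \cref{cor:expanders_lower_bound} as an immediate consequence of \cref{thm:low_congestion_implies_local_search_hard} (or its edge-congestion form), once we exhibit an all-pairs set of paths of low congestion in a constant-degree expander. So the whole task reduces to proving that a $d$-regular $\beta$-expander with $d,\beta = \Theta(1)$ has vertex congestion $g = \cO(n\log n)$.

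\textbf{Step 1: a low-congestion all-pairs path system.} I would invoke the classical routing machinery for expanders. A constant-degree $\beta$-expander has diameter $\cO(\log n)$, and more is true: any permutation can be routed along paths of length $\cO(\log n)$ with edge congestion $\cO(\log n)$, with the hidden constants depending only on $d$ and $\beta$ (this is the content of \cite{BFU99} and related works). To build an all-pairs set $\mathcal{P} = \{P^{u,v}\}_{u,v\in[n]}$, decompose the complete demand pattern on $[n]$ (one unit of flow for each ordered pair) into $\cO(n)$ permutations --- for a crude bound it suffices to use the $n$ cyclic shifts $i \mapsto i + k \pmod n$ for $k = 0,\dots,n-1$ --- route each permutation as above, and superimpose. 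The resulting $\mathcal{P}$ has edge congestion $g_e = \cO(n\log n)$, with every path of length $\cO(\log n)$.

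\textbf{Step 2: from edge to vertex congestion.} Since $G$ is $d$-regular with $d$ constant, every path that passes through a vertex $v$ enters and leaves along edges incident to $v$, so the number of path-traversals through $v$ is at most $\tfrac{d}{2}$ times the maximum edge congestion over edges incident to $v$, plus the at most $n$ paths for which $v$ is an endpoint. Hence $g \le d\cdot g_e + n = \cO(n\log n)$. (Alternatively, one can skip this conversion entirely and directly apply the $\Omega\!\left(n^{1.5}/(g_e\Delta)\right)$ bound noted after \cref{thm:low_congestion_implies_local_search_hard}, with $\Delta = d = \cO(1)$.)

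\textbf{Step 3: conclude.} Plugging $g = \cO(n\log n)$ into \cref{thm:low_congestion_implies_local_search_hard} yields randomized query complexity $\Omega\!\left(\tfrac{n^{1.5}}{g}\right) = \Omega\!\left(\tfrac{n^{1.5}}{n\log n}\right) = \Omega\!\left(\tfrac{\sqrt{n}}{\log n}\right)$, as claimed. The only real content is Step~1, and the only thing to be careful about there is that the constants in the routing/congestion bound depend solely on the (constant) parameters $d$ and $\beta$, and that the passage between edge and vertex congestion is handled correctly; everything else is bookkeeping on top of \cref{thm:low_congestion_implies_local_search_hard}.
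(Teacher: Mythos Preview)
Your proposal is correct and follows essentially the same route as the paper: invoke the expander routing result of \cite{BFU99} to obtain an all-pairs path system with edge congestion $\cO(n\log n)$, convert to vertex congestion using the constant degree, and plug into \cref{thm:low_congestion_implies_local_search_hard}. The only cosmetic difference is that the paper applies \cref{thm:bfu99} directly to all $n^2$ pairs (taking $K=n^2$, $\alpha(n)=n\ln n$, $s=2n$) rather than decomposing into $n$ permutations, and it uses the cruder bound $g \le d\cdot g_e$ for the edge-to-vertex conversion; both variants are equivalent for the purpose at hand.
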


The lower bound of \cref{cor:expanders_lower_bound} is tight within a logarithmic factor. A simple algorithm known as steepest descent with warm start (\cite{aldous1983minimization}) can be used to see this: 
\begin{quote}
    First query $t$ vertices $x_1, \ldots, x_t$ selected uniformly at random and pick the vertex $x^*$ that minimizes the function among these\footnote{That is, the vertex $x^*$ is defined as: $x^* = x_j$, where $j = \argmin_{i=1}^t f(x_i)$.}. Then run steepest descent from $x^*$ and stop when no further improvement can be made, returning the final vertex reached. When $t = \sqrt{n\Delta}$, where $\Delta$ is the maximum degree of the graph, the algorithm issues $\cO(\sqrt{n\Delta})$ queries in expectation.
\end{quote}
Thus steepest descent with a warm start has expected query complexity $\cO(\sqrt{n})$ on constant degree expanders. Our lower bound implies this algorithm is essentially optimal  on such graphs.

We also get a  lower bound  as a function of the expansion and maximum degree of the graph $G$.

\begin{restatable}{mycorollary}{corollaryViaExpansionLowerBound}
% {corollary_expanders_lower_bound}
\label{cor:expansion_parameter_lower_bound}
Let $G = (V, E)$ be an undirected $\beta$-expander with $n$ vertices and maximum degree $\Delta$.
Then the randomized query complexity of local search on $G$ is $\Omega\left(\frac{\beta \sqrt{n}}{\Delta   \log^2{n}}\right)$.
% $\Omega\left(\frac{\sqrt{n}}{\ln{n}}\right)$.
%\end{mycorollary}
\end{restatable}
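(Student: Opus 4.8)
The plan is to combine \cref{thm:low_congestion_implies_local_search_hard} --- specifically the edge-congestion form noted right after it, which gives a randomized query lower bound of $\Omega\!\left(\frac{n^{1.5}}{g_e \cdot \Delta}\right)$ --- with an upper bound on the edge congestion of a $\beta$-expander. Concretely, it suffices to exhibit an all-pairs set of paths $\mathcal{P} = \{P^{u,v}\}_{u,v \in V}$ in $G$ with edge congestion $g_e = O\!\left(\frac{n \log^2 n}{\beta}\right)$: plugging this bound in gives $\Omega\!\left(\frac{n^{1.5}\beta}{n\log^2 n \cdot \Delta}\right) = \Omega\!\left(\frac{\beta\sqrt{n}}{\Delta \log^2 n}\right)$, as claimed. (In fact the argument below produces $g_e = O(n\log n/\beta)$, hence even $\Omega(\beta\sqrt n/(\Delta\log n))$; we only need the stated weaker bound.)

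To bound $g_e$, I would route an all-pairs multicommodity flow through $G$ and then round it to integral paths. The $\beta$-expansion hypothesis controls the uniform sparsest cut: for every $S \subseteq V$ with $1 \le |S| \le n/2$ one has $\frac{|E(S, V\setminus S)|}{|S|\cdot |V\setminus S|} \ge \frac{\beta |S|}{|S|\cdot n} = \frac{\beta}{n}$. By the approximate max-flow / min-cut theorem for uniform multicommodity flow (Leighton--Rao, with the $O(\log n)$ flow-cut gap of Linial--London--Rabinovich for $\Theta(n^2)$ commodities), the graph with unit edge capacities supports a concurrent flow sending $\Omega\!\left(\frac{\beta}{n\log n}\right)$ units simultaneously between every ordered pair. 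Scaling all flows up by the reciprocal of this value yields a fractional flow sending exactly one unit between each of the $n(n-1)$ ordered pairs, with fractional edge congestion $O\!\left(\frac{n\log n}{\beta}\right)$.

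Finally I would decompose the unit flow of each pair into a convex combination of $u$-to-$v$ paths, sample one path per pair independently from this distribution, and take $\mathcal{P}$ to be the sampled paths. For a fixed edge $e$, its integral load is a sum of independent $[0,1]$ random variables whose mean is the fractional congestion at $e$, which is $O(n\log n/\beta)$; a Chernoff bound together with a union bound over the at most $n^2$ edges shows that with positive probability every edge has integral load $O\!\left(\frac{n\log n}{\beta}\right)$ (using $\beta \le \Delta \le n$ to absorb the additive slack). This gives the desired $\mathcal{P}$, and \cref{thm:low_congestion_implies_local_search_hard} then immediately yields the corollary. The main obstacle is exactly this congestion estimate --- invoking the flow-cut gap with the right normalization and carrying out the integral rounding --- since the query-complexity conclusion is automatic once $g_e$ is bounded; I would also remark that when $d, \beta = \Theta(1)$ this recovers \cref{cor:expanders_lower_bound} up to a single logarithmic factor, the gap being the cost of a generic routing argument versus the expander-specific $O(n\log n)$ congestion bound of \cite{BFU99}.
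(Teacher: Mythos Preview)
Your approach is correct and follows the same high-level strategy as the paper: bound the (edge, then vertex) congestion of a $\beta$-expander via multicommodity-flow routing, and plug into \cref{thm:low_congestion_implies_local_search_hard}. The execution differs slightly. The paper invokes a black-box routing result (Corollary~C.2 of \cite{chuzhoy2016routing}, based on \cite{leighton1999multicommodity}): it decomposes the complete demand graph into $n$ matchings, routes each matching with edge congestion $O(\log^2 n/\beta)$, sums to get $g_e = O(n\log^2 n/\beta)$, and then converts to vertex congestion $g = O(n\log^2 n \cdot \Delta/\beta)$ before applying the vertex-congestion form of the theorem. You instead argue directly from the Leighton--Rao/LLR flow-cut gap and Raghavan--Thompson rounding, which yields $g_e = O(n\log n/\beta)$ and hence the sharper bound $\Omega(\beta\sqrt{n}/(\Delta\log n))$. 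Your route is more self-contained and saves a logarithmic factor; the paper's route is shorter to state since it defers the flow argument to a citation. Both are valid, and your Chernoff step goes through because $\beta \le \Delta \le n$ forces the mean load per edge to be $\Omega(\log n)$, which is exactly the regime where rounding costs only a constant factor.
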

% Add a proof for this. We need to use \url{https://home.ttic.edu/~cjulia/papers/EDP-full.pdf}, Theorem C.1 and Corollary C.2.

% Before we give our hardness result for graph separation number, let us briefly discuss some simple implications of \cref{thm:low_congestion_implies_local_search_hard}.
%Our bound does \emph{not} apply on disconnected graphs as there does not exists any all-pairs set of paths.
%That being said, an optimal algorithm will simply solve the easiest component, so applying \cref{thm:low_congestion_implies_local_search_hard} to the easiest component on its own may yield useful results.

%\subsubsection*{Lower bounds for local search for vertex-transitive graphs}

The congestion framework also allows us to recover a lower bound for undirected Cayley graphs, which were studied before in \cite{dinh2010quantum}. 
An \emph{undirected Cayley} graph is formed from a group $\mathcal{G}$ and a generating set $S$.
The vertices are the elements of $\mathcal{G}$ and there is an edge between vertices $u,v$ if $u = w \cdot v$ or $v = w 
\cdot u$ for some $w \in S$.
% The graph $G$ is a \emph{Cayley} graph if the vertex set is a group $\mathcal{G}$ and generating set $S \subseteq \mathcal{G}$ suc
% for all pairs of vertices $u,v \in V$ there exists an automorphism $\phi : V \to V$ such that  $\phi(u) = v$.

\begin{restatable}{mycorollary}{corollaryCayleyLowerBound}
\label{cor:cayley_lower_bound}
Let $G = (V,E)$ be an undirected Cayley graph with $n$ vertices and diameter $d$.
Then the randomized query complexity of local search on $G$ is $\Omega\left(\frac{\sqrt{n}}{d}\right)$.
\end{restatable}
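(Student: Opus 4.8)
The plan is to derive this as a direct consequence of \cref{thm:low_congestion_implies_local_search_hard}, so the entire task reduces to bounding the vertex congestion $g$ of an undirected Cayley graph with $n$ vertices and diameter $d$ by $\cO(n d)$. Given such a bound, \cref{thm:low_congestion_implies_local_search_hard} yields a randomized query complexity lower bound of $\Omega(n^{1.5}/(nd)) = \Omega(\sqrt{n}/d)$, which is exactly the claimed bound.

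To bound the congestion, I would exploit the vertex-transitivity of Cayley graphs, which is the key structural feature. First, for each group element $w$, fix a shortest path $P^{e,w}$ from the identity $e$ to $w$; its length is at most $d$. Then define the all-pairs set of paths by translation: for any $u,v\in V$, let $P^{u,v}$ be the image of $P^{e,\,u^{-1}v}$ under left-multiplication by $u$ (this is a walk from $u$ to $v$, and since left-multiplication is a graph automorphism, it is in fact a shortest path). Now I count, for a fixed target vertex $x$, how many of the $n^2$ paths pass through $x$. The path $P^{u,v}$ contains $x$ iff the path $P^{e,u^{-1}v}$ contains $u^{-1}x$, i.e. iff $u^{-1}x$ lies on the fixed shortest path from $e$ to $u^{-1}v$. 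For each fixed ``difference'' element $h = u^{-1}v$, the number of pairs $(u,v)$ with $u^{-1}v = h$ is exactly $n$ (choose $u$ freely, then $v=uh$), and among these the path $P^{u,v}$ hits $x$ only when $u^{-1}x$ is one of the $\le d+1$ vertices on $P^{e,h}$ — i.e. for at most $d+1$ choices of $u$. Summing over all $n$ possible values of $h$ gives at most $n(d+1)$ paths through $x$, so $g \le n(d+1) = \cO(nd)$.

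The only subtlety — and the step I'd expect to need the most care — is the edge/vertex bookkeeping around the conventions in the ``Congestion'' paragraph: each path is a simple path (a shortest path visits each vertex at most once, so $c_x^Q \in \{0,1\}$ and the congestion really is just a count of paths through $x$), and the diagonal convention $P^{u,u}=(u)$ contributes only the single-vertex walk, adding at most $n$ to the count at any vertex $x$ (namely $P^{x,x}$), which is absorbed into the $\cO(nd)$ bound. One should also confirm that the translated paths are genuine paths in $G$: since the graph is \emph{undirected}, $\{a,b\}\in E \iff \{ua,ub\}\in E$ because the generator relating $a$ to $b$ also relates $ua$ to $ub$, so left-translation by $u$ is indeed an automorphism and maps paths to paths. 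With $g = \cO(nd)$ established, plugging into \cref{thm:low_congestion_implies_local_search_hard} completes the proof.
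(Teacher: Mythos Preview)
Your proposal is correct and follows essentially the same approach as the paper: construct the all-pairs set of paths by left-translating fixed shortest paths from the identity, show the resulting vertex congestion is at most $n(d+1)$, and then invoke \cref{thm:low_congestion_implies_local_search_hard}. The only cosmetic difference is that the paper counts the congestion via a symmetry/averaging argument (all vertices have equal congestion, so divide the total $(d+1)n^2$ by $n$), whereas you count directly at a fixed vertex $x$ by grouping paths according to $h=u^{-1}v$; both yield the same bound.
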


\begin{comment}
In Section~\ref{sec:related-work}, we compare  \cref{cor:cayley_lower_bound} with the results from \cite{dinh2010quantum}.
\end{comment}

We also get the next corollary for the query complexity of local search on the Boolean hypercube.

\begin{restatable}{mycorollary}{corollaryHypercubeLowerBound} \label{cor:corollaryHypercubeLowerBound}
The randomized query complexity of local search on the Boolean hypercube $\{0,1\}^n$  is $\Omega\left(\frac{2^{n/2}}{n}\right)$.
\end{restatable}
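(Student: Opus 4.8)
The plan is to instantiate \cref{thm:low_congestion_implies_local_search_hard} with the Boolean hypercube, so the entire task reduces to exhibiting a good all-pairs path system. Write $Q_n = \{0,1\}^n$; it has $N := 2^n$ vertices and is $n$-regular. Since \cref{thm:low_congestion_implies_local_search_hard} gives a randomized lower bound of $\Omega(N^{1.5}/g)$ with $g$ the vertex congestion, it suffices to build an all-pairs set of paths on $Q_n$ with vertex congestion $g = \cO(N n)$: this yields $\Omega\!\left(N^{1.5}/(Nn)\right) = \Omega(\sqrt{N}/n) = \Omega(2^{n/2}/n)$, as desired. (There is a notation clash to flag at the outset: in the corollary $n$ is the dimension, whereas in \cref{thm:low_congestion_implies_local_search_hard} the symbol $n$ denotes the vertex count; below I always use $N=2^n$ for the number of vertices.)

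For the path system I would use the standard bit-fixing (dimension-by-dimension) routing: for an ordered pair $(u,v)$, let $P^{u,v}$ start at $u$ and, scanning coordinates $i = 1, 2, \ldots, n$ in order, flip coordinate $i$ exactly when $u_i \neq v_i$, ending at $v$. Each $P^{u,v}$ is a shortest $u$–$v$ path, so no vertex is repeated on it. First I would bound the \emph{edge} congestion. Fix an edge $\{x, y\}$ of $Q_n$ differing in coordinate $i$, say $x_i = 0$, $y_i = 1$. A path $P^{u,v}$ traverses this edge (from $x$ to $y$) precisely when $u_i = 0$, $v_i = 1$, coordinates $i, i+1, \ldots, n$ of $u$ equal those of $x$, and coordinates $1, \ldots, i-1$ of $v$ equal those of $x$; the remaining $i-1$ coordinates of $u$ and $n-i$ coordinates of $v$ are free. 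Hence exactly $2^{i-1}\cdot 2^{n-i} = 2^{n-1} = N/2$ paths use this edge, so the edge congestion is exactly $N/2$.

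Next I would convert this to a vertex-congestion bound, paying a single factor of the degree. Fix a vertex $x$. Summing the per-edge count over the $n$ edges incident to $x$ counts each path passing through $x$ internally twice (it uses two incident edges) and each path with $x$ as an endpoint at most once; together with the at most $2N$ pairs having $x$ as an endpoint, $\sum_{P} c^{P}_x \le \tfrac12 \cdot n \cdot (N/2) + 2N = \cO(Nn)$. Thus $g = \cO(Nn)$, and plugging into \cref{thm:low_congestion_implies_local_search_hard} completes the proof.

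I do not anticipate a real obstacle: the bit-fixing congestion computation is routine, and the only care needed is (i) keeping the dimension-versus-vertex-count notation straight and (ii) the edge-to-vertex congestion conversion, which costs the degree factor $n$ but no more. It is worth remarking that this factor is unavoidable within this framework — the average vertex congestion of \emph{any} all-pairs path system on $Q_n$ is at least (total path length)$/N = \Theta(N n)$ — so $g = \Theta(Nn)$ and $\Omega(2^{n/2}/n)$ is exactly the bound \cref{thm:low_congestion_implies_local_search_hard} can deliver here.
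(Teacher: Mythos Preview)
Your proposal is correct and takes essentially the same route as the paper: both use the bit-fixing path system on $Q_n$, bound its vertex congestion by $\Theta(Nn)$, and plug into \cref{thm:low_congestion_implies_local_search_hard}. The only cosmetic difference is that the paper computes vertex congestion directly via the vertex-transitivity of the construction (total path length $N^2(1+n/2)$ spread evenly over $N$ vertices), whereas you pass through edge congestion first---note, incidentally, that your per-edge count of $N/2$ covers only one orientation, so the true edge congestion is $N$, but this constant factor is immaterial.
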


The lower bound in \cref{cor:corollaryHypercubeLowerBound}   is sandwiched between the lower bound of   $\Omega\left(\frac{2^{n/2}}{\sqrt{n}}\right)$ by \cite{zhang2009tight} and the lower bound of $\Omega\left(\frac{2^{n/2}}{{n}^2}\right)$ by \cite{Aaronson06}.

\section{Related work}
\label{sec:related-work}

%\paragraph{Local search results on the hypercube and hypergrid.}

The query complexity of local search was first studied experimentally  by \cite{tovey81}.
The first breakthrough in the theoretical analysis of local search was obtained by \cite{aldous1983minimization}.
Aldous stated the algorithm based on  steepest descent with a warm start and showed the first nontrivial lower bound of $\Omega(2^{n/2-o(n)})$ on the query complexity for the Boolean hypercube $\{0,1\}^n$.

 The lower bound construction from \cite{aldous1983minimization} uses Yao's lemma and describes a hard distribution, such that if a deterministic algorithm receives a random function according to this distribution, the expected number of queries 
 issued until finding a local minimum will be large. The random function is obtained as follows:
 
 \begin{quote} Consider an initial vertex $v_0$ uniformly at random. Set the function value at $v_0$ to $f(v_0) = 0$. From this vertex, start an unbiased random walk $v_0, v_1, \ldots$ For each vertex $v$ in the graph, set $f(v)$ equal to the first hitting time of the walk at $v$; that is, let $f(v) = \min\{t \mid v_t=v\}$.
 \end{quote}
 The function $f$ defined this way has a unique local minimum at $v_0$.
 By a very delicate  analysis of this distribution,  \cite{aldous1983minimization} showed a lower bound of $\Omega(2^{n/2-o(n)})$ on the hypercube $\{0,1\}^n$.
 
 This almost matches the query complexity of steepest descent with a warm start, which was also analyzed in \cite{aldous1983minimization} and shown to take $\cO(\sqrt{n} \cdot 2^{n/2})$ queries in expectation on the hypercube.
 The steepest descent with a warm start algorithm applies to generic graphs too,
 % : for a graph with $n$ vertices and maximum degree $\Delta$, query $\sqrt{n \cdot \Delta}$ vertices uniformly at random, then perform a steepest descent from a minimal vertex,
 resulting in $\cO(\sqrt{n \cdot \Delta})$ queries overall for any graph with maximum degree $\Delta$.

% simina{The next paragraph is not polished and is missing the construction}
% \cite{aldous1983minimization} took a random walk with a uniformly random start used analyzed its hitting times: once a walk is chosen, the function $V \to \mathbb{R}$ is set such that off the walk it decreases towards the end of the walk, and on the walk it is the first hitting time, which hides the unique local minimum at the start of the walk.
% This method achieved a lower bound of $\Omega(2^{d/2-o(1)})$, which was a breakthrough.
% simina{Explain the lower bound construction from Aldous. It's given by the hitting time of a random walk. Mention that it was a breakthrough}

% On the $\{0,1\}^d$ hypercube, this translates to an upper bound of $\cO(2^{d/2} \cdot \sqrt{d})$.
% Aldous also presented the broadly applicable lower bound of $\Omega(\Delta)$ queries: hiding the local minimum in one of the $\Delta$ leaves of a star subgraph requires checking about half the leaves in expectation to find it.
% simina{This is not the right thing to highlight here. His lower bound on the cube was a breakthrough, explain what that construction is with the hitting time of the random walks. Then could add a remark along the lines of "by the way, he also gave this simple lower bound". Remove the word trivial}

Aldous' lower bound for the hypercube was later improved via more refined types of random walks and/or more careful analysis.
\cite{Aaronson06} improved the bound to $\Omega(2^{n/2}/n^2)$ via his relational adversary method, which is a combinatorial framework that avoids analyzing the posterior distribution and also yielded a quantum bound of $\Omega(2^{n/4}/n)$.
\cite{zhang2009tight}  improved the randomized lower bound to a tight bound of $\Theta(2^{n/2} \cdot \sqrt{n})$ via a ``clock''-based random walk construction, which avoids self-intersections.

Meanwhile, \cite{llewellyn1989local} developed a deterministic divide-and-conquer approach to solving local search that is theoretically optimal over all graphs in the deterministic context, albeit hard to apply in practice.
On the hypercube, their method yields a lower bound of $\Omega(2^n/\sqrt{n})$ and an upper bound of $\cO(2^n \log(n)/\sqrt{n})$: a mere $\log(n)$ factor gap.
% In general, this divide and conquer approach is tightest when the graph in question is poorly connected, while our method in this paper is tightest when the graph in question is well connected.

% davin{THIS PARAGRAPH NEEDS POLISH AND MORE REFERENCES. Currently the hypergrid part is very small even though we say that it was extensively studied?}
Another commonly studied graph for local search is the $d$-dimensional grid $[n]^d$.
\cite{Aaronson06} used his relational adversary method there to show a randomized lower bound of $\Omega(n^{d/2-1} / \log n)$ for every constant $d \ge 3$. 
% Aaronson davin{cite} and Zhang davin{cite} also studied the $[n^{1/d}]^d$ grid.
\cite{zhang2009tight} proved a randomized lower bound of $\Omega(n^{d/2})$ for every constant $d \ge 4$; this is tight as shown by Aldous' generic upper bound.
Zhang also showed improved bounds of $\Omega(n^{2/3})$ and $\Omega(n^{3/2} / \sqrt{\log n})$ for $d=2$ and $d=3$ respectively, as well as some quantum results.
The work of \cite{sun2009quantum} closed further gaps in the quantum setting as well as the randomized $d=2$ case.
The problem of local search on the grid was also studied under the context of multiple limited rounds of adaptive interactions by \cite{branzei2022query}.

%\paragraph{Other known local search results.}
More general results are few and far between.
On many graphs, the simple bound from \cite{aldous1983minimization} of $\Omega(\Delta)$ queries is the best known lower bound: hiding the local minimum in one of the $\Delta$ leaves of a star subgraph requires checking about half the leaves in expectation to find it.

% \cite{santha2004quantum} give a quantum lower bound of $\Omega\left( (\frac{s}{\Delta})^{1/8} / \log(n) \right)$, where $s$ is the separation number of the graph via the spectral method.
% This result was written to show that the quantum and randomized complexities of local search are polynomially related, but it also implies the same lower bound in a randomized context.
\cite{santha2004quantum} gave a quantum lower bound of $\Omega\left( \sqrt[8]{\frac{s}{\Delta}} / \log(n) \right)$, where $s$ is the separation number of the graph. This  implies the same lower bound in a randomized context, using the spectral method.
Meanwhile, the best known upper bound is $\cO((s + \Delta) \cdot \log n)$ due to \cite{santha2004quantum}, which was obtained via a refinement of the divide-and-conquer procedure of \cite{llewellyn1989local}. 

\cite{dinh2010quantum} studied Cayley and vertex transitive graphs and gave lower bounds for local search as a function of the number of vertices and the diameter of the graph. We explain the comparison with their work more precisely in Section \ref{sec:comparison}.
\cite{Verhoeven06} obtained upper bounds as a function of the genus of the graph.

\cite{babichenko2019communication} studied the communication complexity of local search. This captures  distributed settings, where data is stored in the cloud, on different computers.

%\paragraph{Graph congestion.}
There is a rich literature analyzing the congestion of graphs. E.g., the notion of edge congestion is important in routing problems, where systems of paths with low edge congestion can enable traffic with minimum delays (see, e.g., \cite{BFU99,leighton1998multicommodity,chuzhoy2016routing}).
This problem is sometimes called \emph{multicommodity flow} or \emph{edge disjoint paths with congestion}.
Others study routing with the goal of maximizing the number of demand pairs routed using \emph{node} disjoint paths; this is the same as requiring vertex congestion equal to $1$ (see, e.g., \cite{chuzhoy2017new,chuzhoy2018almost}).

% davin{Should probably mention the literature on edge-disjoint path with congestion EDPwC, node-disjoint path with congestion NDPwC, and multicommodity flow. Probably just 1-2 combined paragraphs in total should suffice as this is not the main focus of our paper}

% Given a set of demand pairs and a graph, the classic routing problem known as Edge Disjoint Paths (EDP) aims to route the maximum number of demand pairs using edge-disjoint paths (i.e.\ edge congestion = 1).

% EDPwC:
% \cite{chuzhoy2016routing}

% NDPwc:
% \cite{chuzhoy2017new,chuzhoy2018almost}

% multicommodity flow:
% \cite{leighton1998multicommodity}

%\paragraph{Relation to  optimization.}

% Local search is strongly related to the problem of finding a stationary point of a continuous function from a compact convex set to itself, the existence of which is famously guaranteed by Brower's fixed-point theorem.
Local search is strongly related to the problem of local optimization where one is interested in finding an approximate local minimum of a function on $\mathbb{R}^d$.
A common way to solve local optimization problems is to employ gradient-based methods, which find approximate stationary points. %davin{citations}
To show lower bounds for finding stationary points, one can similarly  define a function that selects a walk in the underlying space and hide a stationary point at the end of the walk.  Handling the requirement that the function is smooth and ensuring there is a unique stationary point are additional challenges. 

For examples of works on algorithms and complexity of computing approximate stationary points, see, e.g., \cite{vavasis1993black,pmlr-v119-zhang20p,  stationary_I,stationary_II,bubeck2020trap,pmlr-v119-drori20a}). Constructions where the function is induced by a  hidden walk have first been designed for showing lower bounds on the query complexity of finding Brouwer fixed points in classical work by \cite{hirsch1989exponential}.

%To show lower bounds for finding stationary points, one can use  the following high-level idea: discretize the underlying space, design a walk that ``hides a stationary point at the end of the walk'', define a function on the vertices with respect to this walk so that the local minima is at the end of the walk, and then argue that finding an approximate stationary point of some corresponding smooth function is not easier than finding the minimum of the local minimum of this function on the vertices. Handling the requirement that the function is smooth is an additional challenge. See, e.g., \cite{vavasis1993black,pmlr-v119-zhang20p,  stationary_I,stationary_I,bubeck2020trap,pmlr-v119-drori20a}). Similar constructions have been designed for Brouwer fixed points~\cite{hirsch1989exponential}.

Works like \cite{bonnabel2013stochastic} study stochastic gradient descent, which is one method of finding approximate local minima.
Moreover, they do this on Riemann manifolds, which are a very broad class of spaces.
This motivates the need to study local search not only on hypercubes and grids, but also on other, broader classes of graphs. For a more extensive survey, see, e.g., \cite{amari_book}.

% Local search is strongly related to the problem of finding a stationary point of a continuous function from a compact convex set to itself, the existence of which is famously guaranteed by Brower's fixed-point theorem.
% The central idea in most of the local search lower bounds that we've discussed (including our own) is to hide the local minimum at the end of a difficult-to-predict walk; one can perform a similar trick by hiding a stationary point at the end of the walk, see e.g. \cite{hirsch1989exponential} and \cite{bubeck2020trap}.

% Local search is also similar to local optimization; i.e. the task of finding an approximate local minimum of a function on $\mathbb{R}^d$.
% With sufficient smoothness guarantees, local optimization can be discretized into local search on a grid.
% See e.g. \cite{vavasis1993black} for examples of such work.
% % simina{There are some more important papers, like the one by Vavasis}

%  There is a rich literature analyzing the congestion of graphs. E.g., the notion of edge congestion is important in routing problems, where systems of paths with low edge congestion can enable traffic with minimum delays (see, e.g., \cite{BFU99}). 

%\paragraph{Complexity classes.}
The computational complexity of local search is captured by the class PLS, defined by \cite{DBLP:journals/jcss/JohnsonPY88} to model the difficulty of finding locally optimal solutions to optimization problems.
A related class is PPAD, introduced by \cite{Papadimitriou_1994} to study the computational complexity of finding a Brouwer fixed-point.
Both PLS and PPAD are subsets of of the class TFNP.

The class PPAD contains many natural problems that are computationally equivalent to the problem of finding a Brouwer fixed point (\cite{CD09}), such as finding an approximate Nash equilibrium in a multiplayer or two-player game (\cite{daskalakis2009complexity,chen2009settling}), an Arrow-Debreu equilibrium in a market (\cite{vazirani2011market,chen2017complexity}), and a local min-max point (\cite{costis_minmax}).

The query complexity of computing an $\eps$-approximate Brouwer fixed point was studied in a series of papers starting with \cite{hirsch1989exponential}, later improved by \cite{chen2005algorithms} and \cite{chen2007paths}.
Recently, \cite{fearnley2022complexity} showed that the class CLS, introduced by \cite{daskalakis2011continuous} to capture continuous local search, is equal to PPAD $\cap$ PLS.
The query complexity of continuous local search has also been studied (see, e.g., \cite{hubavcek2017hardness}).

\subsection{Comparison with prior works and corrections} \label{sec:comparison}
Our approach for the lower bound as a function of congestion was  directly inspired by the relational adversary method of \cite{Aaronson06} and an ingenious   application to vertex-transitive graphs by \cite{dinh2010quantum}.  In both of these papers, a   hard distribution  is  obtained by getting a random input function induced by a ``staircase'' (walk): the value of the function outside the staircase is equal to the distance to the entrance of the staircase, while the value of the function on the staircase is decreasing as one moves away from the entrance. %The solution is found at the end of the staircase. 

%davin{\cite{dinh2010quantum} choose a staircase by first selecting several random points (milestones) and then connecting them with a path from a system of paths.
%While one typically chooses \emph{arbitrary} shortest paths between two endpoints when constructing lower bounds, the system of paths of \cite{dinh2010quantum} are \emph{carefully chosen} shortest paths as follows:
%}
\cite{dinh2010quantum} choose a staircase by first selecting several random points (milestones) and then connecting them with a path from a system of paths.
While one typically chooses \emph{arbitrary} shortest paths between two endpoints when constructing lower bounds, the system of paths in their work  consists of  \emph{carefully chosen} shortest paths as follows:
$(i)$ For paths that start from a fixed vertex $v_0$, fix arbitrary shortest paths;
$(ii)$ For paths starting from other vertices $v_i$ with $i > 0$, use one of the same paths as from $v_0$, but transformed by an automorphism mapping $v_0$ to $v_i$ (which is  defined when the graph is vertex transitive). 

% This careful selection of paths inspired choosing the set of paths $\mathcal{P}$ with low congestion in our setting, where the graph is not necessarily vertex transitive.
The high-level approach and the careful selection of paths  in \cite{dinh2010quantum} inspired our choice for the set of paths $\mathcal{P}$ with low congestion when the graph is not necessarily vertex transitive.
%The construction they use is brilliant.
% Given the family of functions and staircases described, what is left to do is to define a relation between functions  and invoke the relational adversary method.

Given the family of functions and staircases described, what remains is to define a relation between functions and compute the lower bound obtained by invoking the relational adversary method.

\paragraph{Corrections.} 
There appears to be a potential issue in the proof of Proposition 2 in \cite{dinh2010quantum}, where the probability of an event is higher than would be required for the argument to go through \footnote{
In \cite{dinh2010quantum}, for any vertex $v$ that appears in a walk $\X = (x_0, x_1, x_2, \ldots, x_L)$, the function is defined as $f_{\X}(v) = L - \max\{i : v = x_i\}$, i.e.\ $L$ minus the last index $v$ last appears on $\X$.
Then, a vertex $v$ should be a disagreement between two walks $\X$ and $\Y$ if $v$ lies on both walks but $f_{\X}(v) \neq f_{\Y}(v)$.
Afterwards, in Proposition 2,  if vertex $v$ is a disagreement between two walks $\X$ and $\Y$, then $v$ must lie on at least one of the walks, and moreover, $v$ was stated to be contained in a ``segment'' of the tail that is not the one immediately after the divergence place of  $\X$ and $\Y$  (quote: ``\emph{We can't have both $t < j + s$ and $t' < j + s$... either $t \geq j + s$ or $t' \geq j + s$}'', where $t = \max\{i : v = x_i\}$ is the last index of $v$ on $\X$, $t' = \max\{i : v = y_i\}$ is the last index of $v$ on $\Y$, $j$ is the point of splitting, and $s$ is the ``segment length'').
However, one can check  the following setup which violates this assertion also makes $v$ a disagreement: $\max\{i : v = x_i\} < j < \max\{i : v = y_i\} < j + s$, where $v$ appears in the shared prefix and then only on $\X$ within the first segment after they diverge.}. We bypassed this  using our  \cref{thm:our-variant}, which enabled us to recover  the randomized lower bound for Cayley graphs from \cite{dinh2010quantum}; see \cref{cor:cayley_lower_bound}.

This  also occurs in the proof \footnote{In \cite{Aaronson06}, Lemma 6.2: in the first case, where $t > j - n$ and $t^* > j - n$,
the probability $P[x_t = y_{t^*}]$ is not zero, but rather can be nearly $1/2$ (e.g. at $t = t^* = j$),  since the coordinate loop allows staying in place. With probability $1/2$, exactly one of $x_{j-1}$ and $y_{j-1}$ is equal to $x_j$, which usually makes $f_X(x_j) \neq f_Y(x_j)$.
We have reached out to the author and he has acknowledged the error, as well as suggesting another possible fix.
} of Lemma 6.2 of \cite{Aaronson06}, where it could be corrected  by setting the function $f_X(v)$, for a vertex $v$, as $
f_X(v) = \min\{t : x_t = v \text{ and } x_{t+1} \neq v\}\,.$

\section{A variant of the relational adversary method}
% \section{The relational adversary method and a variant}
\label{sec:relational-adv-method}

In this section we state our variant  (\cref{thm:our-variant}).
After stating the variant, we also design and analyze a ``matrix game'', to illustrate a simple problem for which our variant yields a better (in fact tight)  lower bound for the randomized query complexity.  The complete details are included  in \cref{sec-appendix-relational-adversary}, together with the original theorem from \cite{Aaronson06} for comparison. 

\begin{restatable}{mytheorem}{ourvariant}
\label{thm:our-variant}
Consider finite sets $A$ and $B$, a set $\cX \subseteq B^{A}$ of functions \footnote{Each function $F \in \cX$ has the form $F: A \to B$.}, and a map $\cH : \cX \to \{0,1\}$ which assigns a label to each function in $\cX$.
Additionally, we get oracle access to an unknown function $F^* \in \cX$.
The problem is to compute $\cH(F^*)$ using as few queries to $F^*$ as possible.\footnote{In other words, we have free access to $\cH$ and the only queries  counted are the ones to $F^*$, which will be of the form: ``What is $F^*(a)$?'', for some $a \in A$. The oracle will return $F^*(a)$ in one computational step.}

Let $r : \cX \times \cX \to \R_{\geq 0}$ be a non-zero symmetric function of our choice with $r(F_1, F_2) = 0$ whenever $\cH(F_1) = \cH(F_2)$.
For each $\cZ \subseteq \cX$, define  
\begin{align} 
M(\cZ) = \sum_{F_1 \in \cZ} \sum_{F_2 \in \cX} r(F_1, F_2) \; ; \; \; \; \text{ and } \; \; q(\cZ) = \max_{a \in A}{\sum_{F_1 \in \cZ} \sum_{F_2 \in \cZ} r(F_1,F_2) \cdot \mathbbm{1}_{\{F_1(a) \neq F_2(a)\}}}\,.
\end{align}

If there exists a subset $\cZ \subseteq \cX$ with $q(\cZ) > 0$, 
then the randomized query complexity of the problem is
at least
\begin{equation}
\label{eq:relational_variant}
\min\limits_{\substack{\cZ \subseteq \cX: q(\cZ) > 0}}
 \frac{M(\cZ)}{100 \cdot q(\cZ)}  \,.
\end{equation}
\end{restatable}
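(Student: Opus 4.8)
The plan is to run a progress/adversary argument in the spirit of the relational method: set up a hard input distribution driven by $r$, observe that a correct algorithm is forced to ``separate'' many related pairs, and then bound how fast a single query can accomplish such separation.

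First, by Yao's lemma it suffices to lower bound the expected number of queries of a fixed randomized algorithm $\cA$ (correct with probability $\geq 9/10$ on every input) when the input is drawn from a suitable distribution $D$. Fix a minimizer $\cZ$ and write $w(F_1) = \sum_{F_2 \in \cX} r(F_1,F_2)$, so that $M(\cZ) = \sum_{F_1 \in \cZ} w(F_1)$. Let $D$ put mass $w(F_1)/M(\cZ)$ on $F_1 \in \cZ$, and couple to it an ``adversarial partner'' $F_2$ drawn from $r(F_1,\cdot)/w(F_1)$ conditioned on $F_1$; the ordered pair $(F_1,F_2)$ then has law $r(F_1,F_2)/M(\cZ)$ on $\cZ \times \cX$, and $\cH(F_1) \neq \cH(F_2)$ always, since $r(F_1,F_2) > 0$ forces opposite labels.

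The driving observation is that if $\cA$, run on $F_1$, never queries a coordinate $a$ with $F_1(a) \neq F_2(a)$, then it produces the same transcript, hence the same output, on $F_1$ and on $F_2$, so it errs on one of them. Therefore, running $\cA$ on the random $F_1$ with fresh coins, the event ``$\cA$ hits a disagreement of $(F_1,F_2)$'' fails with probability at most $\Pr[\cA \text{ errs on } F_1] + \Pr[\cA \text{ errs on } F_2] \leq 1/5$; here it is essential that $\cA$ is required to succeed with probability $\geq 9/10$ on \emph{every} input, including those in $\cX \setminus \cZ$, in order to control the second marginal. Decompose this event by the index $k+1$ of the \emph{first} such hit and let $p_k$ be its probability, so $\sum_k p_k \geq 4/5$. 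Viewing $\cA$'s run as a decision tree, one checks that $M(\cZ)\, p_k = \mathbb{E}_{\text{coins}}[L_k]$, where $L_k$ is the amount of relational mass $\sum_{F_1 \in \cZ}\sum_{F_2 \in \cX} r(F_1,F_2)$ ``peeled off'' at the $(k+1)$-st query: $L_k = \sum_{|\tau| = k} \Delta(\tau)$ over depth-$k$ nodes $\tau$, with $C(\tau)$ the set of inputs consistent with the partial transcript $\tau$, $a(\tau)$ the query made at $\tau$, and $\Delta(\tau) = \sum_{F_1 \in \cZ \cap C(\tau)} \sum_{F_2 \in C(\tau) :\, F_1(a(\tau)) \neq F_2(a(\tau))} r(F_1,F_2)$. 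On the other hand, since each unit of mass peeled at step $k+1$ certifies that at least $k+1$ queries were made on the corresponding input, $\mathbb{E}_D[\#\text{queries}] \geq \tfrac{1}{M(\cZ)} \sum_{k} (k+1)\, \mathbb{E}_{\text{coins}}[L_k]$. Given the two constraints $\sum_k \mathbb{E}[L_k] \geq \tfrac45 M(\cZ)$ and a per-stage cap $\mathbb{E}[L_k] \leq O(q(\cZ))$, front-loading the $L_k$ minimizes $\sum_k(k+1)\mathbb{E}[L_k]$ and yields $\mathbb{E}_D[\#\text{queries}] = \Omega(M(\cZ)/q(\cZ))$; the constant $100$ in the statement comfortably absorbs this optimization together with the $4/5$ and the small error correction in the ``mass peeled equals $w(F_1)$'' step.

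The main obstacle is establishing the per-stage cap $\mathbb{E}[L_k] \leq O(q(\cZ))$. This is precisely where the asymmetry between $M(\cZ)$ (partner ranging over all of $\cX$) and $q(\cZ)$ (partner ranging over $\cZ$) must be reconciled, and where adaptivity hurts: at depth $k$ the decision tree has many nodes $\tau$, each possibly querying a different coordinate, so the naive ``$\Delta(\tau) \le q(\cZ)$ at each node, then sum over nodes'' loses a factor equal to the number of nodes. My approach would be to split $\Delta(\tau)$ into its part with partner $F_2 \in \cZ$ and its part with $F_2 \in \cX \setminus \cZ$: the depth-$k$ sets $\{C(\tau)\}_{|\tau|=k}$ are disjoint and each pair is first-separated at a unique node, so the $\cZ$-to-$\cZ$ part contributes at most $q(\cZ)$ at any single node and at most $\sum_{F_1,F_2 \in \cZ} r(F_1,F_2)$ over the whole tree, while the $\cZ$-to-$(\cX \setminus \cZ)$ part contributes at most $M(\cZ) - \sum_{F_1,F_2\in\cZ} r(F_1,F_2)$ over the whole tree by the same unique-node property. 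Folding this ``overhead'' mass back into the recursion so that the numerator still comes out to the full $M(\cZ)$, rather than the potentially much smaller $\sum_{F_1,F_2 \in \cZ} r(F_1,F_2)$, is the delicate point; I expect it to exploit that $\cZ$ is chosen to minimize $M(\cZ)/q(\cZ)$ — so that no sub-collection cut out by a transcript does better — and this is the step I would budget the most care for.
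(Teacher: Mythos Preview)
Your progress-argument framework is right, and you correctly diagnose the real obstacle: $q(\cZ)$ has both arguments in $\cZ$ while $M(\cZ)$ lets the partner roam over all of $\cX$. But your proposed resolution has a genuine gap. If you put the hard distribution on a fixed minimizer $\cZ$ and let the partner range over $\cX$, then at a transcript node $\tau$ the one-step separation is only bounded by $q(C(\tau))$ with $C(\tau)\subseteq\cX$ the full consistency class, not by $q(\cZ\cap C(\tau))$. Summing over the depth-$k$ nodes and invoking the minimality of $\cZ$ gives at best $\sum_\tau q(C(\tau))\le M(\cX)\cdot q(\cZ)/M(\cZ)$, since the $C(\tau)$'s partition $\cX$, not $\cZ$. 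Meanwhile your total-progress target is only a constant fraction of $M(\cZ)$, so you end up with $T\gtrsim M(\cZ)^2/(M(\cX)\cdot q(\cZ))$, off by a factor of $M(\cZ)/M(\cX)$. Your ``folding back'' sketch does not close this gap, and restricting the partner to $\cZ$ instead would only recover $\sum_{F_1,F_2\in\cZ}r(F_1,F_2)/q(\cZ)$, which can be much smaller than $M(\cZ)/q(\cZ)$.

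The paper's move is exactly the one you are circling but not quite landing on: do \emph{not} fix $\cZ$ for the distribution. Put the input distribution on all of $\cX$, with $P(F)=M(\{F\})/M(\cX)$, reduce via Yao to a deterministic algorithm, and track the global progress $S^{(t)}=\sum_{F_1,F_2\in\cX:\ \text{distinguished by step }t} r(F_1,F_2)$. One shows $S^{(T)}\ge\tfrac{3}{5}M(\cX)$ after truncating to $T=10D$ queries. For the per-step bound, the first $t-1$ answers partition $\cX$ into history classes $\cX_1,\ldots,\cX_k$; any new separation at step $t$ happens within a single class, and the query issued in class $\cX_i$ contributes at most $q(\cX_i)$. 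Hence
\[
\frac{\Delta S^{(t)}}{M(\cX)} \;=\; \frac{\sum_i \text{(contribution in }\cX_i)}{\sum_i M(\cX_i)} \;\le\; \max_i \frac{q(\cX_i)}{M(\cX_i)} \;\le\; \max_{\cZ'\subseteq\cX:\,M(\cZ')>0}\frac{q(\cZ')}{M(\cZ')}\,,
\]
and the $\min_{\cZ} M(\cZ)/q(\cZ)$ in the statement appears as the reciprocal of this last quantity. So the minimum over $\cZ$ is not a choice you make up front for the distribution; it is the worst history class the adversary's algorithm might carve out.
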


\cref{thm:our-variant} uses Yao's lemma (see \cref{sec:appendix-expansion-and-congestion}), thus the algorithm can be assumed to be deterministic and receive as input a random function sampled from some distribution $P$. 

The theorem considers the probability distribution $P$ where each function $F \in \cX$ is given as input with probability $P(F) = \frac{M(\{F\})}{M(\cX)}$, where $\mathcal{X}$ is the space of possible functions.
% and $M$ will be set later.
% davin{Maybe we may just want to say ``Note that $q(\cZ) > 0$ implies $M(\cZ) > 0$ and so $M(\cX) > 0$'' here.}

The term $M(\cZ)$ represents the likelihood that the function $F$ given as input comes from the set $\cZ \subseteq \cX$, while $q(\cZ)$ is proportional to a lower bound on the number of queries needed in the worst case to narrow down the function $F$ within $\cZ$, conditioned on $F$ being in $\cZ$.

Clearly some choices of $r$ are more useful than others, and so the  challenge when giving lower bounds is to design the function $r$ and estimate the expression in  equation \eqref{eq:relational_variant}.

\subsection{Matrix game}
\label{sec:matrix-game}

In this section, we describe a toy problem upon for which our new variant (\cref{thm:our-variant}) can show  a stronger lower bound than the original relational adversary theorem.

\paragraph{Setup}
Let $n \in \N$ be a perfect square  and $\cX$ be a subset of square $\sqrt{n} \times \sqrt{n}$ matrices with entries from $\{0,1,2\}$.
There are two types of matrices within $\cX$: ``row'' matrices and ``column'' matrices.
Row matrices have one row of $1$s with all other entries $0$ while column matrices have one column of $2$s with all other entries $0$.
For example:
\[
    \begin{bmatrix}
        0&0&0&0\\
        1&1&1&1\\
        0&0&0&0\\
        0&0&0&0\\
    \end{bmatrix}
    \quad \text{is a row matrix, and} \qquad
    \begin{bmatrix}
        0&0&2&0\\
        0&0&2&0\\
        0&0&2&0\\
        0&0&2&0\\
    \end{bmatrix}
    \quad \text{is a column matrix.}
\]
So, $|\cX| = 2 \sqrt{n}$ since there are $\sqrt{n}$ distinct row matrices and $\sqrt{n}$ distinct column matrices.

\paragraph{The game.}
Given $n$ and oracle access to a matrix $F \in \cX$, the goal is to correctly declare whether $F$ is a row or column matrix.

\begin{restatable}{mylemma}{gridgametheoremour}
\label{thm:grid-game-theorem-our}
The randomized query complexity of the matrix game is $\Theta(\sqrt{n})$.
\end{restatable}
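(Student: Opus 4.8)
The plan is to prove matching upper and lower bounds of $\Theta(\sqrt{n})$ on the randomized query complexity of the matrix game. For the upper bound, first I would observe that a single well-chosen query already almost settles the game: querying any fixed cell $(i,j)$ returns a value in $\{0,1,2\}$; if it is $1$ the matrix is a row matrix, if it is $2$ it is a column matrix, and only if it is $0$ is there ambiguity. To turn this into an $O(\sqrt{n})$ algorithm, query all cells in a single fixed row, say row $1$, i.e.\ the $\sqrt{n}$ cells $(1,1),\dots,(1,\sqrt n)$. A row matrix has some row entirely equal to $1$ and all other rows zero; a column matrix has one column entirely $2$. If $F$ is a column matrix, one of these $\sqrt n$ queries returns a $2$, so we detect it. If $F$ is a row matrix, either the distinguished all-$1$s row is row $1$ (and we see $1$s) or it is some other row (and we see all $0$s); in the latter case we can conclude ``row matrix'' because a column matrix would have forced a $2$ somewhere in row $1$. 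Hence $\sqrt n$ deterministic queries suffice, giving the $O(\sqrt n)$ half.

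For the lower bound I would invoke our variant of the relational adversary, \cref{thm:our-variant}, with $A$ the set of $n$ cells, $B = \{0,1,2\}$, $\cX$ the $2\sqrt n$ matrices, and $\cH$ the label (row vs.\ column). The natural choice is the uniform all-pairs relation between the two classes: set $r(F_1,F_2) = 1$ if one of $F_1,F_2$ is a row matrix and the other a column matrix, and $r(F_1,F_2) = 0$ otherwise (in particular $r$ vanishes on same-label pairs, as required, and is symmetric and non-zero). Taking $\cZ = \cX$, we have $M(\cX) = \sum_{F_1}\sum_{F_2} r(F_1,F_2) = 2\sqrt n \cdot \sqrt n = 2n$, since each of the $2\sqrt n$ matrices is related to exactly the $\sqrt n$ matrices of the opposite type. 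For $q(\cX)$ I would fix an arbitrary cell $a = (i,j)$ and count pairs $(F_1,F_2)$ of opposite types that disagree at $a$. A row matrix $R_k$ (all-$1$s in row $k$) has value $1$ at $(i,j)$ iff $k = i$, else $0$; a column matrix $C_\ell$ (all-$2$s in column $\ell$) has value $2$ at $(i,j)$ iff $\ell = j$, else $0$. So $R_k$ and $C_\ell$ disagree at $(i,j)$ unless both are $0$ there, i.e.\ unless $k \neq i$ and $\ell \neq j$; the number of agreeing (both-zero) opposite-type ordered pairs is $2(\sqrt n - 1)^2$, so the number of disagreeing ordered pairs is $2n - 2(\sqrt n -1)^2 = 2(2\sqrt n - 1) = O(\sqrt n)$. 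Since this count is the same for every cell, $q(\cX) = \Theta(\sqrt n)$. Plugging into \eqref{eq:relational_variant} gives a lower bound of $\Omega\!\left(\frac{M(\cX)}{q(\cX)}\right) = \Omega\!\left(\frac{n}{\sqrt n}\right) = \Omega(\sqrt n)$, and I would note that minimizing over all $\cZ$ only needs the single witness $\cZ = \cX$ for the stated bound (one can check smaller $\cZ$ do no better, but that is not needed).

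The two bounds together give $\Theta(\sqrt n)$. The only mildly delicate point is the bookkeeping in computing $q(\cX)$: one must be careful that $r$ is defined on ordered pairs, that the indicator $\mathbbm{1}_{\{F_1(a)\neq F_2(a)\}}$ is evaluated correctly in the three-valued alphabet (a row matrix never outputs $2$ and a column matrix never outputs $1$, so the only way two opposite-type matrices agree at a cell is when both output $0$), and that the bound holds uniformly over the choice of cell $a$ so that the $\max$ in $q$ does not secretly blow up. I do not expect a real obstacle here; the main conceptual content is simply recognizing that this is exactly the regime where aggregating the relation over all pairs at once (rather than Aaronson's pairwise formulation, which on this instance would only yield $\Omega(1)$ because any single pair can be distinguished by a constant number of queries) gives the tight bound — which is precisely the illustrative purpose of this lemma.
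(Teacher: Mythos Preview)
Your upper bound is fine (querying a fixed row works just as well as the paper's diagonal), and your choice of relation $r$ is exactly the one the paper uses. The gap is in how you invoke \cref{thm:our-variant}. The bound in \eqref{eq:relational_variant} is a \emph{minimum} over all $\cZ \subseteq \cX$ with $q(\cZ) > 0$, so to conclude an $\Omega(\sqrt{n})$ lower bound you must show $M(\cZ)/q(\cZ) \in \Omega(\sqrt{n})$ for \emph{every} such $\cZ$, not just exhibit the single witness $\cZ = \cX$. Your sentence ``minimizing over all $\cZ$ only needs the single witness $\cZ = \cX$ \ldots\ (one can check smaller $\cZ$ do no better, but that is not needed)'' has the quantifier backwards: checking smaller $\cZ$ is precisely what \emph{is} needed, since an adversary could in principle pick some $\cZ$ on which the ratio collapses.

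The fix is short and is what the paper does: for an arbitrary $\cZ$, observe that $M(\cZ) = |\cZ|\cdot\sqrt{n}$ (each function is related to exactly $\sqrt{n}$ opposite-type functions in $\cX$), and that for any cell $a$ at most one row matrix and one column matrix in $\cZ$ take a nonzero value at $a$, which forces $q(\cZ) \le 2|\cZ|$. Hence $M(\cZ)/q(\cZ) \ge \sqrt{n}/2$ uniformly in $\cZ$, and the theorem applies. Your computation of $q(\cX) = 2(2\sqrt{n}-1)$ is correct for that particular set, but it does not by itself control the minimum.
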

\begin{proof}[Proof sketch]
We  give a high level explanation of the proof, while the complete details can be found in \cref{sec-appendix-relational-adversary}.
For the upper bound, we can check that $\sqrt{n}$ queries suffice. Query the entries of the main diagonal and then proceed as follows: if any ``$1$'' is detected, declare ``row''; if any ``$2$'' is found, declare ``column''.

For the lower bound,  one intuitively expects that $\Omega(\sqrt{n})$ queries are necessary even allowing  randomization.
In fact, this is what we can show using \cref{thm:our-variant}.
Choose the function $r$ so that $r(F_1, F_2)$ represents the indicator function for whether the two matrices are ``opposite types'', for any two matrices $F_1, F_2 \in \cX$.
This means that the probability to sample any subset $\cZ \subseteq \cX$ will be proportional to its size $|\cZ|$.

Meanwhile, for any subset $\cZ \subseteq \cX$ of matrices, \emph{any single query} on a matrix coordinate will distinguish at most two matrices from all the others \emph{within} $\cZ$.
Thus, we can show that the ratio $M(\cZ)/q(\cZ) \in \Omega(\sqrt{n})$ for \emph{any} subset $\cZ \subseteq \cX$ of matrices, which implies a $\Omega(\sqrt{n})$ lower bound of the matrix game via \cref{thm:our-variant}.
\end{proof}

On the other hand, one can only show a lower bound of $\Omega(1)$ for the matrix game using the version of the relational adversary method from \cite{Aaronson06}; see  details in \cref{sec-appendix-relational-adversary}.
%In fact, we show there exist use cases where our variant yields a stronger lower bound;  see the matrix game in \cref{sec:grid-example}. 
\subsection{Advantage of our variant for randomized algorithms} 

% There may be settings where the original formulation from~\cite{Aaronson06} gives stronger lower bounds; it would be interesting to settle this.
% davin{Update this after adding the proof that the new variant is at least asymptotically as strong?}
In addition to being strictly stronger on some problems (like this matrix game), our variant is at least as strong in general, for randomized algorithms.

% \textcolor{magenta}{We should state it more precisely. E.g. if Aaronson gives a lower bound of $B$, then our variant gives a lower bound of $c * B$, where $c = ...$. Also make it a proposition.}
\begin{restatable}{myproposition}{stronger}
% davin{Consider problems for which randomized algorithms that succeed with probability at least $9/10$ makes $T$ queries in expectation.
% Whenever the relational adversary method from \cite{Aaronson06} provides a lower bound of $T \geq \Lambda$, \cref{thm:our-variant} can prove a lower bound of $T \geq \Lambda/10$.}

Consider any problem and let $T$ be the expected number of queries required in the worst case by the best randomized algorithm to succeed with probability $9/10$.

If the relational adversary method from \cite{Aaronson06} provides a lower bound of $T \geq \Lambda$, then \cref{thm:our-variant} can prove a lower bound of $T \geq \Lambda/40$.
    % On every problem for which the relational adversary method from \cite{Aaronson06} provides a lower bound of $\Lambda$ queries in expectation for randomized algorithms that succeed with probability at least $9/10$, \cref{thm:our-variant} can prove a lower bound of $\frac{1}{10} \Lambda$ for the same quantity: the expected query count for randomized algorithms that succeed with probability at least $9/10$.
\end{restatable}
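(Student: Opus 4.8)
The plan is to run \cref{thm:our-variant} with \emph{exactly} the relation witnessing Aaronson's bound, and to show that the quantity our theorem optimizes is, pointwise over subsets $\cZ$, within a constant factor of the quantity the relational adversary method optimizes. Suppose the randomized relational adversary method of \cite{Aaronson06}, as restated in \cref{sec-appendix-relational-adversary}, applied with a relation $r$, certifies $T \geq \Lambda$. Writing $M_r(F) = \sum_{F'} r(F,F')$ and $v_r(F,a) = \sum_{F' : F(a) \neq F'(a)} r(F,F')$, that method's bound has the form $\Lambda = c_A \cdot \lambda$ for an explicit universal constant $c_A$, where $\lambda = \min \max\{ M_r(F_1)/v_r(F_1,a),\, M_r(F_2)/v_r(F_2,a) \}$ over triples $(F_1,F_2,a)$ with $r(F_1,F_2) > 0$ and $F_1(a) \neq F_2(a)$. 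First I would feed this same $r$ into \cref{thm:our-variant}: its hypotheses hold because $r$ is non-zero, vanishes on equal-label pairs, and any $r$-related pair $\{x,y\}$ gives a set $\cZ = \{x,y\}$ with $q(\cZ) = 2r(x,y) > 0$. Since $M(\{F\}) = M_r(F)$, it then suffices to establish the pointwise estimate $M(\cZ) \geq (\lambda/2)\, q(\cZ)$ for every $\cZ$ with $q(\cZ) > 0$; this gives $\min_\cZ M(\cZ)/(100\,q(\cZ)) \geq \lambda/200 \geq \Lambda/40$, the last step using $c_A \le 1/5$, which the amplification constant in \cite{Aaronson06} comfortably satisfies.

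To prove $M(\cZ) \geq (\lambda/2)\,q(\cZ)$, fix $\cZ$ and let $a^\star$ attain the maximum in $q(\cZ)$, so that $q(\cZ) = \sum_{w \in \cZ} v^\cZ_r(w, a^\star)$ with $v^\cZ_r(w,a) = \sum_{z \in \cZ:\, w(a)\neq z(a)} r(w,z) \leq v_r(w,a)$. Call $w \in \cZ$ \emph{good} if $M_r(w) \geq \lambda\, v_r(w, a^\star)$; for good $w$ this yields $v^\cZ_r(w,a^\star) \leq M_r(w)/\lambda$. The crucial observation --- which is nothing more than the ``$\max$'' in the definition of $\lambda$ --- is that every pair $\{x,y\}\subseteq \cZ$ with $r(x,y)>0$ and $x(a^\star)\neq y(a^\star)$ has at least one good endpoint. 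Hence every ordered pair $(w,z)$ with $w$ not good, $z \in \cZ$, $r(w,z)>0$, $w(a^\star)\neq z(a^\star)$ has $z$ good, so, using symmetry of $r$, $\sum_{w\text{ not good}} v^\cZ_r(w,a^\star) \leq \sum_{z\text{ good}} v^\cZ_r(z,a^\star)$, and therefore
\[
q(\cZ) = \sum_{w\in\cZ} v^\cZ_r(w,a^\star) \leq 2\sum_{w\text{ good}} v^\cZ_r(w,a^\star) \leq \frac{2}{\lambda}\sum_{w\text{ good}} M_r(w) \leq \frac{2}{\lambda}\sum_{w\in\cZ} M_r(w) = \frac{2}{\lambda}\, M(\cZ).
\]

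The hard part is exactly that ``$\max$'': when we freeze a query coordinate $a^\star$, the $r$-mass of the disagreeing pairs inside $\cZ$ could in principle sit on endpoints whose $M_r/v_r$ ratio at $a^\star$ is poor, inflating $q(\cZ)$ relative to $M(\cZ)$; the charging step above --- sending each disagreeing pair to its good endpoint and noting the ``bad'' side is dominated by the ``good'' side --- is what defuses it, and it uses only the definition of $\lambda$ together with the decomposition $q(\cZ) = \sum_w v^\cZ_r(w,a^\star)$. Everything else is bookkeeping: checking the hypotheses of \cref{thm:our-variant} for $r$ and tracking the explicit constants (the $100$ in \cref{thm:our-variant}, the factor $2$ above, and the amplification constant of \cite{Aaronson06}) down to $40$. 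If the restated form of Aaronson's randomized bound in \cref{sec-appendix-relational-adversary} is instead the symmetric ``min of the two ratios'' version, every relevant $w$ is automatically good and the factor $2$ disappears; either way the proposition follows.
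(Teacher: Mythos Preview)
Your proposal is correct and follows essentially the same argument as the paper: feed the same relation $r$ into \cref{thm:our-variant}, define ``good'' elements as those with $M_r(w) \ge \lambda\, v_r(w,a^\star)$ (the paper calls this set $\cC$), use the $\max$ in Aaronson's definition to conclude every disagreeing pair has a good endpoint, charge the bad side to the good side for the factor $2$, and track the constants $1/5$, $2$, $100$ to get $40$. The only cosmetic difference is that the paper phrases everything via $v_{\min}=1/\lambda$ rather than $\lambda$ directly.
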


\section{Lower bound for local search via congestion}
\label{sec:application-congestion}

In this section, we explain at a high level the proof of \cref{thm:low_congestion_implies_local_search_hard}, which gives a lower bound as a function of congestion, as well as the corollary for expanders. %We also explain the comparison with prior work on vertex transitive expanders.  
See \cref{sec:appendix-congestion} for the details.

\subsection{Proof sketch for the congestion lower bound}
The proof of \cref{thm:low_congestion_implies_local_search_hard} is sketched in  the next sequence of  steps.

\paragraph{Fixing a set of paths $\mathcal{P}$.} Since the graph $G = ([n], E)$ has vertex congestion $g$, we can fix an all-pairs set of paths $\cP = \{P^{u,v}\}_{u,v \in [n]}$, such that $P^{u,v} \in \cP$ is a simple path from $u$ to $v$ and $P^{u,u} = (u)$, for each $u,v \in [n]$. Moreover, each vertex is used at most $g$ times across paths in $\cP$. %We have $g \geq n$,  since all the $n$ paths from a vertex $v$ must include $v$. Simina's note: This is said in the intro.

\paragraph{Staircases.} We fix a parameter $L \in [n]$ to be set later. We consider sequences of vertices of the form $\vec{x} = (x_1, \ldots, x_{L+1}) \in \{1\} \times [n]^L$, i.e. with $x_1 = 1$.
The \emph{staircase} induced by  $\vec{x}$  is a walk $S_{\X} = S_{\X,1} \circ \ldots \circ S_{\X, L}$, where each $S_{\X, i}$ is a path in $G$ starting at vertex $x_{i}$ and ending at $x_{i+1}$.
Each vertex $x_i$ is called a \emph{milestone} and each path $S_{\X,i}$ a \emph{quasi-segment}.

The staircase $S_{\vec{x}}$ is \emph{induced by   $\vec{x}$ and  $\mathcal{P}$} if  we  additionally have $S_{\X,i} = P^{x_i, x_{i+1}}$ for all $i \in [L]$.
In other words, to build such a staircase we first decide on the sequence of ``milestones'' $\vec{x}$; then to get from each milestone $x_i$ to the next milestone $x_{i+1}$, we travel using the path $P^{x_i,x_{i+1}}$ from the set of paths $\mathcal{P}$.
% We observe that this may not be the shortest path, since $\mathcal{P}$ does not necessarily consist of shortest paths.
Note that $P^{x_i,x_{i+1}}$ may not be the shortest path between $x_i$ and $x_{i+1}$ since $\mathcal{P}$ does not necessarily only consist of shortest paths.

\paragraph{The value function $f_{\vec{x}}$.} For each staircase $S_{\vec{x}}$ induced by $\vec{x}$ and $\mathcal{P}$, we define a corresponding   function $f_\X : [n] \to \mathbb{R}$ as follows. For each vertex $v$ in $G$:

 \begin{enumerate}[(a)]
        \item If $v \notin S_{\X}$, then  set $f_{\X}(v) = dist(v,1)$.
        \item If $v \in S_{\X}$, then set $f_{\X}(v) = -i\cdot n - j$, where $i$ is the maximum index with $v \in P^{x_i,x_{i+1}}$, and $v$ is the $j$-th vertex in $P^{x_i,x_{i+1}}$.
    \end{enumerate}
    
\cref{ex:staircase-example-congestion} gives a visual illustration of an induced staircase $S_{\X}$ and its associated function $f_{\X}$.
% We will later show that such a function $f_{\vec{x}}$ is valid, and so by Lemma~\ref{lem:valid_implies_unique_local_minimum} 
Such a function $f_\X$ has a unique local minimum at the end of the staircase $S_{\vec{x}}$. 
% The hard distribution will essentially be given by a random such function.
% davin{unclear what is ``hard distribution''}
Our lower bound construction uses such functions.

\begin{example}[Staircase with the associated value function.]
\label{ex:staircase-example-congestion}
Consider the graph $G$ in \cref{fig:staircase_with_cycles_example}, with $n=12$ vertices labelled $\{v_1, \ldots, v_{12}\}$. \\
\begin{figure}[h!]
    \centering
    \subfigure[Graph with $n=12$ vertices]{
    \includegraphics[scale=1.1]{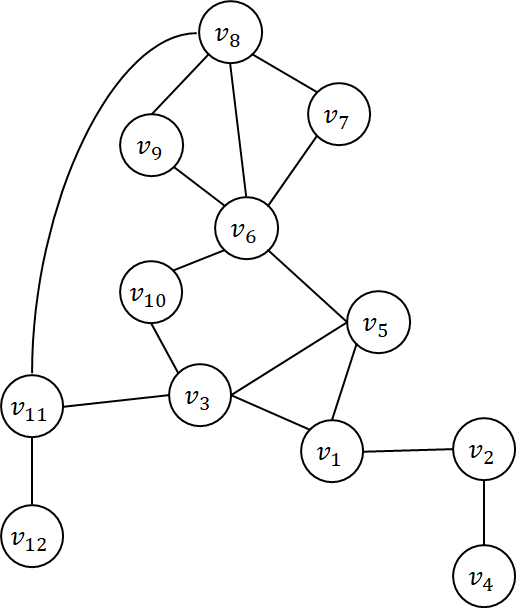}
    }\qquad 
    \subfigure[Staircase with  induced value function.]{
    \includegraphics[scale=1.1]{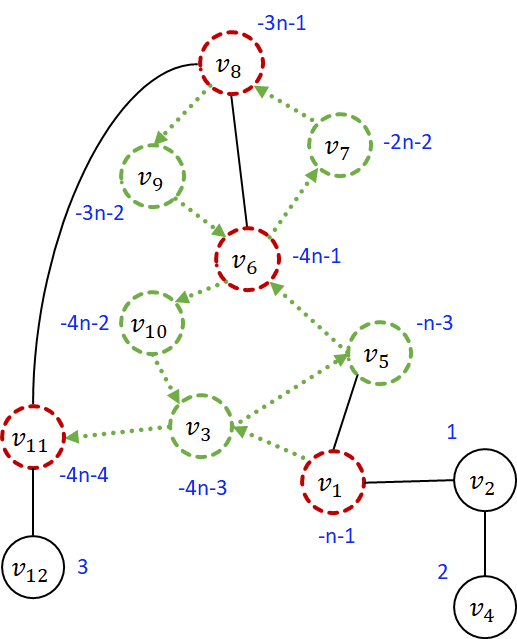}
    }
    \caption{Graph $G$ with $n=12$ vertices  labelled $\{v_1, \ldots, v_{12}\}$.  The staircase shown consists of the walk given by the  dotted vertices  $(v_1, v_3, v_5, v_6, v_7, v_8, v_9, v_6, v_{10}, v_3, v_{11})$.
    % The vertices (and edges) not included in the staircase are shown with black (non-dotted) lines.
    The value of the function at each vertex is shown in  blue  near that vertex. The local minimum is at the end of the staircase, at vertex $v_{11}$.}
    \label{fig:staircase_with_cycles_example}
\end{figure}

Let the set of paths $\mathcal{P} = \{ P^{u,v}\}_{u,v\in[n]}$ be 
\begin{itemize}
    \item $P^{v_1, v_6} = (v_1, v_3, v_5, v_6)$; $P^{v_6, v_8} = (v_6, v_7, v_8)$; $P^{v_8, v_6} = (v_8, v_9, v_6)$;\\
    % $P^{v_8, v_{11}} = (v_8, v_9, v_6, v_{10}, v_3, v_{11})$.
    $P^{v_6, v_{11}} = (v_6, v_{10}, v_3, v_{11})$.
    \item For all other pairs of vertices $(u,w)$, we set $P^{u,w}$ as the shortest path from $u$ to $w$.
\end{itemize}
Let $L=4$.
Consider a sequence $\X= (x_1, x_2, x_3, x_4, x_{4}) = (v_1, v_6, v_8, v_6, v_{11})$, where each milestone is highlighted by a red dotted circle.
Observe we allow repeated vertices. 
The staircase induced by $\vec{x}$ and $\mathcal{P}$, given by the green dotted walk, is \[S_{\vec{x}}= (v_1, v_3, v_5, v_6, v_7, v_8, v_9, v_6, v_{10}, v_3, v_{11})\,.
\]
The value function $f_{\vec{x}}$, computed using the definition (a-b), of each vertex is given in blue.
% The value function $f_{\vec{x}}$ is computed using the formula in steps (a-b).
\end{example}

For technical reasons, in the lower bound proof we will actually work with a decision problem. There is a simple way to turn a search problem into a decision problem (see \cite{dinh2010quantum,Aaronson06}): associate with each function $f_{\vec{x}}$ a function $g_{\vec{x},b}$ that hides a bit at the local minimum vertex (while hiding $-1$ at every other vertex). Formally, $g_{\vec{x},b}$ is defined next.

Let $A=[n]$ and $B=\{- n^2 - n, \ldots, 0, \ldots, n\} \times \{-1,0,1\}$ be finite sets.
For each vertex $\X \in \{1\} \times [n]^L$ and  bit $b \in \{0,1\}$, let $g_{\X,b} : A \to B$ be such that, for all $v \in [n]$:
\begin{align} 
    g_{\X,b}(v) = \begin{cases}
            \bigl(f_{\X}(v), b\bigr) & \text{if} \; v = x_{L+1}\\
            \bigl(f_{\X}(v), -1\bigr) & \text{if}  \; v \ne x_{L+1}
        \end{cases} \,. \label{eq:def:g_X_b_main}
\end{align} 
The set of functions we consider is 
$\mathcal{X} = \Bigl\{g_{\X,b} \mid  \X \in \{1\} \times [n]^{L} \mbox{ and }  b \in \{0,1\}\Bigr\}\,.$
\medskip 

The decision problem is: given a graph $G$  and oracle access to a function $g_{\vec{x},b} \in \mathcal{X}$, return the value $\mathcal{H}(g_{\vec{x},b}) =b$. This means: find the hidden bit, which only exists at the vertex corresponding to the local minimum of $f_{\vec{x}}$.
Measuring the query complexity of this decision problem will give  the answer for local search, as the next two problems have query complexity within additive $1$:

\smallskip 

$\bullet \; \;$ \emph{search problem:} given oracle access to a function $f_{\vec{x}}$,  find a vertex $v$ that is a local minimum; 

\smallskip

$\bullet \; \;$ \emph{decision problem:}  given oracle access to the  function $g_{\vec{x},b}$, find $\mathcal{H}(g_{\vec{x},b})$.

\paragraph{Good/bad sequences of vertices; Good/bad functions.} We divide the set $\mathcal{X}$ of functions into ``good'' and ``bad'' functions.
% Excluding bad functions will help with doing the estimates. 
% davin{``estimates'' not well defined}
Our analysis later will focus on ``good'' sequences and functions.

A sequence of vertices $\X = (x_1, \ldots, x_{L+1})$ is  \emph{good} if  $x_i \ne x_j$ for all $i,j \in [n]$ with $ i < j$; otherwise, $\X$ is \emph{bad}.
That is, $\X$ only involves distinct milestones.

For each bit $b \in \{0,1\}$, the function $g_{\X,b} \in \mathcal{X}$ is \emph{good} if $\X$ is good, and \emph{bad} otherwise.

\paragraph{The relation function $r$.} To be able to invoke \cref{thm:our-variant}, we need to also define the relation function $r$ whose role is to ``relate'' pairs of input functions $F_1$ and $F_2$ in order to roughly capture the difficulty of differentiating $F_1$ from $F_2$.

\smallskip 

% Intuitively, when an algorithm $\Gamma$ queries vertices of the graph, it wants to quickly eliminate options to figure out where the local minimum is for the function at hand.
Intuitively, an algorithm $\Gamma$ will query vertices of the graph to eliminate options to figure out which is the underlying input function, from which it can query the local minimum to retrieve the hidden bit $b$.
However, if two functions $F_1 = g_{\vec{x},b_1}$ and $F_2 = g_{\vec{y},b_2}$ are very ``similar'' (i.e. their underlying staircases $S_{\vec{x}}$ and $S_{\vec{y}}$  are almost identical), then it may take  $\Gamma$  many queries to learn whether the input is $F_1$ or $F_2$, even if it knows the input can only be one of these two functions. Consequently, $\Gamma$ will have great difficulty finding the local minimum on certain inputs.

\medskip 

 Formally, define $r : \mathcal{X} \times \mathcal{X} \to \mathbb{R}_{\ge 0}$  as a symmetric function such that for  each \mbox{$\X, \Y \in \{1\} \times [n]^L$} and $b_1,b_2 \in \{0,1\}$: 
    \[
        r(g_{\X,b_1}, g_{\Y,b_2}) = \begin{cases}
            0 & \text{ if at least one of the following holds: } b_1 = b_2 \text{ or } \X \text{ is bad or } \Y \text{ is bad.}\\
            n^j & \text{ otherwise, where } j \text{ is the maximum index for which } \X_{1 \to j} = \Y_{1 \to j}\,. 
        \end{cases}
    \]

% The choice of $r$ will turn out to be wise, as we see next.
The choice of $r$ is deliberate, as we see next.

\paragraph{Invoking \cref{thm:our-variant}.}% \lowcongestionimplieslocalsearchhard*
We are now ready to invoke \cref{thm:our-variant} using the definitions of $f_{\vec{x}}$, $g_{\vec{x},b}$, $\cX, \cH,$ and $r$ from above.
We will show there exists a subset $\cZ \subseteq \cX$ with $q(\cZ) > 0$, and so
we get that the randomized query complexity of local search is:
\begin{equation}
    \label{eq:main_congestion_raw_result}
    \Omega\left( \min\limits_{\substack{\cZ \subseteq \cX:\\ q(\cZ) > 0}}%nick{q(\cZ) > 0}}}
\frac{M(\cZ)}{q(\cZ)} \right) \;, \text{ where } \; \; q(\cZ) = \max_{v \in [n]}{\sum_{F_1 \in \cZ} \sum_{F_2 \in \cZ} r(F_1,F_2) \cdot \mathbbm{1}_{\{F_1(v) \neq F_2(v)\}}} \,.
\end{equation}
Estimating the lower bound in \eqref{eq:main_congestion_raw_result} precisely is quite challenging.
% We will crucially rely on the choice of the function $r$ as a polynomial of the common prefix length of two staircases.  Thus we will bound $M(\cZ)$ and $q(\cZ)$ for each   $\cZ \subseteq \cX$ with $M(\cZ) > 0$.
Instead, for any arbitrary $\cZ \subseteq \cX$ with $q(\cZ) > 0$,
%nick{$q(\cZ) > 0$},
we show a lower bound $M(\cZ)$ and an upper bound for $q(\cZ)$ using our choice of the function $r$.
The two bounds we show only depend on $\cZ$ via its size $|\cZ|$.
For any fixed $\cZ$, this dependency will be cancelled out through the division, and thus our result follows.

% simina{We need to say more here, perhaps we can bring back some of the lemma statements at least.}
% \paragraph{Proof idea}
% The proof idea is as follows.
% Let $\cP$ be the all-pairs set of paths with vertex congestion $g$.
% Then we construct our walk by choosing a sequence of roughly $\sqrt{n}$ distinct milestone vertices at random, then travelling from milestone to milestone via the paths from $\cP$.
% Our choice of $r(F_1,F_2)$ is $n^j$, where $j$ is the number of milestones in the shared prefix of $F_1$ and $F_2$'s sequences of milestones, with two exceptions.
% The first exception is that if $\cH(F_1) = \cH(F_2)$, then $r(F_1,F_2) = 0$; this is a required fact for any $r$.
% The second exception is that if either $F_1$ or $F_2$ has repeated milestones, then $r(F_1,F_2) = 0$; we call such functions \emph{bad} (and other functions \emph{good}).
% Excluding bad functions will help when upper bounding $q(\cZ)$.
\paragraph{Lower bounding $M(\cZ)$.}
By our choice of $r$, only \emph{good} functions affect the value of $M(\cZ)$.
Furthermore, we know that a function is good only if it was constructed using a good sequence of milestones $\cX$.
For any good function $F_1 \in \cX$, a counting argument tells us that there are roughly $n^{L+1-j}$ good functions $F_2 \in \cX$  with $r(F_1,F_2) = n^j$, i.e.\ $F_1$ and $F_2$ have the same milestones from the first to the $j^{th}$ one.
This approximation holds for $j \in O(\sqrt{n})$; we will eventually have $j \le L \in O(\sqrt{n})$, so this is fine.
% For every \emph{good} function $F_1 \in \cX$,   there are  $\approx n^{L+1-j}$ functions $F_2 \in \cX$  with  $r(F_1,F_2) = n^j$.
% The main discrepancy comes from the requirement that $F_2$ is good, but this is not too much of a restriction as long as $L$ is small enough.
% This makes the next roughly constant with respect to $j$, which is algebraically convenient:
Therefore, {for any good  function $F_1 \in \cX$,} we have 
\[
% \sum_{F_2 \in \cX:\, r(F_1,F_2) = n^j} r(F_1,F_2)\in \Theta(n^{L+1})\,.
\sum_{F_2 \in \cX:\, r(F_1,F_2) = n^j} r(F_1,F_2)
\approx \sum_{j=1}^L n^j \cdot n^{L+1-j}
= L \cdot n^{L+1}\,.
\]
% In particular, it implies that
Therefore $M(\cZ) \in \Omega(|\cZ| \cdot L \cdot n^{L+1})$.
% If it weren't for bad functions, this would be straightforward to show; for each $g_{\X,b_1}\in \cZ$ and $j \in [L+1]$, there are roughly $n^{L+1-j}$ functions $g_{\Y,b_2} \in \cX$ with $\X_{1 \to j} = \Y_{1 \to j}$ and $x_{j+1} \ne y_{j+1}$.
% We choose $L$ small enough that enforcing goodness makes only a constant factor difference.

\paragraph{Upper bounding $q(\cZ)$.}
% Meanwhile, $q(\cZ)$ will be in $\cO\left(|\cZ| \cdot g \cdot n^L \cdot (1 + L^2/n)\right)$. davin{This comes too early}
% To show this, we prove that for any good $F_1 \in \cZ$ and any $v\in [n]$, we have
% \[
%     \sum_{F_2 \in \cZ} r(F_1,F_2) \mathbbm{1}_{\{F_1(v) \ne F_2(v)\}} \in \cO\left( g \cdot n^L \cdot (1 + L^2/n)\right)\,.
% \]
% The bad $F_1 \in \cZ$ contribute nothing, so we can ignore them.
% We next break up the sum according to $j$, the length of the shared prefix of $F_1$'s and $F_2$'s sequences of milestones.
% We will show that there exists $q(\cZ) > 0$.
% davin{We need to argue/justify/mention $q(\cZ) > 0$ somewhere}
% nick{Maybe that's a step that can be omitted here, and included in the full appendix proof?}
%
Let $v \in [n]$ be a vertex and  $F_1 = g_{\X,b} \in \cZ$ a good function 
with hidden bit $b \in \{0,1\}$. 
%we upper bound the inner summation in the definition of $q(\cZ)$ from %\eqref{eq:main_congestion_raw_result}:
%\begin{equation}
%\label{eq:inner-summation}
%\sum_{F_2 \in \cZ} r(F_1,F_2) \cdot %\mathbbm{1}_{\{F_1(v) \ne F_2(v)\}} \,.
%\end{equation}
% $\sum_{F_2 \in \cZ} r(F_1,F_2) \cdot \mathbbm{1}_{\{F_1(v) \ne F_2(v)\}}$,
%Then we obtain our eventual upper bound for $q(\cZ)$ by simply multiplying this term by $|\cZ|$.
By our choice of $r$, it suffices to relate $F_1$ to  functions $F_2 \in \cZ$ that are  also good but have hidden bit  $1-b$, i.e. of the form  $F_2 = g_{\vec{y},1-b}$, for some good sequence $\vec{y}$.

% To upper bound the inner summation $\sum_{F_2 \in \cZ} r(F_1,F_2) \cdot \mathbbm{1}_{\{F_1(v) \ne F_2(v)\}}$, we first partition the functions in $F_2 \in \cZ$ based on the length of shared prefix in the milestone sequences of $F_1$ and $F_2$.
To upper bound the inner summation of $q(\cZ)$, we partition the functions $F_2$ based on the length of the shared prefix 
of $\vec{y}$ and  $\vec{x}$. We get 

% of the underlying sequences $\vec{x}$ and $\vec{y}$ for $F_1$ and $F_2$, respectively.

% based on the length of shared prefix in the milestone sequences of $F_1$ and $F_2$.
%Expressing $F_1 = g_{\X,b}$ and $F_2 = g_{\Y,1-b}$, we get

\begin{equation}
\label{eq:partition-to-j}
\sum_{F_2 \in \cZ} r(F_1,F_2) \cdot \mathbbm{1}_{\{F_1(v) \ne F_2(v)\}}
= \sum_{j=1}^{L+1} \sum_{\substack{F_2 \in \cZ:\\ j = \max\{i : \X_{i \to j} = \Y_{1 \to j}\}}} r(F_1,F_2) \cdot \mathbbm{1}_{\{F_1(v) \ne F_2(v)\}}
\,.
\end{equation}

When $j = L+1$, there is exactly one function $F_2 \in \cZ$ with such a shared prefix. Thus we get a contribution of $n^{L+1}$ by our definition of $r$.

Thus from now on, we focus on $1 \leq j \leq L$. %contribution in \eqref{eq:partition-to-j} of functions $F_2 \in \cZ$ to the summation for $F_2$'s sharing the same milestones as $F_1$ up to the $j^{th}$ one, i.e.\ $j = \max\{i : \X_{i \to j} = \Y_{1 \to j}\}$, for $1 \leq j \leq L$.
Let $Tail(j, S_{\Y})$ denote the suffix of the staircase $S_{\Y}$ after $y_j$. We can then upper bound the $j$-th term from \eqref{eq:partition-to-j} as follows:
% As $r$ is symmetric, we may assume that the query point $v$ appears at most as many times in $F_1$ than in $F_2$ by paying an additional factor of two in the summation upper bound.
% Later, we show that this condition, together with the conditions $F_1(v) \ne F_2(v)$ and $j < L+1$, jointly imply that $v$ should appear in the suffix of the staircase $S_{\Y}$ of the function $F_2$, \emph{after} the $j^{th}$ milestone.
% In other words, by denoting this suffix by $Tail(j, S_{\Y})$, we have
\begin{equation}
\label{eq:upper-bound-by-tail}
\sum_{\substack{F_2 \in \cZ:\\ j = \max\{i : \X_{i \to j} = \Y_{1 \to j}\}}} r(F_1,F_2) \cdot \mathbbm{1}_{\{F_1(v) \ne F_2(v)\}}
\leq 2 \cdot \sum_{\substack{F_2 \in \cZ:\\ j = \max\{i : \X_{i \to j} = \Y_{1 \to j}\}\\ v \in Tail(j, S_{\Y})}} r(F_1,F_2)    \,.
\end{equation}

Recall the staircase has the form $S_{\Y} = P^{y_1, y_2} \circ \ldots \circ P^{y_j, y_{j+1}} \circ \ldots \circ P^{y_L, y_{L+1}}$, where each $P^{y_i, y_{i+1}}$ is a path from  $\cP$.
%There are two places  where vertex $v$ could be in $Tail(j, S_{\Y})$: either in $ P^{y_j, y_{j+1}}$ or $ P^{y_i, y_{i+1}}$ for some $i \in [L]$ with $i > j$.
% There are two scenarios where $v$ could be in $Tail(j, S_{\Y}) = P^{y_j, y_{j+1}} \circ \ldots \circ P^{y_L, y_{L+1}}$: either $v \in P^{y_j, y_{j+1}}$ or $v \in P^{y_i, y_{i+1}}$ for some $j < i \leq L$.
 %, and then multiply by $n^j$ as per our definition of $r$ to obtain an upper bound of \cref{eq:upper-bound-by-tail}.
When $v$ is in $Tail(j, S_{\Y})$, it means that $v \in P^{y_i, y_{i+1}}$ for some $j \leq i \leq L$. We will upper  bound the number of  sequences $\vec{y}$ depending on the location of $v$.
\begin{description}
    \item[\qquad Case $i = j$.] Let $\numPaths{v}(u)$ denote the number of paths in the set $\mathcal{P}$ that start at vertex $u$ and contain $v$. There are $\numPaths{v}(x_j)$ choices of vertices for $y_{j+1}$ and $n^{L-j}$ choices for sequences $(y_{j+2}, \ldots, y_{L+1})$, yielding a total count of $\numPaths{v}(x_j) \cdot n^{L-j}$.
    \item[\qquad Case $j < i \leq L$.] There are at most $L$ choices for $i$ such that $j < i \leq L$. For fixed $i$, there are at most $g$ choices for a pair $(y_i, y_{i+1})$ such that $v \in P^{y_i, y_{i+1}}$ since each vertex appears in at most $g$ paths within $\cP$. For fixed $i$ and $(y_i, y_{i+1})$, there are $n^{L-j-1}$ tuples of the form $(y_{j+1}, \ldots, y_{i-1}, y_{i+2}, \ldots, y_{L+1})$. That is, the total count is at most $L \cdot g \cdot n^{L-j-1}$.
\end{description}

% \begin{description}
%     \item[$\bullet$] Let $\numPaths{v}(u)$ denote the number of paths in the set $\mathcal{P}$ that start at vertex $u$ and contain $v$.
% Then, in the first scenario, there are $\numPaths{v}(x_j)$ choices of vertices for $y_{j+1}$ and $n^{L-j}$ choices for sequences $(y_{j+2}, \ldots, y_{L+1})$, yielding a total count of $\numPaths{v}(x_j) \cdot n^{L-j}$.
% \item[$\bullet$] Meanwhile, for the second scenario, observe that the following:
% \begin{description}
%     \item[$\bullet$] There are at most $L$ choices for $i$ such that $j < i \leq L$.
%     \item[$\bullet$] For fixed $i$, there are at most $g$ choices for a pair $(y_i, y_{i+1})$ such that $v \in P^{y_i, y_{i+1}}$ since each vertex appears in at most $g$ paths within $\cP$.
%     \item[$\bullet$] For fixed $i$ and $(y_i, y_{i+1})$, there are $n^{L-j-1}$ tuples of the form $(y_{j+1}, \ldots, y_{i-1}, y_{i+2}, \ldots, y_{L+1})$.
% \end{description}
% That is, the total count for the second scenario is at most $L \cdot g \cdot n^{L-j-1}$.
% \end{description}
% The first scenario may be very common if many paths in $\cP$ that start at $x_j$ contain $v$.
% However, since $\X$ is good, this cannot be the case for too many values of $j$, since otherwise the congestion at $v$ would be exceeded.

The two cases are not mutually exclusive, so we  will combine the  counts in \eqref{eq:partition-to-j} and \eqref{eq:upper-bound-by-tail} by summing them. Then we obtain the following upper bound for the inner summation of $q(\mathcal{Z})$:
\[
\sum_{F_2 \in \cZ} r(F_1,F_2) \cdot \mathbbm{1}_{\{F_1(v) \ne F_2(v)\}}
\leq n^{L+1} + 2 \cdot \sum_{j=1}^L n^j \cdot \left( \numPaths{v}(x_j) \cdot n^{L-j} + L \cdot g \cdot n^{L-j-1} \right)\,.
\]
% \begin{align*}
% \sum_{F_2 \in \cZ} r(F_1,F_2) \cdot \mathbbm{1}_{\{F_1(v) \ne F_2(v)\}}
% &\leq n^{L+1} + 2 \cdot \sum_{j=1}^L n^j \cdot \left( \numPaths{v}(x_j) \cdot n^{L-j} + L \cdot g \cdot n^{L-j-1} \right)\\
% &= n^{L+1} + 2 \cdot \sum_{j=1}^L n^j \cdot \left( \numPaths{v}(x_j) + \frac{L \cdot g}{n} \right)
% \end{align*}

Since sequence $\X$ is good, it has no repeated vertices. Thus the elements of  $\vec{x}$ represent a subset of $[n]$. 
This yields the inequality $\sum_{j=1}^L \numPaths{v}(x_j) \leq \sum_{u \in [n]} \numPaths{v}(u) \leq g$ for any vertex $v$, where $g$ is the vertex congestion of $\mathcal{P}$ and $\numPaths{v}(u)$ is the number of paths in $\mathcal{P}$ that start at $u$ and contain $v$. We thus obtain the next bound on the inner summation of $q(\mathcal{Z})$: 
\[
\sum_{F_2 \in \cZ} r(F_1,F_2) \cdot \mathbbm{1}_{\{F_1(v) \ne F_2(v)\}}
\leq n^{L+1} + 2 \cdot L \cdot n^L \cdot g \cdot \left( 1 + \frac{L \cdot g}{n} \right) \,.
\]

Finally, since $n \leq g$, we get $q(\cZ) \in \cO(|\cZ| \cdot g \cdot n^L \cdot (1 + L^2/n))$. % following our earlier discussions.

\paragraph{Wrapping up.}
Since we showed $M(\cZ) \in \Omega(|\cZ| \cdot L \cdot n^{L+1})$ and $q(\cZ) \in \cO(|\cZ| \cdot g \cdot n^L \cdot (1 + L^2/n))$ for arbitrary $\cZ$ with $q(\cZ) > 0$,
%nick{$q(\cZ) > 0$},
our randomized query complexity bound \cref{eq:main_congestion_raw_result} yields
% \cref{eq:main_congestion_raw_result} gives us a bound of
\[
    \Omega\left(\min_{\substack{\cZ \subseteq \cX:\\ q(\cZ) > 0}}%nick{q(\cZ) > 0}}}
    \frac{M(\cZ)}{q(\cZ)}\right)
    \subseteq \Omega\left(\frac{|\cZ| \cdot L \cdot n^{L+1}}{|\cZ| \cdot g \cdot n^L \cdot (1 + L^2/n)}\right)
    \subseteq \Omega\left(\frac{n \cdot L}{g\cdot (1 + L^2/n)}\right)\,.
\]
% We optimize this bound by choosing $L \in \Theta(\sqrt{n})$.
% Therefore the query complexity of local search is $\Omega(n^{1.5}/g)$.
Setting $L \approx \sqrt{n}$ gives a query complexity of local search of $\Omega(n^{1.5}/g)$.

\subsection{Corollary for expanders}
To obtain the lower bound for expanders from \cref{cor:expanders_lower_bound}, we use a result from ~\cite{BFU99}.
Their work shows that constant-degree constant-expansion graphs have an all-pairs set of paths with vertex congestion $g \in \cO(n \ln n)$; see \cref{sec:appendix-expansion-and-congestion} for details.
 \cref{thm:low_congestion_implies_local_search_hard} then implies  the randomized query complexity of local search on such graphs is $\Omega\left(\frac{\sqrt{n}}{\log{n}}\right)$.
\section{Lower bound for local search  via  separation number}
\label{sec:application-separation-number}

We briefly discuss how to obtain the lower bound of $\Omega\left(\sqrt[4]{\frac{s}{\Delta}}\right)$ of \cref{thm:separation-number-result}, where $s$ is the separation number and $\Delta$ the maximum degree of the graph. For details, see  \cref{sec:appendix-separation-number}.

We apply \cref{thm:our-variant} with a similar strategy as the one discussed in \cref{sec:application-congestion} of lower bounding $M(\cZ)$ and upper bounding $q(\cZ)$ for arbitrary subset $\cZ \subseteq \cX$ with $q(\cZ) > 0$.
%nick{$q(\cZ) > 0$}.

However, we use a slightly different $r$ function and now construct staircases with respect to another graph-theoretic notion known as ``path arrangement parameter'' and ``cluster walks'' (instead of using a pre-defined set of pairs $\cP$ with congestion $g$). Our construction is heavily inspired by that in \cite{santha2004quantum}.
While the construction is highly similar, using it within \cref{thm:our-variant} is non-trivial.

\bibliographystyle{alpha}

\bibliography{local_search_bib}

\appendix
% \section{Known result relating graph expansion with graph congestion}
\newpage
\section{Theorems from prior work}
\label{sec:appendix-expansion-and-congestion}

In this section we include several  theorems from prior work.

The first is a result from \cite{BFU99} about systems of paths with low congestion for expanders.
% , which implies \cref{cor:bfu99} for all-pairs paths.

% Let $A$ be the adjacency matrix of $G$ and $d_i$ be the degree of the $i^{th}$ vertex in $G$.
% Then, the \emph{lazy random walk matrix} is defined as $M = (I + A D^{-1})/2$, where $I$ is the $n \times n$ identity matrix and diagonal matrix $D = diag(d_1, \ldots, d_n)$, e.g.\ see Section 4 of \cite{Spielman07}.

\begin{mytheorem}[\cite{BFU99}, Theorem 1]
\label{thm:bfu99}
Let $G = (V,E)$ be a $d$-regular $\beta$-expander where $d \in \mathbb{N}$ and $\beta \in \mathbb{R}_+$ are constant.
Let $\alpha: \mathbb{N} \to \mathbb{R}_+$ be a function. Consider a collection of $K = \alpha(n) n / \ln(n)$ pairs of vertices denoted $\{(a_1, b_1), \ldots, (a_K, b_K)\}$ such that no vertex in $V$ participates in more than $s$ pairs.
% Let $\lambda$ be the second eigenvalue of the lazy random walk matrix $M$ of $G$. 

% If $\lambda \le \ln\ln(n)^{-C/\ln\ln(n)}$, for $C = 12 e^2 (2e^2-1)$,
%Assume that $2(1-\beta/\ln(n)) \le \ln\ln(n)^{-C/\ln\ln(n)}$ where $C = 12 e^2 (2e^2-1)$.
Then there is a set of $K$ paths $\{P_1, \ldots, P_K\}$ such that $P_i$ connects $a_i$ to $b_i$ and the congestion on each edge is at most 
\begin{align}
    % g = \begin{cases}
    %     \cO\left(\frac{s}{\beta} + \frac{1}{\ln(\lambda^{-1})} + \left(1 + \left(\frac{\ln (r)}{\ln (\lambda^{-1})}\right)^2\right) \cdot \frac{ \ln\ln n}{\ln(1/\min\{\alpha(n),\, 1/\ln \ln n\})}\right) & \alpha(n) \le 1/2\,.\\
    %     \cO\left(\frac{s}{\beta} + \frac{\alpha(n)}{\ln(\lambda^{-1})} + \left(1 + \left(\frac{\ln (r)}{\ln (\lambda^{-1})}\right)^2\right) \ln \ln n\right) & \alpha(n) > 1/2\,.
    % \end{cases}
    g = \begin{cases}
        \cO\left(s + \left\lceil \frac{\ln\ln n}{\ln(1/\min\{\alpha,\, 1/\ln \ln n\})}\right\rceil\right) & \text{ for } \alpha < \frac{1}{2}\,.\\
        \cO(s + \alpha + \ln \ln n) & \text{ for } \alpha \ge \frac{1}{2}\,.
    \end{cases}
\end{align}
\end{mytheorem}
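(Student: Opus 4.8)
This is the routing theorem of Broder, Frieze and Upfal, and the plan is to retrace the ``random walk approach'' of that paper, which is a probabilistic refinement of Valiant's two-phase routing. At a high level I would proceed in four stages: (i) convert constant edge-expansion into a $\Theta(\log n)$ bound on the mixing time of the lazy random walk on $G$; (ii) route each demand pair $(a_i,b_i)$ by gluing two length-$\tau$ random walks at a common, uniformly random midpoint $m_i$; (iii) show that the expected number of times one such path traverses a fixed directed edge is $O(\log n/(nd))$, so that summing over the $K=\alpha(n)n/\ln n$ pairs gives expected edge load $O(\alpha(n))$, on top of the $O(s)$ load that is unavoidably forced whenever $s$ demand pairs share an endpoint; and (iv) upgrade this expectation to a high-probability bound of the stated shape using a tail estimate for sums of independent indicators, sharpened by a delay-sequence (``witness'') argument to cope with the correlations between loads of different edges. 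Finally, short-cutting every closed sub-walk turns each routed walk into a simple path and only decreases congestion, so the final collection $\{P_1,\dots,P_K\}$ has the required form.

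For stage (i), a $d$-regular $\beta$-expander with $d,\beta$ constant has spectral gap bounded below by a constant (Cheeger), so the lazy walk is within total-variation distance $\varepsilon=n^{-10}$ of the uniform stationary distribution after $\tau=\Theta(\log n)$ steps, from any starting vertex, and the per-step directed-edge traversal distribution is within $\varepsilon$ of uniform on the $nd$ directed edges. For stage (ii): for each $i$ pick $m_i$ uniformly at random in $V$; sample a $\tau$-step walk from $a_i$ conditioned to be at $m_i$ at time $\tau$, and independently a $\tau$-step walk from $b_i$ conditioned to be at $m_i$ at time $\tau$; let $P_i$ be the first walk concatenated with the reverse of the second. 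Since $\Pr[\text{a $\tau$-step walk from }u\text{ ends at }m_i]=\Theta(1/n)$ for every $u$ and every $m_i$, conditioning on the endpoint inflates the probability of any edge-traversal event by at most a constant factor; hence the probability that $P_i$ uses a given directed edge at a given step is $O(1/(nd))$, and over the $2\tau$ steps of $P_i$ it is $O(\tau/(nd))=O(\log n/(nd))$. (The $O(\log n)$ steps nearest the midpoint, where too little time remains to re-mix, are bounded crudely and absorbed into the constant.)

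Stages (iii) and (iv) carry the real weight, and stage (iv) is the main obstacle. Summing the per-pair bound over the $K$ pairs, the load $X_e$ of a directed edge $e$ is dominated by $s$ (forced, from over-subscribed endpoints of its two incident vertices) plus a sum of independent indicators of total mean $\mu_e=O\!\left(K\cdot\frac{\log n}{nd}\right)=O(\alpha(n))$, using $d=O(1)$. A Chernoff/Poisson-tail bound gives $\Pr[X_e-s>t]\le(e\mu_e/t)^t$; for $\alpha\ge \tfrac12$ the mean is $\Omega(1)$ and a union bound over the $nd$ edges already yields max load $O(s+\alpha+\ln\ln n)$, whereas for $\alpha<\tfrac12$ the mean $\mu_e$ is $o(1)$ and optimizing the same tail against $\ln(nd)=\Theta(\ln n)$ produces exactly the term $\big\lceil \ln\ln n/\ln(1/\min\{\alpha,1/\ln\ln n\})\big\rceil$. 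The difficulty is that the loads of different edges are not independent: the $K$ walks are mutually independent, but the edges within one walk are correlated and a single deviant walk can overload several edges simultaneously, so a naive union bound over edges is not justified. As in Broder--Frieze--Upfal, this is resolved by a delay-sequence / witness argument: if an edge carries load $t$ one extracts $t$ of the walks through it together with a short certificate of that event whose probability is small enough to survive a union bound over all certificates, and bounding the number of certificates --- which is where expansion and the walk length $\tau=\Theta(\log n)$ re-enter --- gives the claimed bound with high probability. Re-expressing the two regimes $\alpha<\tfrac12$ and $\alpha\ge\tfrac12$ in the stated form, and shortcutting to simple paths, completes the argument.
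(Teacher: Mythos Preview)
The paper does not prove this statement: it is quoted verbatim from \cite{BFU99} in the appendix section ``Theorems from prior work'' and used as a black box (specifically, to derive \cref{cor:bfu99} and then \cref{cor:expanders_lower_bound}). So there is no ``paper's own proof'' to compare against.

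That said, your sketch is a faithful high-level outline of the actual Broder--Frieze--Upfal argument: random-walk routing via random intermediate destinations, rapid mixing from constant expansion, per-edge expected load $O(\alpha)$, and a Chernoff-plus-witness analysis to control the maximum load. One quibble: in the original paper the correlations you worry about in stage~(iv) are handled not by a delay-sequence argument in the PRAM sense but more directly, since the $K$ walks are genuinely independent of one another and the per-edge load is a sum of independent Bernoulli variables (one per walk); the subtlety is only in the union bound over edges and in getting the sharp $\ln\ln n/\ln(1/\alpha)$ dependence, for which the standard Poisson-tail optimization you describe suffices. Your description of the conditioning step (gluing two walks at a uniformly random midpoint and arguing that conditioning on the endpoint inflates edge-traversal probabilities by only a constant) is also somewhat delicate near the endpoints and in \cite{BFU99} is handled by a more careful splicing, but the intuition is right.
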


Next is a result from \cite{leighton1999multicommodity} on multi-commodity flow, which gives a corollary for finding  systems of paths with low congestion for expanders. We state the corollary as described in \cite{chuzhoy2016routing}.

% \textcolor{red}{{: I think the original result is from Leyton and Rao, and Julia cites it and gives a corollary. If I understand correctly we should give most of the credit to Leyton and Rao.}}

% davin{I'm going to add brackets $\ln(n)$ when there are trailing terms after $n$, and $\cdot$'s for readability}

\begin{mytheorem}[\cite{chuzhoy2016routing}, Corollary C.2]
\label{thm:chuzhoy2016routing}

Let $G = ([n],E)$ be a $\beta$-expander with maximum vertex degree $\Delta$ and let $M$ be any partial matching over the vertices of $G$.
Then there is an efficient randomized algorthm that finds, for every pair $(u,v) \in M$, a set $\cP_{u,v}$ of $\lceil \ln n \rceil$ paths of length $O(\Delta \cdot \ln (n) / \beta)$ each, such that the set $\cP = \bigcup_{(u,v) \in M} \cP_{u,v}$ has edge congestion $O(\ln^2 (n) / \beta)$.
The algorithm succeeds with high probability.

\end{mytheorem}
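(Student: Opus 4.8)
The plan is to derive this corollary from the Leighton--Rao approximate max-flow/min-cut theorem for multicommodity flow (\cite{leighton1999multicommodity}) via a fractional-routing-then-rounding argument, so that both the fractional step (a polynomial-size linear program) and the rounding step (a Chernoff-bound analysis) are efficient. First I would translate the expansion hypothesis into a conductance bound: since $G$ is a $\beta$-expander with maximum degree $\Delta$, every $S$ with $|S|\le n/2$ satisfies $|E(S,V\setminus S)|\ge \beta|S| \ge (\beta/\Delta)\,\mathrm{vol}(S)$, so the edge conductance of $G$ is $\Omega(\beta/\Delta)$; in particular $G$ has diameter $O((\Delta/\beta)\log n)$, which is where the target path length will come from.

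Next I would set up the concurrent multicommodity flow instance in which each pair $(u,v)\in M$ must ship $\lceil \ln n\rceil$ units. Because $M$ is a partial matching, every vertex is an endpoint of at most one commodity, so across any cut $(S, V\setminus S)$ with $|S|\le |V\setminus S|$ the total demand is at most $\lceil\ln n\rceil\cdot|S|$ while the capacity is $\ge \beta|S|$; hence the sparsest-cut ratio is $O(\log n/\beta)$, and Leighton--Rao yields a fractional routing of all demands with edge congestion $O(\log^2 n/\beta)$. I would then restrict attention to flow carried on paths of length $O((\Delta/\beta)\log n)$: since the total edge usage is at most $|E|\cdot O(\log^2 n/\beta)=O(n\Delta\log^2 n/\beta)$ while the total demand is $O(n\log n)$, only a bounded fraction of the flow can sit on much longer paths, and (combined with the diameter bound and a constant rescaling of the demands) one still routes $\lceil\ln n\rceil$ units per pair on short paths with congestion $O(\log^2 n/\beta)$.

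Finally I would round. For each $(u,v)\in M$, write its flow as a distribution over short $u$--$v$ paths and independently sample $\lceil\ln n\rceil$ of them, one per unit of demand; put these into $\cP_{u,v}$, and set $\cP=\bigcup_{(u,v)\in M}\cP_{u,v}$. The expected congestion of each edge equals its fractional congestion $O(\log^2 n/\beta)$, which is $\Omega(\log n)$, so a Chernoff bound and a union bound over the $O(n\Delta)$ edges give integral congestion $O(\log^2 n/\beta)$ with high probability, as claimed. The step I expect to be the main obstacle is the path-length control: pinning down the $O(\Delta\log n/\beta)$ bound either needs a version of Leighton--Rao whose flow decomposition is already supported on short paths, or a careful per-commodity truncation that does not blow up the rescaling factor for any single commodity -- ensuring that after truncation each of the $|M|$ pairs still carries exactly $\lceil\ln n\rceil$ integral paths is the delicate point, and the rest of the argument is routine.
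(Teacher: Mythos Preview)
The paper does not give its own proof of this statement: it is quoted verbatim as Corollary~C.2 of \cite{chuzhoy2016routing}, attributed to \cite{leighton1999multicommodity}, in the appendix on prior work, with no argument supplied. So there is no in-paper proof to compare against; the authors simply cite the result and use it as a black box.

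Your sketch is a reasonable reconstruction of the standard route to such a statement (max-flow/min-cut for concurrent flow on an expander, then randomized rounding with Chernoff), and you correctly flag the path-length truncation as the one nontrivial step. That said, since the paper itself treats this as an imported theorem, the appropriate ``proof'' here is a pointer to \cite{chuzhoy2016routing} and \cite{leighton1999multicommodity} rather than a fresh derivation.
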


Next we present a lemma reducing local search to a decision problem.
This reduction is not new; see for example \cite{dinh2010quantum}.
\begin{mylemma} \label{lem:reduction_to_decision}
Suppose the randomized query complexity of local search on $G$ is $\chi$, recalling that in local search we have  a graph $G$ and a function  $f:V \to \mathbb{R}$, and the problem is to find a local minimum.
Then the query complexity of the following decision problem  is at most $\chi+1$:
\begin{itemize}
    \item Input: graph $G$ and oracle access to a function $h_b:[n] \to \mathbb{R} \times \{-1,0,1\}$ for some $b \in \{0,1\}$ with the property that $h_b(v) = (f(v),-1)$ when $v$ is not a local minimum of $f$, and $h_b(v) = (f(v),b)$ when $v$ is a local minimum of $f$.
    \item Output: the bit $b$.
\end{itemize}
To clarify, the algorithm is given oracle access to $h_b$, but it is not given $b$ itself.
\end{mylemma}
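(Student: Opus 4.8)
\textbf{Proof plan for \cref{lem:reduction_to_decision}.}
The plan is to show that an algorithm solving the decision problem can be built by running the (optimal) local search algorithm $\Gamma$ as a black box, using one extra query at the very end. First I would observe that the decision algorithm, given oracle access to $h_b$, can \emph{simulate} an oracle for $f$ itself: whenever $\Gamma$ wants to query $f(v)$, the decision algorithm queries $h_b(v)$, reads off the first coordinate, and returns it to $\Gamma$. This simulation is exact and costs exactly one oracle call to $h_b$ per call $\Gamma$ makes to $f$, so after this phase the decision algorithm has made $\chi$ queries in expectation and $\Gamma$ has returned some vertex $v^*$.

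The next step is to argue about what $v^*$ is. By the definition of the randomized query complexity of local search, with probability at least $9/10$ (over $\Gamma$'s coins) the vertex $v^*$ returned is a genuine local minimum of $f$. On that event, by the defining property of $h_b$, the second coordinate of $h_b(v^*)$ equals $b$. So the decision algorithm issues one final query, $h_b(v^*)$, reads its second coordinate, and outputs that value; this is correct whenever $v^*$ is a local minimum, i.e.\ with probability at least $9/10$. The total expected query count is $\chi + 1$, matching the claim. (One should note that the success probability threshold $9/10$ is exactly the one in the definition of randomized query complexity given in \cref{sec:model}, so no amplification is needed; if one wanted a cleaner statement one could also remark that the decision problem's success probability is at least that of the underlying local search algorithm.)

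I do not expect a genuine obstacle here — the lemma is a routine reduction — but the one point that warrants care is making the simulation argument rigorous in the \emph{randomized} query model: one must check that feeding $\Gamma$ the simulated $f$-oracle does not change its behaviour or its query count, which holds because $h_b$'s first coordinate is literally $f$, so the simulated oracle is indistinguishable from a true $f$-oracle and $\Gamma$'s distribution over transcripts is unchanged. The only other thing to state explicitly is that $b$ is not needed to run the simulation (the first coordinate of $h_b$ does not depend on $b$), so the construction is well-defined even though the algorithm is not told $b$. With these remarks in place the bound $\chi + 1$ follows immediately.
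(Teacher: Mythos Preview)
Your proposal is correct and follows essentially the same approach as the paper: simulate the local search algorithm $\Gamma$ by answering its $f$-queries with the first coordinate of $h_b$, then spend one extra query on the returned vertex to read off the bit $b$. The paper's proof is slightly terser but the argument, including the observation that the success probability is preserved, is identical.
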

\begin{proof}
Let $\Gamma$ be a randomized algorithm that can solve local search on $G$ such that  
\begin{itemize} 
\item $\Gamma$ has success probability at least $p$ on every input $\langle G, h_b\rangle$.
\item $\Gamma$ issues at most $\chi$ queries in expectation.
\end{itemize}

We will use the local search algorithm $\Gamma$ to construct an algorithm $\Gamma_d$ that solves the decision problem. % by searching for a local minimum in the input graph $G$. 
To simulate a query on $f(v)$ for $\Gamma$, algorithm $\Gamma_d$ will query the function $h_b$ at $v$, obtain $(f(v), c)$, with $c \in \{-1,0,1\}$, and will pass $f(v)$ to $\Gamma$.
Whenever $\Gamma$ locates a local minimum $v_{\min}$, we know that $h_b(v_{\min})$ contains the hidden bit output required by $\Gamma_d$.
Obtaining that hidden bit then requires only one additional query, at $v_{\min}$.
Since $\Gamma$ locates a local minimum $v_{\min}$ with $\chi$ queries in expectation and succeeds with probability at least $p$, we see that $\Gamma_d$ uses $\chi+1$ queries in expectation and succeeds with probability at least $p$.
\end{proof}
% When using a function notated $f_{\X}$ for the search problem, we use the notation $g_{\X,b}$ to denote the corresponding function for the decision problem.

\begin{comment}
Finally, for proving lower bounds we will use Yao's lemma.

\begin{mylemma}[\cite{yao1977minimaxprinciple}, Theorem 3]
\label{lem:yao}
% For any randomized algorithm, there exists a probability distribution on inputs to the algorithm, so that the expected cost of the randomized algorithm on its worst-case input is at least as large as the cost of the best deterministic algorithm on a random input from this distribution.
Consider a problem over the inputs $\mathcal{X}$, and let $\mathcal{A}_{\lambda}$ be the set of all possible deterministic algorithms that correctly solve the problem with probability at least $1 - \lambda$, for $0 \leq \lambda \leq 1/2$.
For any algorithm $A \in \mathcal{A}_{\lambda}$ and input $x \in \mathcal{X}$, let $c(A,x) \geq 0$ be the cost of algorithm $A$ given input $x$.
% Let $P$ be the probability distribution over the algorithms $\mathcal{A}_{\lambda}$, and let $A$ denote a random algorithm chosen according to $P$.
% Let $Q$ be the probability distribution over the inputs $\mathcal{X}$, and let $X$ denote a random input chosen according to $Q$.
Let $P$ be a probability distribution over the algorithms $\mathcal{A}_{\lambda}$ and let $Q$ be a probability distribution over the inputs $\mathcal{X}$.
Then, for $0 \leq \lambda \leq 1/2$,
\[
\inf_{P} \max_{x \in \mathcal{X}} \sum_{A \in \mathcal{A}_{\lambda}} P(A) \cdot c(A,x)
\geq \frac{1}{2} \cdot 
\sup_{Q} \min_{A \in \mathcal{A}_{2 \lambda}} \sum_{x \in X} Q(x) \cdot c(A,x)\,.
\]
% \[
% \max_{x \in \mathcal{X}} \E_{A \in \mathcal{A}_{\lambda}}[c(A,x)]
% \geq \frac{1}{2} \cdot 
% \min_{a \in \mathcal{A}_{2 \lambda}} \E_{X \in \mathcal{X}}[c(a,X)]
% \]
\end{mylemma}

\end{comment}
\newpage 
\section{The relational adversary method and our variant}
\label{sec-appendix-relational-adversary}

In this section we show our variant of the original relational adversary method from~\cite{Aaronson06}.
First, we introduce some preliminaries.
%simina{The definition of distinguish should be in the main text perhaps?}
%davin{Actually, I'm not sure if we even need an explicit definition for it. We can just mention in prose in the main text. In our proofs, we end up repeating what distinguishing means a few times anyways.} simina{Agreed, made the change}
Consider any two functions $f,g : A \to B$, for some sets $A,B$.
An element $a \in A$ is said to \emph{distinguish} the function $f$ from $g$ if $f(a) \neq g(a)$.

%\begin{mylemma}[Markov's inequality]
%\label{lem:markov}
%If $X$ is a non-negative random variable and $a > 0$, then we have $\Pr(X \geq a) \leq \E(X) / a$.
%\end{mylemma}

Next, we include the original statement from \cite{Aaronson06}, written with our notation.

\begin{restatable}[\cite{Aaronson06}, Theorem 5]{mytheorem}{aaronsonGeneralLbOriginal}
\label{thm:aaronson_general_lb_original}
Consider finite sets $A$ or $B$.
Let $\cH : B^A \to \{0,1\}$ be a map that labels each function $F : A \to B$  with $0$ or $1$.
Let $\cA \subseteq \cH^{-1}(0)$ and $\cB \subseteq \cH^{-1}(1)$. The problem is: given $A,B,\cH,\cA,\cB$, and oracle access to a function $F^*$ \footnote{In other words, we have free access to $\cH$ and the only queries counted are the ones to $F^*$, which will be of the form: ``What is $F^*(a)$?'', for some $a \in A$. The oracle will return $F^*(a)$ in one computational step.} from $\cA$ or $\cB$, 
return the label $\cH(F^*)$.

Let $r : \mathcal{A} \times \mathcal{B} \to \mathbb{R}_{\geq 0}$ be a non-zero real-valued function of our choice. For $F_1 \in \cA, F_2 \in \cB$, and  $a \in A$, define 
% \begin{align*}
\begin{equation}
\label{eq:theta}
\theta(F_1,a) = \frac{\sum_{F_3 \in \cB \;:\; F_1(a) \ne F_3(a)} r(F_1,F_3)}{\sum_{F_3 \in \cB} r(F_1,F_3)} \qquad \text{and} \qquad 
\theta(F_2,a) = \frac{\sum_{F_3 \in \cA \;:\; F_2(a) \ne F_3(a)} r(F_3,F_2)}{\sum_{F_3 \in \cA} r(F_3,F_2)} 
\end{equation}
% \end{align*}
whenever the denominators in \eqref{eq:theta} are all nonzero.
% Then the number of randomized queries to $F^*$ needed to evaluate $H$ with at least $9/10$ probability is $\Omega(1/v_{min})$, where
Then the randomized query complexity \footnote{Recall we defined the randomized query complexity as the expected number of queries required to achieve success probability at least $9/10$.} of the problem  is $1/(5 \cdot v_{min})$, where
% davin{$F$ is not defined. We also do not have the footnote clarifying ``what is a query'' in this statement.}
\begin{align*}
    v_{min} = \max_{F_1 \in \cA, F_2 \in \cB, a \in A \;:\; r(F_1,F_2) > 0, F_1(a) \ne F_2(a)} \min \Bigl\{\theta(F_1,a), \theta(F_2,a) \Bigr\}\,.
\end{align*}
\end{restatable}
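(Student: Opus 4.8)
The plan is to prove \cref{thm:aaronson_general_lb_original} by Yao's principle together with a decision-tree adversary argument in the spirit of \cite{Aaronson06}. Write $W = \sum_{F_1 \in \cA,\, F_2 \in \cB} r(F_1, F_2) > 0$, and for $F_1 \in \cA$ set $R(F_1) = \sum_{F_3 \in \cB} r(F_1, F_3)$, and symmetrically $R(F_2) = \sum_{F_3 \in \cA} r(F_3, F_2)$ for $F_2 \in \cB$; note $\sum_{F_1 \in \cA} R(F_1) = \sum_{F_2 \in \cB} R(F_2) = W$. Take the input distribution $\mu$ on $\cA \cup \cB$ with $\mu(F) = R(F)/(2W)$. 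By Yao's lemma it suffices to show that any deterministic algorithm $\Gamma$ that is correct with probability at least $9/10$ under $\mu$ has expected query cost $T := \E_{F \sim \mu}[\#\text{queries of }\Gamma\text{ on }F]$ at least $1/(5 v_{\min})$; we may assume $T < \infty$. Call a pair $(F_1, F_2) \in \cA \times \cB$ \emph{good} if $r(F_1,F_2) > 0$ and $\Gamma$ answers correctly on both $F_1$ and $F_2$. A non-good pair with $r > 0$ has an incorrect endpoint, so $W - \sum_{\text{good}} r(F_1,F_2) \le \sum_{F_1 \in \cA \text{ incorrect}} R(F_1) + \sum_{F_2 \in \cB \text{ incorrect}} R(F_2) = \sum_{F \text{ incorrect}} R(F) \le 2W \cdot \tfrac{1}{10}$, hence $\sum_{\text{good}} r(F_1,F_2) \ge \tfrac{4}{5} W$.

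Next I would organize $\Gamma$ as a decision tree: for an internal node $\nu$ let $a_\nu$ be the point queried at $\nu$, and let $\cA_\nu \subseteq \cA$, $\cB_\nu \subseteq \cB$ collect the inputs whose computation path passes through $\nu$. For a good pair $(F_1,F_2)$, $\Gamma$ outputs $0$ on $F_1$ and $1$ on $F_2$, so their transcripts differ; since the transcripts share a prefix and the next queried point is a function of the transcript so far, they first diverge at a node $\nu$ with $F_1 \in \cA_\nu$, $F_2 \in \cB_\nu$ and $F_1(a_\nu) \ne F_2(a_\nu)$. Therefore
\[
\sum_{(F_1,F_2)\text{ good}} r(F_1,F_2) \ \le\ \sum_{\nu \text{ internal}}\ \sum_{\substack{F_1 \in \cA_\nu,\ F_2 \in \cB_\nu \\ F_1(a_\nu)\ne F_2(a_\nu)}} r(F_1,F_2)\,.
\]

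The crux is the per-node estimate. Fix an internal node $\nu$ and write $a = a_\nu$. Every pair in the inner sum has $r(F_1,F_2) > 0$ and $F_1(a) \ne F_2(a)$, hence, by definition of $v_{\min}$, satisfies $\theta(F_1,a) \le v_{\min}$ or $\theta(F_2,a) \le v_{\min}$; split the sum into these two (overlapping) cases. Summing the first case first over $F_2$ and using $\sum_{F_3 \in \cB:\, F_1(a)\ne F_3(a)} r(F_1,F_3) = \theta(F_1,a)\,R(F_1)$, it is at most $\sum_{F_1 \in \cA_\nu:\, \theta(F_1,a)\le v_{\min}} \theta(F_1,a)\,R(F_1) \le v_{\min}\sum_{F_1 \in \cA_\nu} R(F_1)$, and symmetrically the second case is at most $v_{\min}\sum_{F_2 \in \cB_\nu} R(F_2)$. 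Now sum over $\nu$ grouped by depth $d \ge 0$: the sets $\{\cA_\nu : \nu \text{ internal at depth } d\}$ are pairwise disjoint, and $F_1$ belongs to one of them precisely when its path has length $> d$, so $\sum_{\nu \text{ at depth }d}\sum_{F_1\in\cA_\nu} R(F_1) = \sum_{F_1:\, \mathrm{depth}(F_1) > d} R(F_1)$; summing over $d$ and using $\sum_{d\ge 0}\mathbbm{1}[\mathrm{depth}>d] = \mathrm{depth}$ gives $\sum_{F_1 \in \cA} R(F_1)\,\mathrm{depth}(F_1)$, and likewise for $\cB$. Hence $\sum_{\text{good}} r(F_1,F_2) \le v_{\min}\big(\sum_{F_1} R(F_1)\,\mathrm{depth}(F_1) + \sum_{F_2} R(F_2)\,\mathrm{depth}(F_2)\big) = 2W v_{\min}\, T$. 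Combining with $\sum_{\text{good}} r \ge \tfrac{4}{5} W$ yields $\tfrac{4}{5} W \le 2W v_{\min} T$, i.e. $T \ge \tfrac{2}{5 v_{\min}} \ge \tfrac{1}{5 v_{\min}}$, as claimed.

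I expect the main obstacle to be making the decision-tree bookkeeping airtight: that each good pair is witnessed by a node on the shared transcript prefix (using determinism and the fact that the queried point depends only on the transcript so far), and the depth-by-depth regrouping that turns a sum over tree nodes into $\E_\mu[\mathrm{depth}]$ — it is here that \emph{expected} rather than worst-case query cost is what naturally appears, so one must allow variable-length computation paths and assume $T < \infty$ (the bound being trivial otherwise). The per-node split exploiting the $\min$ inside $v_{\min}$ is the conceptual heart but is short once the framework is in place; minor care is also needed so that $\theta$ is evaluated only where its denominator is positive, which holds automatically for pairs with $r(F_1,F_2) > 0$ since then $R(F_1), R(F_2) \ge r(F_1,F_2) > 0$.
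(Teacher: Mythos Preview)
The paper does not prove this statement: \cref{thm:aaronson_general_lb_original} is quoted verbatim as Theorem~5 of \cite{Aaronson06} and used only as a point of comparison with the paper's own variant (\cref{thm:our-variant}). So there is no ``paper's own proof'' to compare your proposal against.

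That said, your argument is correct and is essentially the classical proof. It is also close in spirit to how the paper proves its own variant \cref{thm:our-variant}: both set up a hard distribution with $\mu(F) \propto R(F)$, both show that correctness forces a large fraction of the total $r$-weight to be ``separated'' (your good pairs, their final progress $S^{(T)} \ge \tfrac{3}{5}M(\cX)$), and both bound per-step progress. The one structural difference is that the paper organizes the per-step bound via a global progress measure $S^{(t)}$ and equivalence classes of inputs sharing a transcript, whereas you work directly with decision-tree nodes and charge each good pair to its first divergence node; these are two presentations of the same idea. Your per-node split exploiting the $\min$ in $v_{\min}$ is exactly the place where Aaronson's formulation diverges from the paper's, and you handle it correctly.
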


The proof centers on a difficult input distribution under which the denominator of $\theta(F_1,a)$ (respectively $\theta(F_2,a)$) is the likelihood that $F_1$ (respectively $F_2$) is sampled, conditioned on the input being sampled from $\cA$ (respectively $\cB$).
The numerator of $\theta(F_1,a)$ (respectively $\theta(F_2,a)$) is the progress that is made via querying $a$ towards distinguishing $F_1$ (respectively $F_2$) from the other functions.

Our variant uses the same framework of relating pairs of inputs through some ``relation'' $r$, but the lower bound expression is based on another average type of argument.

\subsection{Our variant of the relational adversary method}
\label{sec:relative-adv-ours}

Now we restate our variant of the relational adversary method.

\ourvariant*

Let us briefly  interpret \cref{thm:our-variant}.
%Ultimately, there will be a hard distribution of functions. %$F$.
We use Yao's lemma, and so it will suffice to design a ``hard'' distribution of input functions and analyze the performance of a deterministic algorithm when given inputs from this distribution.

The theorem considers the probability distribution $P$ where each function $F \in \cX$ is given as input with probability $P(F) = \frac{M(\{F\})}{M(\cX)}$, where $\mathcal{X}$ is the space of possible functions.

The quantity $M(\cZ)$ is the likelihood that the function $F^*$ sampled from this distribution lies in $\cZ$.
The quantity $q(\cZ)$ is the largest amount of progress possible in a single query once the algorithm already knows that the given function $F^*$ lies in $\cZ$.

\medskip 
Now we are ready to prove \cref{thm:our-variant}.

\begin{proof}[Proof of Theorem~\ref{thm:our-variant}]
Given a relation $r$ with the properties required by the theorem, define \begin{align}
    M(\{F_1\}) & = \sum_{F_2 \in \cX} r(F_1,F_2) \; \; \forall F_1 \in \cX \; \; \; \mbox{and} \; \; \; 
    M(\cX) = \sum_{F_1 \in \cX} M(\{F_1\}) \,.
\end{align}  
We  consider the   distribution $P$ over functions in $\cX$ that selects each function $F \in \cX$ with probability 
$P(F) = {M(\{F\})}/{M(\cX)}\,.$
The theorem claims a lower bound when there exists a subset $\cZ \subseteq \cX$ with $q(\cZ) > 0$, so we may assume such a subset $\cZ$ exists.
By \cref{lem:q_positive_implies_M_positive}, this implies $M(\cZ) > 0$, and so
\begin{align} \label{eq:M_X_positive}
M(\cX) > 0 \,.
\end{align}
Thus $P(F) = M(\{F\})/M(\cX)$ is well defined.

We say an algorithm succeeds on an input function $F$ if it outputs the correct label $\cH(F)$. Let $R$ be the randomized query complexity (on the worst case input $F \in \cX$) for success probability $19/20$.
Let $\Gamma_d$ be the best  {deterministic} algorithm \footnote{That is, with smallest expected number of queries possible.}  that succeeds with probability at least $9/10$ when the input is a random function drawn from distribution $P$.
Let $D$ be the expected number of queries issued by $\Gamma_d$ on input distribution $P$.
%and $\Gamma$ the best {randomized} algorithm that  succeeds with probability at least $19/20$ on every input $F \in \cX$
 Yao's lemma (\cite{yao1977minimaxprinciple}, Theorem 3) yields $2R \geq D$. Thus to lower bound $T$, it will suffice to lower bound $D$. Let $T = 10D$.

%Let $\Gamma$ be a \emph{randomized} algorithm which on every input $F \in \cX$, outputs the label $\cH(F)$ with probability at least $19/20$.
% Let $T$ be the maximum number of queries that $\Gamma$ issues on any input.
%Let $R$  be the expected query complexity of $\Gamma$ on its worst case input. Define $T = 5R$. By Yao's lemma (\cite{yao1977minimaxprinciple}, Theorem 3), the best  {deterministic} algorithm, denoted $\Gamma_d$, that succeeds with probability at least $9/10$   issues at most $2T/5$ queries in expectation and when the input is drawn from $P$. Thus it will suffice to give a lower bound on $T$, which yields a lower bound on the performance of the deterministic algorithm $\Gamma_d$, and in turn that implies a lower bound on the work done by  randomized algorithm $\Gamma$.

Let $\Gamma^*$ be the truncation of $\Gamma_d$ after $T$ queries. We will analyze the expected number of queries made by $\Gamma^*$ when facing distribution $P$. 
Let $X$ be the random variable representing the number of queries issued by $\Gamma_d$. Then $\Ex[X] = D$. 
By Markov's inequality, 
\begin{align}
    \Pr[X \ge T] \le \frac{\Ex[X]}{T} = \frac{D}{T} = \frac{1}{10}\,. \label{eq:application_markov_X_and_T}
\end{align}

We have  
    \begin{align} 
    \Pr[X>T] + \Pr[\Gamma_d \; \mbox{succeeds and } X \leq T] & \geq  \Pr[\Gamma_d \; \mbox{succeeds and } X>T] + \Pr[\Gamma_d \; \mbox{succeeds and } X \leq T] \notag \\
    & = \Pr[\Gamma_d \; \mbox{succeeds}]\,. \label{eq:gamma_d_succeeds_truncation_T}
    \end{align}
Then 
\begin{align}
    \Pr[\Gamma^* \text{ succeeds}] &= \Pr[\Gamma_d \text{ succeeds and } X \le T] \notag \\
        &\ge \Pr[\Gamma_d \text{ succeeds}] - \Pr[X > T] \explain{By \eqref{eq:gamma_d_succeeds_truncation_T}}\\
        &\ge 9/10 - 1/10 = 4/5 \explain{By choice of $\Gamma_d$ and \eqref{eq:application_markov_X_and_T}}
\end{align}

% davin{We can add ``Suppose there exists a subset $\cZ \subseteq \cX$ with $q(\cZ) > 0$'' before ``Then, the randomized query...'' in the theorem statement. I don't think we need a separate lemma. We can just have a short paragraph explaining why $q(\cZ) > 0$ implies $M(\cZ) > 0$ and so $M(\cX) > 0$.}

% By Yao's lemma, there exists a \emph{deterministic} algorithm $\Gamma^*$ that issues at most $T$ queries and succeeds with probability at least $4/5$ when the input is drawn from $P$.

% By Markov's inequality (\cref{lem:markov}), the probability that $\Gamma_d$ issu$T/5$es more than $T$ queries is at most $1/10$.
% Therefore $\Gamma^*$ succeeds with probability at least $9/10 - 1/10 = 4/5$ when the input is drawn from $P$.

Algorithm $\Gamma^*$ is said to \emph{distinguish} $F_1 \in \cX$ from $F_2 \in \cX$ within the first $t$ queries if $\Gamma^*$ queries an element $a \in A$ with the property that $F_1(a) \neq F_2(a)$ within the first $t$ queries.
For all $t \in \mathbb{N}$, $F_1 \in \cX$, and $F_2 \in \cX$, define 
\[
I^{(t)}(F_1, F_2) = 
\begin{cases}
1 & \text{if algorithm $\Gamma^*$ distinguishes $F_1$ from $F_2$ within the first $t$ queries}\\
0 & \text{otherwise}\,.
\end{cases}
\]

For each function $F_1 \in \cX$ and index $t \in \{0, \ldots, T\}$, we define a ``local progress measure'' $S^{(t)}(F_1)$ that counts the number of elements $F_2 \in \cX$ distinguished from {$F_1$} by the $t$-th query, weighted by the relation $r(F_1,F_2)$.
Formally, for each $F_1 \in \cX$ and $0 \leq t \leq T$, let 
\[
S^{(t)}(F_1) = \sum_{\substack{F_2 \in \cX :\\ I^{(t)}(F_1, F_2) = 1}} r(F_1, F_2) \,.
\]
{Summing over all the functions in $\cX$}, we obtain a ``global progress measure'' $S^{(t)}$:
\[
S^{(t)} = \sum_{F_1 \in \cX} S^{(t)}(F_1) \,.
\]
The difference in progress between consecutive queries can then be defined as: 
\begin{equation}
\label{eq:delta_S_t}
\Delta S^{(t)} = S^{(t)} - S^{(t-1)} \,.
\end{equation}

% We will next prove three properties about the progress measure:
% \begin{description}
%     \item[Property 1] $S^{(0)} = 0$.
%     \item[Property 2] $S^{(T)} \geq 3 \cdot M(\cX)/5$.
%     \item[Property 3] For all $t \in [T]$, we have $\Delta S^{(t)} \leq M(\cX) \cdot \max\limits_{\cZ \subseteq \cX:\; M(\cZ) > 0} \frac{q(\cZ)}{M(\cZ)}$.
% \end{description}

% \begin{claim}
% \label{claim:variant_claim_1}
% $S^{(0)} = 0$.
% \end{claim}
% \begin{proof}
% No queries have been issued at time zero, so nothing has been distinguished.
% \end{proof}

To lower bound the progress, we show in three steps that (a) the initial value of the progress measure,  $S^{(0)}$, is zero;  (b) the final value of the progress measure, $S^{(T)}$, is ``large''; and (c) the difference in the progress measure between consecutive rounds, $\Delta S^{(t)}$, is ``small'' for each round $t$.

\paragraph{Step (a)}
No queries have been issued at time zero, so nothing has been distinguished.
Therefore
\begin{align}\label{eq:variant_claim_1}
    S^{(0)} = 0\,.
\end{align}

% simina{Remove this; call them step I, II, and III and remove the proof environment. Proofs should not have recursive structure}
\paragraph{Step (b)}

Define $\cW = \{F_1 \in \cX \mid \Gamma^*(F_1) = \cH(F_1)\}$. This is the set of functions on which $\Gamma^*$ succeeds, i.e. finds the correct label. 

We claim   algorithm $\Gamma^*$ distinguishes each pair of functions $F_1, F_2 \in \cW$ for which $\cH(F_1) \ne \cH(F_2)$. That is,  $\Gamma^*$ must have queried an index $a \in A$ such that $F_1(a) \neq F_2(a)$ within the first $T$ queries.
To see this, suppose towards a contradiction   there exists a pair $F_1,F_2 \in \cW$ with $\cH(F_1) \neq \cH(F_2)$ such that $\Gamma^*$ only queries  indices $a \in A$ with $F_1(a) = F_2(a)$.
Then, $\Gamma^*$ cannot differentiate whether the input function  is $ F_1$ or $ F_2$, and so must have the same output $\Gamma^*(F_1) = \Gamma^*(F_2)$ since $\Gamma^*$ is deterministic. Thus $\Gamma^*$ makes a mistake on one of $F_1$ or $F_2$ because $\cH(F_1) \neq \cH(F_2)$.
This  contradicts the choice of  $F_1, F_2 \in \cW$ as  inputs on which $\Gamma^*$ is successful.
Thus, the algorithm $\Gamma^*$ distinguishes each pair of inputs $F_1, F_2 \in \cW$ within the first $T$ queries.

Formally, we have  
\begin{equation}
\label{eq:WF1_WF2_implie_PT}
I^{(T)}(F_1, F_2) = 1, \qquad \text{for all $F_1, F_2 \in \cW$ where $\cH(F_1) \neq \cH(F_2)$.}
\end{equation}

Moreover, since $\Gamma^*$ succeeds when the input is a function from $\cW$ and fails otherwise, the success probability on  input distribution $P$ is at least $4/5$,  and the distribution $P$ samples each function $F_1 \in \cX$ with probability $P(F_1) = {M(\{F_1\})}/{M(\cX)}$, we have 
\begin{align} \label{eq:gamma_star_success_prob_at_least_4_over_5}
	\sum_{F_1 \in \cW}   \frac{M(\{F_1\})}{M(\cX)} = \sum_{F_1 \in \cW} P(F_1) \geq 4/5
	\quad \text{ and } \quad
	\sum_{F_2 \in \cX \setminus \cW} \frac{M(\{F_2\})}{M(\cX)} \leq 1/5\,.
\end{align}

Thus,
\begin{align*} 
S^{(T)} & = \sum_{\substack{F_1,F_2 \in \cX :\\ I^{(T)}(F_1, F_2)=1}} r(F_1, F_2)  \explain{By definition of $S^{(T)}$}\\
&\geq \sum_{\substack{F_1,F_2 \in \cW: \\ I^{(T)}(F_1, F_2) = 1}} r(F_1, F_2) \explain{Since $\cW \subseteq \cX$ and $r$ is non-negative} \\
&= \sum_{F_1, F_2 \in \cW} r(F_1,F_2) \explain{By \cref{eq:WF1_WF2_implie_PT} and $r(F_1,F_2) = 0$ if $\cH(F_1) = \cH(F_2)$}\\
&= \sum_{F_1 \in \cW, F_2 \in \cX} r(F_1,F_2) - \sum_{F_1 \in \cW, F_2 \in \cX \setminus \cW} r(F_1, F_2)\\
&= \sum_{F_1 \in \cW} M(\{F_1\}) - \sum_{F_1 \in \cW, F_2 \in \cX \setminus \cW} r(F_1, F_2) \explain{By definition of $M(\{F_1\})$} \\
&\geq \sum_{F_1 \in \cW} M(\{F_1\}) - \sum_{F_1 \in \cX, F_2 \in \cX \setminus \cW} r(F_1, F_2) \explain{Since $\cW \subseteq \cX$ and $r$ is non-negative}\\
&= \sum_{F_1 \in \cW} M(\{F_1\}) - \sum_{F_1 \in \cX, F_2 \in \cX \setminus \cW} r(F_2, F_1) \explain{By symmetry of $r$.}\\
&= \sum_{F_1 \in \cW} M(\{F_1\}) - \sum_{F_2 \in \cX \setminus \cW} M(\{F_2\}) \explain{By definition of $M(\{F_2\})$} \\
&\geq \frac{4}{5}M(\cX) - \frac{1}{5}M(\cX)  \explain{By \cref{eq:gamma_star_success_prob_at_least_4_over_5}}\\
&= \frac{3}{5}M(\cX)\,.
\end{align*}

Therefore,
\begin{align}\label{eq:variant_claim_2}
    S^{(T)} \geq 3 \cdot M(\cX)/5\,.
\end{align}

\paragraph{Step (c)}
  By Lemma~\ref{lem:inequality_delta_S_t_over_M_X}, for each $t \in [T]$, the maximum progress made by the $t$-th query, $\Delta S^{(t)}$, can be bounded as follows:  
\begin{align}
    \Delta S^{(t)} \leq M(\cX) \cdot \max\limits_{\cZ \subseteq \cX:\; M(\cZ) > 0} \frac{q(\cZ)}{M(\cZ)}\,. \label{eq:variant_claim_3}  
\end{align} 

\paragraph{Combining steps (a,b,c).} We obtain:
% nick{
\begin{align*}
\frac{3}{5} \cdot M(\cX)
&\leq S^{(T)} \explain{By \cref{eq:variant_claim_2}}\\
&= S^{(0)} + \sum_{t=1}^T \Delta S^{(t)} \explain{By definition of $S^{(T)}$}\\
&= 0 + \sum_{t=1}^T \Delta S^{(t)} \explain{Since $S^{(0)} = 0$ by \cref{eq:variant_claim_1}}\\
& \leq T \cdot M(\cX) \cdot \max\limits_{\cZ \subseteq \cX:\; M(\cZ) > 0} \frac{q(\cZ)}{M(\cZ)}  \explain{By \cref{eq:variant_claim_3}}\\
&= T \cdot M(\cX) \cdot \max\limits_{\cZ \subseteq \cX:\; M(\cZ) > 0 \text{ and } q(\cZ) > 0} \frac{q(\cZ)}{M(\cZ)}  \explain{Since there exists $\cZ \subseteq \cX$ with $q(\cZ) > 0$.}\\ %\explain{Since $M(\cX) > 0$.}\\
&= T \cdot M(\cX) \cdot \max\limits_{\cZ \subseteq \cX:\; q(\cZ) > 0} \frac{q(\cZ)}{M(\cZ)} \,. \explain{Since   $q(\cZ) > 0$ implies $M(\cZ) > 0$.}% \explain{By \cref{lem:q_positive_implies_M_positive}.}
\end{align*}
% }

Rearranging and denoting  $\zeta = \min\limits_{\cZ \subseteq \cX:\; q(\cZ) > 0}
\frac{M(\cZ)}{q(\cZ)}$, we get
\begin{align} \label{eq:lb_T_min_M_over_q}
T
\geq \frac{3}{5} \cdot \frac{1}{\max\limits_{\cZ \subseteq \cX:\; q(\cZ) > 0}
%nick{q(\cZ) > 0}}
\frac{q(\cZ)}{M(\cZ)}}
= \frac{3 \zeta }{5}\,.
\end{align}

In summary, the randomized query complexity $R$ when the success  probability is $19/20$  can be bounded as follows: \begin{align} \label{eq:lb_R_3zeta_over_100}
R \geq \frac{D}{2} = \frac{T}{20} \ge \frac{3 \zeta }{100} \,.
\end{align}

Let $\mathcal{A}$ be  an arbitrary randomized algorithm that  succeeds with probability at least $9/10$. If $\mathcal{A}$ issues fewer than $  \zeta/100$ queries in expectation, 
then $\mathcal{A}$   could be repeated three times and the majority answer returned; this would achieve a greater than $19/20$ probability of success and fewer than $   3\zeta/100$ queries in expectation, which would contradict \cref{eq:lb_R_3zeta_over_100}.
Thus $\mathcal{A}$ issues at least $\zeta/100$ queries in expectation on its worst case input.
Thus the randomized query complexity is lower bounded  as required by the theorem statement.
\end{proof}

\begin{mylemma}
\label{lem:q_positive_implies_M_positive}
    Let $\cZ \subseteq \cX$ be a subset with $q(\cZ) > 0$.
    Then $M(\cZ) > 0$.
\end{mylemma}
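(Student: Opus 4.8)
The statement to prove is Lemma \ref{lem:q_positive_implies_M_positive}: if $\cZ \subseteq \cX$ satisfies $q(\cZ) > 0$, then $M(\cZ) > 0$. Unwinding the definitions, $q(\cZ) = \max_{a \in A} \sum_{F_1 \in \cZ} \sum_{F_2 \in \cZ} r(F_1,F_2) \cdot \mathbbm{1}_{\{F_1(a) \neq F_2(a)\}}$, so $q(\cZ) > 0$ forces the existence of some element $a^* \in A$ and some pair $F_1^*, F_2^* \in \cZ$ with $r(F_1^*, F_2^*) > 0$ and $F_1^*(a^*) \neq F_2^*(a^*)$ (since every summand is nonnegative, a positive sum needs a positive term). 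In particular $r(F_1^*, F_2^*) > 0$ with both $F_1^*, F_2^* \in \cZ$.

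The plan is then to chain this positive relation value into a lower bound on $M(\cZ)$. Recall $M(\cZ) = \sum_{F_1 \in \cZ} \sum_{F_2 \in \cX} r(F_1, F_2)$, and that $r$ is nonnegative everywhere. Since $F_1^* \in \cZ$ and $F_2^* \in \cX$ (as $\cZ \subseteq \cX$), the term $r(F_1^*, F_2^*)$ appears in this double sum. All other terms are $\geq 0$, so $M(\cZ) \geq r(F_1^*, F_2^*) > 0$, which is exactly what we want. I would write this as a one-line inequality chain: $M(\cZ) \geq r(F_1^*,F_2^*) > 0$, justified by nonnegativity of $r$ and membership $F_1^* \in \cZ$, $F_2^* \in \cX$.

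There is essentially no obstacle here — the lemma is a bookkeeping observation ensuring the distribution $P$ in the proof of \cref{thm:our-variant} is well-defined whenever the hypothesis $q(\cZ) > 0$ holds. The only point requiring a (trivial) word of care is the first step: extracting a single positive summand from a positive finite sum of nonnegative terms, which is immediate since $A$, $\cZ$ are finite and $r \geq 0$. I would present the argument in three short sentences with no displayed equations needed, or at most one inline chain of inequalities.
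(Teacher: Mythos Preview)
Your proposal is correct and follows essentially the same approach as the paper: both arguments use nonnegativity of $r$ together with $\cZ \subseteq \cX$ to dominate $q(\cZ)$ by $M(\cZ)$. The only cosmetic difference is that the paper writes the comparison as a single inequality chain $M(\cZ) \geq q(\cZ) > 0$ (bounding the indicator by $1$ and shrinking the inner sum from $\cX$ to $\cZ$), whereas you extract one positive witness term; the paper's version thus records the slightly stronger fact $M(\cZ) \geq q(\cZ)$ for all $\cZ$, but the underlying idea is identical.
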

\begin{proof} By definition of $M(\cZ)$, we have 
    \begin{align*}
        M(\cZ) &= \sum_{F_1 \in \cZ} \sum_{F_2 \in \cX} r(F_1,F_2) \\
        &\ge \max_{v \in [n]} \sum_{F_1 \in \cZ} \sum_{F_2 \in \cZ} r(F_1,F_2) \mathbbm{1}_{\{F_1(v) \ne F_2(v)\}} \explain{Since $\mathbbm{1}_{\{F_1(v) \ne F_2(v)\}} \le 1$ and $r(F_1,F_2) \ge 0$.}\\
        &= q(\cZ)\,. \explain{By definition of $q(\cZ)$.}
    \end{align*}
Since $\cZ$ was chosen such that $q(\cZ) > 0$ and  $M(\cZ) \geq q(\cZ)$, we conclude that $M(\cZ) > 0$ as required.
\end{proof}

\begin{lemma}
    \label{lem:inequality_delta_S_t_over_M_X}
    In the setting of step (c) of Theorem~\ref{thm:our-variant},  for  each $t \in [T]$, we have 
\begin{align}
    \Delta S^{(t)} \leq M(\cX) \cdot \max\limits_{\cZ \subseteq \cX:\; M(\cZ) > 0} \frac{q(\cZ)}{M(\cZ)}\,. \notag 
\end{align}
\end{lemma}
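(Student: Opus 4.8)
The plan is to analyze the single query made by $\Gamma^*$ at step $t$. Fix $t \in [T]$ and consider the state of $\Gamma^*$ just before issuing its $t$-th query. Since $\Gamma^*$ is deterministic, the first $t-1$ queries and their answers determine a unique next query $a_t \in A$; the increment $\Delta S^{(t)} = S^{(t)} - S^{(t-1)}$ counts exactly the $r$-weight of the pairs $(F_1, F_2)$ that become distinguished precisely by learning $F_1(a_t)$ versus $F_2(a_t)$, i.e.\ pairs with $I^{(t-1)}(F_1,F_2) = 0$ but $F_1(a_t) \ne F_2(a_t)$. First I would regroup this sum over the ``knowledge partition'': the transcript of the first $t-1$ query/answer pairs partitions $\cX$ into equivalence classes, where $F_1 \sim F_2$ iff they agree on all of the first $t-1$ queried elements. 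Any pair from two different classes is already distinguished (contributes $0$ to $\Delta S^{(t)}$), so all of $\Delta S^{(t)}$ comes from pairs lying inside a single class. Let $\cZ^{\text{cur}}$ be the class consistent with the actual input $F^*$ — but more carefully, since $\Gamma^*$ is deterministic and the query $a_t$ is the same regardless of which class we are ``really'' in only if the transcripts agree, I would instead sum over \emph{all} classes $\cZ_1, \dots, \cZ_m$ (each a subset of $\cX$), and note that within class $\cZ_k$ the algorithm would ask some element $a_k \in A$.

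Thus $\Delta S^{(t)} \le \sum_{k=1}^m \sum_{F_1, F_2 \in \cZ_k} r(F_1,F_2)\,\mathbbm{1}_{\{F_1(a_k) \ne F_2(a_k)\}} \le \sum_{k=1}^m q(\cZ_k)$, using the definition of $q$ applied to each class $\cZ_k$ (the max over $a \in A$ dominates the particular choice $a_k$). For each class with $q(\cZ_k) > 0$ we have $q(\cZ_k) = \frac{q(\cZ_k)}{M(\cZ_k)} \cdot M(\cZ_k) \le \left(\max_{\cZ \subseteq \cX : M(\cZ) > 0} \frac{q(\cZ)}{M(\cZ)}\right) \cdot M(\cZ_k)$, invoking Lemma~\ref{lem:q_positive_implies_M_positive} to ensure $M(\cZ_k) > 0$ whenever $q(\cZ_k) > 0$ so the ratio is well-defined; classes with $q(\cZ_k) = 0$ contribute nothing. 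Finally, since the classes $\cZ_1, \dots, \cZ_m$ are disjoint subsets of $\cX$, we have $\sum_k M(\cZ_k) = \sum_k \sum_{F_1 \in \cZ_k} \sum_{F_2 \in \cX} r(F_1,F_2) \le \sum_{F_1 \in \cX} \sum_{F_2 \in \cX} r(F_1,F_2) = M(\cX)$, and chaining the inequalities gives $\Delta S^{(t)} \le M(\cX) \cdot \max_{\cZ \subseteq \cX : M(\cZ) > 0} \frac{q(\cZ)}{M(\cZ)}$ as claimed.

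The main obstacle I expect is making the ``partition into classes consistent with the transcript'' argument fully rigorous — in particular, being careful that a pair $(F_1, F_2)$ distinguished at step $t$ but not before really does lie in a common class of the step-$(t-1)$ partition (true, since not-distinguished-before means they agree on all previously queried elements), and that within such a class the $t$-th query is a single well-defined element $a_k$ (true by determinism of $\Gamma^*$, since its next query depends only on the transcript so far). A secondary subtlety is the edge case where some class $\cZ_k$ is a singleton or has $q(\cZ_k) = 0$: those simply drop out of the bound, so they cause no harm, but the write-up should note this explicitly so that the division $q(\cZ_k)/M(\cZ_k)$ is only ever performed when legitimate.
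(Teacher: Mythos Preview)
Your proposal is correct and follows essentially the same approach as the paper: partition $\cX$ into equivalence classes by the transcript of the first $t-1$ queries, observe that all new distinguishing at step $t$ happens within a class at a single deterministic query point, bound each class's contribution by $q(\cZ_k)$, and aggregate using $\sum_k M(\cZ_k) = M(\cX)$. The only cosmetic difference is that the paper writes the combining step as a mediant-type inequality $\frac{\sum_i a_i}{\sum_i b_i} \le \max_i \frac{a_i}{b_i}$ (after dropping classes with $M(\cX_i)=0$ via your same observation), whereas you multiply through; the two are equivalent.
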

\begin{proof}
% Consider the state just before the $t$-th query.
Let $\Psi = A^{t-1} \times B^{t-1}$ be the set of possible sequences of the first $t-1$
queries and their answers;  each $\psi \in \Psi$ a ``history''. 
Let $\psi_1, \psi_2, \ldots, \psi_k$ be  the histories in  $\Psi$, where $k = |A|^{t-1} \cdot |B|^{t-1}$.

For all $i \in [k]$, let $\cX_i \subseteq \cX$ be the set of inputs on which $\Gamma^*$ would have history $\psi_i$ for the first $t-1$ queries; this is well-defined because $\Gamma^*$ is deterministic.
In other words, we can partition $\cX$ into equivalence classes $\cX_1, \cX_2, \ldots, \cX_k$ so that two inputs have the same history over the first $t-1$ queries if and only if they are in the same equivalence class.
This induces three useful facts:

\paragraph{Fact 1.} For each $F_1,F_2 \in \cX_i$ for some $i$, we know that $F_1$ and $F_2$ must receive the same $t$-th query since $\Gamma^*$ is deterministic.
    Let $a^{(t)}_i \in A$ be the location of this query.
    For $i \ne j$, we may have $a^{(t)}_i \ne a^{(t)}_j$; i.e. each history may have a different query in the same round.

 For each $F_1 \in \cX_i$ and $F_2 \in \cX_j$ with $i \neq j$, functions $F_1$ and $F_2$ have already been distinguished since they have different histories.
    This implies that all future progress must come from pairs from the same equivalence class.
    For all $a \in A$ and $F_1,F_2 \in \cX$, let us define 
    \[ r_a(F_1,F_2) = r(F_1,F_2) \cdot \mathbbm{1}_{\{F_1(a) \neq F_2(a)\}} \,.
    \] 
Then  
    \begin{align}
    \Delta S^{(t)} &= S^{(t)} - S^{(t-1)}\notag\\
    &= \sum_{F_1 \in \cX} \left(  S^{(t)}(F_1) - S^{(t-1)}(F_1)\right) \explain{By definition of $S^{(t)}$ and $S^{(t-1)}$.}\\
    &= \sum_{F_1 \in \cX} \sum_{\substack{F_2 \in \cX\\ I^{(t)}(F_1,F_2) = 1\\ I^{(t-1)}(F_1,F_2) = 0}} r(F_1,F_2) \explain{By definition of $S^{(t)}(F_1)$ and $S^{(t-1)}(F_1)$}\\
    &= \sum_{i=1}^k \sum_{F_1 \in \cX_i} \left( \sum_{\substack{F_2 \in \cX_i\\ I^{(t)}(F_1,F_2) = 1\\ I^{(t-1)}(F_1,F_2) = 0}} r(F_1,F_2) + \sum_{\substack{F_2 \not \in \cX_i\\ I^{(t)}(F_1,F_2) = 1\\ I^{(t-1)}(F_1,F_2) = 0}} r(F_1,F_2)  \right) \notag\\
    &= \sum_{i=1}^k \sum_{F_1 \in \cX_i} \sum_{\substack{F_2 \in \cX_i\\ I^{(t)}(F_1,F_2) = 1}} r(F_1,F_2)  \explain{By definition of $\cX_i$, each function $F_2$ with  $I^{(t-1)}(F_1,F_2) = 0$ is in $\cX_i$.}\\
    &= \sum_{i=1}^k \sum_{F_1 \in \cX_i} \sum_{F_2 \in \cX_i} r(F_1,F_2) \cdot \mathbbm{1}_{\{F_1(a_i^{(t)}) \neq F_2(a_i^{(t)}\}} \explain{Since $F_1$ and $F_2$ are distinguished at time $t$, when query $a_i^{(t)}$ is issued.} \\ 
    &= \sum_{i=1}^k \sum_{F_1 \in \cX_i} \sum_{F_2 \in \cX_i} r_{a_i^{(t)}}(F_1,F_2)  \,. \label{eq:St}
    \end{align}

\paragraph{Fact 2.} Since $\cX_1, \cX_2, \ldots, \cX_k$ is a partition of $\cX$, we have
    \begin{equation}
    \label{eq:M*}
    M(\cX) = \sum_{i=1}^k M(\cX_i)\,.
    \end{equation}
    %We also know that $M(\cX) > 0$ since there exists $\cZ \subseteq \cX$ with $q(\cZ) > 0$
    % nick{$q(\cZ) > 0$} by assumption.

\paragraph{Fact 3.} For all $i \in [k]$, we have
    \begin{align} 
        M(\cX_i) = \sum_{F_1 \in \cX_i} \sum_{F_2 \in \cX} r(F_1,F_2) \ge \sum_{F_1 \in \cX_i} \sum_{F_2 \in \cX_i} r_{a_i^{(t)}}(F_1,F_2) \notag \,.
    \end{align}
    Therefore, since $r$ is non-negative,
    \begin{equation} \label{eq:bi0_implies_ai0}
        \Big( M(\cX_i) = 0 \Big) \implies \left( \sum_{F_1 \in \cX_i} \sum_{F_2 \in \cX_i} r_{a_i^{(t)}}(F_1,F_2) = 0 \right)\,.
    \end{equation}

% For all $a \in A$ and $F_1,F_2 \in \cX$, let us define $r_a(F_1,F_2) = r(F_1,F_2) \cdot \mathbbm{1}_{\{F_1(a) \neq F_2(a)\}}$.
\paragraph{Combining facts 1, 2,  and 3.} We get
\begin{align*}
\frac{\Delta S^{(t)}}{M(\cX)}
&= \frac{\sum_{i \in [k]} \sum_{F_1 \in \cX_i} \sum_{F_2 \in \cX_i} r_{a_i^{(t)}}(F_1,F_2)}{\sum_{i \in [k]} M(\cX_i)}  \explain{From \cref{eq:St} and \cref{eq:M*}}\\
&= \frac{\sum_{\substack{i \in [k]:\\ M(\cX_i) > 0}} \left( \sum_{F_1 \in \cX_i} \sum_{F_2 \in \cX_i} r_{a_i^{(t)}}(F_1,F_2) \right) }{\sum_{\substack{i \in [k]:\\ M(\cX_i) > 0}} M(\cX_i)}  \explain{By \cref{eq:bi0_implies_ai0}}\\
&  \leq \max_{\substack{i \in [k]:\\ M(\cX_i) > 0}} \frac{\sum_{F_1 \in \cX_i} \sum_{F_2 \in \cX_i} r_{a_i^{(t)}}(F_1,F_2)}{M(\cX_i)} \explain{Maximizing over $i \in [k]$ with $ M(\cX_i) > 0$} \\
& {\leq \max_{\substack{\cZ \subseteq \cX,\; a \in A:\\ M(\cZ) > 0}} \frac{\sum_{F_1 \in \cZ} \sum_{F_2 \in \cZ} r_a(F_1,F_2)}{M(\cZ)}  \explain{Maximizing over $\cZ \subseteq \cX$ and $a \in A$}}\\
&= \max_{\cZ \subseteq \cX:\; M(\cZ) > 0} \frac{q(\cZ)}{M(\cZ)}\,.  \explain{By definition of $q(\cZ)$}
\end{align*}

Therefore
$ \Delta S^{(t)} \leq M(\cX) \cdot \max\limits_{\cZ \subseteq \cX:\; M(\cZ) > 0} \frac{q(\cZ)}{M(\cZ)}$, as required by the lemma.
\end{proof}
\subsection{Matrix game}
\label{sec:appendix-matrix-game}

In this section, we describe a toy problem which highlights the advantage of \cref{thm:our-variant} over \cref{thm:aaronson_general_lb_original}.

\paragraph{Setup}
Let $n \in \N$ be a square number and $\cX$ be a subset of square $\sqrt{n} \times \sqrt{n}$ matrices with entries from $\{0,1,2\}$.
There are two types of matrices within $\cX$: ``row'' matrices and ``column'' matrices.
Row matrices have one row of $1$s with all other entries $0$ while column matrices have one column of $2$s with all other entries $0$.
For example:
\[
    \begin{bmatrix}
        0&0&0&0\\
        1&1&1&1\\
        0&0&0&0\\
        0&0&0&0\\
    \end{bmatrix}
    \quad \text{is a row matrix;  } \qquad  
    \begin{bmatrix}
        0&0&2&0\\
        0&0&2&0\\
        0&0&2&0\\
        0&0&2&0\\
    \end{bmatrix}
    \quad \text{is a column matrix.}
\]
So, $|\cX| = 2 \sqrt{n}$ since there are $\sqrt{n}$ distinct row matrices and $\sqrt{n}$ distinct column matrices.

\paragraph{The game}
Given $n$ and oracle access to a matrix $F \in \cX$, the goal is to correctly declare whether $F$ is a row or column matrix.

One can check that $\sqrt{n}$ queries suffices by querying the main diagonal: if any ``$1$'' is detected, declare ``row''; if any ``$2$'' is found, declare ``column''.
Even with randomization, one intuitively expects that $\Omega(\sqrt{n})$ queries are necessary.
In fact, this is what we can show using \cref{thm:our-variant}.

\gridgametheoremour*
\begin{proof}
The upper bound of $\cO(\sqrt{n})$ follows by querying the main diagonal: if any ``$1$'' is detected, declare ``row''; if any ``$2$'' is found, declare ``column''.

To show the lower bound of $\Omega(\sqrt{n})$, we instantiate \cref{thm:our-variant} with the following definitions:
\begin{itemize}
    \item Finite set $A$ is the set of $n$ coordinates within $\sqrt{n} \times \sqrt{n}$ matrix.
    \item Finite set $B$ is the space of all possible $\sqrt{n} \times \sqrt{n}$ matrices with cell values from $\{0,1,2\}$.
    \item Set $\cX \subseteq B^A$ of $\sqrt{n}$ row matrices and $\sqrt{n}$ column matrices, for a total of $2 \sqrt{n}$ matrices.
    So, for any given $F \in \cX$ and $a \in A$, we have that $F(a) \in \{0,1,2\}$ is the value of the matrix at the coordinate indicated by $a$.
    \item Mapping $\cH : \cX \to \{0,1\}$ refers to deciding whether the matrix from $\cX$ is a row or column matrix: output 0 if ``row'' and output 1 if ``column''.
    \item For any two $F_1, F_2 \in \cX$, we define $r(F_1,F_2) = \mathbbm{1}_{\{\cH(F_1) \neq \cH(F_2)\}}$ to be the indicator whether $F_1$ and $F_2$ are of the same type.
\end{itemize}

% davin{We need to explain why there exists $\cZ \subseteq \cX$ with $q(\cZ) > 0$}
There exists $\cZ \subseteq \cX$ with $q(\cZ) > 0$; in particular, $\cZ = \{F_{row}, F_{col}\}$ for any row function $F_{row}$ and column function $F_{col}$ has $q(\cZ) = 2$. Therefore we may invoke \cref{thm:our-variant}.

Under this instantiation, we have $\sum_{F_2 \in \cX} r(F_1,F_2) = \sqrt{n}$ for any $F \in \cX$.
Thus 
\begin{equation}
\label{eq:matrix-game-M}
M(\cZ) = \sum_{F_1 \in \cZ} \sum_{F_2 \in \cX} r(F_1, F_2) = |\cZ| \cdot \sqrt{n}
\quad \text{for any $\cZ \subseteq \cX$} \,. 
\end{equation}

Meanwhile, let $\cZ_{row} = \{F_1 \in \cZ : \cH(F_1) = 0\}$ and $\cZ_{col} = \{F_2 \in \cZ : \cH(F_2) = 1\}$.
For every $F_1,F_2 \in \cX$ with $\cH(F_1) \neq \cH(F_2)$, we have $F_1(a) \neq F_2(a)$ if and only if $a$ lies on $F_1$'s row/column or $F_2$'s row/column.
Furthermore, because for each row/column there is only one corresponding input in $\cX$, we have
\begin{equation}
\label{eq:one_input_per_row_col}
\Bigl|\{F \in \cZ \mid F(v) = 1\} \Bigr| \leq 1
\quad \text{and} \quad
\Bigl| \{F \in \cZ \mid F(v) = 2\} \Bigr| \leq 1 \;.
\end{equation}

For any arbitrary $\cZ \subseteq \cX$ and $a \in A$, we have
\begin{align}
&\sum_{F_1 \in \cZ} \sum_{F_2 \in \cZ} r(F_1,F_2) \cdot \mathbbm{1}_{\{F_1(a) \neq F_2(a)\}}\\
& \;\;\;  = \sum_{\substack{F_1 \in \cZ:\\ F_1(a) \neq 0}} \sum_{F_2 \in \cZ} r(F_1,F_2) \cdot  \mathbbm{1}_{\{F_1(a) \neq F_2(a)\}} + \sum_{\substack{F_1 \in \cZ:\\ F(a) = 0}} \sum_{F_2 \in \cZ} r(F_1,F_2) \cdot \mathbbm{1}_{\{F_1(v) \neq F_2(v)\}}\\
 &\;\;\; \leq \left( |\cZ_{row}| + |\cZ_{col}| \right)  + \sum_{\substack{F_1 \in \cZ:\\ F_1(a) = 0}} \sum_{F_2 \in \cZ} r(F_1,F_2) \cdot \mathbbm{1}_{\{F_1(a) \neq F_2(a)\}} \explain{By \cref{eq:one_input_per_row_col}}\\
&\;\;\; = |\cZ_{row}| + |\cZ_{col}| + \sum_{\substack{F_1 \in \cZ:\\ F_1(a) = 0}} \sum_{\substack{F_2 \in \cZ:\\ F_2(a) \neq 0}} r(F_1,F_2) \cdot \mathbbm{1}_{\{F_1(a) \neq F_2(a)\}}\label{eq:dag} \\
 &\;\;\; \leq |\cZ_{row}| + |\cZ_{col}| + \sum_{\substack{F_1 \in \cZ:\\ F_1(a) = 0}} 1\label{eq:ast}\\
 &\;\;\; \leq 2 \cdot |\cZ| \,. \label{eq:matrix_qz_small}
\end{align}
In the equations above,  \cref{eq:dag} holds since $F_1(a) = F_2(a) = 0$ implies $\mathbbm{1}_{\{F_1(a) \neq F_2(a)\}} = 0$. 

Next we explain  why \cref{eq:ast} holds. Take arbitrary $F_1, F_2 \in \mathcal{Z}$ with $F_1(a) = 0$ and $F_2(a) \ne 0$. Then $r(F_1,F_2) = 1$ when  $\cH(F_1) \ne \cH(F_2)$ and $r(F_1,F_2) = 0$ when $\cH(F_1) = \cH(F_2)$. By \cref{eq:one_input_per_row_col}, there are at most two functions $F_2 \in \cZ$ with $F_2(a) \ne 0$: one with $\cH(F_2) = 0$ and one with $\cH(F_2) = 1$.
So no matter what $\cH(F_1)$ is, there is at most one $F_2 \in \cZ$ with $F_2(a) \ne 0$ such that $r(F_1,F_2) = 1$. Thus \cref{eq:ast} holds.

Therefore by \cref{eq:matrix_qz_small},
\begin{equation}
\label{eq:matrix-game-q}
q(\cZ) \leq \max_{a \in A}{\sum_{F_1 \in \cZ} \sum_{F_2 \in \cZ} r(F_1,F_2) \cdot \mathbbm{1}_{\{F_1(a) \neq F_2(a)\}}} \leq 2 \cdot |\cZ|
\quad \text{for any $\cZ \subseteq \cX$}\,.
\end{equation}

By \cref{eq:matrix-game-M} and \cref{eq:matrix-game-q}, the statement of \cref{thm:our-variant} implies the query complexity is at least
\[
\Omega
\left(
\min\limits_{\substack{\cZ \subseteq \cX:\\ q(\cZ) > 0}}
% nick{q(\cZ) > 0}}}
    \frac{{M(\cZ)}}{q(\cZ)}
\right)
\subseteq
\Omega \left( \frac{|\cZ| \cdot \sqrt{n}}{2 \cdot |\cZ|} \right)
\subseteq
\Omega(\sqrt{n})\,.
\]
\end{proof}

On the other hand, one can only show that the randomized query complexity of the matrix game is $\Omega(1)$ by using the original relational adversary theorem of \cite{Aaronson06}, illustrating the advantage of our new variant on this matrix game.
Intuitively, the reason why \cref{thm:aaronson_general_lb_original} cannot handle this problem is because \emph{every} pair of row matrix $F_{row}$ and column matrix $F_{col}$ has \emph{some} location $x$ that distinguishes both $F_{row}$ and $F_{col}$ from \emph{many} matrices.

\begin{restatable}{mylemma}{gridgametheoremaaronson}
\label{thm:grid-game-theorem-aaronson}
Using \cref{thm:aaronson_general_lb_original} gives a lower bound of $\Omega(1)$ on the randomized query complexity of the matrix game.
\end{restatable}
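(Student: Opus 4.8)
The plan is to show that in \emph{every} admissible instantiation of \cref{thm:aaronson_general_lb_original} on the matrix game one necessarily has $v_{min} = 1$, so that the lower bound the theorem certifies is the fixed constant $1/(5\,v_{min}) = 1/5 \in \Omega(1)$, and in particular nothing stronger than $\Omega(1)$ can be extracted. First I would fix the instantiation: since $\cA \subseteq \cH^{-1}(0)$ must consist of row matrices and $\cB \subseteq \cH^{-1}(1)$ of column matrices, we take arbitrary such $\cA, \cB$ together with an arbitrary non-zero relation $r : \cA \times \cB \to \R_{\geq 0}$. Because $r$ is non-zero, I would pick a pair $F_1 \in \cA$, $F_2 \in \cB$ with $r(F_1,F_2) > 0$; write $\rho$ for the row of $1$'s in $F_1$ and $\gamma$ for the column of $2$'s in $F_2$, and let $a^\star \in A$ be the cell in row $\rho$ and column $\gamma$.

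The one observation doing the real work is that every row matrix takes values only in $\{0,1\}$, while every column matrix takes values only in $\{0,2\}$; hence at \emph{any} cell a row matrix and a column matrix can agree only when both are $0$. At $a^\star$ this gives $F_1(a^\star) = 1 \neq 2 = F_2(a^\star)$, so $(F_1, F_2, a^\star)$ is a legal triple in the maximization defining $v_{min}$, and the two denominators appearing in \eqref{eq:theta} are automatically nonzero, each being at least $r(F_1,F_2) > 0$. I would then evaluate $\theta(F_1, a^\star)$: since $F_1(a^\star) = 1$ and no column matrix ever attains the value $1$ anywhere, we have $F_3(a^\star) \neq F_1(a^\star)$ for \emph{every} $F_3 \in \cB$, so the numerator and denominator in \eqref{eq:theta} coincide and $\theta(F_1, a^\star) = 1$. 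Symmetrically, since $F_2(a^\star) = 2$ and no row matrix ever attains the value $2$, every $F_3 \in \cA$ satisfies $F_3(a^\star) \neq F_2(a^\star)$, so $\theta(F_2, a^\star) = 1$.

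Combining, $v_{min} \ge \min\{\theta(F_1,a^\star), \theta(F_2,a^\star)\} = 1$; and since every $\theta(\cdot,\cdot)$ is at most $1$ (its numerator sums a sub-collection of the non-negative terms of its denominator), in fact $v_{min} = 1$, independent of how $\cA$, $\cB$ and $r$ were chosen. Then \cref{thm:aaronson_general_lb_original} yields only the lower bound $1/(5\,v_{min}) = 1/5$, establishing the claim. I do not expect a genuinely hard step; the only point that needs care is checking the chosen triple $(F_1, F_2, a^\star)$ is valid \emph{regardless of which pair $r$ is supported on}, and this is exactly what the disjointness of the value ranges $\{0,1\}$ and $\{0,2\}$ buys us — it forces the single cell $a^\star$ to simultaneously distinguish $F_1$ from \emph{all} of $\cB$ and $F_2$ from \emph{all} of $\cA$, collapsing both $\theta$ values to $1$ no matter what weights $r$ assigns.
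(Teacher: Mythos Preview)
Your proof is correct and follows essentially the same approach as the paper: pick any supported pair $(F_1,F_2)$, take the intersection cell $a^\star$ of $F_1$'s row with $F_2$'s column, and observe that $F_1(a^\star)=1$ distinguishes $F_1$ from every column matrix while $F_2(a^\star)=2$ distinguishes $F_2$ from every row matrix, forcing both $\theta$ values to $1$ and hence $v_{min}=1$. Your write-up is a touch more careful about allowing arbitrary $\cA,\cB$ and about observing $v_{min}\le 1$ trivially, but the core argument is identical.
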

% \gridgametheoremaaronson*
\begin{proof}
We instantiate \cref{thm:aaronson_general_lb_original} with the next parameters: 
% davin{needs fixing if we are not writing out the reformulation}
\begin{itemize}
    \item The finite set $A$ is the set of $n$ coordinates within $\sqrt{n} \times \sqrt{n}$ matrix.
    \item The finite set $B$ is $\{0,1,2\}$.
    \item The set $\cX \subseteq B^A$ consists of $\sqrt{n}$ row matrices and $\sqrt{n}$ column matrices, for a total of $2 \sqrt{n}$ matrices.
    So, for any given $F \in \cX$ and $a \in A$, we have that $F(a) \in \{0,1,2\}$ is the value of the matrix at the coordinate indicated by $a$.
    \item Mapping $\cH : \cX \to \{0,1\}$ refers to deciding whether the matrix from $\cX$ is a row or column matrix: output 0 if ``row'' and output 1 if ``column''.
    % \item For any two $F_1, F_2 \in \cX$, we define $r(F_1,F_2) = \mathbbm{1}_{\{\cH(F_1) \neq \cH(F_2)\}}$ to be the indicator whether $F_1$ and $F_2$ are of the same type.
\end{itemize}

Consider an arbitrary choice of $r : \cA \times \cB \to \mathbb{R}_{\ge 0}$.
% Let $\cX_{row}$ and $\cX_{col}$ be the subset of row and column matrices in $\cX$ respectively.
Consider an arbitrary row matrix $F_{row} \in \cA$ and an arbitrary column matrix $F_{col} \in \cB$ such that $r(F_{row},F_{col}) > 0$. 
% , i.e.\ $\cH(F_{row}) = 0$ and $\cH(F_{col}) = 1$.
Let $a_{int}$ be the unique intersecting coordinate in the matrix within $F_{row}$'s non-zero row and $F_{col}$'s non-zero column.
That is, $F_{row}(a_{int}) = 1$ and $F_{col}(a_{int}) = 2$.
Meanwhile, $F(a_{int}) = 0$ for any other matrix $F \in \cA$ or $F \in \cB$.
Therefore,
\begin{align*}
\theta(F_{row},a_{int}) &= \frac{\sum_{F_3 \in \cB \;:\; F_{row}(a_{int}) \ne F_3(a_{int})} r(F_{row},F_3)}{\sum_{F_3 \in \cB} r(F_{row},F_3)}
= \frac{\sum_{F_3 \in \cB} r(F_{row},F_3)}{\sum_{F_3 \in \cB} r(F_{row},F_3)}
= 1\,.
\end{align*}
Similarly, we have $\theta(F_{col},a_{int}) = 1$ and we see that
\[
    \min\{\theta(F_{row}, a_{int}), \theta(F_{col},a_{int})\} = 1\,.
\]
Since $v_{min}$ is a maximum over choices of $F_{row},F_{col},$ and $v$, this suffices to show
\[
    v_{min} \ge 1\,.
\]
% davin{I think it would be good to emphasize that the above inequality is because $F_{row}$ and $F_{col}$ were arbitrary, so maximation of function pairs is irrelevant. However, I think the following argument is missing: $\min\{\theta(F_{row}, a_{int}), \theta(F_{col},a_{int})\} = \max_{a \in A} \min\{\theta(F_{row}, a), \theta(F_{col},a)\}$. There are $2 \sqrt{n} - 1$ possible $a \in A$ such that $F_{row}(a) \neq F_{col}(a)$.}
% nick{$\max$ is at least as large as any particular instantiation. Since we want to show $v_{min}$ is large, it suffices to show a particular choice of $F_{row},F_{col},a$ makes the body of the $\max$ large. Therefore, no, we didn't need $F_{row}$ and $F_{col}$ to be arbitrary, and we don't need to consider other choices of $a$.}
% davin{ah sorry, you're right. This inverting thing in the original formulation keeps confusing me. commenting this out now...}
% This results in a lower bound of $\Omega(1/v_{min}) = \Omega(1)$.
% davin{Suggested rephrasing: 
Thus $1/v_{min} \leq 1$ and since the choice of $r$ was arbitrary, one cannot hope to show a stronger lower bound than $\Omega(1)$ via \cref{thm:aaronson_general_lb_original} for this problem.
% Since the choice of $r$ was arbitrary, this shows that \cref{thm:aaronson_general_lb_original} cannot show a bound stronger than $\Omega(1)$ for this problem.
\end{proof}

\subsection{Our variant is stronger than the original relational adversary method}

In fact, we now show that our new variant \cref{thm:our-variant} is always at least asymptotically as good as \cref{thm:aaronson_general_lb_original}.

\stronger*

\begin{proof}

% nick{Feel free to change the notations for $\cC$ and $M^{[Aar06]}$.}

% Recall Aaronson's bound is $\Omega(1/v_{min})$ where
% \begin{align*}
% \theta(F_1,a) = \frac{\sum_{F_3 \in \cB \;:\; F_1(a) \ne F_3(a)} r(F_1,F_3)}{\sum_{F_3 \in \cB} r(F_1,F_3)} \qquad \text{and} \qquad 
% \theta(F_2,a) = \frac{\sum_{F_3 \in \cA \;:\; F_2(a) \ne F_3(a)} r(F_3,F_2)}{\sum_{F_3 \in \cA} r(F_3,F_2)} \\
% v_{min} = \max_{F_1 \in \cA, F_2 \in \cB, a \in A \;:\; r(F_1,F_2) > 0, F_1(a) \ne F_2(a)} \min \{\theta(F_1,a), \theta(F_2,a)\}\,.
% \end{align*}
% % \[
% % \Omega\left( \min_{F_1 \in \cX, F_2 \in \cX, v\in [n]: r_v(F_1,F_2) > 0} \max\left( \frac{M(\{F_1\})}{\sum_{F_3 \in \cX} r_v(F_1,F_3)}, \frac{M(\{F_2\})}{\sum_{F_3 \in \cX} r_v(F_2,F_3)} \right) \right)\,.
% % \]
% Recall our bound is
% \[
% \Omega\left( \min\limits_{\substack{\cZ \subseteq \cX:\\ q(\cZ) > 0}}%nick{q(\cZ) > 0}}}
% \frac{M(\cZ)}{q(\cZ)} \right) \;, \text{ where } \; \; q(\cZ) = \max_{v \in [n]}{\sum_{F_1 \in \cZ} \sum_{F_2 \in \cZ} r(F_1,F_2) \cdot \mathbbm{1}_{\{F_1(v) \neq F_2(v)\}}} \,.
% \]
% nick{$q(Z) > 0$ is the condition we actually need, not $M(Z)>0$; I think this is a minor mistake in our paper.}
Recall that for every $a \in A$ and $F_1,F_2 \in \cX$, we define $r_a(F_1,F_2) = r(F_1,F_2) \cdot \mathbbm{1}_{\{F_1(a) \ne F_2(a)\}}$.
% Recall that
\cref{thm:aaronson_general_lb_original} provides a lower bound of $\Lambda = 1/(5 \cdot v_{min})$ for the expected number of queries issued by a randomized algorithm that succeeds with probability at least $9/10$, where
% \textcolor{red}{Explain that this applies to randomized algorithms that succeed with probability at least $9/10$, what is the expectation; explain same for our setting}, where
\[
    v_{min} = \max_{F_1 \in \cA, F_2 \in \cB, a \in A \;:\; r(F_1,F_2) > 0, F_1(a) \ne F_2(a)} \min \{\theta(F_1,a), \theta(F_2,a)\}\,.
\]
% Suppose \cref{thm:aaronson_general_lb_original} provides a bound of $M^{[Aar06]}$.
Then for all $F_1 \in \cA, F_2 \in \cB, a\in A$ where $r_a(F_1,F_2) > 0$, we have
\begin{equation}
    \frac{1}{\min\{\theta(F_1,a), \theta(F_2,a)\}} \ge \frac{1}{v_{min}}\,.
\end{equation}
By rearranging and invoking the definition of $\theta$ (\cref{eq:theta}) we get
\begin{equation} \label{eq:aaronson_consequence_-1}
    \max\left( \frac{\sum_{F_3 \in \cB} r(F_1,F_3)}{\sum_{F_3 \in \cB} r_a(F_1,F_3)}, \frac{\sum_{F_3 \in \cA} r(F_3, F_2)}{\sum_{F_3 \in \cA} r_a(F_3,F_2)} \right) \ge \frac{1}{v_{min}}\,. 
\end{equation}
By extending $r$ to $\cX \times \cX$, with $r(F_1,F_2) = 0$ if both $F_1,F_2 \in \cA$ or both $F_1,F_2 \in \cB$, we can write \cref{eq:aaronson_consequence_-1} as:
\begin{equation} \label{eq:aaronson_consequence}
    \max\left( \frac{M(\{F_1\})}{\sum_{F_3 \in \cX} r_a(F_1,F_3)}, \frac{M(\{F_2\})}{\sum_{F_3 \in \cX} r_a(F_2,F_3)} \right) \ge \frac{1}{v_{min}}\,. 
\end{equation}
Keeping the same choice of $r$, take an arbitrary choice of $\cZ \subseteq \cX$ and $a \in A$ such that $q(\cZ) > 0$.
Such a choice exists by the following argument: \cref{thm:aaronson_general_lb_original} provided a bound, so there must exist $F_1 \in \cA$ and $F_2 \in \cB$ and $a \in A$ with $r(F_1,F_2) > 0$ and $F_1(a) \ne F_2(a)$; the set $\cZ = \{F_1,F_2\}$ has $q(\cZ) > 0$.
% davin{What do you mean by this sentence? There's no notion of $q(Z)$ in \cref{thm:aaronson_general_lb_original}}
Let $\cC \subseteq \cZ$ be the subset of functions $F_1 \in \cZ$ defined as follows:
\[
    \cC = \left\{F_1 \in \cZ : M(\{F_1\}) \ge \frac{1}{v_{min}} \cdot \sum_{F_3 \in \cX} r_a(F_1,F_3) \right\} \,.
    % \text{davin{for all $F_1 \in \cC$}}\,.
\]
% davin{I changed to $F_3$ so it looks more similar to \cref{eq:aaronson_consequence} and avoid confusion with $F_2$ later} nick{Looks good!}

By \cref{eq:aaronson_consequence}, we know that for every pair of functions $F_1,F_2 \in \cZ$ with $r_a(F_1,F_2) > 0$, at least one of $F_1$ and $F_2$ is in $\cC$.
Therefore 
% \begin{align*}
\begin{equation}
\label{eq:q-bound}
\sum_{F_1 \in \cC} \sum_{F_2 \in \cZ} r_a(F_1,F_2)\ge \frac{1}{2} \sum_{F_1 \in \cZ} \sum_{F_2 \in \cZ} r_a(F_1,F_2) = \frac{q(\cZ)}{2}\,.
\end{equation}
% \end{align*}
Meanwhile, $M(\cZ) \ge M(\cC)$.
Therefore
\begin{align*}
    \frac{M(\cZ)}{q(\cZ)} &\ge \frac{1}{2} \cdot \frac{M(\cC)}{\sum_{F_1 \in \cC} \sum_{F_2 \in \cZ} r_a(F_1,F_2)} \explain{By \cref{eq:q-bound}}\\
        & \ge \frac{1}{2} \cdot \frac{ \sum_{F_1 \in \cC} M(\{F_1\}) } { \sum_{F_1 \in \cC} \sum_{F_2 \in \cX} r_a(F_1,F_2) } \,. \explain{since $\cZ \subseteq \cX$}
\end{align*}
By an averaging argument, there exists a function $F_1 \in \cC$ such that 
\begin{align} \label{eq:average_argument_F_1_in_C}
    \frac{M(\cZ)}{q(\cZ)} & \ge \frac{1}{2} \cdot \frac{ M(\{F_1\}) } { \sum_{F_2 \in \cX} r_a(F_1,F_2) }\,.
\end{align}
% davin{Since $F_1 \in \cC$, we get..}
But then since $F_1 \in \cC$ we get
\begin{align} \label{eq:M_F_1_over_sum_F_2_in_X_at_least_1_over_v_min}
    \frac{ M(\{F_1\}) } { \sum_{F_2 \in \cX} r_a(F_1,F_2) } & \ge \frac{1}{v_{min}}\,.
\end{align}
Combining \cref{eq:average_argument_F_1_in_C} and \cref{eq:M_F_1_over_sum_F_2_in_X_at_least_1_over_v_min}, 
\begin{align} \label{eq:M_over_q_at_least_half_v_min}
    \frac{M(\cZ)}{q(\cZ)} \geq  \frac{1}{2} \cdot \frac{ M(\{F_1\}) } { \sum_{F_2 \in \cX} r_a(F_1,F_2) } \geq \frac{1}{2 \cdot v_{min}}\,.
\end{align}
% Via \cref{thm:our-variant} this yields a lower bound of $\frac{3}{100 v_{min}}$.
Since \cref{eq:M_over_q_at_least_half_v_min} holds for an arbitrary choice of $\cZ$ and $a$, it follows that  \cref{thm:our-variant} shows that a randomized algorithm that succeeds with probability at least $9/10$ issues at least  
${1}/{(200 \cdot v_{min})} 
$
queries in expectation.
Since $\Lambda = 1/(5 \cdot v_{min})$, the lower bound given by \cref{thm:our-variant} is at least $\Lambda/40$, which completes the proof.
% Recalling that \cref{thm:aaronson_general_lb_original} showed a lower bound of $\Lambda = \frac{1}{5 \cdot v_{min}}$, we get that \cref{thm:our-variant} shows a lower bound of at least $\frac{1}{10} \Lambda$.
% asymptotically at least as good as the bound shown by \cref{thm:aaronson_general_lb_original}.
\end{proof}

\newpage 

\section{Valid functions have a unique local minimum}
\label{sec:appendix-valid-functions-have-unique-local-minimum}

In this section, we define an abstract class of functions called ``valid functions''. Such functions  have a unique local minimum and will be used in both the congestion and separation lower bounds.

\begin{restatable}[Valid function]{mydefinition}{validfunctiondefinition}
\label{def:valid_function}
Let $W = (w_1, \ldots, w_s)$ be a walk in the graph $G$.
A function $f$ is \emph{valid} with respect to the walk $W$ if it satisfies the next conditions:
\begin{enumerate}
    \item For all $u,v \in W$, if $\max\{i \in [s] \mid v = w_i\} < \max\{i \in [s] \mid u = w_i\}$, then $f(v) > f(u)$.
    
    {In other words, as one walks along the walk $W$ starting from $w_1$ until $w_s$, if the last time the vertex $v$ appears is before the last time that vertex $u$ appears, then $f(v) > f(u)$.}
    \item For all $v \in V \setminus W$, we have $f(v) = dist(w_1, v) > 0$.
    %
   % That is, the values of the function $f$ outside the walk $W$ are equal to the distance to the beginning of the walk $w_1$.
    \item $f(w_i) \leq 0$ for all $i \in [s]$.
\end{enumerate}
\end{restatable}
An illustration of a valid function is shown in  Figure~\ref{fig:validfunction}.

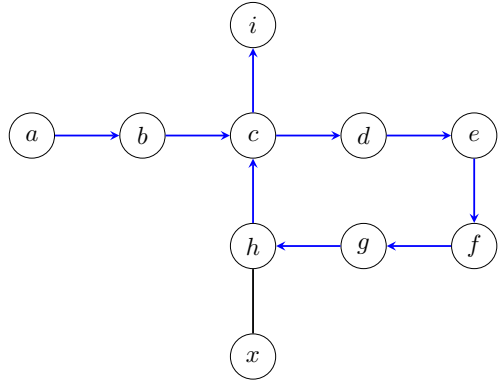
\begin{figure}[h]
\centering
\resizebox{0.4\linewidth}{!}{%
\begin{tikzpicture}
\node[draw, circle, minimum size=20pt, inner sep=2pt] at (0,0) (c) {$c$};
\node[draw, circle, minimum size=20pt, inner sep=2pt, left=of c] (b) {$b$};
\node[draw, circle, minimum size=20pt, inner sep=2pt, left=of b] (a) {$a$};
\node[draw, circle, minimum size=20pt, inner sep=2pt, right=of c] (d) {$d$};
\node[draw, circle, minimum size=20pt, inner sep=2pt, right=of d] (e) {$e$};
\node[draw, circle, minimum size=20pt, inner sep=2pt, above=of c] (i) {$i$};
\node[draw, circle, minimum size=20pt, inner sep=2pt, below=of c] (h) {$h$};
\node[draw, circle, minimum size=20pt, inner sep=2pt, right=of h] (g) {$g$};
\node[draw, circle, minimum size=20pt, inner sep=2pt, right=of g] (f) {$f$};
\node[draw, circle, minimum size=20pt, inner sep=2pt, below=of h] (x) {$x$};

\draw[blue, thick, -stealth] (a) -- (b);
\draw[blue, thick, -stealth] (b) -- (c);
\draw[blue, thick, -stealth] (c) -- (d);
\draw[blue, thick, stealth-] (c) -- (h);
\draw[blue, thick, -stealth] (c) -- (i);
\draw[blue, thick, -stealth] (d) -- (e);
\draw[blue, thick, -stealth] (e) -- (f);
\draw[blue, thick, -stealth] (f) -- (g);
\draw[blue, thick, -stealth] (g) -- (h);
\draw[thick] (h) -- (x);

% \node[anchor=west, align=left] at (5,0) {
% \begin{tabular}{cccccccccccc}
% & & 1 & 2 & 3 & 4 & 5 & 6 & 7 & 8 & 9 & 10\\
% W & = & a & b & c & d & e & f & g & h & c & i
% \end{tabular}
% };
\end{tikzpicture}
}
\caption{
% Valid function illustration:
Consider the walk $W = (a,b,c,d,e,f,g,h,c,i)$ in blue, where $f(a), \ldots, f(i) \leq 0$.
Observe that $w_5 = e$ and $w_9 = c$ so $f(e) > f(c)$.
Since the vertex $x$ is \emph{not} in the walk, $f(x) = dist(w_1,x) = dist(a,x) = 4$.
}
\label{fig:validfunction}
\end{figure}

% First, we restate the definition of a valid function.
% \validfunctiondefinition*

Next we prove every valid function has a unique local minimum.

\begin{restatable}{mylemma}{validimpliesuniquelocalminimum}
\label{lem:valid_implies_unique_local_minimum}
Suppose $W = (w_1, \ldots, w_s)$ is a  walk in $G$ and $f : V \to \mathbb{R}$ is a valid function for the walk $W$. Then $f$ has a unique local minimum at $w_s$, the last vertex on the walk.
\end{restatable}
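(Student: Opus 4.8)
The plan is to show two things: (i) $w_s$ is a local minimum, and (ii) no other vertex is a local minimum. For (i), I would argue that every neighbour $u$ of $w_s$ has $f(u) \ge f(w_s)$. Split into cases: if $u \in W$, then since $w_s$ is the last vertex on the walk, the last occurrence of $u$ is at an index $\le s$, and if it is strictly less than $s$ then condition~1 of \cref{def:valid_function} gives $f(u) > f(w_s)$; the only way the last occurrence of $u$ equals $s$ is $u = w_s$, which is impossible for a neighbour. If $u \notin W$, then condition~2 gives $f(u) = dist(w_1, u) > 0 \ge f(w_s)$ by condition~3. Hence $w_s$ is a local minimum.

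For (ii), take any vertex $v \ne w_s$ and exhibit a neighbour with strictly smaller $f$-value. Two cases. First, if $v \notin W$: by condition~2, $f(v) = dist(w_1, v) > 0$. Let $P = (w_1 = p_0, p_1, \ldots, p_k = v)$ be a shortest path from $w_1$ to $v$, so $k = dist(w_1,v) \ge 1$, and consider the neighbour $p_{k-1}$ of $v$. If $p_{k-1} \in W$, then $f(p_{k-1}) \le 0 < f(v)$ by condition~3. If $p_{k-1} \notin W$, then $f(p_{k-1}) = dist(w_1, p_{k-1}) \le k-1 < k = f(v)$ by condition~2 (using that a sub-path of a shortest path is a shortest path, so $dist(w_1,p_{k-1}) = k-1$). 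Either way $v$ is not a local minimum. Second, if $v \in W$ and $v \ne w_s$: let $i = \max\{ t \in [s] \mid w_t = v\}$; since $v \ne w_s$ we have $i < s$, so $w_{i+1}$ is well-defined and is a neighbour of $v$ (consecutive vertices on a walk are adjacent, and $w_{i+1} \ne v$ by maximality of $i$). The last occurrence of $w_{i+1}$ in $W$ is at some index $\ge i+1 > i$, which is the last occurrence index of $v$; so condition~1 applied with the roles ``$v$'' $:= v$ and ``$u$'' $:= w_{i+1}$ gives $f(v) > f(w_{i+1})$. Hence $v$ is not a local minimum.

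Combining (i) and (ii), $w_s$ is the unique local minimum of $f$.

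\textbf{Main obstacle.} The arguments are all short case analyses; the only place requiring a little care is the $v \notin W$ case, where I need the standard fact that prefixes of shortest paths are shortest paths in order to conclude $dist(w_1, p_{k-1}) = k-1$ rather than merely $\le k-1$ (the weaker bound $\le k-1 < k$ actually already suffices, so even this is not a real difficulty). A second minor subtlety is making sure, in case (ii) with $v \in W$, that $w_{i+1}$ is genuinely distinct from $v$ — this follows because $i$ was chosen as the \emph{last} index with $w_i = v$. No step looks genuinely hard.
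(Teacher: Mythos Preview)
Your proof is correct and follows essentially the same case analysis as the paper's. The only organizational difference is that you verify directly in part~(i) that $w_s$ is a local minimum, whereas the paper omits this and instead concludes it indirectly: after ruling out every $v \ne w_s$, the paper notes that some local minimum must exist (namely the global minimum), so it must be $w_s$. Both routes are fine and the substantive arguments in part~(ii) match the paper's exactly.
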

% \validimpliesuniquelocalminimum*
\begin{proof}
Let $v \in V$ be an arbitrary vertex. We consider two cases:

\paragraph{Case 1: $v \notin W$.} By condition $2$ in the definition of a valid function, we have 
\[f(v) = dist(w_1, v) > 0\,.\]
Let $u$ be the neighbor of $v$ on a shortest path from $v$ to $w_1$, breaking ties lexicographically if multiple such neighbors exist. We have two subcases:
\begin{enumerate}[(a)] 
\item If $u \in W$, then $f(u) \le 0 < f(v)$.
\item If $u \notin W$, then $f(v) = f(u) + 1 > f(u)$.
\end{enumerate}
In both subcases, vertex $v$ has a neighbour with a smaller value, so $v$ is not a local minimum of $f$.
\paragraph{Case 2: $v \in W$.}
Let $i = \max\{j \in [s] \mid  v = w_j\}$ be the last index where vertex $v$ appears on the walk $W$. We have two subcases:
\begin{enumerate}[(a)]
\item If $i < s$, let $u = w_{i+1}$ be the next vertex along the walk.
By maximality of the index $i$, the walk $W$ does not visit vertex $v$ anymore in $(w_{i+1}, \ldots, w_s)$. Since $v = w_i$, condition $1$ of a valid function implies  $$f(v) = f(w_i)  > f(w_{i+1}) = f(u),$$ since $\max\{j \in [s] \mid  u = w_j\} \ge i+1 > i = \max\{j \in [s] \mid v = w_j\}$. Thus vertex $w_i$ is not a local minimum.
\item If $i = s$, then $v = w_s$.
By the analysis in the previous cases (1.a, 1.b, and 2.a), each  vertex $u \in [n] \setminus \{w_s\}$ has a neighbor with a strictly smaller value than $u$, and so $u$ cannot be a local minimum. 
Since the graph $G$ must have at least one local minimum, at the global minimum, it follows that $w_s$ is the unique local and global minimum of $f$.
\end{enumerate}
\end{proof}
\newpage 

\section{Lower bound for local search via congestion}
\label{sec:appendix-congestion}

In this section we prove the lower bound of $\Omega\left(n^{1.5}/g\right)$, where $g$ is the vertex congestion.
We start with  the basic definitions. Then we state and prove the lower bound. Afterwards, we show the helper lemmas used in the proof of the theorem.

%\subsubsection{Techniques}
%\label{sec:congestion-techniques}

%\paragraph{High-level overview.} simina{This should probably be in main text not here}
%Using the complete set of paths $\cP$ with vertex congestion $g$, we first choose a set of walks and suitably define corresponding valid functions.
%Then, we apply \cref{thm:our-variant} by defining a suitable $r$ function that is tailored to the set of walks we chose.
%
%Each of the  walks chosen is called a \emph{staircase}. Such a walk is given by a concatenation of walks joining a sequence of vertices, called \emph{milestones}, by using the paths in $\cP$ (see \cref{def:staircase}).
%
%The reason for calling walks  staircases is because of how we will define valid functions on them (see \cref{def:mathcal_X}): function values \emph{outside} the walk decreases towards the source of the walk, and then decreases along the walk.
%As such, the final vertex in each staircase is associated with the unique minima of the valid function, and we place the hidden bit there.

\subsection{Basic definitions for congestion}

%\paragraph{Notation.} 
Recall we have a graph $G = ([n], E)$ with vertex  congestion $g$.  This means there exists an all-pairs set of paths $\mathcal{P} = \left\{P^{u,v}\right\}_{u,v \in [n]}$ with vertex congestion $g$ \footnote{That is, each vertex is used at most $g$ times across all the paths.}, but no such set of paths exists for $g-1$. We fix the set $\mathcal{P}$, requiring  $P^{u,u} = (u)$ $\forall u \in [n]$. 

For each $u,v \in [n]$, let $\numPaths{v}(u)$ be the number of paths in $\mathcal{P}$ that start at vertex $u$ and contain $v$:
    \begin{equation} \label{eq:numPaths}
        \numPaths{v}(u) = \Bigl|\{P^{u,w} \in \cP : w \in [n], v \in P^{u,w}\} \Bigr|  \,. 
    \end{equation}

Let $L \in [n]$, with $L \geq 2$, be a parameter that we set later.

Given a sequence of $k$ vertices $\X = (x_1, \ldots, x_k)$, we write $\X_{1 \to j}=(x_1,\ldots,x_j)$ to refer to a prefix of the sequence, for an index $j \in [k]$.

Given a walk $Q = (v_1, \ldots, v_k)$ in $G$, let $Q_i$ refer to the $i$-th vertex in the walk  (i.e. $Q_i = v_i$). For each vertex $u \in [n]$, let $\multiplicity(Q,u)$ be the number of times that vertex $u$ appears in $Q$.

%davin{The separation number section \emph{cannot} reference \cref{def:good} because we are talking about non-repeating \emph{indices}, not \emph{vertices} there... This is why I suggested we just called ``good sequence of vertices/indices'' as ``non-repeated sequence'', so that the other section can point to this one instead of creating a new definition. We can later still call staircases ``good'' if they are induced by non-repeating sequences. Let me know if you want to propagate this change, otherwise I will just leave the definition of ``good sequences of indices'' in the separation section.}

% davin{the next two definitions should be combined?} simina{yes they are repetitive}

% \begin{mydefinition}[Staircase for congestion]
% \label{def:staircase}
% Let $k \in \N_{>0}$.
% Let $\X = (x_1, \ldots, x_{k}) \in V^{k}$ be a sequence of $k$ vertices.
% A staircase with respect to $\X$ is a walk $S_{\X} = Q_{1,2} \circ \ldots \circ Q_{k-1, k}$, where each $Q_{i,i+1}$ is a path in $G$ starting at $x_i$ and ending at $x_{i+1}$ for $1 \leq i < k$. We refer to each point $x_i$ as a \emph{milestone} and to each path $Q_{i,i+1}$ as a \emph{quasi-segment}. 
% \end{mydefinition}

% davin{Replace $\{Q_i\}$ with $S(X,i)$ to free up $Q$ and to be consistent with the separation number subsection}

\begin{mydefinition}[Staircase]
\label{def:staircase}
Given a sequence  $\X = (x_1, \ldots, x_{k})$ of vertices in  $G$, a \emph{staircase} induced by $\X$ is a walk $S_{\X} = S_{\X,1} \circ \ldots \circ S_{\X, k-1}$, where each $S_{\X, i}$ is a path in $G$ starting at vertex $x_{i}$ and ending at $x_{i+1}$. %, for $ i \in [k-1]$.
Each vertex $x_i$ is called a \emph{milestone} and each path $S_{\X,i}$ a \emph{quasi-segment}.

The staircase $S_{\vec{x}}$ is said to be \emph{induced by   $\vec{x}$ and  $\mathcal{P} = \left\{P^{u,v}\right\}_{u,v \in [n]}$} if additionally we  have $S_{\X,i} = P^{x_i, x_{i+1}}$ for all $i \in [k-1]$.
%is a path from the set of paths $\mathcal{P}$ (i.e.  $\forall i$ there exist  $u,v \in [n]$ with $S_{\X,i} = P^{u,v}$).
\end{mydefinition}

 \begin{mydefinition}[Tail of a staircase] \label{def:tail}
    Let $S_{\vec{x}} = S_{\vec{x},1} \circ \ldots \circ S_{\vec{x}, k-1}$  be a staircase induced  by some sequence $\X  = (x_1, \ldots, x_k) \in [n]^k$.
    For each $j \in [k-1]$, let $T = S_{\vec{x},j} \circ \ldots \circ S_{\vec{x}, k-1}$.
    Then 
   $Tail(j, S_{\X})$ is obtained from $T$ by removing the first occurrence of $x_j$ in $T$ (and only the first occurrence).
   We also define $Tail(k, S_{\X})$  to be the empty sequence. 
%    the portion of $S_{\X}$ after $S_{\X_{1 \to j}}$.  Note that $Tail(j,S_X)$ doesn't include vertex $x_j$ unless $x_j$ is repeated afterwards
\end{mydefinition}

%Note that $Tail(j,S_{\X})$ does not generally include vertex $x_j$, recalling that $x_j$ is the $j$-th milestone of $\X$.
%Simina: the comment above is incorrect sometimes, cutting it.

Next we define the set of functions, $\mathcal{X}$, that will be used when invoking Theorem~\ref{thm:our-variant}.

\begin{mydefinition}[The functions $f_{\vec{x}}$ and $g_{\vec{x},b}$; the  set $\mathcal{X}$]
\label{def:mathcal_X}
 Suppose $\mathcal{P} =  \{P^{u,v} \}_{u,v \in [n]} $ is an all-pairs set  of  paths in $G$. For each sequence  of vertices  $\X \in \{1\} \times [n]^{L}$, define a function \\ $f_{\X} : [n] \to  \{-n^2-n, \ldots, 0,\ldots,n\}$  such that for each $v \in [n]$:
    \begin{itemize}
        \item If $v \notin S_{\X}$, then  set $f_{\X}(v) = dist(v,1)$, where $S_{\X}$ is the staircase induced by $\X$ and $\cP$.
        \item If $v \in S_{\X}$, then set $f_{\X}(v) = -i\cdot n - j$, where $i$ is the maximum index with $v \in P^{x_i,x_{i+1}}$, and $v$ is the $j$-th vertex in $P^{x_i,x_{i+1}}$.
    \end{itemize}
Also, for each $\X \in \{1\} \times [n]^L$ and $b \in \{0,1\}$, let $g_{\X,b} : [n] \to \{-n^2-n, \ldots, 0,\ldots,n\} \times \{-1,0,1\}$ be such that, for all $v \in [n]$:
\begin{align} 
    g_{\X,b}(v) = \begin{cases}
            \bigl(f_{\X}(v), b\bigr) & \text{if} \; v = x_{L+1}\\
            \bigl(f_{\X}(v), -1\bigr) & \text{if}  \; v \ne x_{L+1}
        \end{cases} \,. \label{eq:def:g_X_b}
\end{align} 
Let $\mathcal{X} = \Bigl\{g_{\X,b} \mid  \X \in \{1\} \times [n]^{L} \mbox{ and }  b \in \{0,1\}\Bigr\}$.
% simina{Good is removed from some places but not others}
\end{mydefinition}
% davin{let me help with a quick replacement of g_X,b with g_\X. give me a while}
% simina{Okay :)}
% davin{ok should be done now}
%simina{: Add example for the definition of $f_{\X}$ with a graph and values on the vertices.}

We will show later that for each sequence $\vec{x}$, the function $f_{\vec{x}}$ has a unique local minimum at the end of the staircase $S_{\vec{x}}$.

An example of a staircase $S_{\vec{x}}$ for some sequence of vertices $\vec{x}$ is shown in the next figure, together with  the accompanying value function $f_{\vec{x}}$.
\begin{figure}[h!]
\centering
\includegraphics[scale=1.1]{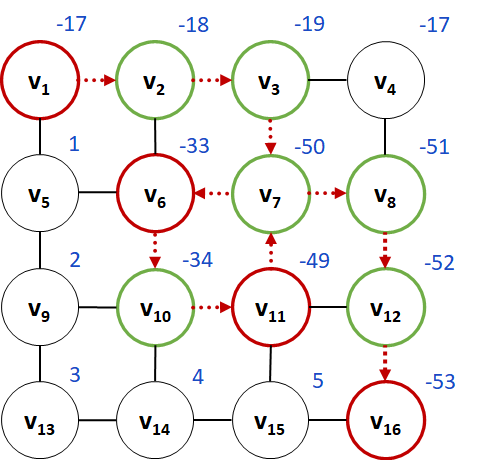}
\caption{Example of a staircase with the accompanying value function. The sequence of milestones is $(v_1, v_6, v_{11}, v_{16})$, which are shown in red. The vertices of the staircase are shown in red and green vertices, connected by red dotted edges. For each node $v$, the value of the function at $v$ is shown in blue.}
\label{fig:fx_congestion}
\end{figure}
%simina{Write using example environment and move the text in paragraphs, making it less congested, use spacing. I did this in the separation appendix for the example that has a complex picture there, take a look and use the same formatting here. The figure is slightly too large I think, can you rescale it a bit}
\begin{example}
Let $G$ be the grid graph on $n = 16$ nodes from \cref{fig:fx_congestion}. Consider the sequence of vertices
\[
\X = (x_1, x_2, x_3, x_4) = (v_1, v_6, v_{11}, v_{16}),
\]
We fix an all-pairs set of paths $\mathcal{P} = \{ P^{u,v} \}_{u,v\in [n]}$, such that 
\begin{itemize} 
\item $P^{v_1,v_6} = (v_1, v_2, v_3, v_7, v_6)$; $P^{v_6, v_{11}} = (v_6, v_{10}, v_{11})$; $P^{v_{11},v_{16}} = (v_{11}, v_7, v_8, v_{12}, v_{16})$, 
where \\ $P^{v_1,v_6}, P^{v_6,v_{11}}, P^{v_{11},v_{16}} \in \cP$.
\item For each other pair of vertices $(u,w)$, we set $P^{u,w}$ as the shortest path between $u$ and $w$, breaking ties lexicographically (vertices with lower index come first).
\end{itemize}
Then the staircase induced by $\X$ and $\cP$ is 
\[ 
S_{\X} = P^{v_1,v_6} \circ P^{v_6,v_{11}} \circ P^{v_{11},v_{16}} = (v_1, v_2, v_3, v_7, v_6, v_{10}, v_{11}, v_7, v_8, v_{12}, v_{16})\,.
\]

For example, $f(v_4) = dist(v_1, v_4) = 3$ since $v_4 \not\in S_{\X}$, and $f_{\X}(v_{7}) = -3 n - 2 = -50$ since $v_{7}$ is the second node in $P^{x_3, x_4} = P^{v_{11},v_{16}}$ (even though $v_7$ is also included in the path $P^{x_1, x_2}$).
\end{example}

\begin{mydefinition}[The map $\mathcal{H}$] \label{def:map_H}
 Suppose $\mathcal{P} =  \{P^{u,v} \}_{u,v \in [n]} $ is an all-pairs set  of  paths in $G$ and $\mathcal{X}$ is the set of functions $g_{\vec{x},b}$ from Definition~\ref{def:mathcal_X}. Define $\mathcal{H} : \mathcal{X} \to \{0,1\}$ as 
 \[
\mathcal{H}(g_{\vec{x},b}) = b \qquad  \forall \vec{x} \in \{1\} \times [n]^{L} \; \text{and} \; b \in \{0,1\} \,. 
\]
 \end{mydefinition}

To define a suitable $r$ function, we only concern ourselves with pairs of staircases that do not have repeated milestones within themselves and with different hidden bits. For such a pair of staircases, we assign a relative difficulty of distinguishing them that scales with the length of their common prefix: the longer their common prefix, the more function values they agree on, and so we assign a higher $r$ value.

\begin{mydefinition}[Good/bad sequences of vertices; Good/bad functions] \label{def:good}
A sequence of $k$ vertices $\X = (x_1, \ldots, x_k)$ is  \emph{good} if  $x_i \ne x_j$ for all $i,j$ with $1 \le i < j \le k$; otherwise, $\X$ is \emph{bad}.
Moreover, for each $b \in \{0,1\}$ a function $F = g_{\X,b} \in \mathcal{X}$ is good if $\X$ is good, and \emph{bad} otherwise.
\end{mydefinition}

% \begin{mydefinition}[Distinguish]
% davin{Push this to preliminaries?}
% Consider any two functions $f,g : S \to T$, for some sets $S,T$. An element $s \in S$ is said to \emph{distinguish} the function $f$ from $g$ if $f(s) \ne g(s)$.
% \end{mydefinition}

% \begin{mydefinition}[Extracting hidden bit]
% \label{def:mathcal_H}
%     Let $\mathcal{H} : \mathcal{X} \to \{0,1\}$ be defined by $\mathcal{H}(g_{\X,b}) = b$ for all $X \in \{1\} \times [n]^L$ and $b \in \{0,1\}$.
% \end{mydefinition}
% davin{Do we need it explicitly as a definition? I pushed it below to combine with the other mapping parameters}

\begin{mydefinition} [The function $r$]\label{def:r}
    Let $r : \mathcal{X} \times \mathcal{X} \to \mathbb{R}_{\ge 0}$  be a symmetric function defined as follows. For  each \mbox{$\X, \Y \in \{1\} \times [n]^L$} and $b_1,b_2 \in \{0,1\}$, we have 
    \[
        r(g_{\X,b_1}, g_{\Y,b_2}) = \begin{cases}
            0 & \text{ if at least one of the following holds: } b_1 = b_2 \text{ or } \X \text{ is bad or } \Y \text{ is bad.}\\
            n^j & \text{ otherwise, where } j \text{ is the maximum index for which } \X_{1 \to j} = \Y_{1 \to j}\,. 
        \end{cases}
    \]
\end{mydefinition}

% The function $r$ will be used directly to invoke \cref{thm:our-variant}, but we also here define some related helper functions.
% davin{don't need them here. pushed to appendix.}

\begin{example}
Suppose $L = 3$. Consider the graph $G = ([n], E)$ and consider the sequences of vertices $\X = (1,2,3,4)$, $\Y = (1,3,5,3)$, and $\Z = (1,2,5,4)$.
Since $\Y$ has repeated elements while $\X$ and $\Z$ do not, we see that $\X$ is good, $\Y$ is bad, and $\Z$ is good.
Then, for each $b \in \{0,1\}$, we have
\begin{itemize}
    \item $r(\cdot, g_{\Y, b}) = r(g_{\Y, b}, \cdot) = 0$ since $\Y$ is bad.
    \item $r(g_{\X,b}, g_{\Z,1-b}) = r(g_{\Z,1-b}, g_{\X,b}) = n^2$ since $\X$ is good, $\Z$ is good, $\X_{1 \to 2} = \Z_{1 \to 2}$, and $x_3 \neq z_3$.
\end{itemize}
\end{example}

\begin{comment}
Finally, we invoke \cref{thm:our-variant} with the following instantiation: simina{I'm not sure this belongs here; don't we actually invoke it later and repeat all this then?}
\begin{itemize}
    \item Let $[n]$ denote the vertices of the input graph. simina{; this should be $A$ or $B$, check;}
    \item Let $L = \lfloor \sqrt{n}\rfloor -1$; this will be the number of quasisegments in each staircase. %one input.
    \item Let $\Sigma = \{-n^2, -n^2+1, \ldots, -1, 0, 1, \ldots, n\} \times \{-1,0,1\}$. simina{; this should be $A$ or $B$, check;}
    \item Let $\cX$ be as defined in \cref{def:mathcal_X}.
    \item Let $\mathcal{H} : \mathcal{X} \to \{0,1\}$ be defined by $\mathcal{H}(g_{\X,b}) = b$ for all $\X \in \{1\} \times [n]^L$ and $b \in \{0,1\}$.
    \item Let $r : \mathcal{X} \times \mathcal{X} \to \mathbb{R}_{\ge 0}$ be as defined in \cref{def:r}. 
\end{itemize}

\end{comment}

The function $r$ will be used directly to invoke \cref{thm:our-variant}, but we also  define here some related helper functions to use $r$ in conjunction with certain indicator variables.

\begin{mydefinition}[The function $r_v$] \label{def:rv}
    For each $v \in [n]$, define $r_v : \mathcal{X} \times \mathcal{X} \to \mathbb{R}_{\geq 0}$ as follows: 
    %$r_v$ by, for all $F_1,F_2 \in \mathcal{X}$, letting
    \[
        r_v(F_1,F_2) = r(F_1,F_2) \cdot \mathbbm{1}_{\{F_1(v) \ne F_2(v)\}} \qquad \forall F_1, F_2 \in \mathcal{X} \,.
    \]
\end{mydefinition}
\begin{mydefinition}[The function $\widetilde{r}_v$] \label{def:rv_tilde}
    For each $v \in [n]$, define $\widetilde{r}_v : \mathcal{X} \times \mathcal{X} \to \mathbb{R}_{\geq 0}$ as follows:
    %by, for all $\X,\Y \in \{1\} \times [n]^L$ and $b_1,b_2 \in \{0,1\}$, letting
    \[
        \widetilde{r}_v(g_{\X,b_1}, g_{\Y,b_2}) = r_v(g_{\X,b_1}, g_{\Y,b_2}) \cdot \mathbbm{1}_{\{\multiplicity(S_{\X},v) \le \multiplicity(S_{\Y},v)\}} \qquad \forall \X,\Y \in \{1\} \times [n]^L \; \; \forall b_1,b_2 \in \{0,1\} \,.
    \]
\end{mydefinition}

\begin{observation}
\label{obs:r_v_tilde_equal_b}
For each $\vec{x}, \vec{y} \in \{1\} \times [n]^L$ and $b \in \{0,1\}$, we have 
\begin{align}
\widetilde{r}_v(g_{\X, b}, g_{\Y, b})  = {r}_v(g_{\X, b}, g_{\Y, b})  = {r}(g_{\X, b}, g_{\Y, b})  = 0\,.
\end{align}
\end{observation}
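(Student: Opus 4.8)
The plan is to read the claim off directly from the definitions, since it is purely a consequence of how $r$, $r_v$, and $\widetilde r_v$ were set up. The three asserted equalities cascade: it suffices to establish the rightmost one, $r(g_{\X,b},g_{\Y,b}) = 0$, and then the other two follow because $r_v$ and $\widetilde r_v$ are obtained from $r$ by multiplying by $\{0,1\}$-valued indicators.

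First I would invoke \cref{def:r} with the two bit arguments equal, i.e.\ $b_1 = b_2 = b$. By the first case of that definition, $r(g_{\X,b_1},g_{\Y,b_2}) = 0$ whenever $b_1 = b_2$ (regardless of whether $\X$ or $\Y$ is good or bad), so in particular $r(g_{\X,b},g_{\Y,b}) = 0$. Next, by \cref{def:rv} we have $r_v(g_{\X,b},g_{\Y,b}) = r(g_{\X,b},g_{\Y,b}) \cdot \mathbbm{1}_{\{g_{\X,b}(v) \ne g_{\Y,b}(v)\}}$; since the indicator is at most $1$ and $r$ is non-negative, this is at most $r(g_{\X,b},g_{\Y,b}) = 0$, hence equal to $0$. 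Finally, by \cref{def:rv_tilde}, $\widetilde r_v(g_{\X,b},g_{\Y,b}) = r_v(g_{\X,b},g_{\Y,b}) \cdot \mathbbm{1}_{\{\multiplicity(S_{\X},v)\le\multiplicity(S_{\Y},v)\}}$, which is likewise $0$ because $r_v(g_{\X,b},g_{\Y,b}) = 0$. Chaining these gives $\widetilde r_v(g_{\X,b},g_{\Y,b}) = r_v(g_{\X,b},g_{\Y,b}) = r(g_{\X,b},g_{\Y,b}) = 0$, as claimed.

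There is no real obstacle here; the observation is a bookkeeping statement recording that same-bit pairs never contribute, which will be used later to restrict sums to pairs with opposite hidden bits. The only thing to be careful about is citing the correct case of \cref{def:r} (the $b_1 = b_2$ branch) and noting explicitly that the indicator factors are bounded by $1$ and $r \ge 0$, so that multiplying a zero quantity by them keeps it zero.
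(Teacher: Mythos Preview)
Your proposal is correct and follows essentially the same approach as the paper: invoke the $b_1=b_2$ case of \cref{def:r} to get $r(g_{\X,b},g_{\Y,b})=0$, then unwind the definitions of $r_v$ and $\widetilde r_v$ as products of $r$ with indicator factors to conclude all three quantities vanish. The paper's proof is slightly more terse, expanding $\widetilde r_v$ directly into $r$ times both indicators in one display, but the content is identical.
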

\begin{proof}
By definition of $r$, we have $r(g_{\X, b_1}, g_{\Y, b_2}) = 0$ when $b_1 = b_2$. Then for all $b \in \{0,1\}$:
\begin{align}
    \widetilde{r}_v(g_{\X, b}, g_{\Y, b}) &= r_v(g_{\X, b}, g_{\Y, b}) \cdot \mathbbm{1}_{\{\multiplicity(S_{\X},v) \leq \multiplicity(S_{\Y},v)\}} \notag \\
    &= r(g_{\X, b}, g_{\Y, b}) \cdot \mathbbm{1}_{\{g_{\X,b}(v) = g_{\Y,b}(v)\}} \cdot \mathbbm{1}_{\{\multiplicity(S_{\X},v) \le \multiplicity(S_{\Y},v)\}} = 0 \,. 
\end{align}
\end{proof}

%%Recall the definition of $M(\mathcal{Z})$: 
% Cutting  this, it's repetitive. It's already defined in the theorem.
%\begin{mydefinition}[$M(\mathcal{Z})$] \label{def:D}
%    Define $M : 2^\mathcal{X} \to \mathbb{R}_{\ge 0}$ is the same way as %in \cref{thm:our-variant}.
%    I.e., for all $\mathcal{Z} \subseteq \mathcal{X}$, let
%    \[
%        M(\mathcal{Z}) = \sum\limits_{F_1 \in \mathcal{Z}} %\sum\limits_{F_2 \in \mathcal{X}} r(F_1, F_2)\,.
%    \]
%\end{mydefinition}
%
%davin{I thought we wanted to use $M(\{F\})$ so that $M$ always takes in %a set?} simina{Correct, this is not fixed yet}
%\begin{mydefinition}[$M(F)$] \label{def:M}
%    Define $M : \mathcal{X} \to \mathbb{R}_{\ge 0}$ by, for all $F_1 \in %\mathcal{X}$, letting
%    \[
%        M(F_1) = \sum_{F_2 \in \mathcal{X}} r(F_1,F_2)\,.
%    \]
%\end{mydefinition}
%
%davin{The new formulation of the variant no longer has $M_{rand}$ right?} simina{Sigh}
%
%Note that $M(\mathcal{Z}) = \sum_{F_1 \in \mathcal{Z}} M(F_1)$ and $M_{rand} = \max\limits_{\substack{
 %   \mathcal{Z} \subseteq \mathcal{X},\; v \in [n]:\\
 %   M(\mathcal{Z}) > 0
%}} \; 
%\frac{1}{M(\mathcal{Z})}  \cdot \sum\limits_{F_1 \in \mathcal{Z}} %\sum\limits_{F_2 \in \mathcal{Z}}   r_v(F_1,F_2) $.

\subsection{Proof of the congestion lower bound}

In this section we include the proof of \cref{thm:low_congestion_implies_local_search_hard}. The proofs of lemmas used in the theorem are included afterwards, in Section~\ref{sec:helper_lemmas_congestion}.

\lowcongestionimplieslocalsearchhard*
\begin{proof}
Consider the following setting of parameters:
\begin{enumerate}[(a)]
    \item  $L = \lfloor \sqrt{n}\rfloor -1$.
    \item   Fix an all-pairs set of paths $\cP = \left\{P^{u,v}\right\}_{u,v \in [n]}$ for $G$, such that $\mathcal{P}$ has vertex congestion $g$.
    \item The finite set $A$ is the set of vertices $[n]$.
    \item The finite set $B$ is $\{-n^2 - n, \ldots, 0, \ldots, n\} \times \{-1,0,1\}$.
    \item The functions $f_{\vec{x}}$, $g_{\vec{x},b}$, and the set $\mathcal{X}$ given by \cref{def:mathcal_X}. \emph{(Recall  $g_{\X,b}(v) = (f_{\X},c)$  for all $v \in [n]$, where $c = -1$ 
 if $v \neq x_{L+1}$, and $c =b$ if $v = x_{L+1}$, i.e. $c =b$ if and only if $v$ is a local minimum of $f_{\vec{x}}$. Also,   
$\cX = \{ g_{\vec{x}, b} \mid \X \in \{1\} \times [n]^{L} \text{ and } b \in \{0,1\}\}\,.$)}
% $c$ is usually $-1$, but is $b$ if $v$ is the local miminum of $f_{\X}$.
   % So, given any vertex $v \in [n]$, $F_1(v)$ returns a pair of numbers.
    \item Map $\cH: \cX \to \{0,1\}$ as  in \cref{def:map_H}. \emph{(Recall $\mathcal{H}(g_{\vec{x}, b}) = b$  for all $  \vec{x} \in \{1\} \times [n]^L$ and $b \in \{0,1\}$.)}
    \item The function $r$ as  in \cref{def:r}.
\end{enumerate}

By \cref{lem:proof_f_X_is_valid}, each function  $f_{\X}$ is valid for all $\vec{x} \in \{1\} \times [n]^L$, so \cref{lem:valid_implies_unique_local_minimum} implies that each function $f_{\vec{x}}$ has a unique local minimum (at $x_{L+1}$).
Therefore by \cref{lem:reduction_to_decision} invoked with $f=f_{\X}$ and $h_b = g_{\X,b}$, it suffices to show a lower bound for the corresponding decision problem: return the hidden bit $b \in \{0,1\}$ given oracle access to the function $g_{\X,b}$.
% where for all $v \in [n]$ we have
% \[
%     g_{\X,b} = \begin{cases}
%         (f_{\X}(v), -1) & \text{ if $v$ is not a local minimum of $f_{\X}$.}\\
%         (f_{\X}(v), b) & \text{ if $v$ is a local minimum of $f_{\X}$.}
%     \end{cases}
% \]

For each $\cZ \subseteq \cX$, let  
\begin{align}  \label{eq:remind_Z_def}
M(\cZ) = \sum_{F_1 \in \cZ} \sum_{F_2 \in \cX} r(F_1, F_2)\,.  
\end{align}
By \cref{lem:parameters_to_new_aaronson_are_ok}, there exists a subset $\cZ \subseteq \cX$ with
% nick{
$q(\cZ) > 0$.
% }
Thus the conditions required by  \cref{thm:our-variant} are met. 
By invoking \cref{thm:our-variant} with the parameters in (a-g), we get that the randomized query complexity of the decision problem, and thus also of local search on $G$, is 
\begin{align} 
\Omega\left(\min_{\cZ \subseteq \cX: q(\cZ) > 0}
% nick{q(\cZ) > 0}}
\frac{M(\cZ)}{q(\cZ)}\right), \text{ where } q(\cZ) = \max_{v \in [n]}{\sum_{F_1 \in \cZ} \sum_{F_2 \in \cZ} r(F_1,F_2) \cdot \mathbbm{1}_{\{F_1(v) \neq F_2(v)\}}}\,. \notag 
\end{align} 

To get an explicit lower bound in terms of congestion, we will upper bound $q(\cZ)$ and lower bound $M(\cZ)$ for  subsets $\cZ \subseteq \cX$ with $q(\cZ) > 0$.
% nick{$q(\cZ) > 0$}.

\medskip 

Fix an arbitrary subset $\cZ \subseteq \cX$ with $q(\cZ) > 0$.
% nick{$q(\cZ) > 0$}.
Since $r(F_1,F_2) = 0$ {when $F_1$ or $F_2$ is bad}, it suffices to consider subsets $\cZ \subseteq \cX$ where each function $F \in \cZ$ is good. 

%Then there exists  a good sequence  $\X \in \{1\} \times [n]^{L} $ and a bit $b_1 \in \{0,1\}$ such that $F_1 = g_{\X,b_1}$.

\paragraph{Upper bounding $q(\cZ)$.}

Let $v \in [n]$ be arbitrary. 

Fix an arbitrary function $F_1 \in \mathcal{Z}$. Since $F_1$ is good, there exist $\vec{x} \in \{1\} \times [n]^L$ and $b_1 \in \{0,1\}$ such that $F_1 = g_{\vec{x}, b_1}$ and $\vec{x}$ is good.
Since $\cZ \subseteq \cX$ and $\widetilde{r}_v \geq 0$, we have 
\begin{align} \label{eq:simple_inequality_tilde_r_v_F_1_F_2_sum_over_X}
\sum_{F_2 \in \mathcal{Z}} \widetilde{r}_v(F_1,F_2) 
& \leq \sum_{F_2 \in \mathcal{X}} \widetilde{r}_v(F_1,F_2) \,. 
\end{align}
Using the definition of $\mathcal{X} = \{ g_{\Y,b_2} \mid \Y \in \{1\} \times [n]^L, \; b_2 \in \{0,1\}\}$, the fact that $F_1 = g_{\vec{x},b_1}$, and partitioning the space of functions $F_2 \in \mathcal{X}$ by the length of the prefix that the staircase corresponding to  $F_2$  shares with the staircase corresponding to $F_1$, we can upper bound the right hand  side of \cref{eq:simple_inequality_tilde_r_v_F_1_F_2_sum_over_X}:
\begin{align} \label{eq:simple_ineq_r_tilde_v_F_1_F_2_in_X_part2}
\sum_{F_2 \in \mathcal{X}} \widetilde{r}_v(F_1,F_2) & = \sum_{\substack{\Y \in \{1\} \times [n]^L, \; b_2 \in \{0,1\}}} \widetilde{r}_v(g_{\X,b_1},g_{\Y,b_2}) \leq 
\sum_{j=1}^{L+1} \sum_{\substack{\Y \in \{1\} \times [n]^L, \; b_2 \in \{0,1\}\\ j = \max\{i \;:\; \X_{1 \to i} = \Y_{1 \to i}\}}} \widetilde{r}_v(g_{\X,b_1},g_{\Y,b_2}) \,. 
\end{align}

Combining \cref{eq:simple_inequality_tilde_r_v_F_1_F_2_sum_over_X} and \cref{eq:simple_ineq_r_tilde_v_F_1_F_2_in_X_part2}, we get

\begin{align} \label{eq:simple_ineq_r_tilde_v_F_1_F_2_in_X_part2_prime}
\sum_{F_2 \in \mathcal{Z}} \widetilde{r}_v(F_1,F_2) \leq \sum_{j=1}^{L+1} \sum_{\substack{\Y \in \{1\} \times [n]^L, \; b_2 \in \{0,1\}\\ j = \max\{i \;:\; \X_{1 \to i} = \Y_{1 \to i}\}}} \widetilde{r}_v(g_{\X,b_1},g_{\Y,b_2}) \,.
\end{align}

For each $j \in [L+1]$, let 
\[ 
T_j = \left\{\Y \in \{1\} \times [n]^L \mid \max\{i : \X_{1 \to i} = \Y_{1 \to i} \} = j \text{ and } v \in Tail(j,S_{\Y}) \right\} \,.
\]
Then for each $j \in [L]$, we can bound the part of the sum in  \cref{eq:simple_ineq_r_tilde_v_F_1_F_2_in_X_part2_prime} corresponding to index $j$ via the next chain of inequalities:
\begin{align} 
& \sum_{\substack{\Y \in \{1\} \times [n]^L, \; b_2 \in \{0,1\} :\\ j = \max\{i \;:\; \X_{1 \to i} = \Y_{1 \to i}\}}} \widetilde{r}_v(g_{\X,b_1},g_{\Y,b_2}) \leq \sum_{\substack{\Y \in \{1\} \times [n]^L :\\ j = \max\{i \;:\; \X_{1 \to i} = \Y_{1 \to i}\}  \\ v \in Tail(j, S_{\Y}) }} {r}(g_{\X,b_1},g_{\Y,1-b_1}) \explain{By \cref{lem:j_le_ell_implies_v_in_tail} }\\
& \qquad    \qquad \leq n^j \cdot |T_j| \explain{Since $r(g_{\X,b_1},g_{\Y,1-b_1}) \leq n^j$ when $j = \max\{i : \X_{1 \to i} = \Y_{1 \to i} \} $} \\
&  \qquad     \qquad \leq  n^j \cdot \left( \numPaths{v}(x_j) \cdot n^{L - j} + L \cdot g \cdot n^{L - j - 1} \right), \explain{By \cref{lem:upper_bound_set_v_in_Tail_Y}.}
\end{align}
where we recall that $\numPaths{v}(u)$ is the number of paths in $\mathcal{P}$ that start at vertex $u$ and contain $v$.

Using the identity $n^j \cdot \left( \numPaths{v}(x_j) \cdot n^{L - j} + L \cdot g \cdot n^{L - j - 1} \right)= n^{L} \cdot \left( \numPaths{v}(x_j) + \frac{L \cdot g}{n} \right)$, we obtain 
\begin{align} 
\label{eq:sum_for_one_gxb_decomposed_by_j_solved}
\sum_{\substack{\Y \in \{1\} \times [n]^L, \; b_2 \in \{0,1\} :\\ j = \max\{i \;:\; \X_{1 \to i} = \Y_{1 \to i}\}}} \widetilde{r}_v(g_{\X,b_1},g_{\Y,b_2})
 \leq n^{L} \cdot \left( \numPaths{v}(x_j) + \frac{L \cdot g}{n} \right) \,.
\end{align}

When $j = L+1$, since $\widetilde{r}_v(g_{\X,b_1}, g_{\Y,b_2}) > 0$ implies  $b_2 = 1 - b_1$ (see Observation \ref{obs:r_v_tilde_equal_b}), we have 
\begin{equation}
\label{eq:sum_for_one_gxb_decomposed_by_ell_plus_one_solved}
\sum_{\substack{\Y \in \{1\} \times [n]^L, \; b_2 \in \{0,1\} :\\ L+1 = \max\{i : \X_{1 \to i} = \Y_{1 \to i}\}}} \widetilde{r}_v(g_{\X,b_1},g_{\Y,b_2}) \leq  n^{L+1}  \,.
\end{equation}

Summing \cref{eq:sum_for_one_gxb_decomposed_by_j_solved} for all  $j \in [L]$ and adding it to  \cref{eq:sum_for_one_gxb_decomposed_by_ell_plus_one_solved} for $j=L+1$, we can now upper bound the right hand side of \cref{eq:simple_ineq_r_tilde_v_F_1_F_2_in_X_part2_prime} as follows: 
\begin{align}
\sum_{F_2 \in \mathcal{Z}} \widetilde{r}_v(F_1,F_2)  & \leq  \sum_{j=1}^{L+1} \sum_{\substack{\Y \in \{1\} \times [n]^L, b_2 \in \{0,1\}:\\ \max\{i : \X_{1 \to i} = \Y_{1 \to i}\} = j }} \widetilde{r}_v(g_{\X,b_1},g_{\Y,b_2}) \explain{By  \cref{eq:simple_ineq_r_tilde_v_F_1_F_2_in_X_part2_prime}} \\
& \leq n^L \cdot \sum_{j=1}^{L} \left( \numPaths{v}(x_j) + \frac{L \cdot g}{n} \right) + n^{L+1} \explain{By \cref{eq:sum_for_one_gxb_decomposed_by_j_solved} and \cref{eq:sum_for_one_gxb_decomposed_by_ell_plus_one_solved}} \\
&   \leq n^L \left( \sum_{u \in [n]} \numPaths{v}(u)\right) + n^{L} \left(\sum_{j=1}^{L} \frac{g L}{n} \right) + n^{L+1} 
\explain{Since $\vec{x}$ is good, i.e. $\vec{x}$ has no repeated vertices}\\
&    \leq n^L \cdot g + n^L \cdot \frac{g \cdot L^2}{n} + n^{L+1} 
\explain{Since $\sum_{u \in [n]} \numPaths{v}(u)$ equals the number of paths in $\cP$ that contain $v$, which is  the congestion at $v$.}\\\
&   \leq 3 \cdot g \cdot n^L\,.  \explain{Since $L \leq \sqrt{n} - 1$ and $g \geq n$}
\end{align}

Thus, for each good function $F_1 \in \mathcal{Z}$, we have  
\begin{align}  \label{eq:good_F_1_sum_over_all_F_2_in_Z}
\sum_{F_2 \in \mathcal{Z}} \widetilde{r}_v(F_1,F_2) \leq  3 \cdot g \cdot n^L\,.
\end{align}

Summing \cref{eq:good_F_1_sum_over_all_F_2_in_Z} over all $F_1 \in \mathcal{Z}$ (each of which is good, since $\mathcal{Z}$ was chosen to have good functions only), and invoking Lemma~\ref{lem:sum_rv_bounded_by_two_sum_rv_tilde} yields 
\begin{align}
\sum_{F_1,F_2 \in \mathcal{Z}}  r_v(F_1, F_2) & \leq 
 2 \cdot \sum_{F_1,F_2 \in \mathcal{Z}} 
 \widetilde{r}_v(F_1, F_2) \explain{By \cref{lem:sum_rv_bounded_by_two_sum_rv_tilde}}\\
%\leq &\; 2 \cdot \sum_{\substack{\X \in \{1\} \times [n]^L, b_1 \in \{0,1\}}} \sum_{\substack{\Y \in \{1\} \times [n]^L, b_2 \in \{0,1\}}} \widetilde{r}_v(g_{\X,b_1},g_{\Y,b_2})\\
& \leq  2 \cdot |\cZ| \cdot 3 \cdot g \cdot n^L \explain{By \cref{eq:good_F_1_sum_over_all_F_2_in_Z}}\\
& = |\cZ| \cdot 6g \cdot n^L \,. \label{eq:sum_F_1_F_2_in_cal_Z_r_v_almost_there_for_q_Z}
\end{align}
Since we had considered an arbitrary vertex $v \in [n]$, taking the maximum over all $v \in [n]$ in \cref{eq:sum_F_1_F_2_in_cal_Z_r_v_almost_there_for_q_Z} yields 
\begin{equation}
\label{eq:numerator_bound}
q(\cZ)
= \max_{v \in [n]} \sum_{F_1 \in \cZ} \sum_{F_2 \in \cZ} r_v(F_1,F_2)
\leq |\cZ| \cdot 6g \cdot n^L\,.
\end{equation}

    \paragraph{Lower bounding $M(\cZ)$.}  
    Since each function $F_1 \in \mathcal{Z}$ is good by choice of $\mathcal{Z}$, 
    \cref{lem:M_large} yields
    \begin{align}  \label{eq:simple_bound_implied_by_Lemma_31_for_F1_good}
    \sum_{F_2 \in \cX} r(F_1,F_2) \geq \frac{1}{2e} \cdot (L+1) \cdot n^{L+1} \qquad \forall F_1 \in \mathcal{Z} \,.
    \end{align}
    Using \cref{eq:simple_bound_implied_by_Lemma_31_for_F1_good} and recalling the definition of $M(\mathcal{Z})$ from \cref{eq:remind_Z_def}, we get 
    \begin{align} 
M(\mathcal{Z}) & = \sum_{F_1 \in \cZ} \sum_{F_2 \in \cX} r(F_1, F_2) \geq \frac{|\cZ| }{2e} \cdot (L+1) \cdot n^{L+1} \,. \label{eq:denominator_bound}
\end{align} 

    \paragraph{Combining the bounds.}
    Combining the bounds from \cref{eq:numerator_bound} and \cref{eq:denominator_bound}, we can now estimate the bound from  \cref{thm:our-variant}:
    \begin{align*}
    \min_{\substack{\cZ \subseteq \cX:\\ q(\cZ) > 0}}
    % nick{q(\cZ) > 0}}}
    \frac{M(\cZ)}{q(\cZ)}
    &\geq \frac{ \frac{|\cZ|}{2e} \cdot (L+1)  n^{L+1}}{|\cZ| \cdot 6g \cdot n^L} \explain{By \cref{eq:numerator_bound} and \cref{eq:denominator_bound}}\\
    &\geq \frac{n^{1.5}}{24e \cdot g}  \,. \explain{Since $L+1 = \lfloor \sqrt{n} \rfloor \geq {\sqrt{n}}/{2}$}
    \end{align*}
    
    Therefore, the  randomized query complexity of local search is
    \[
    \Omega\left( \min_{\substack{\cZ \subseteq \cX:\\ q(\cZ) > 0}}
    % nick{q(\cZ) > 0}}}
    \frac{M(\cZ)}{q(\cZ)} \right)
    \subseteq \Omega\left( \frac{n^{1.5}}{g} \right)\;.
    \]
    This completes the proof of the theorem.
\end{proof}

\subsection{Helper lemmas} \label{sec:helper_lemmas_congestion}

In this section we prove the helper lemmas that are used in the proof of \cref{thm:low_congestion_implies_local_search_hard}. All the lemmas assume the setup  of the parameters from \cref{thm:low_congestion_implies_local_search_hard}. 

\begin{mylemma} \label{lem:proof_f_X_is_valid}
%Let $\mathcal{P} = \{P^{u,v}\}_{u,v \in [n]}$ be a complete set of paths in $G$.
For each  $\X \in \{1\} \times [n]^L$, the function $f_{\X}$  is valid for the staircase $S_{\X}$ induced by $\X$ and $\mathcal{P}$, where $\mathcal{P} = \{P^{u,v}\}_{u,v \in [n]}$ is the  all-pairs set of paths in $G$.
\end{mylemma}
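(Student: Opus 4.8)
The plan is to verify directly that $f_{\X}$ satisfies the three conditions of a valid function (Definition~\ref{def:valid_function}) with respect to the walk $W = S_{\X}$. Write the staircase as $S_{\X} = P^{x_1,x_2} \circ \ldots \circ P^{x_L, x_{L+1}}$, and recall that for $v \in S_{\X}$ we have $f_{\X}(v) = -i \cdot n - j$ where $i$ is the \emph{largest} index with $v \in P^{x_i, x_{i+1}}$ and $v$ is the $j$-th vertex of $P^{x_i, x_{i+1}}$, while for $v \notin S_{\X}$ we have $f_{\X}(v) = dist(v,1) = dist(w_1, v)$ (since $w_1 = x_1 = 1$). Conditions 2 and 3 are essentially immediate: condition 2 holds by the very definition of $f_{\X}$ on vertices off the staircase, and we should note $dist(v,1) > 0$ because $v \neq 1$ (as $1 = x_1 \in S_{\X}$, so any $v \notin S_{\X}$ is distinct from $1$); condition 3 holds because for $v \in S_{\X}$, $f_{\X}(v) = -in - j$ with $i \geq 1$ and $j \geq 1$, hence $f_{\X}(v) \leq -n - 1 < 0$.

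The substantive part is condition 1: for $u, v \in S_{\X}$, if the last occurrence of $v$ on $S_{\X}$ precedes the last occurrence of $u$ on $S_{\X}$, then $f_{\X}(v) > f_{\X}(u)$. Here I would set up a correspondence between ``position along the concatenated walk $S_{\X}$'' and the pair $(i,j)$ appearing in the definition of $f_{\X}$. Concretely: if $v$'s last occurrence on $S_{\X}$ lies within the copy of $P^{x_i, x_{i+1}}$ as its $j$-th vertex, then one must check that $i$ is indeed the maximal index with $v \in P^{x_i,x_{i+1}}$ (this is where one has to be slightly careful, because a quasi-segment $P^{x_i,x_{i+1}}$ can itself contain $v$ more than once, and different quasi-segments can overlap — the indexing in the definition picks the maximal segment index $i$, and then $j$ should be taken as the position of $v$'s last appearance within that segment; I would make this reading explicit, matching the analogous discussion in Example~\ref{ex:staircase-example-congestion} where $v_7$ gets value $-3n-2$ from the later segment). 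Granting that, the last occurrence of $v$ on $S_{\X}$ is at ``global position'' corresponding to $(i_v, j_v)$ and that of $u$ to $(i_u, j_u)$, and the hypothesis ``last occurrence of $v$ before last occurrence of $u$'' translates to: either $i_v < i_u$, or $i_v = i_u$ and $j_v < j_u$. In the first case $f_{\X}(v) = -i_v n - j_v > -i_u n - j_u = f_{\X}(u)$ because $-i_v n - j_v \geq -i_v n - n = -(i_v+1)n \geq -i_u n > -i_u n - j_u$ (using $j_v \leq n$, which holds since each path is simple with at most $n$ vertices, and $i_v + 1 \leq i_u$). In the second case $i_v = i_u$ and $j_v < j_u$ give $f_{\X}(v) = -i_v n - j_v > -i_v n - j_u = f_{\X}(u)$ directly.

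The main obstacle I anticipate is not any hard inequality but rather pinning down precisely the translation between the two descriptions of a vertex's ``last occurrence'': the global position on the walk $S_{\X}$ versus the $(i,j)$ pair used to define $f_{\X}$, in the presence of repeated vertices both within and across quasi-segments. Once that bookkeeping lemma is stated cleanly — namely, that for $v \in S_{\X}$, the index of the last occurrence of $v$ on $S_{\X}$ increases strictly (in the lexicographic order on $(i,j)$) exactly as $f_{\X}(v)$ decreases — conditions 1–3 follow by the elementary estimates above, and then Lemma~\ref{lem:valid_implies_unique_local_minimum} gives that $f_{\X}$ has its unique local minimum at $x_{L+1}$. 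I would also remark that $f_{\X}$ indeed maps into $\{-n^2-n, \ldots, 0, \ldots, n\}$ as claimed: off-staircase values are distances, at most $n$; on-staircase values are $-in-j$ with $i \leq L \leq n-1$ and $j \leq n$, hence at least $-(n-1)n - n = -n^2$, so the stated codomain is valid.
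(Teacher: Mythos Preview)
Your proposal is correct and follows essentially the same approach as the paper: verify the three validity conditions directly, with the only nontrivial one being condition~1, handled by the case split $i_v < i_u$ versus $i_v = i_u,\; j_v < j_u$ and the elementary inequality $-i_v n - j_v > -i_u n - j_u$ in each case. One simplification you can make: your worry that a single quasi-segment $P^{x_i,x_{i+1}}$ might contain $v$ more than once is unnecessary, since the paper fixes $\mathcal{P}$ to consist of \emph{simple} paths, so each vertex appears at most once in each $P^{x_i,x_{i+1}}$ and the index $j$ is unambiguous; with that, the correspondence between ``last occurrence on $S_{\X}$'' and the pair $(i,j)$ is immediate and no extra bookkeeping lemma is needed.
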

\begin{proof}
Let $S_{\X} = (w_1, \ldots, w_s)$ be the vertices of the staircase $S_{\X}$ induced by $\X$ and $\mathcal{P}$.
% , where $w_1 = x_1 = 1$.
We show that all the three conditions required by the definition of a valid function (\cref{def:valid_function}) hold.

To show the first condition of validity, consider two vertices $v_1,v_2 \in S_{\X}$.
Define 
\[
i_1 = \max\{k \in [L] \mid v_1 \in P^{x_k, x_{k+1}}\}\,.
\]
Define $i_2$ similarly for $v_2$.
Let $j_1$ and $j_2$ be the indices of $v_1$ and $v_2$ in $P^{x_{i_1}, x_{i_1+1}}$ and $P^{x_{i_2}, x_{i_2+1}}$ respectively. Note that $j_1,j_2 \in [n]$.
By definition of the function $f_{\vec{x}}$ (\cref{def:mathcal_X}), we have
\begin{align}
f_{\X}(v_1) = -n \cdot i_1 - j_1 \; \; \text{and} \;  \; 
f_{\X}(v_2) = -n \cdot i_2 - j_2\,. \label{eq:f_at_v1_v2}
\end{align}
Without loss of generality, the last time vertex $v_1$ appears on the path $S_{\X}$, starting from $w_1$ towards $w_s$, is earlier than the last time vertex $v_2$ appears, that is:
\[\max\{k \in [s] \mid v_1 = w_k\} < \max\{k \in [s] \mid v_2 = w_k\} \,.\]
Then $i_1 \le i_2$. We consider two cases:
\begin{itemize} 
\item  If $i_1 = i_2$, then
    \begin{align}
       f_{\X}(v_1) - f_{\X}(v_2) &= j_2 - j_1 \explain{By \cref{eq:f_at_v1_v2}}\\
            &> 0 \explain{Since by assumption $v_1$ last appears on $S_{\X}$ before $v_2$.}
    \end{align}
\item       If  $i_1 < i_2$, then
    \begin{align}
        f_{\X}(v_1) - f_{\X}(v_2) &\ge n + (j_2 - j_1) \explain{By \cref{eq:f_at_v1_v2}}\\
            & > 0 \explain{Since $j_1,j_2 \in [n]$.}
    \end{align}
\end{itemize}
Therefore the first condition of validity is satisfied.

Also by \cref{def:mathcal_X}, of the function $f_{\vec{x}}$, we have that:
\begin{itemize}
\item  $f_{\X}(v) = dist(1,v)$ for all $v \notin S_{\X}$, so the second condition of validity is satisfied.
\item $f_{\X}(v)  \le 0$ for all $v \in S_{\X}$, so the third condition of validity is satisfied.
\end{itemize}
Therefore $f_{\X}$ is valid for the staircase $S_{\X}$ induced by $\X$ and $\cP$.
% simina{We need a conclusion here. What did we show and why does it imply what we want?}
\end{proof}

\begin{mylemma} \label{lem:parameters_to_new_aaronson_are_ok}
The next two properties hold:
\begin{itemize}
    \item Let $F_1, F_2 \in \mathcal{X}$. Then $r(F_1,F_2) = 0$ when $\mathcal{H}(F_1) = \mathcal{H}(F_2)$. 
    \item There exists a subset $\mathcal{Z} \subseteq \mathcal{X}$ such that 
    % nick{
    \[
    q(\mathcal{Z}) = \max_{v \in [n]} \sum_{F_1 \in \mathcal{Z}} \sum_{F_2 \in \mathcal{Z}} r(F_1,F_2) \cdot \mathbbm{1}_{\{F_1(v) \ne F_2(v)\}} > 0 \;.
    \]
    % }. %simina{Not sure if we should define the function $M$ somewhere. It's defined in the variant statement. It's likely the  reader may have trouble finding it by the time they get to this lemma. Maybe the best  option is to just define $M$ in the lemmas that use it, since it seems to not appear in so many. Perhaps we are already doing this.}
\end{itemize}
\end{mylemma}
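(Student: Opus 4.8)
The plan is to verify the two bullets directly from the definitions of $\mathcal{H}$ (\cref{def:map_H}), $r$ (\cref{def:r}), and the functions $g_{\vec{x},b}$ (\cref{def:mathcal_X}). Both are short, and I expect no genuine obstacle: this lemma is just the routine check that the hypotheses of \cref{thm:our-variant} are met for the chosen instantiation.

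For the first bullet, I would unwind definitions. Any $F_1, F_2 \in \mathcal{X}$ have the form $F_1 = g_{\vec{x},b_1}$ and $F_2 = g_{\vec{y},b_2}$ for some $\vec{x},\vec{y} \in \{1\} \times [n]^L$ and $b_1,b_2 \in \{0,1\}$. By \cref{def:map_H}, $\mathcal{H}(F_1) = b_1$ and $\mathcal{H}(F_2) = b_2$, so $\mathcal{H}(F_1) = \mathcal{H}(F_2)$ forces $b_1 = b_2$; and by \cref{def:r}, the case $b_1 = b_2$ yields $r(F_1,F_2) = 0$. This settles the first bullet.

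For the second bullet, I would exhibit an explicit witness. Pick any good sequence $\vec{x} \in \{1\} \times [n]^L$ --- for concreteness $\vec{x} = (1,2,\ldots,L+1)$, whose entries are pairwise distinct since $L+1 = \lfloor \sqrt{n} \rfloor \le n$ --- and set $\mathcal{Z} = \{g_{\vec{x},0}, g_{\vec{x},1}\}$. Because $\vec{x}$ is good, $b_1 = 0 \neq 1 = b_2$, and the longest common prefix of $\vec{x}$ with itself has length $L+1$, \cref{def:r} gives $r(g_{\vec{x},0}, g_{\vec{x},1}) = r(g_{\vec{x},1}, g_{\vec{x},0}) = n^{L+1} > 0$, whereas the two diagonal pairs have $r = 0$ by the first bullet. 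Evaluating the inner double sum of $q(\mathcal{Z})$ at the milestone $v = x_{L+1}$: by \cref{def:mathcal_X}, $g_{\vec{x},0}(x_{L+1}) = (f_{\vec{x}}(x_{L+1}), 0) \neq (f_{\vec{x}}(x_{L+1}), 1) = g_{\vec{x},1}(x_{L+1})$, so the indicator $\mathbbm{1}_{\{F_1(x_{L+1}) \neq F_2(x_{L+1})\}}$ equals $1$ on both off-diagonal pairs. Hence
\[
q(\mathcal{Z}) \;\ge\; \sum_{F_1 \in \mathcal{Z}} \sum_{F_2 \in \mathcal{Z}} r(F_1,F_2) \cdot \mathbbm{1}_{\{F_1(x_{L+1}) \neq F_2(x_{L+1})\}} \;=\; 2\,n^{L+1} \;>\; 0 \,,
\]
which is exactly the claim.

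The only point worth a sentence of care is the existence of a good sequence in $\{1\} \times [n]^L$, i.e.\ that $L+1 \le n$, which holds by the choice $L = \lfloor \sqrt{n} \rfloor - 1$ fixed in \cref{thm:low_congestion_implies_local_search_hard}. Beyond that, everything follows by inspection, so there is no substantive difficulty in this lemma.
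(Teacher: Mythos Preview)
Your proof is correct and the first bullet is handled exactly as in the paper. For the second bullet, the paper instead constructs two \emph{distinct} good sequences $W^1, W^2 \in \{1\} \times [n]^L$ (with disjoint milestone sets beyond the first entry), takes $\mathcal{Z} = \{g_{W^1,0}, g_{W^2,1}\}$, and obtains $r(g_{W^1,0}, g_{W^2,1}) = n$ since the common prefix has length $1$; your choice of a single good sequence $\vec{x}$ with both bits is a slightly simpler witness (yielding $r = n^{L+1}$), and both are equally valid since any $q(\mathcal{Z}) > 0$ suffices.
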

\begin{proof}

We first show that $r(F_1,F_2) = 0$ when $\mathcal{H}(F_1) = \mathcal{H}(F_2)$.
To see this, suppose $\mathcal{H}(F_1) = \mathcal{H}(F_2)$ for some functions $F_1,F_2 \in \mathcal{X}$. Then by definition of the set of functions $\mathcal{X}$, there exist  sequences of vertices $\X,\Y \in \{1\} \times [n]^{L}$ and bits $b_1, b_2 \in \{0,1\}$ such that $F_1 = g_{\X,b_1}$ and $F_2 = g_{\Y,b_2}$. By definition of $\mathcal{H}$, we have $\mathcal{H}(g_{\X,b_1}) = b_1$ and $\mathcal{H}(g_{\Y,b_2}) = b_2$. Since $\mathcal{H}(F_1) = \mathcal{H}(F_2)$, we have  $b_1 = b_2$. Then  $r(g_{\X,b_1}, g_{\Y,b_2}) = 0$ by definition of $r$, or equivalently, $r(F_1,F_2) = 0$.

Next we show  there is a subset $\mathcal{Z} \subseteq \mathcal{X}$ with
% nick{
$q(\mathcal{Z}) > 0$.
% }.
To see this, consider two disjoint sets of vertices 
$U_1, U_2 \subset [n]$ such that $U_1 = \{u_{2}^1, \ldots, u_{L+1}^1\}$, $U_2= \{u_2^2, \ldots, u_{L+1}^2\}$,  each vertex $u_j^i$ appears exactly once in $U_i$, and $u_{j}^i \neq 1$ for all $i,j$. 
Such sets $U_1,U_2$ exist since 
$$|U_1| + |U_2| + |\{1\}| = 2 L + 1  = 2 (\lfloor  \sqrt{n} \rfloor - 1) + 1 \leq 2 \sqrt{n} \leq  n \; \; \; \text{for } n \geq 4\,.$$
Form the sequences of vertices $W^1 = (1, u_2^1, \ldots, u_{L+1}^1)$ and $W^2= (1, u_2^2, \ldots, u_{L+1}^2)$. 
Then both $W^1$ and $W^2$ are good. Consider now the functions $g_{W^1, 0}$ and $g_{W^2, 1}$. By definition of $r$, we have $r(g_{W^1, 0}, g_{W^2, 1}) = n$, since the maximum index $j$ for which $W^1_{1\to j} = W^2_{1 \to j} $ is $j=1$. Then 
\[ 
% nick{
q(\{g_{W^1,0}, g_{W^2,1}\}) \ge r(g_{W^1, 0}, g_{W^2, 1}) \cdot \mathbbm{1}_{\{g_{W^1, 0}(u^1_{L+1}) \ne g_{W^2, 1}(u^1_{L+1})\}} = n > 0 \;.
% }
\]
Thus there exists a subset $\mathcal{Z} \subseteq \mathcal{X}$ with
% nick{
$q(\mathcal{Z}) > 0$
% }
as required.
\end{proof}

\begin{mylemma} \label{lem:sum_rv_bounded_by_two_sum_rv_tilde}
%   In the setting of \cref{thm:our-variant}, let $r_v$ and $\widetilde{r}_v$ be as defined in $\cref{def:rv}$ and $\cref{def:rv_tilde}$ respectively.
For each  $v \in [n]$ and subset $\mathcal{Z} \subseteq \mathcal{X}$,
    % with 
    % $M(\mathcal{Z} ) = \sum_{F_1 \in \mathcal{Z}} \sum_{F_2 \in \mathcal{X}} r(F_1,F_2) > 0$,
    % nick{$q(\mathcal{Z}) = \max_{v \in [n]} \sum_{F_1 \in \mathcal{Z}} \sum_{F_2 \in \mathcal{Z} r_v(F_1,F_2) > 0}$,}
    % davin{Actually, the condition of $q(\cZ > 0)$ is not used in this proof right? maybe just leave this condition out in the lemma statement?}nick{done}
    we have
    \[
        \sum_{F_1, F_2 \in \mathcal{Z} } {r}_v(F_1,F_2) \le 2 \sum_{F_1,F_2 \in \mathcal{Z}} \widetilde{r}_v(F_1,F_2) \,. 
    \]
\end{mylemma}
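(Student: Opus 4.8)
The plan is a one-line symmetrization argument. Fix a vertex $v \in [n]$ and a subset $\mathcal{Z} \subseteq \mathcal{X}$. The only difference between $r_v$ and $\widetilde{r}_v$ is that $\widetilde{r}_v(g_{\X,b_1},g_{\Y,b_2})$ carries the extra factor $\mathbbm{1}_{\{\multiplicity(S_{\X},v) \le \multiplicity(S_{\Y},v)\}}$; thus $\widetilde{r}_v \le r_v$ pointwise, $r_v$ is symmetric (being $r$, which is symmetric, times the symmetric indicator $\mathbbm{1}_{\{F_1(v)\neq F_2(v)\}}$), while $\widetilde{r}_v$ is in general not symmetric. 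I would recover the factor of $2$ by pairing each ordered pair with its reverse.

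The key (trivial) combinatorial fact is that for any two functions $F_1 = g_{\X,b_1}$ and $F_2 = g_{\Y,b_2}$ in $\mathcal{Z}$, the two nonnegative integers $\multiplicity(S_{\X},v)$ and $\multiplicity(S_{\Y},v)$ are comparable, so at least one of $\multiplicity(S_{\X},v) \le \multiplicity(S_{\Y},v)$ and $\multiplicity(S_{\Y},v) \le \multiplicity(S_{\X},v)$ holds, i.e.
\[
\mathbbm{1}_{\{\multiplicity(S_{\X},v) \le \multiplicity(S_{\Y},v)\}} + \mathbbm{1}_{\{\multiplicity(S_{\Y},v) \le \multiplicity(S_{\X},v)\}} \ge 1 \,.
\]
Multiplying this inequality by $r_v(F_1,F_2) = r_v(F_2,F_1) \ge 0$ and distributing the two terms gives the pointwise bound
\[
r_v(F_1,F_2) \le r_v(F_1,F_2)\,\mathbbm{1}_{\{\multiplicity(S_{\X},v) \le \multiplicity(S_{\Y},v)\}} + r_v(F_2,F_1)\,\mathbbm{1}_{\{\multiplicity(S_{\Y},v) \le \multiplicity(S_{\X},v)\}} = \widetilde{r}_v(F_1,F_2) + \widetilde{r}_v(F_2,F_1) \,.
\]
Summing this over all ordered pairs $(F_1,F_2) \in \mathcal{Z} \times \mathcal{Z}$, the left side is exactly $\sum_{F_1,F_2 \in \mathcal{Z}} r_v(F_1,F_2)$, and the right side is $\sum_{F_1,F_2 \in \mathcal{Z}} \widetilde{r}_v(F_1,F_2) + \sum_{F_1,F_2 \in \mathcal{Z}} \widetilde{r}_v(F_2,F_1)$; reindexing the second sum by swapping the roles of $F_1$ and $F_2$ shows it equals the first, so the right side is $2\sum_{F_1,F_2 \in \mathcal{Z}} \widetilde{r}_v(F_1,F_2)$, which is the claimed inequality.

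The argument has essentially no obstacles; the only point deserving a word of care is that $\widetilde{r}_v$ is written in terms of a representation $g_{\X,b}$ of an element of $\mathcal{X}$, so one should make sure it is well-defined before summing over functions — this is harmless, since one may simply regard the elements of $\mathcal{X}$ as indexed by the pairs $(\X,b) \in (\{1\}\times[n]^L)\times\{0,1\}$ and read all sums over $\mathcal{Z}$ accordingly (and for bad sequences $r$, hence $\widetilde{r}_v$, vanishes, so no ambiguity arises there either).
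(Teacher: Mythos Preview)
Your proof is correct and takes essentially the same symmetrization approach as the paper: both arguments exploit the symmetry of $r_v$ together with the trivial fact that for any pair of multiplicities at least one of the two $\le$-comparisons holds. Your formulation via the pointwise inequality $r_v(F_1,F_2)\le \widetilde{r}_v(F_1,F_2)+\widetilde{r}_v(F_2,F_1)$ is slightly slicker than the paper's explicit case split, but the underlying idea is identical.
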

\begin{proof}
By Definition~\ref{def:rv}, we have  
$r_v(F_1, F_2) = r(F_1, F_2) \cdot \mathbbm{1}_{\{F_1(v) \neq F_2(v)\}}$ for all $v \in [n]$ and $F_1, F_2 \in \mathcal{X}$. Then  $r_v$ is symmetric since both the function $r$ and the  indicator $\mathbbm{1}_{\{F_1(v) \neq F_2(v)\}}$ are symmetric.
Also  recalling that for a walk $Q$,  the number of times that a vertex $u$ appears in  $Q$ is denoted $\multiplicity(Q,u)$, 
we have:
    %\begin{small}
    \begin{align} 
   &  \sum_{F_1, F_2 \in \mathcal{Z} } {r}_v(F_1,F_2)  = 
     \sum_{\substack{F_1,F_2 \in \mathcal{Z}: \\F_1 = g_{\X,b_1}; F_2 = g_{\Y,b_2} \\
    \multiplicity(S_{\X},v) \leq  
    \multiplicity(S_{\Y},v)}} {r}_v(F_1,F_2) 
   + 
    \sum_{\substack{F_1,F_2 \in \mathcal{Z}: \\F_1 = g_{\X,b_1}; F_2 = g_{\Y,b_2} \\
    \multiplicity(S_{\Y},v) < \multiplicity(S_{\X},v)  
    }} {r}_v(F_1,F_2)  \notag \\
    %& \qquad \; \;=  
    % \sum_{\substack{F_1,F_2 \in \mathcal{Z}: \\F_1 = g_{\X,b_1}; %F_2 = g_{\Y,b_2} \\
    %\multiplicity(S_{\X},v) \leq  
    %\multiplicity(S_{\Y},v)}}  {r}_v(F_1,F_2) 
    %+   
    %\sum_{\substack{F_1,F_2 \in \mathcal{Z}: \\F_1 = g_{\X,b_1}; F_2 %= g_{\Y,b_2} \\
    %\multiplicity(S_{\Y},v) < \multiplicity(S_{\X},v)  
    %}} {r}_v(F_2,F_1) \notag \\ 
    & \qquad  =
     \sum_{\substack{F_1,F_2 \in \mathcal{Z}: \\F_1 = g_{\X,b_1}; F_2 = g_{\Y,b_2} \\
    \multiplicity(S_{\X},v) \leq  
    \multiplicity(S_{\Y},v)}} {r}_v(F_1,F_2) 
    +   
     \left( \sum_{\substack{F_1,F_2 \in \mathcal{Z}: \\F_1 = g_{\X,b_1}; F_2 = g_{\Y,b_2} \\
    \multiplicity(S_{\Y},v) \leq \multiplicity(S_{\X},v) }} {r}_v(F_2,F_1)  -   \sum_{\substack{F_1,F_2 \in \mathcal{Z}: \\F_1 = g_{\X,b_1}; F_2 = g_{\Y,b_2} \\
    \multiplicity(S_{\X},v) =  
    \multiplicity(S_{\Y},v)}} {r}_v(F_2,F_1) \right) \notag \\
     & \qquad  = 2 \sum_{\substack{F_1,F_2 \in \mathcal{Z}: \\F_1 = g_{\X,b_1}; F_2 = g_{\Y,b_2} \\
    \multiplicity(S_{\X},v) \leq  
    \multiplicity(S_{\Y},v)}} {r}_v(F_1,F_2) 
    -
     \sum_{\substack{F_1,F_2 \in \mathcal{Z}: \\F_1 = g_{\X,b_1}; F_2 = g_{\Y,b_2} \\
    \multiplicity(S_{\Y},v) = \multiplicity(S_{\X},v) }} {r}_v(F_2,F_1)\,. \label{eq:r_v_sum_step2}  
    \end{align}

%Using the symmetry of $r_v$ in  \cref{eq:r_v_sum_step1}, we get 
%\begin{align}
% \sum_{F_1, F_2 \in \mathcal{Z} } {r}_v(F_1,F_2)
%   & = 2 \sum_{\substack{F_1,F_2 \in \mathcal{Z}: \\F_1 = %g_{\X,b_1}; F_2 = g_{\Y,b_2} \\
%    \multiplicity(S_{\X},v) \leq  
%    \multiplicity(S_{\Y},v)}} {r}_v(F_1,F_2) 
%    -
%     \sum_{\substack{F_1,F_2 \in \mathcal{Z}: \\F_1 = g_{\X,b_1}; %F_2 = g_{\Y,b_2} \\
%    \multiplicity(S_{\Y},v) = \multiplicity(S_{\X},v) }} {r}_v(F_2,F_1)\,. %\label{eq:r_v_sum_step2}  
%\end{align}
Recall  from  \cref{def:rv_tilde}, of the function $\widetilde{r}_v$, that 
\[
        \widetilde{r}_v(g_{\X,b_1}, g_{\Y,b_1}) = r_v(g_{\X,b_1}, g_{\Y,b_2}) \cdot \mathbbm{1}_{\{\multiplicity(S_{\X},v) \le \multiplicity(S_{\Y},v)\}} \;\;\; \forall \X,\Y \in \{1\} \times [n]^L , \forall b_1,b_2 \in \{0,1\}.
        \]
Then $\widetilde{r}_v(F_1,F_2) = 0$ when $\multiplicity(S_{\X},v) > \multiplicity(S_{\Y}, v)$, which substituted in \cref{eq:r_v_sum_step2} gives

\begin{align} 
\sum_{F_1, F_2 \in \mathcal{Z} } {r}_v(F_1,F_2)
    & = 2 \sum_{F_1,F_2 \in \mathcal{Z}} \widetilde{r}_v(F_1,F_2) -  \sum_{\substack{F_1,F_2 \in \mathcal{Z}: \\F_1 = g_{\X,b_1}; F_2 = g_{\Y,b_2}\\
    \multiplicity(S_{\Y},v) = \multiplicity(S_{\X},v) }} {r}_v(F_2,F_1)  \\
    & \leq 2 \sum_{F_1,F_2 \in \mathcal{Z}} \widetilde{r}_v(F_1,F_2) \,. \explain{Since $r_v(F_2,F_1) \geq 0$  $\forall F_1,F_2 \in \mathcal{X}, v \in [n]$}
    \end{align}
%    \end{small}
This completes the proof of the lemma.
\end{proof}

\begin{mylemma} \label{lem:rv_tilde_implies_several_things}
   Let  $\X,\Y \in \{1\} \times [n]^L$,  $b_1,b_2 \in \{0,1\}$,   $v \in [n]$. Let $j \in [L+1]$ be the maximum index for which $\X_{1 \to j} = \Y_{1 \to j}$.
    If $\; \widetilde{r}_v(g_{\X,b_1}, g_{\Y,b_2}) > 0$, then at least one of the next two properties holds:
    \begin{enumerate}[(i)]
        \item $v \in Tail(j, S_{\Y})$. %, where $Tail(j, S_{\Y})$ was defined in \cref{def:tail}.
        \item $\X = \Y$.
    \end{enumerate}
\end{mylemma}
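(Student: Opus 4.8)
The plan is to unpack the definition of $\widetilde{r}_v$ and argue by contrapositive on condition (i). Suppose $\widetilde{r}_v(g_{\X,b_1},g_{\Y,b_2}) > 0$ but $v \notin Tail(j,S_{\Y})$; I want to conclude $\X = \Y$. First, by Definition~\ref{def:rv_tilde} and Definition~\ref{def:rv}, $\widetilde{r}_v(g_{\X,b_1},g_{\Y,b_2}) > 0$ forces three things simultaneously: (a) $r(g_{\X,b_1},g_{\Y,b_2}) > 0$, so both $\X$ and $\Y$ are good and $b_1 \ne b_2$, and moreover $r(g_{\X,b_1},g_{\Y,b_2}) = n^j$ where $j$ is the maximal shared-prefix length; (b) $g_{\X,b_1}(v) \ne g_{\Y,b_2}(v)$, i.e.\ $v$ is a disagreement vertex; and (c) $\multiplicity(S_{\X},v) \le \multiplicity(S_{\Y},v)$.

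Next I would analyze where $v$ can lie. Since $\X_{1 \to j} = \Y_{1 \to j}$, the staircases agree on their first $j-1$ quasi-segments: $S_{\X} = P^{x_1,x_2} \circ \cdots \circ P^{x_{j-1},x_j} \circ (\text{tail from } x_j)$ and likewise for $S_{\Y}$, with a common prefix walk up through $x_j = y_j$. The function values $f_{\X}(v)$ and $f_{\Y}(v)$ are determined by the \emph{last} quasi-segment containing $v$ and $v$'s position on it. If $v$ does not appear in either staircase, then $f_{\X}(v) = dist(v,1) = f_{\Y}(v)$ and (since $v = x_{L+1}$ would require $v \in S_{\X}$) the bits agree too, contradicting (b). So $v \in S_{\X} \cup S_{\Y}$. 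Now I split on whether $v$ appears in the shared prefix. If the last occurrence of $v$ in $S_{\X}$ is within the common prefix (i.e.\ strictly before the copy of $x_j$, or more precisely in $P^{x_i,x_{i+1}}$ for some $i < j$) AND similarly for $S_{\Y}$, then $f_{\X}(v) = f_{\Y}(v)$ by the common-prefix agreement; combined with $v \ne x_{L+1}$ (a local-minimum vertex lies at the end of the tail, not the prefix) and $v \ne y_{L+1}$, this gives $g_{\X,b_1}(v) = g_{\Y,b_2}(v)$, again contradicting (b). Hence $v$ must have a "last appearance" outside the shared prefix in at least one of the two staircases. By (c), $\multiplicity(S_{\X},v) \le \multiplicity(S_{\Y},v)$, so whenever $v$ lies in $Tail(j,S_{\X})$ it also lies in $S_{\Y}$; the multiplicity inequality is precisely what lets me transfer "$v$ appears in $\X$'s tail" to "$v$ appears in $\Y$'s tail." Combined with the assumption $v \notin Tail(j,S_{\Y})$, this forces $v$ to appear in $\Y$ only within the shared prefix, and then the multiplicity bound forces $v$ to appear in $\X$ only within the shared prefix too — so both last-appearances are in the prefix, which we just ruled out. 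The only escape is that there is no disagreement vertex available unless $j = L+1$, i.e.\ $\X_{1 \to L+1} = \Y_{1 \to L+1}$, which (as both have length $L+1$) means $\X = \Y$, giving (ii).

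The main obstacle I anticipate is the careful bookkeeping around the boundary vertex $x_j = y_j$ shared between the prefix and the first tail quasi-segment: the $Tail$ operator (Definition~\ref{def:tail}) explicitly removes the first occurrence of $x_j$, so I must be precise about whether "$v$ last appears in the prefix" means $i < j$ strictly in the indexing of quasi-segments, and I need to make sure the value-function definition (which uses the \emph{maximum} index $i$ with $v \in P^{x_i,x_{i+1}}$) interacts correctly with this. A second delicate point is handling the case $v = x_j = y_j$ itself: it sits at the junction, so I should check separately that if $v = x_j$ then either $f_{\X}(v) = f_{\Y}(v)$ (forcing no disagreement) or $v$ reappears later in a tail, which then routes back into the multiplicity argument. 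I would organize the final write-up as: (1) extract consequences (a)–(c) of positivity; (2) dispense with $v \notin S_{\X} \cup S_{\Y}$; (3) show that if the last occurrences of $v$ in both $S_{\X}$ and $S_{\Y}$ are in the common prefix then there is no disagreement; (4) use (c) to argue that failing condition (i) pushes us into case (3); (5) conclude $j = L+1$, hence $\X = \Y$.
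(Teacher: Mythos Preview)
Your plan is correct and follows essentially the same approach as the paper. The paper organizes the argument as a case split on whether $v \in Tail(j,S_{\X})$ (using the multiplicity decomposition $\multiplicity(S_{\X},v) = \multiplicity(S_{\X_{1\to j}},v) + \multiplicity(Tail(j,S_{\X}),v)$ in the ``yes'' case, and handling $\X=\Y$ upfront in the ``no'' case), whereas you frame it as a contrapositive on (i); the key ingredients---the multiplicity transfer, the common-prefix value agreement, and the $\X=\Y$ exception via the hidden bit at $x_{L+1}$---are identical.
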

\begin{proof}
We start with a few observations.

Recall from \cref{def:rv_tilde} that 
$\widetilde{r}_v(g_{\X,b_1}, g_{\Y,b_1}) = r_v(g_{\X,b_1}, g_{\Y,b_2}) \cdot \mathbbm{1}_{\{\multiplicity(S_{\X},v) \le \multiplicity(S_{\Y},v)\}}$.
By the lemma statement, we have $\widetilde{r}_v(g_{\X,b_1}, g_{\Y,b_2}) > 0$, and so both of the next inequalities hold:
    \begin{align} 
    & r_v(g_{\X,b_1}, g_{\Y,b_2}) > 0  \label{eq:r_v_strictly_positive_2}\\
    & \multiplicity(S_{\X},v) \leq \multiplicity(S_{\Y},v)\,. \label{eq:m_SX_v_les_m_SY_v}
    \end{align} 
    By definition of $r_v$, we have $ r_v(g_{\X,b_1}, g_{\Y,b_2}) = r(g_{\X,b_1}, g_{\Y,b_2}) \cdot \mathbbm{1}_{\{g_{\X,b_1}(v) \neq g_{\Y,b_2}(v)\}} \,.$ 
Then \cref{eq:r_v_strictly_positive_2} implies    
\begin{align} 
& g_{\X,b_1}(v) \neq g_{\Y,b_2}(v) \,.  \label{eq:v_not_equal_gXb1_gXb2}
\end{align}

To prove that $v \in Tail(j, S_{\Y})$ or $\X = \Y$ we consider two cases: 
%depending on whether $v \in Tail(j,S_{\X})$ or $v \not \in Tail(j,S_{\X})$. 

    \paragraph{Case 1: $v \in Tail(j,S_{\X})$.}
    % Since $\widetilde{r}_v(g_{\X,b_1}, g_{\Y,b_2}) > 0$, we know $\multiplicity(S_{\X},v) \le \multiplicity(S_{\Y},v)$.
    % Since $\X_{1 \to j} = \Y_{1 \to j}$, we know $\multiplicity(S_{\X_{1 \to j}},v) = \multiplicity(S_{\Y_{1 \to j}},v)$.
    % So in order for $\multiplicity(S_{\Y},v)$ to be large enough, we must have $v$ must also appear in $Tail(j,S_{\Y})$.
 
% Since $\widetilde{r}_v(g_{\X,b_1}, g_{\Y,b_2}) > 0$, it follows   that $\multiplicity(S_{\X},v) \leq \multiplicity(S_{\Y}, v)$ by definition of the function $\widetilde{r}_v$.
Let us decompose the staircase $S_{\X}$ into the initial segment $S_{\X_{1 \to j}}$ and the remainder $Tail(j,S_{\X})$. Similarly, we decompose  the staircase $S_{\Y}$
 into initial segment  $S_{\Y_{1 \to j}}$ and the remainder $Tail(j,S_{\Y})$. We get:
    \begin{align}
         & \multiplicity(S_{\X},v) \le  \multiplicity(S_{\Y},v) \explain{By \cref{eq:m_SX_v_les_m_SY_v}.}  \\
        \iff & \multiplicity(S_{\X_{1 \to j}},v) + \multiplicity(Tail(j,S_{\X}),v) \le \multiplicity(S_{\Y_{1 \to j}},v) + \multiplicity(Tail(j,S_{\Y}),v) \notag \\ %\explain{By definition of $Tail(j,S_{\X}),Tail(j,S_{\Y})$.}  \\
         \iff & \multiplicity(Tail(j,S_{\X}),v) \le \multiplicity(Tail(j,S_{\Y}),v) \explain{Since $\X_{1 \to j} = \Y_{1 \to j}$.}  \\
    \end{align}
    Since $v \in Tail(j,S_{\X})$, we have $\multiplicity(Tail(j,S_{\X}),v) \geq 1$, and so  
    \begin{align*}
        1 \leq  \multiplicity(Tail(j,S_{\Y}),v) \,.
    \end{align*}
    Thus $v \in Tail(j,S_{\Y})$, so property $(i)$ from the lemma statement holds. This completes Case 1.
    
    \paragraph{Case 2: $v \notin Tail(j,S_{\X})$.}
If $\X=\Y$, then property $(ii)$ from the lemma statement  holds. 

Now suppose  $\X \neq \Y$. Then $Tail(j, S_{\Y}) \neq \emptyset$.
%From $r(g_{\X,b_1}, g_{\Y,b_2}) > 0$ we  get that $b_1 \neq b_2$ and both $\X$ and $\Y$ are good. 
We claim  $v \in S_{\X} \cup S_{\Y}$. 
Suppose towards a contradiction that $v \not \in S_{\X} \cup  S_{\Y}$.

Then for each  $b \in \{0,1\}$,  $u \in [n]$, and sequence $\Z = (1, z_2, z_3 \ldots, z_{L+1}) \in \{1\} \times [n]^{L}$, we have by \cref{eq:def:g_X_b} (which defines the function $g_{\vec{z}, b}$) that 
\begin{align} 
g_{\vec{z},b}(u) = 
\begin{cases}
\bigl(f_{\vec{z}}(u), b\bigr) & \text{if } u = z_{L+1}\, \\
 \bigl(f_{\vec{z}}(u), -1\bigr) & \text{otherwise}\,.
\end{cases} \notag 
\end{align}  

Since $v \not\in S_{\X}$, we have $v \neq \X_{L+1}$, and so  $g_{\X,b_1}(v) =(f_{\X}(v), -1)$.
 Moreover, since  $x_1 = y_1 $ and $v \not \in S_{\X} \cup S_{\Y}$, we have $f_{\X}(v) = dist(v,x_1) = dist(v,y_1) = f_\Y(v)\,.$ Combining these observations yields  
$g_{\X,b_1}(v) = (f_{\X}(v), -1)  
= (f_\Y(v),-1)  
 = g_{\Y,b_2}(v),$
which contradicts \cref{eq:v_not_equal_gXb1_gXb2}.
Thus the assumption must have been false and $v \in S_{\X} \cup S_{\Y}$. % This completes the proof of Case 2.

To summarize, we have  $\X_{1\to j} = \Y_{1 \to j}$, $\X \neq \Y$, $v \in S_{\vec{x}} \cup S_{\vec{y}}$, and $v \not \in Tail(j,S_{\X})$. Suppose towards a contradiction that $v \not \in Tail(j,S_{\Y})$. Then 
\begin{align}
& g_{\X,b_1}(v)  =(f_{\X}(v), -1) \explain{Since $v \neq x_{L+1}$, as $v \not \in Tail(j, S_{\X})$.} \\
& = (f_\Y(v), -1) \explain{Since $v \in S_{\X} \cup S_{\Y}$ and $\X_{1 \to j} = \Y_{1 \to j}$ and ($v \not \in Tail(j,S_{\X})$, $v \not \in Tail(j,S_{\Y})$).} \\
%& = (f_\Y(v), -1) \explain{Since $v \not \in Tail(j,S_{\X})$ and $v \not \in Tail(j,S_{\Y})$.} \\
& = g_{\Y,b_2}(v) \explain{Since $v \neq y_{L+1}$, as $v \not \in Tail(j, S_{\Y})$.},
 \end{align}
 which contradicts \cref{eq:v_not_equal_gXb1_gXb2}.

Thus the assumption must have been false and $v  \in Tail(j,S_{\Y})$, so property $(i)$ from the lemma statement holds. 

We conclude that at least one of properties $(i)$ and $(ii)$ holds. This completes Case 2, as well as the proof of the lemma.
\end{proof}

\begin{mylemma} \label{lem:j_le_ell_implies_v_in_tail}
For each $\X \in \{1\} \times [n]^L$, $b_1 \in \{0,1\}$, $v \in [n]$, and $j \in [L]$, we have 
\[
\sum_{\substack{{\Y} \in \{1\} \times [n]^L, \; b_2 \in \{0,1\} :\\ \max\{i : \X_{1 \to i} = \Y_{1 \to i}\} = j }} \widetilde{r}_v(g_{\X,b_1},g_{\Y,b_2})  \le
\sum_{\substack{{\Y} \in \{1\} \times [n]^L :\\ \max\{i : \X_{1 \to i} = \Y_{1 \to i}\} = j \\ v \in Tail(j, S_{\Y}) }} {r}(g_{\X,b_1},g_{\Y,1-b_1})\,.
\]
\end{mylemma}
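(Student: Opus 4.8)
The plan is to prove the inequality term-by-term on the left-hand side, showing that each summand on the left is dominated by a corresponding summand on the right. Fix $\X \in \{1\} \times [n]^L$, $b_1 \in \{0,1\}$, $v \in [n]$, and $j \in [L]$. First I would observe that the sum on the left ranges over pairs $(\Y, b_2)$ with $\max\{i : \X_{1\to i} = \Y_{1\to i}\} = j$. Since $j \le L < L+1$, we have $\X \ne \Y$ for every $\Y$ appearing in this sum (if $\X = \Y$ the maximum index would be $L+1$, not $j$). Hence by \cref{lem:rv_tilde_implies_several_things} (its property $(ii)$ being excluded), whenever $\widetilde{r}_v(g_{\X,b_1}, g_{\Y,b_2}) > 0$ we must have $v \in Tail(j, S_{\Y})$. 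So every $\Y$ that makes a nonzero contribution to the left-hand sum actually satisfies the extra constraint $v \in Tail(j, S_{\Y})$ present on the right-hand side.

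Next I would handle the $b_2$ summation and the passage from $\widetilde{r}_v$ to $r$. Recall $\widetilde{r}_v(g_{\X,b_1}, g_{\Y,b_2}) = r_v(g_{\X,b_1}, g_{\Y,b_2}) \cdot \mathbbm{1}_{\{\multiplicity(S_{\X},v) \le \multiplicity(S_{\Y},v)\}} \le r_v(g_{\X,b_1}, g_{\Y,b_2}) \le r(g_{\X,b_1}, g_{\Y,b_2})$, since the extra indicators are at most $1$ and $r$ is nonnegative. Moreover, by the definition of $r$ (\cref{def:r}), $r(g_{\X,b_1}, g_{\Y,b_2}) = 0$ unless $b_2 = 1 - b_1$; so among the two choices $b_2 \in \{0,1\}$, only $b_2 = 1-b_1$ can contribute, and for that choice $r(g_{\X,b_1}, g_{\Y,1-b_1})$ depends only on the common-prefix length of $\X$ and $\Y$ (which is exactly $j$ on the index set in question), not further on $b_1$ beyond the sign flip. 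Combining these, for each $\Y$ in the left-hand index set,
\[
\sum_{b_2 \in \{0,1\}} \widetilde{r}_v(g_{\X,b_1}, g_{\Y,b_2}) \le r(g_{\X,b_1}, g_{\Y,1-b_1}),
\]
and this $\Y$ already satisfies $v \in Tail(j, S_{\Y})$ whenever the left side is nonzero.

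Finally I would sum over all admissible $\Y$: the left-hand side of the lemma equals $\sum_{\Y :\, \max\{i:\X_{1\to i}=\Y_{1\to i}\}=j} \sum_{b_2} \widetilde{r}_v(g_{\X,b_1},g_{\Y,b_2})$, and by the pointwise bound above this is at most $\sum_{\Y :\, \max\{i:\X_{1\to i}=\Y_{1\to i}\}=j,\ v \in Tail(j,S_{\Y})} r(g_{\X,b_1},g_{\Y,1-b_1})$, which is exactly the right-hand side. The only subtlety — and the step I expect to require the most care in writing up — is the clean bookkeeping that links "$\widetilde{r}_v > 0$" to "$v \in Tail(j,S_{\Y})$" via \cref{lem:rv_tilde_implies_several_things}: one must be explicit that $\X \ne \Y$ rules out alternative $(ii)$ of that lemma, so that alternative $(i)$ is forced, and then note that adding the restriction $v \in Tail(j, S_{\Y})$ to the summation range only removes terms that were already zero. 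Everything else is routine manipulation of nonnegative sums and the definitions of $r$, $r_v$, and $\widetilde{r}_v$.
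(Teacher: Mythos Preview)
Your proposal is correct and follows essentially the same approach as the paper: both arguments use $j \le L$ to rule out $\X = \Y$, invoke \cref{lem:rv_tilde_implies_several_things} to force $v \in Tail(j,S_{\Y})$ for every nonzero summand, drop the indicators to pass from $\widetilde{r}_v$ to $r$, and eliminate $b_2 = b_1$ via the definition of $r$. The paper's write-up just unrolls these steps as a chain of displayed equalities rather than a term-by-term verbal argument, but the content is the same.
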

\begin{proof}
Let $\X \in \{1\} \times [n]^L$, $b_1 \in \{0,1\}$, $v \in [n]$, and $j \in [L]$.

Recall the function $r$ from \cref{def:r}, the function $r_v$ from \cref{def:rv}, and the function $\widetilde{r}_v$ from \cref{def:rv_tilde}. In particular,  we have 
%    \[
%        r_v(F_1,F_2) = r(F_1,F_2) \cdot \mathbbm{1}_{\{F_1(v) \ne %F_2(v)\}} \qquad \forall F_1, F_2 \in \mathcal{X} \,.
%    \]
%
%    For each $v \in [n]$, define $\widetilde{r}_v : \mathcal{X} \times \mathcal{X} \to \mathbb{R}_{\geq 0}$ as follows:
%    %by, for all $\X,\Y \in \{1\} \times [n]^L$ and $b_1,b_2 \in \%{0,1\}$, letting
%    \[
%        \widetilde{r}_v(g_{\X,b_1}, g_{\Y,b_1}) = r_v(g_{\X,b_1}, %g_{\Y,b_2}) \cdot \mathbbm{1}_{\{\multiplicity(S_{\X},v) \le \multiplicity(S_{\Y},v)\}} %\qquad \forall \X,\Y \in \{1\} \times [n]^L \; \; \forall b_1,b_2 %\in \{0,1\} \,.
%    \]
%    \end{comment}
\begin{description}
\item[$\bullet$] $r_v(F_1,F_2) = r(F_1,F_2) \cdot \mathbbm{1}_{\{F_1(v) \ne F_2(v)\}}$ for all $F_1, F_2 \in \mathcal{X}  $.
\item[$\bullet$] $\widetilde{r}_v(g_{\X,b_1}, g_{\Y,b_2}) = r_v(g_{\X,b_1}, g_{\Y,b_2}) \cdot \mathbbm{1}_{\{\multiplicity(S_{\X},v) \le \multiplicity(S_{\Y},v)\}}$  for each $ \X,\Y \in \{1\} \times [n]^L$ \\and $b_1,b_2 \in \{0,1\} $.
\end{description}

Then we get the next chain of identities:
\begin{small}
\begin{align} 
        & \sum_{\substack{{\Y} \in \{1\} \times [n]^L, \; b_2 \in \{0,1\} :\\ \max\{i : \X_{1 \to i} = \Y_{1 \to i}\} = j }} \widetilde{r}_v(g_{\X,b_1},g_{\Y,b_2}) 
          =  \sum_{\substack{{\Y} \in  \{1\} \times [n]^L, \; b_2 \in \{0,1\} :\\ \max\{i : \X_{1 \to i} = \Y_{1 \to i}\} = j \\ \widetilde{r}_v(g_{\X,b_1},g_{\Y,b_2}) > 0} } \widetilde{r}_v(g_{\X,b_1},g_{\Y,b_2})  \explain{Since $\widetilde{r}_v$ is non-negative.} \\
         & \qquad \qquad = \sum_{\substack{{\Y} \in  \{1\} \times [n]^L :\\ \max\{i : \X_{1 \to i} = \Y_{1 \to i}\} = j \\ \widetilde{r}_v(g_{\X,b_1},g_{\Y,1-b_1}) > 0} } \widetilde{r}_v(g_{\X,b_1},g_{\Y,1-b_1})  \explain{Since $\widetilde{r}_v(g_{\X,b_1}, g_{\Y,b_1}) = 0$.} \\
         &  \qquad \qquad = \sum_{\substack{{\Y} \in  \{1\} \times [n]^L :\\ \max\{i : \X_{1 \to i} = \Y_{1 \to i}\} = j \\ \multiplicity(S_{\X}, v) \leq \multiplicity(S_{\Y}, v)  \\\widetilde{r}_v(g_{\X,b_1},g_{\Y,1-b_1})  > 0 }}{r}_v(g_{\X,b_1},g_{\Y,1-b_1}) \explain{By definition of $\widetilde{r}_v$.} \\
         & \qquad \qquad  = \sum_{\substack{{\Y} \in  \{1\} \times [n]^L :\\ \max\{i : \X_{1 \to i} = \Y_{1 \to i}\} = j \\ \multiplicity(S_{\X}, v) \leq \multiplicity(S_{\Y}, v)  \\\widetilde{r}_v(g_{\X,b_1},g_{\Y,1-b_1})  > 0 }} {r}(g_{\X,b_1},g_{\Y,1-b_1}) \cdot \mathbbm{1}_{\{g_{\X,b_1}(v) \neq g_{\Y,1-b_1}(v) \}} \explain{By definition of $r_v$.} \\
          & \qquad \qquad  = \sum_{\substack{{\Y} \in  \{1\} \times [n]^L :\\ \max\{i : \X_{1 \to i} = \Y_{1 \to i}\} = j \\ \multiplicity(S_{\X}, v) \leq \multiplicity(S_{\Y}, v) \\g_{\X,b_1}(v) \neq g_{\Y,1-b_1}(v) \\ \widetilde{r}_v(g_{\X,b_1},g_{\Y,1-b_1})  > 0 }} {r}(g_{\X,b_1},g_{\Y,1-b_1}) \,.
          %\notag  \\
          %& = \sum_{\substack{{\Y} \in  \{1\} \times [n]^L :\\ \max\{i : \X_{1 \to i} = \Y_{1 \to i}\} = j \\ \multiplicity(S_{\X}, v) \leq \multiplicity(S_{\Y}, v) \\f_X(v) \neq f_\Y(v) \\ v \in Tail(j, S_{\Y}) }} {r}(g_{\X,b_1},g_{\Y,1-b_1}) 
          \label{eq:sum_for_one_gxb_decomposed_by_j_partial} % \\
%&  \leq         n^j \left( Q(x_j) n^{L-j} + g L n^{L-j-1} + 1 \right)  \label{eq:sum_for_one_gxb_decomposed_by_j_partial} \\
       % &= n^L \left(Q(x_j) + \frac{g L}{n} + n^{j-L} \right) \label{eq:sum_for_one_gxb_decomposed_by_j_solved}
\end{align}
\end{small}

Consider  an arbitrary $\Y \in  \{1\} \times [n]^L$ meeting the properties from the last sum of \cref{eq:sum_for_one_gxb_decomposed_by_j_partial}: 
\begin{itemize}
\item $\max\{i : \X_{1 \to i} = \Y_{1 \to i}\} = j$
\item $ \multiplicity(S_{\X}, v) \leq \multiplicity(S_{\Y}, v) $
\item $g_{\X,b_1}(v) \neq g_{\Y,1-b_1}(v)$
\item $ \widetilde{r}_v(g_{\X,b_1},g_{\Y,1-b_1})  > 0$
\end{itemize}
By \cref{lem:rv_tilde_implies_several_things}, 
the inequality $\widetilde{r}_v(g_{\X,b_1},g_{\Y,1-b_1})>0$ implies that 
 at least one of  $v \in Tail(j, S_{\Y})$ or $\X=\Y$  holds. However, $\X$ cannot be equal to such $\Y$ as $j < L+1$ is the maximum index for which $\X_{1 \to j} = \Y_{1 \to j}$. Thus $v \in Tail(j,S_{\Y})$. We can continue to bound the sum from \cref{eq:sum_for_one_gxb_decomposed_by_j_partial} as follows:
\begin{align} 
 \sum_{\substack{{\Y} \in  \{1\} \times [n]^L :\\ \max\{i : \X_{1 \to i} = \Y_{1 \to i}\} = j \\ \multiplicity(S_{\X}, v) \leq \multiplicity(S_{\Y}, v) \\g_{\X,b_1}(v) \neq g_{\Y,1-b_1}(v)  \\ \widetilde{r}_v(g_{\X,b_1},g_{\Y,1-b_1})  > 0 }} {r}(g_{\X,b_1},g_{\Y,1-b_1})  \leq \sum_{\substack{{\Y} \in  \{1\} \times [n]^L :\\ \max\{i : \X_{1 \to i} = \Y_{1 \to i}\} = j \\ v \in Tail(j, S_{\Y}) }} {r}(g_{\X,b_1},g_{\Y,1-b_1})\,.  \label{eq:sum_for_one_gxb_decomposed_by_j_partial_step2} 
 %\explain{Since $\widetilde{r}_v(g_{\X,b_1}, g_{\Y,1-b_1}) > 0 $ and $j \leq L$ imply $v \in Tail(j,S_{\Y})$.}
 \end{align}
 Combining \cref{eq:sum_for_one_gxb_decomposed_by_j_partial} and \cref{eq:sum_for_one_gxb_decomposed_by_j_partial_step2}, we get the inequality required by the lemma statement:
 \begin{align} 
 \sum_{\substack{{\Y} \in \{1\} \times [n]^L, \; b_2 \in \{0,1\} :\\ \max\{i : \X_{1 \to i} = \Y_{1 \to i}\} = j }} \widetilde{r}_v(g_{\X,b_1},g_{\Y,b_2}) \leq   \sum_{\substack{{\Y} \in  \{1\} \times [n]^L :\\ \max\{i : \X_{1 \to i} = \Y_{1 \to i}\} = j \\ v \in Tail(j, S_{\Y}) }} {r}(g_{\X,b_1},g_{\Y,1-b_1})\,.  \notag 
 \end{align} 
\end{proof}

\begin{mylemma}\label{lem:upper_bound_set_v_in_Tail_Y}
Let $\X \in \{1\} \times [n]^L$, $v \in [n]$, and $j \in [L]$.
Then
\begin{small}
\begin{align} 
\Bigl| \Bigl\{ \Y \in \{1\} \times [n]^L \mid \max\{i : \X_{1 \to i} = \Y_{1 \to i} \} = j \text{ and } v \in Tail(j,S_{\Y}) \Bigr\}  \Bigr|    \leq  \numPaths{v}(x_j) \cdot n^{L - j} + L \cdot g \cdot n^{L - j - 1}\,, \notag 
\end{align} 
\end{small}
recalling that $\numPaths{v}(u)$ is the number of paths in the set $\mathcal{P}$ that start at vertex $u$ and contain $v$.
\end{mylemma}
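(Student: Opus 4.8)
**Proof proposal for Lemma (counting sequences $\vec y$ with prescribed shared prefix and $v \in \mathrm{Tail}(j,S_{\vec y})$).**

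The plan is to decompose the count according to where on the staircase $S_{\vec y}$ the vertex $v$ lies. Fix $\vec x$, $v$, and $j \in [L]$. A sequence $\vec y \in \{1\}\times[n]^L$ counted by the left-hand side must satisfy $\vec y_{1\to j} = \vec x_{1\to j}$ (so in particular $y_i = x_i$ for $i \le j$, and $y_{j+1} \ne x_{j+1}$, though we will only need the weaker $\vec y_{1\to j}=\vec x_{1\to j}$ for the upper bound), and $v \in \mathrm{Tail}(j,S_{\vec y})$. Recall $S_{\vec y} = P^{y_1,y_2}\circ\cdots\circ P^{y_L,y_{L+1}}$ and $\mathrm{Tail}(j,S_{\vec y})$ is obtained from $P^{y_j,y_{j+1}}\circ\cdots\circ P^{y_L,y_{L+1}}$ by deleting the first occurrence of $y_j$; hence $v \in \mathrm{Tail}(j,S_{\vec y})$ implies $v \in P^{y_i,y_{i+1}}$ for some $i$ with $j \le i \le L$. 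So I would split into the case $i = j$ and the case $j < i \le L$, bounding the number of valid $\vec y$ in each, and then sum (the cases need not be disjoint, so summing the two counts is a valid upper bound).

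For the case $i = j$: here $y_j = x_j$ is already determined, so we need $v \in P^{x_j, y_{j+1}}$; by definition of $\numPaths{v}(x_j)$ there are at most $\numPaths{v}(x_j)$ choices of $y_{j+1}$, and then $(y_{j+2},\dots,y_{L+1})$ is an arbitrary tuple in $[n]^{L-j}$, giving at most $\numPaths{v}(x_j)\cdot n^{L-j}$ sequences. For the case $j < i \le L$: there are at most $L$ choices of the index $i$; for each fixed $i$, the requirement $v \in P^{y_i,y_{i+1}}$ forces the ordered pair $(y_i,y_{i+1})$ to be one of at most $g$ pairs, since the vertex $v$ lies on at most $g$ paths of $\cP$ in total (that is the vertex congestion bound), and each such path $P^{u,w}$ contributes the single ordered pair $(u,w)$; the remaining $L-j-1$ coordinates — namely $(y_{j+1},\dots,y_{i-1},y_{i+2},\dots,y_{L+1})$, since $y_1,\dots,y_j$ are fixed and $y_i,y_{i+1}$ are now determined — range freely over $[n]^{L-j-1}$. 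This yields at most $L\cdot g\cdot n^{L-j-1}$ sequences. Adding the two bounds gives exactly $\numPaths{v}(x_j)\cdot n^{L-j} + L\cdot g\cdot n^{L-j-1}$, as claimed.

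The one point requiring a little care — and the main (mild) obstacle — is the bookkeeping of free coordinates in the second case: one must check that fixing $i$ and the pair $(y_i,y_{i+1})$ really leaves $L-j-1$ unconstrained coordinates rather than $L-j$, and that when $i = j+1$ or $i = L$ the count degenerates gracefully (e.g. if $i=L$ there is no $y_{L+2}$, consistent with the exponent $L-j-1$, and if $i=j+1$ then $y_{j+1}$ is already among the determined coordinates). Since we only need an upper bound, any off-by-one from overlapping ranges or boundary indices can only be in our favour, so a clean way to write it is: the number of valid $\vec y$ with $v$ appearing on the $i$-th quasi-segment ($i>j$) is at most (number of valid $(y_i,y_{i+1})$ pairs) $\times\, n^{(L+1)-j-2} = g\cdot n^{L-j-1}$, then multiply by the at most $L$ choices of $i$. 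Everything else is immediate from the definitions of $\numPaths{v}$, of the staircase, and of vertex congestion.
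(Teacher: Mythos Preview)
Your proposal is correct and follows essentially the same approach as the paper: split according to whether $v$ lies on the $j$-th quasi-segment or on some later quasi-segment, bound the first case by $\numPaths{v}(x_j)\cdot n^{L-j}$ and the second by $L\cdot g\cdot n^{L-j-1}$ via the congestion bound on $(y_i,y_{i+1})$, then sum. The paper's write-up is slightly terser but the argument and bookkeeping are identical.
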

\begin{proof}
    There are two ways $v$ could be in $Tail(j, S_{\Y})$: either $v \in P^{y_j, y_{j+1}}$ or $v \in P^{y_i, y_{i+1}}$ for some $j < i \leq L$. 
    We will now upper bound the number of $\Y$ for each of these possibilities separately.
    % We will now count $\Y$ for each of these possibilities separately.
    \begin{enumerate}[(i)] 
\item  We count the number of sequences of vertices $\Y = (y_1, \ldots, y_{L+1})$ with 
\begin{align}
    \X_{1 \to j} = \Y_{1 \to j} \text{ and } v \in P^{y_j, y_{j+1}}\,.  \label{eq:Y_prop_related_to_X}
\end{align} 
There are $\numPaths{v}(x_j)$ choices of vertices for $y_{j+1}$ and  $n^{L - j}$ choices for sequences $(y_{j+2}, \ldots, y_{L+1})$. Thus there are $\numPaths{v}(x_j) \cdot n^{L - j}$ choices of $\Y$ for which  \cref{eq:Y_prop_related_to_X} holds.
\item  For each $i$ with $ j < i \leq L$, we count the number of sequences of vertices $\Y = (y_1, \ldots, y_{L+1})$ with 
\begin{align}
    \X_{1 \to j} = \Y_{1 \to j} \text{ and } v \in P^{y_i, y_{i+1}}\,.  \label{eq:Y_prop_related_to_X_i_j}
\end{align} 
There are $n^{L - j -1}$  tuples of the form $(y_{j+1}, \ldots, y_{i-1}, y_{i+2}, \ldots, y_{L+1})$, as these vertices can be chosen arbitrarily. Since there are at most $g$ paths in $\cP$ that contain $v$, the number of choices for the pair $(y_{i}, y_{i+1})$ is at most $g$ as well. Thus there are $g \cdot n^{L - j -1}$ choices of $\Y$ for which \cref{eq:Y_prop_related_to_X_i_j} holds.

There are at most $L-j \leq L$ possible locations for $i$ when $j < i \leq L$.
So, there are at most $L \cdot g \cdot n^{L-j-1}$ choices for $\Y$ such that \cref{eq:Y_prop_related_to_X_i_j} holds with some index $ i \in \{j+1, \ldots, L\}$.

% Thus there are 
%  at most $L \cdot g \cdot n^{L-j-1}$ choices of $\Y$ such that \cref{eq:Y_prop_related_to_X_i_j} holds with some index $ i \in \{j+1, \ldots, L\}$.
\end{enumerate}
\medskip 

Combining the analysis from (i)-(ii), we get that:
% for each $j \in [L]$:
\begin{small}
    \[
    \Bigl| \Bigl\{ \Y \in \{1\} \times [n]^L \mid \max\{i : \X_{1 \to i} = \Y_{1 \to i} \} = j \text{ and } v \in Tail(j,S_{\Y}) \Bigr\}  \Bigr|    \leq  \numPaths{v}(x_j) \cdot n^{L - j} + L \cdot g \cdot n^{L - j - 1} \,.
    \]
\end{small}
This completes the proof of the lemma.
\end{proof}

\begin{mylemma} \label{lem:count_Y_denominator}
Let $j \in [L]$ and $\X \in \{1\} \times [n]^L$ be an arbitrary good sequence of vertices. 
   Then %the number of $\Y \in \{1\} \times [n]^L$ such that $\Y$ is good and $j = \max\{i:\X_{1 \to i} = \Y_{1 \to i}\}$ is
    \[
     \Bigl| \Y \in \{1\} \times [n]^L \mid \Y \; \text{is good and } j = \max\{i:\X_{1 \to i} = \Y_{1 \to i}\} \Bigr|  =  (n-j-1)\prod_{i=j+2}^{L+1} (n-i+1) \,.
    \]
\end{mylemma}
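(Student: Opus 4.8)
We want to count the good sequences $\Y = (y_1,\dots,y_{L+1}) \in \{1\}\times[n]^L$ such that $j = \max\{i : \X_{1\to i} = \Y_{1\to i}\}$, given a fixed good sequence $\X$. The plan is to break the condition $j = \max\{i : \X_{1\to i} = \Y_{1\to i}\}$ into two parts: (a) $\Y_{1\to j} = \X_{1\to j}$, i.e.\ $y_i = x_i$ for all $i \le j$; and (b) $y_{j+1} \ne x_{j+1}$, which is what makes $j$ the \emph{maximal} agreement index (note $j \le L$, so the coordinate $y_{j+1}$ exists). So I would first fix the prefix $(y_1,\dots,y_j) = (x_1,\dots,x_j)$ — exactly one choice — and then count the number of ways to choose the suffix $(y_{j+1},\dots,y_{L+1})$ so that $\Y$ is good and $y_{j+1}\ne x_{j+1}$.

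The key step is the counting of the suffix. Since $\Y$ is good, all $L+1$ entries are distinct; since $\X$ is good, $x_1,\dots,x_j$ are $j$ distinct vertices. I would pick $y_{j+1}, y_{j+2}, \dots, y_{L+1}$ one at a time, each time counting the number of available vertices (those in $[n]$ not already used, and, for $y_{j+1}$, also avoiding $x_{j+1}$). When choosing $y_{j+1}$: we must avoid the $j$ vertices $x_1,\dots,x_j$ already placed, \emph{and} avoid $x_{j+1}$, which is distinct from all of $x_1,\dots,x_j$ because $\X$ is good; hence there are $n - (j+1) = n-j-1$ choices. When choosing $y_{j+i+1}$ for $i = 1, \dots, L-j$ (so the entry at position $j+1+i$), we have already placed $j + i$ distinct vertices $y_1,\dots,y_{j+i}$, so there are $n - (j+i)$ choices; there is no longer any constraint coming from $\X$ since for positions beyond $j+1$ the definition of $j$ as the maximal agreement index imposes nothing. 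Running $i$ from $1$ to $L-j$, the positions are $j+2, j+3, \dots, L+1$, and the factor for position $j+1+i$ is $n - (j+i) = n - (j+1+i) + 1$. Reindexing with $\ell = j+1+i$ ranging over $\{j+2,\dots,L+1\}$, this factor is $n - \ell + 1$, giving $\prod_{\ell = j+2}^{L+1}(n-\ell+1)$.

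Multiplying the single choice for the fixed prefix, the $n-j-1$ choices for $y_{j+1}$, and the product $\prod_{i=j+2}^{L+1}(n-i+1)$ for the remaining coordinates yields exactly
\[
(n-j-1)\prod_{i=j+2}^{L+1}(n-i+1),
\]
as claimed. I would also note in passing that this count is genuinely the number of good $\Y$ with maximal agreement exactly $j$, because conversely every such $\Y$ arises from this construction: its prefix must equal $\X_{1\to j}$, its $(j+1)$-st coordinate differs from $x_{j+1}$, and all coordinates are distinct.

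The main obstacle — though it is a mild one — is bookkeeping the constraint at position $j+1$ correctly: one must be careful that avoiding $x_{j+1}$ removes exactly one \emph{additional} forbidden value beyond the $j$ already-used vertices $x_1,\dots,x_j$, which uses the goodness of $\X$ (so that $x_{j+1}\notin\{x_1,\dots,x_j\}$), and one must check that no analogous extra constraint survives at later positions. Beyond that, the argument is a routine sequential-choice count.
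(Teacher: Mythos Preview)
Your proposal is correct and follows essentially the same approach as the paper: fix the prefix $\Y_{1\to j}=\X_{1\to j}$, count $n-j-1$ choices for $y_{j+1}$ (avoiding the $j$ used vertices plus $x_{j+1}$, which is new since $\X$ is good), and then $n-i+1$ choices for each later coordinate $y_i$. Your write-up is slightly more explicit about why the goodness of $\X$ is needed at the $(j+1)$-st step, but the argument is otherwise identical.
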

\begin{proof}
Let $\Y = (y_1, \ldots, y_{L+1}) \in [n]^{L+1}$ be such that $x_1 = y_1 = 1$,  $j = \max\{i:\X_{1 \to i} = \Y_{1 \to i}\}$, and $\vec{y}$ is good.
%$x_1 = y_1 = 1, x_2 = y_2, \ldots, x_j = y_j$, but $x_{j+1} \neq y_{j+1}$.
Recall that for $\Y$ to be good, each element in $(y_1, \ldots, y_{L+1})$ has to be distinct.

    We will count by choosing each vertex of $\Y$ in order from $y_1$ to $y_{L+1}$.
    There is only one choice for each of $y_1, \ldots, y_j$ since we need $\X_{1 \to j} = \Y_{1 \to j}$.
    
    Consider the number of possible vertices for $y_{j+1}$:
    \begin{description}
        \item[$\bullet$] $j$ vertices already appeared in $\{y_1, \ldots, y_j\}$ so they cannot be reused, otherwise $\Y$ becomes bad.
        \item[$\bullet$] $j = \max\{i:\X_{1 \to i} = \Y_{1 \to i}\}$ means $x_{j+1} \neq y_{j+1}$, so $x_{j+1}$ cannot be reused either.
    \end{description}
    Thus, there are $n-j-1$ choices of vertices for $y_{j+1}$.
    % There are $n-j-1$ choices of vertices for $\Y_{j+1}$, since of the $n$ vertices, $j$ of them have already been used and $j = \max\{i:\X_{1 \to i} = \Y_{1 \to i}\}$ implies that $\X_{j+1} \neq \Y_{j+1}$.
    
    For all $i > j+1$, there are $n-i+1$ choices of vertices for $y_i$ since $i-1$ options have already been used.
    Thus the final count is
$(n-j-1)\prod_{i=j+2}^{L+1} (n-i+1)$, as required.
\end{proof}

\begin{restatable}{mylemma}{Mlarge}
\label{lem:M_large}
If $F_1 \in \mathcal{X}$ is good, then 
$ \; 
% M(g_{\X,b_1})
\sum_{F_2 \in \cX} r(F_1,F_2) \geq \frac{1}{2e} \cdot (L+1) \cdot n^{L+1}\,.
$
\end{restatable}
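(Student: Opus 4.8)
The plan is to lower bound $\sum_{F_2 \in \cX} r(F_1, F_2)$ for a fixed good $F_1 = g_{\X, b_1}$ by restricting the sum to good $F_2$ with the opposite hidden bit and grouping them according to the length of the shared prefix with $\X$. By the definition of $r$ (Definition~\ref{def:r}), $r(F_1, F_2) = 0$ unless $F_2 = g_{\Y, 1-b_1}$ with $\Y$ good, in which case $r(F_1, F_2) = n^j$ where $j = \max\{i : \X_{1 \to i} = \Y_{1 \to i}\}$. So I would write
\[
\sum_{F_2 \in \cX} r(F_1, F_2) = \sum_{j=1}^{L+1} n^j \cdot \Bigl| \bigl\{ \Y \in \{1\}\times[n]^L : \Y \text{ good and } j = \max\{i : \X_{1\to i} = \Y_{1 \to i}\} \bigr\} \Bigr|.
\]
For $j \in [L]$ the count is exactly $(n-j-1)\prod_{i=j+2}^{L+1}(n-i+1)$ by Lemma~\ref{lem:count_Y_denominator}, and for $j = L+1$ it is $1$ (only $\Y = \X$).

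The key step is then to show that each term $n^j \cdot (n-j-1)\prod_{i=j+2}^{L+1}(n-i+1)$ is at least $\frac{1}{e} n^{L+1}$, so that summing over $j \in [L]$ gives at least $\frac{L}{e} n^{L+1}$, and then absorbing a constant to reach $\frac{1}{2e}(L+1)n^{L+1}$ (using $L \geq 2$, or just $L+1 \leq \tfrac{3}{2}L$, or by noting the $j=L+1$ term contributes $n^{L+1}$ as well, which already covers the gap between $L$ and $L+1$). For the per-term bound I would factor
\[
n^j \cdot (n-j-1)\prod_{i=j+2}^{L+1}(n-i+1) = n^{L+1} \cdot \frac{n-j-1}{n} \cdot \prod_{i=j+2}^{L+1}\frac{n-i+1}{n},
\]
and bound each factor $\frac{n-i+1}{n} = 1 - \frac{i-1}{n} \geq \exp(-\frac{i-1}{n} \cdot c)$ for a suitable constant, or more simply use $\prod_{i}(1 - a_i) \geq 1 - \sum_i a_i$ together with the fact that the indices $i$ run up to $L+1 = \lfloor\sqrt n\rfloor$, so $\sum_{i} \frac{i-1}{n} \leq \frac{(L+1)^2}{2n} \leq \frac{1}{2}$; combined with the $\frac{n-j-1}{n}$ factor this stays bounded below by a constant like $\frac{1}{2}$ or $\frac{1}{e}$. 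I would pick whichever inequality gives cleanly $\geq \frac{1}{e}$; the standard estimate $1 - x \geq e^{-2x}$ for $x \in [0,1/2]$ applied to each factor, with $\sum \frac{i-1}{n} \leq \frac{1}{2}$, yields $\prod \geq e^{-1}$, and then $\frac{n-j-1}{n} \geq 1 - \frac{L}{n} \geq 1 - \frac{1}{\sqrt n}$ is essentially $1$, so each term is $\geq \frac{1}{e} n^{L+1}$ as desired.

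The main obstacle is purely the bookkeeping of these product estimates: one must be careful that $j$ can be as large as $L = \lfloor\sqrt n\rfloor - 1$, so there are up to $L$ factors each slightly below $1$, and the cumulative loss is controlled precisely because $L+1 = \lfloor\sqrt n\rfloor$ makes $\sum_{i=j+2}^{L+1}\frac{i-1}{n} = O(L^2/n) = O(1)$. There is no conceptual difficulty — it is the same ``milestones chosen without replacement look almost uniform'' phenomenon already sketched in the proof overview in Section~\ref{sec:application-congestion} — but I would double-check the edge cases $j = L$ (empty product, term equals $n^L(n-L-1) \geq \frac{1}{2}n^{L+1}$) and small $n$ (the hypothesis $n \geq 4$ and $L \geq 2$ are available from the theorem setup). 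Summing the per-term bounds over $j = 1, \ldots, L$ then gives $\sum_{F_2} r(F_1,F_2) \geq \frac{L}{e} n^{L+1} \geq \frac{1}{2e}(L+1)n^{L+1}$, completing the proof.
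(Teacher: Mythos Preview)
Your approach is correct and essentially the same as the paper's: decompose by the shared-prefix length $j$, invoke Lemma~\ref{lem:count_Y_denominator} for the count when $j\le L$, use the single $\Y=\X$ for $j=L+1$, and then show each term is $\Omega(n^{L+1})$ using $L+1=\lfloor\sqrt n\rfloor$. The only difference is in how the product is estimated: the paper crudely lower-bounds every factor (including the $n$'s in $n^j$) by $n-L-1$, obtaining $n^j\cdot(n-j-1)\prod_{i=j+2}^{L+1}(n-i+1)\ge (n-L-1)^{L+1}=n^{L+1}\bigl(1-\tfrac{L+1}{n}\bigr)^{L+1}$, and then uses $\tfrac{L+1}{n}\le\tfrac{1}{L+1}$ together with $(1-\tfrac1k)^k\ge\tfrac1{2e}$ for $k\ge 2$. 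Your route via $\prod(1-a_i)\ge 1-\sum a_i$ or $1-x\ge e^{-2x}$ is equally valid; just be aware that the lingering $\tfrac{n-j-1}{n}\ge 1-\tfrac{1}{\sqrt n}$ factor makes the per-term constant $\tfrac{1}{2e}$ rather than $\tfrac{1}{e}$, so you do need the $j=L+1$ contribution (which you already flagged) to reach $\tfrac{1}{2e}(L+1)n^{L+1}$.
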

% \Mlarge*
\begin{proof}
Since $F_1$ is good, there exists a good sequence  
 $\X \in \{1\} \times [n]^L$ and a bit  $b_1 \in \{0,1\}$ such that $F_1 = g_{\vec{x}, b_1}$. Then we have the following chain of identities:
\begin{align}
\sum_{F_2 \in \cX} r(F_1, F_2) & =
\sum_{F_2 \in \cX} r(g_{\X,b_1}, F_2) \notag \\
&= \sum_{\substack{\Y \in \{1\} \times [n]^L,\; b_2 \in \{0,1\}}} r(g_{\X,b_1}, g_{\Y,b_2}) \explain{By definition of the set $\mathcal{X}$}\\
&= \sum_{\substack{\Y \in \{1\} \times [n]^L}} r(g_{\X,b_1}, g_{\Y,1-b_1}) \explain{Since $r(g_{\X,b_1}, g_{\Y,b_1}) = 0$}\\
&= \sum_{j=1}^{L+1} \sum_{\substack{\Y \in \{1\} \times [n]^L\\ j = \max\{i : \X_{1 \to i} = \Y_{1 \to i}\}}} r(g_{\X,b_1}, g_{\Y,1-b_1}) \,.
 \label{eq:denominator_M_to_sum_nj_0} %\explain{Partitioning by index $j$}\\
 \end{align}
Since $r(g_{\X,b_1}, g_{\Y,1-b_1}) = 0$ if $\vec{y}$ is bad, we have 
 \begin{align}
\sum_{j=1}^{L+1} \sum_{\substack{\Y \in \{1\} \times [n]^L:\\ j = \max\{i : \X_{1 \to i} = \Y_{1 \to i}\}}} r(g_{\X,b_1}, g_{\Y,1-b_1}) &= \sum_{j=1}^{L+1} \sum_{\substack{\Y \in \{1\} \times [n]^L:\\ j = \max\{i : \X_{1 \to i} = \Y_{1 \to i}\}\\ \Y \text{ is good } }} r(g_{\X,b_1}, g_{\Y,1-b_1})  \,. \label{eq:denominator_M_to_sum_nj_1}
\end{align}
Combining \cref{eq:denominator_M_to_sum_nj_0}  and \cref{eq:denominator_M_to_sum_nj_1}, the last sum in  \cref{eq:denominator_M_to_sum_nj_1} can be written as:
\begin{align}
%
%if $\Y$ is bad, then $g_{\Y,1-b_1}) \notin \cX$}\\
\sum_{F_2 \in \cX} r(g_{\X,b_1}, F_2)= \sum_{j=1}^{L+1} \sum_{\substack{\Y \in \{1\} \times [n]^L:\\ j = \max\{i : \X_{1 \to i} = \Y_{1 \to i}\}\\ \Y \text{ is good } }} r(g_{\X,b_1}, g_{\Y,1-b_1})\,.  \label{eq:denominator_M_to_sum_nj_2}
\end{align}

Substituting the definition of $r$ in \cref{eq:denominator_M_to_sum_nj_2}, we get 
\begin{equation}
\label{eq:denominator_M_to_sum_nj}
\sum_{F_2 \in \cX} r(g_{\X,b_1}, F_2)
= \sum_{j=1}^{L+1} \sum_{\substack{\Y \in \{1\} \times [n]^L:\\ j = \max\{i : \X_{1 \to i} = \Y_{1 \to i}\}\\ \Y \text{ is good } }} n^j\,.
\end{equation}

From here, we only need to count the number of such $\Y$.
There is exactly one good sequence $\Y \in \{1\} \times [n]^L$ such that $\X_{1 \to L+1} = \Y_{1 \to L+1}$, namely $\Y = \X$.
Therefore,
\begin{align*}
\sum_{F_2 \in \cX} r(g_{\X,b_1}, F_2)
= &\; \sum_{j=1}^{L+1} \sum_{\substack{\Y \in \{1\} \times [n]^L:\\ \max\{i : \X_{1 \to i} = \Y_{1 \to i}\} = j\\ \Y \text{ is good } }} n^j \explain{By \cref{eq:denominator_M_to_sum_nj}}\\
= &\; n^{L+1} + \sum_{j=1}^{L} \sum_{\substack{\Y \in \{1\} \times [n]^L:\\ \max\{i : \X_{1 \to i} = \Y_{1 \to i}\} = j\\ \Y \text{ is good } }} n^j \explain{ 
{Can pull out $n^{L+1}$ since there is only one possibility}
{of } $\X_{1 \to L+1} = \Y_{1 \to L+1}$ { when } $\X = \Y$}\\
= &\; n^{L+1} + \sum_{j=1}^{L} n^j \left[ (n-j-1) \cdot \prod_{i=j+2}^{L+1} (n-i+1) \right] \explain{By \cref{lem:count_Y_denominator}}\\
\geq &\; n^{L+1} + \sum_{j=1}^{L} (n-L-1)^{L+1} \explain{Since $j \in [L]$} \\
\geq &\; (L+1) \cdot n^{L+1} \cdot \left( 1-\frac{L+1}{n} \right)^{L+1}\\
\geq &\; (L+1) \cdot n^{L+1} \cdot \left( 1-\frac{1}{L+1} \right)^{L+1} \explain{Since $L+1 \leq \sqrt{n}$}\\
\geq &\; \frac{1}{2e} \cdot (L+1) \cdot n^{L+1} \,. \explain{Since $n \geq 4$, so $L+1 \geq 2$}
\end{align*}
Since $F_1 = g_{\vec{x},b_1}$, this is the required inequality, which completes the proof.
\end{proof}

\subsection{Corollaries for expanders}

Given the lower bound based on congestion, we can now state an implication for expanders. For this we rely on the next corollary of a  result from \cite{BFU99}, which shows that  $d$-regular expanders have systems of paths with low vertex congestion when the degree  $d$ is constant.

\begin{mylemma} \label{cor:bfu99}
Let $G = ([n],E)$ be a $d$-regular $\beta$-expander, where $d$ and $\beta$ are constant.
Consider the collection of all $ n^2$ ordered pairs of vertices $\{(a_1, b_1), \ldots, (a_{n^2}, b_{n^2})\}$, including each vertex with itself.
Then there exists a set $\cP$ of $n^2$ paths $\{P_1, \ldots, P_{n^2}\}$, such that each path $P_i$ connects $a_i$ to $b_i$ and the congestion on each vertex of $G$ is  $ \cO \left( {n \ln n}\right)$.
\end{mylemma}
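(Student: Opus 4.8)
The plan is to reduce to \cref{thm:bfu99} and then turn the resulting edge-congestion bound into a vertex-congestion bound using $d$-regularity with $d$ constant. First I would dispose of the $n$ ``loop'' pairs $(u,u)$ by assigning to each the single-vertex path $P^{u,u}=(u)$; such a path contributes exactly $1$ to the vertex congestion at $u$ and $0$ to every edge congestion, so it is harmless. It then remains to route the $n^2-n$ ordered pairs $(u,v)$ with $u\neq v$, and since $G$ is undirected I would route the $\binom{n}{2}$ \emph{unordered} pairs once each (via \cref{thm:bfu99}) and reuse each resulting path for both orientations $(u,v)$ and $(v,u)$; this at most doubles every congestion.

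Next I would instantiate \cref{thm:bfu99} with the collection of all $\binom{n}{2}$ unordered pairs, i.e.\ $K=\binom{n}{2}=\alpha(n)\cdot n/\ln n$, so $\alpha(n)=\binom{n}{2}\ln n/n=\tfrac{(n-1)\ln n}{2}$; in particular $\alpha(n)\ge 1/2$ once $n$ is large enough, so we are in the second case of the theorem, and small $n$ is trivial. Every vertex lies in exactly $n-1$ unordered pairs, so we may take $s=n-1$. The theorem then supplies a path system whose edge congestion is
\[
\cO(s+\alpha+\ln\ln n)=\cO\!\left(n+n\ln n+\ln\ln n\right)=\cO(n\ln n).
\]
Reusing each path for both orientations and adjoining the loop paths yields a full $n^2$-path system $\cP$ still of edge congestion $\cO(n\ln n)$.

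Finally I would convert to vertex congestion. Fix a vertex $v$ and let $\gamma=\cO(n\ln n)$ be the (worst-case) edge congestion of $\cP$. Each interior pass of a path of $\cP$ through $v$ enters and leaves $v$ along two edges incident to $v$, while each path with $v$ as an endpoint uses one edge incident to $v$; summing over all of $\cP$,
\[
2\cdot\bigl(\text{number of interior passes through }v\bigr)\;\le\;\sum_{e\ni v}\bigl(\text{congestion of }e\bigr)\;\le\;d\cdot\gamma,
\]
so there are at most $\tfrac{d}{2}\gamma=\cO(n\ln n)$ interior passes through $v$ because $d$ is constant. The number of paths of $\cP$ with $v$ as an endpoint is at most the number of input pairs containing $v$, which is $\cO(n)$, plus the loop path $P^{v,v}$. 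Adding these, the vertex congestion at $v$ is $\cO(n\ln n)$, which is what we wanted. The main obstacle is the bookkeeping around \cref{thm:bfu99}: checking its hypotheses for our pair collection (distinctness versus multiplicity, the precise value of $\alpha$, and confirming we are in the $\alpha\ge 1/2$ regime), and then carefully accounting for the endpoint terms in the edge-to-vertex congestion conversion.
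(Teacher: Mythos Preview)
Your proposal is correct and follows essentially the same approach as the paper: invoke \cref{thm:bfu99} to get an $\cO(n\ln n)$ edge-congestion bound, then convert to vertex congestion using $d$-regularity with constant $d$. The only differences are cosmetic bookkeeping: the paper applies \cref{thm:bfu99} directly to all $n^2$ ordered pairs (taking $s=2n$ and $\alpha(n)=n\ln n$) rather than separating out loops and routing unordered pairs, and its edge-to-vertex conversion is the cruder bound ``vertex congestion at $v \le d\cdot(\text{max edge congestion})$'' in place of your interior/endpoint accounting.
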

\begin{proof}
% : Explain how we get this and convert between edge/vertex congestion. Maybe comment above should go in the proof.
We invoke \cref{thm:bfu99} with parameters $K = n^2$, $\alpha(n) = n \ln{n}$, and $s = 2n$ to get a set of $K$ paths $\cP = \{P_1,\ldots,P_K\}$.
Since $\alpha(n) \ge 1/2$, we get that the edge congestion $g$ of $\cP$ is at most:
\begin{align*}
    g \in \cO(s + \alpha + \ln \ln n) = \cO(n \ln n)\,.
\end{align*}
To convert from edge congestion to vertex congestion, consider the vertex $v$ with the highest vertex congestion with respect to $\cP$.
The vertex congestion at $v$ is no more than the sum of the edge congestions on each of the edges incident to $v$.
Because $G$ is $d$-regular, this means the vertex congestion at $v$ is at most $d\cdot g$.
Since $d$ is constant, the vertex congestion is $\cO(n \ln n)$.
\end{proof}

Using this corollary, we get the following result for local search on expanders.

\corollaryExpandersLowerBound*
\begin{proof}
By \cref{cor:bfu99}, the graph $G$ has an all-pairs set of paths $\cP$ with vertex congestion $g \in \cO \left( {n \ln n}\right)$.
By \cref{thm:low_congestion_implies_local_search_hard}, the randomized query complexity of local search on $G$ is $\Omega(\sqrt{n}/\ln n)$.
\end{proof}

By alternatively using a prior result from \cite{leighton1999multicommodity}, we get a result in terms of the expansion and maximum degree for any graph.

\begin{mylemma}
\label{cor:chuzhoy2016routing}
Let $G = ([n],E)$ be a $\beta$-expander with maximum vertex degree $\Delta$.
Consider the collection of all $ n^2$ ordered pairs of vertices $\{(a_1, b_1), \ldots, (a_{n^2}, b_{n^2})\}$, including each vertex with itself.
Then there exists a set $\cP$ of $n^2$ paths $\{P_1, \ldots, P_{n^2}\}$, such that each path $P_i$ connects $a_i$ to $b_i$ and the congestion on each vertex of $G$ is $O(n  \ln^2 (n) \cdot \frac{\Delta}{\beta})$.
\end{mylemma}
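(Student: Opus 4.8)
The plan is to mirror the proof of \cref{cor:bfu99}, but to use the multicommodity-flow result of \cref{thm:chuzhoy2016routing} in place of the result of \cite{BFU99}; the former applies to an arbitrary $\beta$-expander of maximum degree $\Delta$ rather than only to constant-degree expanders, which is exactly what we need here.

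First I would decompose the $n^2$ ordered pairs of vertices into a small number of matchings. The self-loop pairs $(u,u)$ are handled by $P^{u,u} = (u)$, contributing only $1$ to the vertex congestion at $u$, which is negligible. For the $n(n-1)$ ordered pairs with distinct endpoints, apply a proper edge-coloring of the complete graph $K_n$: its edges decompose into $t \le n$ matchings $M_1, \ldots, M_t$ (near-perfect matchings). Each unordered pair $\{u,v\}$ lies in exactly one $M_k$, and I will route $(u,v)$ and $(v,u)$ along the same path, one being the reverse of the other, so it suffices to treat each $M_k$ once.

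Next, for each matching $M_k$, invoke \cref{thm:chuzhoy2016routing} with $M = M_k$ to obtain, for every pair $(u,v) \in M_k$, a collection of $\lceil \ln n \rceil$ paths whose union $\cP^{(k)}$ has edge congestion $\cO(\ln^2(n)/\beta)$. From each such collection pick a single path, declare it to be $P^{u,v}$ (and its reverse to be $P^{v,u}$), and if necessary shortcut it to a simple path, which does not increase congestion. The selected paths form a subset of $\cP^{(k)}$, so they still have edge congestion $\cO(\ln^2(n)/\beta)$. Summing over the $t \le n$ matchings, the resulting all-pairs set $\cP$ has edge congestion $\cO(n \ln^2(n)/\beta)$.

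Finally I would convert edge congestion to vertex congestion exactly as in \cref{cor:bfu99}: for any vertex $v$, every path of $\cP$ passing through $v$ uses at most two edges incident to $v$ (just one if $v$ is an endpoint of that path), so the number of paths through $v$ is at most the sum of the edge congestions over the at most $\Delta$ edges incident to $v$, i.e.\ at most $\Delta \cdot \cO(n \ln^2(n)/\beta) = \cO\bigl(n \Delta \ln^2(n)/\beta\bigr)$; the extra $1$ from $P^{v,v}$ does not change the bound up to constants. The only points needing care are (i) that $K_n$ decomposes into $\cO(n)$ matchings (standard, via edge-coloring of the complete graph) and (ii) that \cref{thm:chuzhoy2016routing} is applied to a genuine partial matching at each step, which holds by construction; so I expect no real obstacle beyond this bookkeeping.
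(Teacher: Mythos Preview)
Your proposal is correct and follows essentially the same approach as the paper: decompose the all-pairs demand into $\cO(n)$ matchings via an edge-coloring of $K_n$, apply \cref{thm:chuzhoy2016routing} to each matching, sum the resulting edge congestions, and convert to vertex congestion by multiplying by $\Delta$. If anything, your treatment is slightly more careful than the paper's, since you explicitly handle the self-pairs $(u,u)$ and the two orientations of each unordered pair.
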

\begin{proof}
Decompose the clique on $n$ vertices into $n$ partial matchings $M_1, M_2, \ldots, M_n$.
% Invoke \cref{thm:chuzhoy2016routing} on each partial matching to get a set $\cP_{u,v}$ of $\lceil \ln n \rceil$ paths from $u$ to $v$ for every pair of vertices $u,v \in [n]$.
% davin{
Invoke \cref{thm:chuzhoy2016routing} on each partial matching $M_i$ to get a set $\cP_{u,v}$ of $\lceil \ln n \rceil$ paths from $u$ to $v$ for every pair of vertices $\{u,v\} \in M_i$.
% }
Let $P_i$ be an arbitrary path from $\cP_{a_i, b_i}$ for every $i \in [n^2]$.

By \cref{thm:chuzhoy2016routing}, the edge congestion for each partial matching is at most $O(\ln^2 (n) / \beta)$.
Therefore the edge congestion of the union of the results of invoking \cref{thm:chuzhoy2016routing} is at most $O(n \cdot \ln^2 (n) / \beta)$.
Since $\cP$ contains only one path from each $\cP_{u,v}$, it also has edge congestion at most $O(n \cdot \ln^2 (n) / \beta)$.

To convert from edge congestion to vertex congestion, consider the vertex $v$ with the highest vertex congestion with respect to $\cP$.
The vertex congestion at $v$ is no more than the sum of the edge congestions on each of the edges incident to $v$.
Because the maximum degree is $\Delta$, this means the vertex congestion at $v$ is at most $O(n  \ln^2 (n) \cdot \frac{\Delta}{\beta})$, so 
 the vertex congestion of $O(n  \ln^2 (n) \cdot \frac{\Delta}{\beta})$.
\end{proof}

Using this congestion result, we get a corollary  for expansion and maximum degree.

\corollaryViaExpansionLowerBound*
\begin{proof}
    By \cref{cor:chuzhoy2016routing},  $G$ has an all-pairs set of paths $\cP$ with vertex congestion $g \in \cO \left( n \cdot \ln^2 (n) \cdot \Delta / \beta \right)$.
    By \cref{thm:low_congestion_implies_local_search_hard}, the randomized query complexity of local search on $G$ is $\Omega\left(\frac{\beta \sqrt{n}}{\Delta \log^2{n}}\right)$.
\end{proof}

\subsection{Corollary for Cayley graphs}

We also get a corollary for undirected Cayley graphs thanks to a construction by \cite{dinh2010quantum}, which has vertex congestion of at most $(d+1)\cdot n$. This improves the result of \cite{dinh2010quantum} for randomized algorithms by a $\ln n$ factor.
% Next we present a lemma showing vertex transitive graphs have low congestion following from a construction given by \cite{dinh2010quantum}.

\begin{mylemma} \label{lem:cayley_low_congestion}
    Let $G = (V,E)$ be an undirected Cayley graph with $n$ vertices and diameter $d$.
    Then there exists an all-pairs set of paths $\mathcal{P} = \left\{P^{u,v}\right\}_{u,v \in V}$, such that each path $P^{u,v}$ connects $u$ to $v$ and the congestion on each vertex of $G$ is at most $(d+1)\cdot n$.
\end{mylemma}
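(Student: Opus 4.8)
The plan is to exploit vertex-transitivity of Cayley graphs to build a symmetric all-pairs path system from a single "bouquet" of shortest paths emanating from one fixed vertex, following the idea of \cite{dinh2010quantum}. First I would fix the identity element $e$ of the group $\mathcal{G}$ (viewed as a vertex of $G$) and, for every vertex $w \in V$, choose an arbitrary shortest path $Q^{e,w}$ from $e$ to $w$; since the diameter is $d$, each such $Q^{e,w}$ has at most $d+1$ vertices. Now, because $G$ is a Cayley graph, for each vertex $u \in V$ the left-multiplication map $\lambda_u : v \mapsto u \cdot v$ is a graph automorphism sending $e$ to $u$. Define $P^{u,v} := \lambda_u\bigl(Q^{e,\, u^{-1}v}\bigr)$, which is a path from $u = \lambda_u(e)$ to $v = \lambda_u(u^{-1}v)$. (For the $P^{u,u}=(u)$ convention, take $Q^{e,e}=(e)$.) This gives an all-pairs set of paths $\mathcal{P} = \{P^{u,v}\}_{u,v\in V}$ as required by the definition of congestion.

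Next I would bound the vertex congestion. Fix a target vertex $x \in V$; I want to count the pairs $(u,v)$ such that $x \in P^{u,v}$. Since $P^{u,v} = \lambda_u(Q^{e,u^{-1}v})$, we have $x \in P^{u,v}$ iff $u^{-1}x \in Q^{e,\,u^{-1}v}$. The key observation is that for each fixed source vertex $u$, the number of destinations $v$ for which $x \in P^{u,v}$ equals the number of vertices $w$ (namely $w = u^{-1}v$) such that $u^{-1}x$ lies on the path $Q^{e,w}$; writing $y = u^{-1}x$, this is exactly $|\{w : y \in Q^{e,w}\}|$, the number of paths in the single bouquet $\{Q^{e,w}\}_{w\in V}$ that pass through the vertex $y$. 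Summing over all $n$ choices of $u$, the vertex congestion at $x$ is
\[
\sum_{u \in V} \bigl|\{w : u^{-1}x \in Q^{e,w}\}\bigr| = \sum_{y \in V} \bigl|\{w : y \in Q^{e,w}\}\bigr|,
\]
where in the last step I reindex by $y = u^{-1}x$, which ranges over all of $V$ as $u$ does. The right-hand side is just the total number of (vertex, path) incidences in the bouquet $\{Q^{e,w}\}_{w \in V}$: there are $n$ paths, each containing at most $d+1$ vertices, so the sum is at most $(d+1)\cdot n$. Since $x$ was arbitrary, the vertex congestion of $\mathcal{P}$ is at most $(d+1)\cdot n$.

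I expect the only subtle point to be verifying that $\lambda_u$ is genuinely a graph automorphism of the \emph{undirected} Cayley graph (so that images of paths are paths): this holds because $\{u v, u v'\} \in E$ iff $v' = s v$ or $v = s v'$ for some generator $s \in S$, and left-multiplying both endpoints by $u$ on the left does not affect this condition — it only appears as a common left factor — so edges map to edges bijectively. The rest is the counting reindexing above, which is routine once the symmetry is in place. Thus the main (mild) obstacle is just being careful with the left/right conventions in the group so that the automorphism argument and the change of variables $y = u^{-1}x$ are correctly aligned; there is no real combinatorial difficulty beyond that.
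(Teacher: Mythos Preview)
Your proposal is correct and takes essentially the same approach as the paper: both construct $\mathcal{P}$ by fixing a bouquet of shortest paths from the identity and left-translating it to every source vertex, then bound the congestion via the total vertex--path incidence count $\sum_w |Q^{e,w}| \le (d+1)n$. The only cosmetic difference is that the paper groups paths by $w = u^{-1}v$ and observes each such class $\mathcal{P}_w$ has the \emph{same} congestion $|Q^{e,w}|$ at every vertex (hence the global congestion equals the average, $(d+1)n^2/n$), whereas you fix a vertex $x$ and reindex the sum by $y = u^{-1}x$ to reach the same total directly; these are two phrasings of the same transitivity argument.
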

\begin{proof}
    Let $1 \in V$ be the group identity of $G$.
    For each $v \in V$, fix $P^{1,v}$ to be an arbitrary shortest path from $1$ to $v$.
    % For each pair $u \in [n]$ fix an automorphism $\phi_u$ such that $\phi_u(1) = u$; this is possible since $G$ is vertex-transitive.
    Then for each pair $u,v \in V$ with $u \ne 1$, let $P^{u,v} = u \cdot P^{1,w}$, where $w = u^{-1} \cdot v$.
    By construction, $P^{u,v}$ starts at $u \cdot 1 = u$ and ends at $u \cdot u^{-1} \cdot v = v$.

    % By symmetry davin{symmetry of what?}, the set $\cP$ has equal vertex congestion at every vertex.
    % For each $x \in [n]$, let $\cP_v \subseteq \cP$ be the subset of $n$ paths defined by $\cP_x = \{P^{u,v} : \phi_u^{-1}(v) = x\}$.
    % For each $v \in [n]$, let $c(v)$ be the number of times $v$ is visited collectively by $P^{1,1}, P^{1,2}, \ldots, P^{1,n}$.
    % Then for any $u \in [n]$, the number of times $v$ is visited collectively by $P^{u,1}, P^{u,2}, \ldots, P^{u,n}$ is $c(\phi_u(v))$.
    % Then the vertex congestion of $\cP$ at $v$ is $\sum_{u \in [n]} c(\phi_u(v))$.
    % Since $\{\phi_u(v) : u \in [n]\} = [n]$, the vertex congestion of $\cP$ at $v$ is $\sum_{w \in [n]} c(w)$, which is not a function of $v$.
    For each $w \in V$, let $\cP_w = \{P^{u,v} : u^{-1} \cdot v = w\}$.
    For all $w,x \in V$, exactly $|P^{1,w}|$ paths in $\cP_w$ contain $x$: one has $x$ in the first position, one has $x$ in the second position, etc.
    Therefore $\cP_w$ has the same vertex congestion at every vertex.
    Then since $\cP$ is the disjoint union $\bigcup_{w \in V} \cP_w$, we get that $\cP$ has the same vertex congestion at every vertex.
    % Every path in $\cP_x$ has the same number of vertices: $|P^{1,x}|$.
    % Every vertex $y \in [n]$ is in exactly $|P^{1,x}|$ paths in $\cP_x$: namely
    
    Every path in $\cP$ is a shortest path, and so has length at most $d+1$ vertices.
    Therefore in total there are $(d+1) \cdot n^2$ vertices in $\cP$, so the vertex congestion of $\cP$ is $(d+1) \cdot n$ since every vertex has the same congestion.
    % davin{Actually, it's not super clear why it's not $(d+1) \cdot n^2$... I must be missing something}
    % davin{
    % Fix a vertex labelling and define $\cP_1 = \{P^{1,v} : v \in [n]\}$, where $P^{1,v}$ is an arbitrary shortest path from vertex $1$ to vertex $v \in [n]$.
    % Since every path is a shortest path, there are at most $d+1$ vertices each and so the vertex congestion of $\cP_1$ is at most $(d+1) \cdot n$.
    % }    
    % davin{
    % Consider an arbitrary pair of vertices $u,v \in [n]$.
    % Since $G$ is vertex-transitive, there is some automorphism $\phi_u : [n] \to [n]$ such that $\phi_u(1) = u$.
    % Let $w = \phi_u^{-1}(v)$ be the vertex which maps to $v$ under $\phi_u$.
    % Then, the path $\phi_u(P^{1,w}) = (\phi_u(1), \ldots, \phi_u(w)) = (u, \ldots, v)$ is a path from $u$ to $v$.
    % We now define the all-pairs set of paths: $\cP = \{\phi_u(P^{1,v}) : u \in [n], \phi_u(1) = u, P^{1,v} \in \cP_1\}$, where we fix an arbitrary automorphism $\phi_u$ for each vertex $u \in [n]$... So why is this not $(d+1) \cdot n^2$?
    % }
    % \nick{$\cP$ contains in total $(d+1) \cdot n^2$ vertices. By symmetry, every vertex appears equally many times in $\cP$. Therefore each vertex appears $(d+1) \cdot n$ times in $\cP$.}
\end{proof}

Using this lemma we get the following result for local search on undirected Cayley graphs.

\corollaryCayleyLowerBound*
\begin{proof}
    By \cref{lem:cayley_low_congestion}, the graph $G$ has an all-pairs set of paths $\cP$ that has vertex congestion $g \le (d+1) \cdot n$.
     By \cref{thm:low_congestion_implies_local_search_hard}, the randomized query complexity of local search on $G$ is $\Omega\left(\frac{\sqrt{n}}{d}\right)$.
\end{proof}

\subsection{Corollary for the hypercube}

Finally, we get a corollary for the $\{0,1\}^n$ hypercube.

\corollaryHypercubeLowerBound*
\begin{proof}
We first quantify the vertex congestion $g$ of the Boolean hypercube, and then invoke \cref{thm:low_congestion_implies_local_search_hard} to obtain a lower bound for local search.

        The number of vertices in the $\{0,1\}^n$ hypercube is $N = 2^n$.
 %We consider the following all-pairs system of paths $\cP$ for the hypercube $\{0,1\}^n$.
    The vertices of the graph can be viewed as bit strings of length $n$, with bit strings of Hamming distance $1$ connected by an edge.
    Fix an order on the bits. Then for every pair of vertices $u,v$, the path $P^{u,v}$ is obtained by iterating over the bits, toggling each bit that differs between $u$ and $v$.

    By symmetry of the construction, the system of paths $\cP$ has equal congestion at every vertex.
    Furthermore, the average length of a path in $\cP$ is $1 + n/2$.
    Thus the congestion of $\cP$ is $N \cdot (1+ n/2)$.
    Since every path is a shortest path and the congestion is evenly distributed, this is optimal and the congestion of the graph is $g = N \cdot (1+ n/2)$.

    By \cref{thm:low_congestion_implies_local_search_hard},  the randomized query complexity of local search on the   hypercube is 
        $\Omega\left(\frac{2^{n/2}}{n}\right)$.
\end{proof}

\newpage 

\section{Lower bound for local search via separation number}
\label{sec:appendix-separation-number}

In this section we include the proofs needed to show the lower bound of $\Omega\left((s/\Delta)^{1/4}\right)$.

\medskip 

We start with  the basic definitions. Then we state and prove the lower bound. Afterwards, we show the helper lemmas used in the proof of the theorem.

% In this section, we will apply \cref{thm:our-variant} to obtain a lower bound for local search in terms of separation number.
% To do so, we first setup some definitions and explain how we instantiate \cref{thm:our-variant} in \cref{sec:appendix-separation-number-setup}.
% Then, we state some helper lemmas in \cref{sec:appendix-separation-number-helper} and present the full proof of \cref{thm:high_separation_implies_local_search_is_hard} in \cref{sec:appendix-separation-number-proof}.
% Proofs of the helper lemmas are deferred to \cref{sec:appendix-separation-number-helper-proofs}.

% \paragraph{High-level overview}
% We follow the exact strategy outlined in \cref{sec:appendix-congestion} except that for a different definition of staircases, along with suitably adjusted modifications to the valid function and definition of $r$: instead of defining staircases on sequence of vertices, we will define them over a sequence of \emph{inter-cluster indices} (see \cref{def:path-arrangement-parameter}).

\subsection{Basic definitions for separation}
\label{sec:appendix-separation-number-setup}

\paragraph{Notation.} 
Recall we have a graph $G = ([n],E)$ with separation number $s$ and maximum degree $\Delta$.
This means that every subset $H \subseteq [n]$ can be split in two parts, $S$ and $H \setminus S$, each of size at least $|H|/4$, such that at most $s$ vertices in $H \setminus S$ are adjacent to $S$.

Let $c \in \mathbb{N}$ be a parameter that we set later.

\medskip 
Given a sequence of $k$ indices $\X = (x_1, \ldots, x_k)$, we write $\X_{1 \to j}=(x_1,\ldots,x_j)$ to refer to a prefix of the sequence, for an index $j \in [k]$.
% We begin by introducing

\smallskip  

Given a walk $Q = (v_1, \ldots, v_k)$ in $G$, let $Q_i$ refer to the $i$-th vertex in the walk  (i.e. $Q_i = v_i$). For each vertex $u \in [n]$, let $\multiplicity(Q,u)$ be the number of times that vertex $u$ appears in $Q$. 

Next we introduce
the notion of inter-cluster paths  from  \cite{santha2004quantum}.

\begin{mydefinition}[Path Arrangement and Inter/Intra-Cluster Paths]
\label{def:path-arrangement-parameter}
%Let $G = (V,E)$ be a graph.
A \emph{path arrangement} with parameter $m$ for graph $G = ([n], E)$ is a set of connected, disjoint subsets $N_1, \ldots, N_m \subseteq V$ with the following property:
for all $i,j \in [m]$, there exist $m$ paths $P_1(i,j), \ldots, P_m(i,j)$ such that
\begin{itemize}
    \item For each $k \in [m]$, the first vertex of $P_{k}(i,j)$ is in $N_i$, the last vertex of $P_{k}(i,j)$ is in $N_j$, and every other vertex of $P_{k}(i,j)$ is outside $N_i$ and $N_j$.
    \item For every pair $i,j \in [m]$, vertices in $V \setminus (N_i \cup N_j)$ are visited at most once collectively by the paths $P_1(i,j), \ldots, P_m(i,j)$.
\end{itemize}
\end{mydefinition}

% \begin{mydefinition}[Path Arrangement Number] The path arrangement parameter $m(G)$ is the maximum $m$ such that there is path arrangement for $G$ with parameter $m$.
% \end{mydefinition}

\medskip 

\begin{observation} 
\cref{def:path-arrangement-parameter} differs slightly from the original construction of \cite{santha2004quantum} in that \cref{def:path-arrangement-parameter} uses inter-cluster paths of the form $P_k(i,i)$; i.e. from a cluster to itself.
However, $P_k(i,i)$ may always be chosen as a degenerate single-vertex path in $N_i$, so this deviation does not affect the value of $m$ for any graph.
\end{observation}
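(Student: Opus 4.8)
The plan is to prove the equivalence of the two notions of path arrangement --- the original one of \cite{santha2004quantum}, in which inter-cluster paths $P_k(i,j)$ are only required for distinct indices $i \ne j$, and the one in \cref{def:path-arrangement-parameter}, which additionally demands the ``self'' paths $P_k(i,i)$ --- by showing that a graph $G$ admits a path arrangement with parameter $m$ under one definition if and only if it admits one with the same $m$ under the other. Since the value of $m$ invoked elsewhere in the separation-number argument is the largest such parameter, this equivalence is precisely the content of the Observation.

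One direction is immediate: any path arrangement satisfying \cref{def:path-arrangement-parameter} remains valid in the sense of \cite{santha2004quantum} after discarding the paths $P_k(i,i)$, keeping the same clusters $N_1,\dots,N_m$ and the same parameter $m$. For the converse --- the substantive direction --- I would start from a path arrangement $\{N_1,\dots,N_m\}$ with parameter $m$ in the original sense, together with its inter-cluster paths $P_k(i,j)$ for all $i \ne j$. Each cluster $N_i$ is non-empty, since every path $P_k(i,j)$ has its first vertex in $N_i$; fix any $v_i \in N_i$ and, for every $k \in [m]$, define the degenerate single-vertex path $P_k(i,i) = (v_i)$.

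It then remains to check that $\{N_1,\dots,N_m\}$ together with all the paths $P_k(i,j)$, $i,j \in [m]$, satisfies \cref{def:path-arrangement-parameter}. For pairs $(i,i)$: the first and last vertex of $P_k(i,i)$ is $v_i \in N_i$, and there is no ``other'' vertex on the path, so the first bullet holds vacuously; and no vertex of $V \setminus N_i = V \setminus (N_i \cup N_i)$ lies on any $P_k(i,i)$, so the second bullet holds vacuously. For pairs $(i,j)$ with $i \ne j$ the paths are untouched and the required conditions hold by hypothesis. Crucially, the ``visited at most once'' requirement is imposed separately within each fixed pair and never couples different pairs, so inserting the self-paths cannot break any condition. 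Hence the maximal parameter $m$ is the same under both definitions. There is essentially no obstacle here; the only point needing care is the non-emptiness of the clusters, which is forced by the requirement that every inter-cluster path begins and ends inside a cluster.
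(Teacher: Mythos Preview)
Your proposal is correct and matches the paper's intent: the paper does not give a separate proof for this Observation, instead embedding the key idea---that each $P_k(i,i)$ may be taken to be a degenerate single-vertex path in $N_i$---directly in the statement. Your expansion simply makes explicit the two directions and the vacuous verification of the bullets for self-paths, which is exactly what the paper leaves implicit. One minor remark: your non-emptiness argument for $N_i$ via the endpoints of some $P_k(i,j)$ with $j\ne i$ tacitly assumes $m\ge 2$; for $m=1$ you should instead appeal to the hypothesis that each $N_i$ is a connected (hence non-empty) subset.
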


\newpage 

\begin{example}
Let $H$ be the  grid of \cref{fig:grid-cluster-diagram}. There exists a path arrangement for $H$ with parameter $\sqrt{n}$ since one can use the columns as the $N_i$ and the shortest path in each row as the inter-cluster paths.

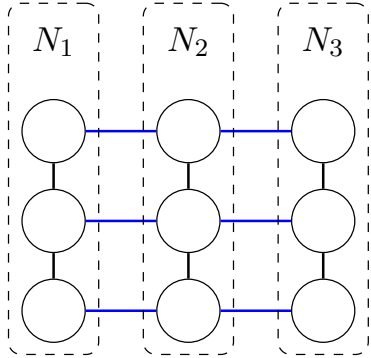
\begin{figure}[h!]
\centering
\resizebox{0.3\linewidth}{!}{%
\begin{tikzpicture}
%
% N1
%
\node[draw, circle, minimum size=20pt, inner sep=2pt] at (-1.5,1) (n11) {};
\node[draw, circle, minimum size=20pt, inner sep=2pt] at (-1.5,0) (n12) {};
\node[draw, circle, minimum size=20pt, inner sep=2pt] at (-1.5,-1) (n13) {};
\node[] at (-1.5, 2) (n1-label) {$N_1$};

%
% N2
%
\node[draw, circle, minimum size=20pt, inner sep=2pt] at (0,1) (n21) {};
\node[draw, circle, minimum size=20pt, inner sep=2pt] at (0,0) (n22) {};
\node[draw, circle, minimum size=20pt, inner sep=2pt] at (0,-1) (n23) {};
\node[] at (0, 2) (n2-label) {$N_2$};

%
% N3
%
\node[draw, circle, minimum size=20pt, inner sep=2pt] at (1.5,1) (n31) {};
\node[draw, circle, minimum size=20pt, inner sep=2pt] at (1.5,0) (n32) {};
\node[draw, circle, minimum size=20pt, inner sep=2pt] at (1.5,-1) (n33) {};
\node[] at (1.5, 2) (n3-label) {$N_3$};

% Edges
\draw[thick] (n11) -- (n12);
\draw[thick] (n12) -- (n13);
\draw[thick] (n21) -- (n22);
\draw[thick] (n22) -- (n23);
\draw[thick] (n31) -- (n32);
\draw[thick] (n32) -- (n33);
\draw[thick, blue] (n11) -- (n21);
\draw[thick, blue] (n21) -- (n31);
\draw[thick, blue] (n12) -- (n22);
\draw[thick, blue] (n22) -- (n32);
\draw[thick, blue] (n13) -- (n23);
\draw[thick, blue] (n23) -- (n33);

\node[draw, fit=(n11)(n12)(n13)(n1-label), rounded corners, dashed] {};
\node[draw, fit=(n21)(n22)(n23)(n2-label), rounded corners, dashed] {};
\node[draw, fit=(n31)(n32)(n33)(n3-label), rounded corners, dashed] {};
\end{tikzpicture}
}
\caption{
Let $H$ be  the $\sqrt{n} \times \sqrt{n}$ graph with clusters $N_1, \ldots, N_{\sqrt{n}}$, where $n = 9$.
The black edges represent intra-cluster paths while the blue edges represent inter-cluster paths.
Then, there exists a path arrangement for this graph with parameter $\sqrt{n}=3$.
}
\label{fig:grid-cluster-diagram}
\end{figure}
\end{example}

Next we present without proof a lemma from \cite{santha2004quantum}, which relates the path arrangement parameter to the separation number and maximum degree of the graph.

\begin{mylemma}[\cite{santha2004quantum}, Theorem 6]
\label{lem:m_lb}
If an undirected graph $G$ has maximum degree $\Delta$ and 
separation number $s$, then there exist a path arrangement on $G$ with parameter at least $\max\{\lfloor \sqrt{s/2\Delta} \rfloor, 1\}$.
% $m(G) \geq \max\{\lfloor \sqrt{s/2\Delta} \rfloor, 1\}$.
\end{mylemma}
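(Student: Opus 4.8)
The plan is to build the $m = \max\{\lfloor\sqrt{s/2\Delta}\rfloor,1\}$ clusters together with all of their connecting paths out of a single large, highly-connected piece of $G$ that the separation number hands us. If $s < 2\Delta$ then $m=1$ and there is nothing to do: take $N_1 = \{v\}$ for any vertex $v$ and let $P_1(1,1)$ be the single-vertex path $(v)$, which is legal by the observation following \cref{def:path-arrangement-parameter}. So assume $s \ge 2\Delta$; then $m \ge 1$ and $2m^2\Delta \le s$.

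First I would work with the set $H \subseteq V$ witnessing $s = \min_{A}|\delta(A)|$ over balanced $A \subseteq H$, and extract from it two facts. (i) Since $\delta(A)$ is contained in the neighbourhood of $A$ and $G$ has maximum degree $\Delta$, choosing any balanced $A$ with $|A| \le 3|H|/4$ gives $s \le |\delta(A)| \le \Delta|A| \le \tfrac34\Delta|H|$, hence $|H| \ge \tfrac{4s}{3\Delta} \ge m^2$. This is where the $\Delta$ enters. (ii) $H$ is "well-linked" up to constants: for any bipartition $H = B_1 \sqcup B_2$ into roughly equal parts, $G$ contains $\Omega(\min\{|B_1|,|B_2|,s\})$ vertex-disjoint $B_1$–$B_2$ paths. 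This is the Menger dual of the separation guarantee — if a vertex set $C$ of size $t$ separated $B_1$ from $B_2$, then the union $A$ of the components of $G-C$ meeting $B_1$ satisfies $\delta(A) \subseteq C$, and for a suitable choice of the bipartition $A$ is balanced with respect to $H$, forcing $t \ge \min\{s,|H|/c\}$ for an absolute constant $c$. Standard well-linked-set machinery (iteratively cutting off unbalanced low-boundary pieces) then upgrades this to a subset $W \subseteq H$ with $|W| = \Omega(s/\Delta)$ that is well-linked with a constant parameter for every bipartition.

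Next I would carve $W$ into $m$ groups $G_1,\dots,G_m$, each of size at least $|W|/m \ge m$ (possible since $|W| \ge \Omega(s/\Delta) \ge c' m^2$ for the appropriate constant), promote each $G_i$ to a connected cluster $N_i$ by attaching a Steiner connector inside $W$ and trimming so that the $N_i$ remain pairwise disjoint, and finally route, for each ordered pair $(i,j)$, the required $m$ internally-disjoint paths between $N_i$ and $N_j$ avoiding both of them internally. The routing is where the well-linkedness of $W$ is spent: a fixed pair needs its own family of $m$ vertex-disjoint paths, and well-linkedness guarantees $\Omega(|N_i|) = \Omega(m)$ disjoint paths between any two groups; absorbing the $\Omega(\cdot)$ constant into the choice of $m$ is exactly why the bound is $\lfloor\sqrt{s/2\Delta}\rfloor$ rather than $\lfloor\sqrt{s/\Delta}\rfloor$. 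The diagonal paths $P_k(i,i)$ are taken to be degenerate single-vertex paths inside $N_i$.

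The main obstacle is this middle step: turning the vertex-boundary guarantee of the separation number into genuinely, simultaneously vertex-disjoint routings while keeping the clusters connected and pairwise disjoint. Concretely one must (a) pin down the exact well-linkedness parameter of $W$ and track how it degrades through the recursive "cut off bad pieces" extraction, and (b) handle the fact that the $\binom{m}{2}$ routings and the $m$ Steiner connectors all live inside the same vertex set $W$, so the bookkeeping of which vertices are used where must be done with care — though the definition of a path arrangement only forbids a \emph{fixed} pair's paths from reusing vertices and from entering the two endpoint clusters, which provides enough slack. Matching the constant to exactly $\sqrt{s/2\Delta}$ is then a matter of calibrating the group sizes against the well-linkedness threshold. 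Since this is precisely the construction of \cite{santha2004quantum}, I would ultimately defer to their argument for the remaining details.
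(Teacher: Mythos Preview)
The paper does not prove this lemma; it is explicitly presented ``without proof'' as a citation of \cite{santha2004quantum}, Theorem~6. So there is no proof in the paper to compare against, and your proposal --- which itself concludes by deferring to \cite{santha2004quantum} --- is in the same spirit.

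That said, a few remarks on your sketch. The size bound on $H$ is right in flavour but you took the loose end of the balance range: picking $|A| = \lceil |H|/4\rceil$ gives $s \le \Delta\lceil |H|/4\rceil$, i.e.\ $|H| \gtrsim 4s/\Delta$, which is tighter than the $4s/(3\Delta)$ you wrote (you used $|A| \le 3|H|/4$ as an upper bound on $|A|$ when you should use the lower bound). The Menger step is the right idea, but the passage ``for a suitable choice of the bipartition $A$ is balanced with respect to $H$'' hides the real work: a small separator $C$ need not produce a balanced $A$ directly, and converting the separation-number guarantee (which is about \emph{all} balanced subsets) into genuine well-linkedness of a large subset requires the iterative extraction you allude to. Finally, the definition of a path arrangement asks only that the $m$ paths for a \emph{fixed} pair $(i,j)$ be internally disjoint, so you do not need to route all $\binom{m}{2}$ families disjointly at once --- this is the slack you correctly identify, and it is what makes the parameter $\sqrt{s/\Delta}$ rather than something smaller. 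Your outline matches the structure of the original construction, but as a self-contained proof it would need the well-linkedness extraction and the cluster-carving made precise.
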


\paragraph{Path arrangement number $m$ of the graph $G$.} Let $m$ be the maximum number such that there exists a path arrangement on our graph $G$ with parameter $m$.
Let $N_1, \ldots, N_m$ be the corresponding clusters. {An example of a set of inter-cluster paths between $N_i$ and $N_j$ from \cite{santha2004quantum} is shown in \cref{fig:inter-cluster-paths}.

\begin{figure}[h!]
\centering
\includegraphics[scale=1.2]{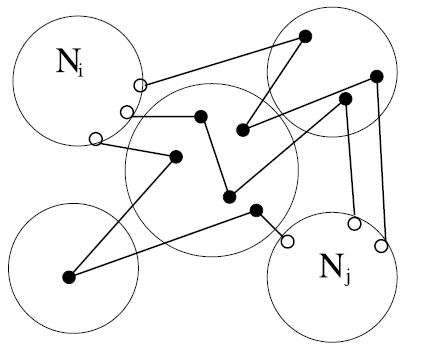}
\caption{A set of inter-cluster paths between $N_i$ and $N_j$ from \cite{santha2004quantum}.}
\label{fig:inter-cluster-paths}
\end{figure}

% We are now ready to formally setup the instantiation to invoke \cref{thm:our-variant}.

% Suppose we are given an input graph $G$ with vertices $V = [n]$.
% Given a sequence of $k$ indices $\X = (x_1, \ldots, x_k)$, we write $\X_{1 \to j} = (x_1, \ldots, x_j)$ to mean the prefix of the sequence, for any index $j \in [k]$.

% \paragraph{Fixing $c$ and setting up the staircase.}
% Let $m$ be its path arrangement parameter and $N_1, \ldots, N_m$ be the clusters with respect to \cref{def:path-arrangement-parameter}.
% We define $c = \lfloor \sqrt{m}/2 -1/2 \rfloor$ as the parameter dictating the length of our walks.
% The intuition for this choice of $c$ is as follows:
% If $c$ were chosen to be too large, then it would be too easy for queries to find the walk without following it;
% If $c$ were chosen to be too small, then it would be too cheap for the algorithm to follow the walk.

%\medskip

%simina{The next definition is very bloated. Can we bring stuff like $E_i(u,v), first/last$ outside the definition, similarly to how we fixed the set of paths $\mathcal{P}$ for congestion?}
%nick{Done; please check}
% \paragraph{Staircase for separation}
For all $i \in [m]$ and $u,v \in N_i$, let $E_i(u,v)$ be an arbitrary shortest path from $u$ to $v$ within $N_i$.

Define $first(P_i(j,k))$ and $last(P_i(j,k))$ as the first and last vertices in $P_i(j,k)$, respectively.
We have $first(P_i(j,k)) \in N_j$ and $last(P_i(j,k)) \in N_k$,  by definition of $P_i(j,k)$.
\medskip 

We will map a sequence of indices to walks starting from a fixed starting vertex $v_{start} \in N_1$.

% davin{If we are going to add ``separation version'' in all the stuff in this appendix, then shouldn't we also add ``congestion version'' in all the stuff in the other appendix?} simina{No. They should read congestion first and adding this "version" mentioning  mention doesn't help with readability so we are using it sparingly. We will make an outline}

\begin{mydefinition}[Staircase; separation version]
For $\X = (x_1 = 1, x_2, \ldots, x_{2c+1}) \in \{1\} \times [m]^{2c}$, and $i \in [2c]$, we define
\begin{equation}
    S_{\X,i} = \begin{cases}
    E_1(v_{start}, first(P_{x_2}(1,x_3)) & \text{if}\; i=1\\
    P_{x_i}(x_{i-1}, x_{i+1}) & \text{if}\; i \; \text{is even}\\
    E_{x_i}(last(Q_{x_{i-1}}(x_{i-2},x_{i})), first(P_{x_{i+1}}(x_{i},x_{i+2}))) & \text{if $i > 1$ is odd}
    \end{cases}
\end{equation}
Then for 
% all $k \in \mathbbm{N}$ and
all sequences of indices $\X = (1, x_2, \ldots, x_{2k+1}) \in \{1\} \times [m]^{2k}$, the full walk $S_{\X}$ induced by $\X$ is the concatenation $S_{\X} = S_{\X,1} \circ S_{\X,2} \circ \ldots \circ S_{\X,2k}$.

% davin{We should emphasize that $S_{\X}$ starts from $v_{start} \in N_1$ and propagate this in the rest of the appendix. Currently, it seems to suggest that the first vertex of $S_{\X}$ is `1', which is not necessarily true.}
\end{mydefinition}

That is, we use the inter-cluster paths dictated by the even indices of $\X$ to travel between clusters dictated by the odd indices of $\X$, stitching the inter-cluster paths together using shortest paths within the clusters.
So $S_\X$ starts at $v_{start} \in N_1$, travels via $S_{\X,1}$ and $S_{\X,2}$ to a vertex in $N_{x_3}$, and so on.
Observe $S_{\X}$ is a well-defined walk since $S_{\X,i}$ connects the last vertex of $S_{\X,i-1}$ and first vertex of $S_{\X,i+1}$ for any odd $i > 1$.
An illustrated example is given next.

\begin{example}
Consider the  graph in Figure \ref{fig:separation-staircase-example}.
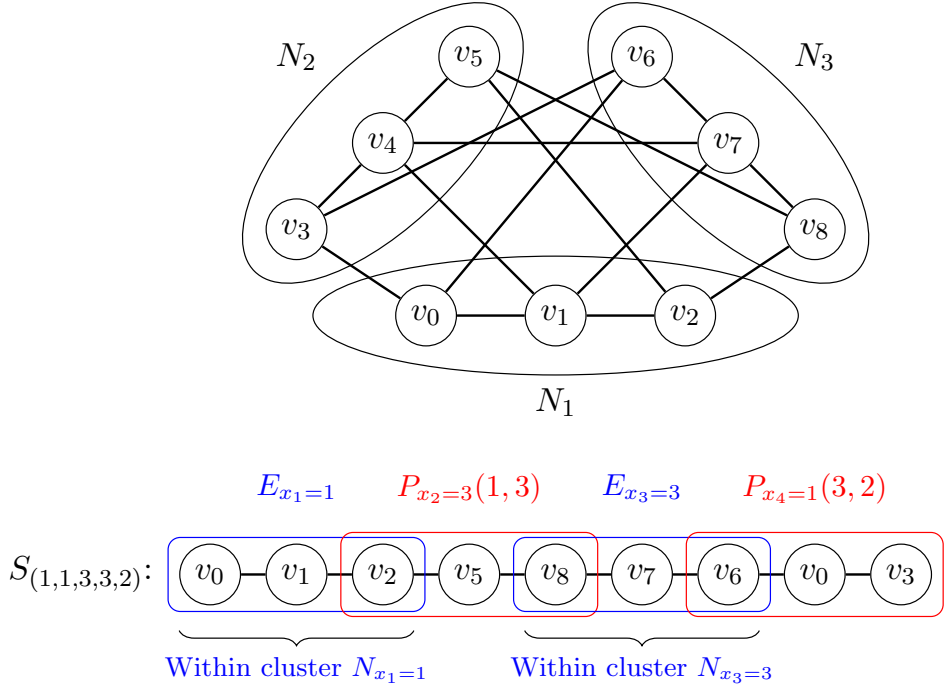
\begin{figure}[h!]
\centering
\resizebox{0.9\linewidth}{!}{%
\begin{tikzpicture}
%
% N1
%
\node[draw, circle, minimum size=20pt, inner sep=2pt] at (-0.5,0) (v0) {$v_0$};
\node[draw, circle, minimum size=20pt, inner sep=2pt] at (1,0) (v1) {$v_1$};
\node[draw, circle, minimum size=20pt, inner sep=2pt] at (2.5,0) (v2) {$v_2$};

%
% N2
%
\node[draw, circle, minimum size=20pt, inner sep=2pt] at (-2,1) (v3) {$v_3$};
\node[draw, circle, minimum size=20pt, inner sep=2pt] at (-1,2) (v4) {$v_4$};
\node[draw, circle, minimum size=20pt, inner sep=2pt] at (0,3) (v5) {$v_5$};

%
% N3
%
\node[draw, circle, minimum size=20pt, inner sep=2pt] at (2,3) (v6) {$v_6$};
\node[draw, circle, minimum size=20pt, inner sep=2pt] at (3,2) (v7) {$v_7$};
\node[draw, circle, minimum size=20pt, inner sep=2pt] at (4,1) (v8) {$v_8$};

%
% Partitioning
%
\node[draw, ellipse, fit=(v0)(v1)(v2)] (fitn1) {};
\node[draw, ellipse, fit=(v3)(v4)(v5), rotate=45, inner ysep=-20pt] (fitn2) {};
\node[draw, ellipse, fit=(v6)(v7)(v8), rotate=135, inner ysep=-20pt] (fitn3) {};
\node[] at ($(v1) + (0,-1)$) {$N_1$};
\node[] at ($(v4) + (-1,1)$) {$N_2$};
\node[] at ($(v7) + (1,1)$) {$N_3$};

%
% Paths
%
\draw[thick] (v0) -- (v3) -- (v6) -- (v0);
\draw[thick] (v1) -- (v4) -- (v7) -- (v1);
\draw[thick] (v2) -- (v5) -- (v8) -- (v2);
\draw[thick] (v0) -- (v1) -- (v2);
\draw[thick] (v3) -- (v4) -- (v5);
\draw[thick] (v6) -- (v7) -- (v8);

%
% S
%
\node[minimum size=20pt, inner sep=2pt] at (-4.5,-3) (S) {$S_{(1,1,3,3,2)}$:};
\node[minimum size=20pt, inner sep=2pt] at (6.5,-3) (S) {\hphantom{$S_{(1,1,3,3,2)}$:}}; % to centralize the picture

\node[draw, circle, minimum size=20pt, inner sep=2pt] at (-3,-3) (S1) {$v_0$};
\node[draw, circle, minimum size=20pt, inner sep=2pt] at (-2,-3) (S2) {$v_1$};
\node[draw, circle, minimum size=20pt, inner sep=2pt] at (-1,-3) (S3) {$v_2$};
\node[draw, circle, minimum size=20pt, inner sep=2pt] at (0,-3) (S4) {$v_5$};
\node[draw, circle, minimum size=20pt, inner sep=2pt] at (1,-3) (S5) {$v_8$}; % Centralize this
\node[draw, circle, minimum size=20pt, inner sep=2pt] at (2,-3) (S6) {$v_7$};
\node[draw, circle, minimum size=20pt, inner sep=2pt] at (3,-3) (S7) {$v_6$};
\node[draw, circle, minimum size=20pt, inner sep=2pt] at (4,-3) (S8) {$v_0$};
\node[draw, circle, minimum size=20pt, inner sep=2pt] at (5,-3) (S9) {$v_3$};
\draw[thick] (S1) -- (S2) -- (S3) -- (S4) -- (S5) -- (S6) -- (S7) -- (S8) -- (S9);
\node[draw, fit=(S1)(S2)(S3), rounded corners, blue, inner ysep=2pt] (q1) {};
\node[draw, fit=(S3)(S4)(S5), rounded corners, red] (e2) {};
\node[draw, fit=(S5)(S6)(S7), rounded corners, blue, inner ysep=2pt] (q3) {};
\node[draw, fit=(S7)(S8)(S9), rounded corners, red] (e4) {};
\node[blue] at ($(q1) + (0,1)$) {\small $E_{x_1=1}$};
\node[red] at ($(e2) + (0,1)$) {\small $P_{x_2=3}(1,3)$};
\node[blue] at ($(q3) + (0,1)$) {\small $E_{x_3=3}$};
\node[red] at ($(e4) + (0,1)$) {\small $P_{x_4=1}(3,2)$};
\draw [decorate, decoration={mirror, brace, amplitude=5pt}] ($(S1.west) + (0,-0.65)$) -- ($(S3.east) + (0,-0.65)$) node[blue, midway, below, yshift=-5pt] {\footnotesize Within cluster $N_{x_1=1}$};
\draw [decorate, decoration={mirror, brace, amplitude=5pt}] ($(S5.west) + (0,-0.65)$) -- ($(S7.east) + (0,-0.65)$) node[blue, midway, below, yshift=-5pt] {\footnotesize Within cluster $N_{x_3=3}$};
\end{tikzpicture}
}
\caption{
Graph on nine nodes, with connected cluster partitions $N_1 = \{v_0, v_1, v_2\}$, \\ $N_2 = \{v_3, v_4, v_5\}$, and $N_3 = \{v_6, v_7, v_8\}$.
% Suppose $m = 3$ where we let the $i^{th}$ path for each pair of clusters involve vertices $\{v_{i-1}, v_{i-1+m}, v_{i-1+2m}\}$ for $i \in \{1,2,3\}$.
% For example, $P_{2}(2,3) = (v_4, v_1, v_7)$ is the $2^{nd}$ path from $N_2$ to $N_3$ and $P_{2}(1,3) = (v_1, v_4, v_7)$ is the $2^{nd}$ path from $N_1$ to $N_3$.
% Fix $v_{start} = v_0$ and consider $\X = (x_1, x_2, x_3, x_4, x_5) = (1,1,3,3,2)$.
% Observe that $P_3(1,3) = (v_2, v_5, v_8)$ and $P_1(3,2) = (v_6, v_0, v_3)$.
% So, $S_{\X}
% = S_{\X,1} \circ S_{\X,2} \circ S_{\X,3} \circ S_{\X,4}
% = E_{x_1=1}(v_{start} = v_0, v_2) \circ P_{x_2=3}(1,3) \circ E_{x_3=3}(v_8, v_6) \circ P_{x_4=1}(3,2)
% = (v_0, v_1, v_2) \circ (v_2, v_5, v_8) \circ (v_8, v_7, v_6) \circ (v_6, v_0, v_3)
% = (v_0, v_1, v_2, v_5, v_8, v_7, v_6, v_0, v_3)$.
}
\label{fig:separation-staircase-example}
\end{figure}
Suppose $m = 3$. For each $i \in \{1,2,3\}$, let   the $i^{th}$ path for each pair of clusters consist of  vertices $\{v_{i-1}$, $v_{i-1+m}, v_{i-1+2m}\}$.

For example, $P_{2}(2,3) = (v_4, v_1, v_7)$ is the second path from $N_2$ to $N_3$ and $P_{2}(1,3) = (v_1, v_4, v_7)$ is the second path from $N_1$ to $N_3$.
\smallskip  

Fix $v_{start} = v_0$ and let $\X = (x_1, x_2, x_3, x_4, x_5) = (1,1,3,3,2)$.
We have  $P_3(1,3) = (v_2, v_5, v_8)$ and $P_1(3,2) = (v_6, v_0, v_3)$.
Thus  
\begin{align} 
S_{\X} & = S_{\X,1} \circ S_{\X,2} \circ S_{\X,3} \circ S_{\X,4} \notag \\
& = E_{x_1=1}( v_0, v_2) \circ P_{x_2=3}(1,3) \circ E_{x_3=3}(v_8, v_6) \circ P_{x_4=1}(3,2) \notag \\
& = (v_0, v_1, v_2) \circ (v_2, v_5, v_8) \circ (v_8, v_7, v_6) \circ (v_6, v_0, v_3) \notag  \\ 
& = (v_0, v_1, v_2, v_5, v_8, v_7, v_6, v_0, v_3)\,. \notag 
\end{align}
\end{example}

\begin{mydefinition}[Tail of a staircase; separation version]
\label{def:tail_separation}
% For every sequence of integers $\X \in \{1\} \times [m]^{2c}$ and odd index $j \in [2c+1]$, let $Tail(j, S_{\X})$ be  the portion of $S_{\X}$ after $S_{\X_{1 \to j}}$.
Let $S_{\vec{x}} = S_{\vec{x},1} \circ \ldots \circ S_{\vec{x}, 2k}$ be a staircase induced by some sequence $\X  = (1, x_2, \ldots, x_{2k+1}) \in \{1\} \times [m]^{2k}$. %The first element MUST be 1, otherwise S_\X isn't defined
For each odd $j \in [2k]$, let $T = S_{\vec{x},j} \circ \ldots \circ S_{\vec{x}, 2k}$.
Then $Tail(j, S_{\X})$ is obtained from $T$ by removing the first occurrence of the first vertex in $T$ (and only the first occurrence).
Let $Tail(2k+1, S_{\X})$ be the empty sequence.

Observe $Tail(j, S_{\X})$ is only defined for odd $j$ since staircase $S_{\X_{1 \to j}}$ is only defined for odd $j$.
\end{mydefinition}

% Note that $Tail(j, S_{\X})$ is only defined for odd $j$ because the staircase $S_{\X_{1 \to j}}$ for this construction is only defined when $j$ is odd.

Next, we define the set of functions $\cX$ that will be used when invoking \cref{thm:our-variant}. 

\begin{mydefinition}[The functions $f_{\vec{x}}$ and $g_{\vec{x},b}$; the  set $\mathcal{X}$; separation version]
\label{def:mathcal_X_separation}
Given an input graph $G$, let $m$ be its path arrangement parameter and $N_1, \ldots, N_m$ be the clusters with respect to \cref{def:path-arrangement-parameter}.
For each sequence of indices $\X \in \{1\} \times [m]^{2c}$, define $f_{\X} : [n] \to  \{-n^2, -n^2+1, \ldots, n\}$ such that for each $v \in [n]$:
% davin{the congestion version still doesn't involve $L$, but I don't dare to touch that appendix...}
\begin{itemize}
    \item If $v \notin S_{\X}$, then $f_{\X}(v) = dist(v,v_{start})$, where $S_{\X}$ is the staircase induced by $\X$ and the cluster construction. 
    % davin{This should be $dist(v, v_{start})$}
    \item If $v \in S_{\X}$, then $f_{\X}(v) = -i$, where $i$ is the maximum index such that $v$ is the $i$-th vertex in $S_{\X}$.
    % davin{Is there a typo? Why should this result in $f_{\X} : [n] \to  \{-n^2, -n^2+1, \ldots, n\}$ above? Also, shouldn't $c$ appear in the range of values since it has not been defined/constrained? (e.g.\ what if $c \gg n$?}
\end{itemize}
% Moreover, for each $\X \in \{1\} \times [m]^{2c}$ and bit $b \in \{0,1\}$, define $g_{\X,b}$ for all $v \in [n]$:
Also, for each $\X \in \{1\} \times [m]^{2c}$ and $b \in \{0,1\}$, let $g_{\X,b} : [n] \to \{-n^2, \ldots, 0,\ldots,n\} \times \{-1,0,1\}$ be such that, for all $v \in [n]$:
\[
g_{\X,b}(v) =
\begin{cases}
    (f_{\X}(v), b) & \text{if $v$ is the last vertex in $S_{\X}$}\\
    (f_{\X}(v), -1) & \text{if $v$ is not the last vertex in $S_{\X}$}\,.
\end{cases}
\]
Let $\cX = \left\{g_{\X,b} \mid \X \in \{1\} \times [m]^{2c} \text{ and } b \in \{0,1\}\right\}$.
\end{mydefinition}

\begin{mydefinition}[The map $\mathcal{H}$; separation version] \label{def:map_H_separation}
Given an input graph $G$, let $m$ be its path arrangement parameter and $N_1, \ldots, N_m$ be the clusters with respect to \cref{def:path-arrangement-parameter}.
Let $\mathcal{X}$ be the set of functions $g_{\vec{x},b}$ from \cref{def:mathcal_X_separation}.
Define $\mathcal{H} : \mathcal{X} \to \{0,1\}$ as 
\[
\cH(g_{\vec{x},b}) = b \qquad  \forall \vec{x} \in \{1\} \times [m]^{2c} \; \text{and} \; b \in \{0,1\} \,. 
\]
\end{mydefinition}

\begin{mydefinition}[Good/bad sequences of indices; Good/bad functions; separation version]
\label{def:good_separation}
A sequence of $k$ indices $\X = (x_1, \ldots, x_k)$ is \emph{good} if $x_i \neq x_j$ for all $i,j$ with $1 \leq i < j \leq k$; otherwise, $\X$ is bad.
For each $b \in \{0,1\}$, a function $F = g_{\X,b} \in \cX$ is good if $\X$ is good, and bad otherwise.
\end{mydefinition}
% \begin{mydefinition}[Good sequence of indices] %\label{def:good}
% A sequence of indices $\X \in \{1\} \times [m]^{2c}$ is \emph{good} if $\X_i \neq x_j$ for all $i,j$ with $1 \leq i < j \leq 2c+1$, i.e.\ no indices are repeated.
% Furthermore, for any $b \in \{0,1\}$ we say that a function $F_1 = g_{\X,b} \in \cX$ is good if $\X$ is good.
% \end{mydefinition}

% To define a suitable $r$ function, we only concern ourselves with pairs of staircases that do not have repeated milestones within themselves and with different hidden bits.
% For such a pair of staircases, we assign a relative difficulty of distinguishing them that scales with the length of their common prefix: the longer their common prefix, the more function values they agree on, and so we assign a higher $r$ value.

\begin{mydefinition}[The function $r$; separation version]
\label{def:r_separation}
Let $r : \cX \times \cX \to \mathbb{R}_{\geq 0}$ be defined by, for all $X,Y \in \{1\} \times [m]^{2c}$ and $b_1,b_2 \in \{0,1\}$, letting
\[
r(g_{\X,b_1}, g_{\Y,b_2}) =
\begin{cases}
    0 & \text{if at least one of the following holds: $b_1 = b_2$, $\X$ is bad, $\Y$ is bad}\\
    m^j & \text{otherwise, where $j$ is the maximum odd index for which $\X_{1 \to j} = \Y_{1 \to j}$}
\end{cases}
\]
\end{mydefinition}

Note the specification that $j$ be odd in the definition of $r$. Intuitively, this is because it takes two indices to specify an additional portion of $S_{\X}$: one for the destination cluster and one for the inter-cluster path by which to reach it.

% With respect to the definition of $r$ from \cref{def:r_separation}, we reuse the notations for $r_v$ and $\widetilde{r}_v$ from \cref{def:rv} and \cref{def:rv_tilde} respectively.

The function $r$ will be used directly to invoke \cref{thm:our-variant}, but we also  define here some related helper functions to use $r$ in conjunction with certain indicator variables.

\begin{mydefinition}[The function $r_v$; separation version] \label{def:rv_sep}
    For each $v \in [n]$, define $r_v : \mathcal{X} \times \mathcal{X} \to \mathbb{R}_{\geq 0}$ as follows: 
    %$r_v$ by, for all $F_1,F_2 \in \mathcal{X}$, letting
    \[
        r_v(F_1,F_2) = r(F_1,F_2) \cdot \mathbbm{1}_{\{F_1(v) \ne F_2(v)\}} \qquad \forall F_1, F_2 \in \mathcal{X} \,.
    \]
\end{mydefinition}
\begin{mydefinition}[The function $\widetilde{r}_v$; separation version] \label{def:rv_tilde_sep}
    For each $v \in [n]$, define $\widetilde{r}_v : \mathcal{X} \times \mathcal{X} \to \mathbb{R}_{\geq 0}$ as follows:
    %by, for all $\X,\Y \in \{1\} \times [n]^L$ and $b_1,b_2 \in \{0,1\}$, letting
    \[
        \widetilde{r}_v(g_{\X,b_1}, g_{\Y,b_2}) = r_v(g_{\X,b_1}, g_{\Y,b_2}) \cdot \mathbbm{1}_{\{\multiplicity(S_{\X},v) \le \multiplicity(S_{\Y},v)\}} \qquad \forall \X,\Y \in \{1\} \times [n]^L \; \; \forall b_1,b_2 \in \{0,1\} \,.
    \]
\end{mydefinition}

% on, we are ready to state how we instantiate \cref{thm:our-variant} to prove \cref{thm:high_separation_implies_local_search_is_hard}.

% \begin{mydefinition}[Instantiation for separation number]
% \label{def:instantiation-for-separation-number}
% We invoke \cref{thm:our-variant} with the following parameters:
% \begin{itemize}
%     \item Let $m$ be its path arrangement parameter and $N_1, \ldots, N_m$ be the clusters with respect to \cref{def:path-arrangement-parameter}.
%     \item Let $c = \lfloor \sqrt{m}/2 -1/2 \rfloor$ and each staircase has $2c$ quasisegments.
%     \item The finite set $A$ is the set of vertices $[n]$.
%     \item The finite set $B$ is $\{-n^2,-n^2+1,\ldots,n\} \times \{-1,0,1\}$. simina{This actually doesn't seem to be enough; double check if we need $-n^2 - n$}
%     nick{We have $c \le n-1$ legs of the journey, each of which is at most $n$ vertices. Therefore we're fine.}
%     \item The set of functions $\cX \subseteq B^A$ as defined in \cref{def:mathcal_X_separation}, where each $F_1 \in \cX$ is a function $g_{\X,b} : A \to B$ for some $\X \in \{1\} \times [m]^{2c}$ and $b \in \{0,1\}$.
%     So, given any vertex $v \in [n]$, $F_1(v)$ returns a pair of numbers.
%     \item Mapping $\cH: \cX \to \{0,1\}$ refers to the hidden bit $b \in \{0,1\}$ for any $g_{\X,b} \in \cX$.
%     \item The function $r$ as defined in \cref{def:r_separation}.
% \end{itemize}
% \end{mydefinition}

\subsection{Proof of the separation number lower bound}

Next we give the proof of \cref{thm:separation-number-result}.
The proofs of lemmas used in the theorem are included afterwards in \cref{sec:helper_lemmas_separation}.

% simina{Why do we need $n \geq 9$ in the statement? Remove it and explain why we can assume $n \geq 9$.}
\theoremseparationnumberlowerbound*

\begin{proof}
Because we are proving an asymptotic bound in $s/\Delta$, we may assume $s/\Delta \ge 162$.
Applying \cref{lem:m_lb} with the assumption that $s/\Delta \geq 162$ gives $m \geq 9$, so $c \ge 1$.
Additionally, we know $\Delta \ge 1$ and $s \le n$ by definitions of $\Delta$ and $s$.
Therefore $n \ge s/\Delta$, so we may also assume $n \ge 9$.

Consider the following setting of parameters:
\begin{enumerate}[(a)]
    \item Let $m$ be its path arrangement parameter and $N_1, \ldots, N_m$ be the clusters with respect to \cref{def:path-arrangement-parameter}.
    \item Let $c = \lfloor \sqrt{m}/2 -1/2 \rfloor$ and each staircase has $2c$ quasisegments.
    \item The finite set $A$ is the set of vertices $[n]$.
    \item The finite set $B$ is $\{-n^2,\ldots,n\} \times \{-1,0,1\}$.
    % simina{This actually doesn't seem to be enough; double check if we need $-n^2 - n$}
    % nick{We have $c \le n-1$ legs of the journey, each of which is at most $n$ vertices. Therefore we're fine.}
    \item The set of functions $f_{\X}$, $g_{\X,b}$ and the set $\cX$ as defined in \cref{def:mathcal_X_separation}. Recall $g_{\X,b} = (f_{\X},c)$ for all $v \in [n]$, where $c = -1$ if $v$ is not the last vertex of the induced staircase $S_{\X}$, and $c = b$ if $v$ is (i.e.\ $c = b$ if and only if $v$ is a local minimum of $f_{\X}$). Also recall $\cX = \left\{g_{\X,b} \mid \X \in \{1\} \times [m]^{2c} \text{ and } b \in \{0,1\}\right\}$.
    \item Map $\cH: \cX \to \{0,1\}$ as in \cref{def:map_H_separation}. Recall $\cH(g_{\X,b}) = b$ for all $\X \in \{1\} \times [m]^{2c}$ and $b \in \{0,1\}$.
    \item The function $r$ as defined in \cref{def:r_separation}.
\end{enumerate}

By \cref{lem:f_X_is_valid_separation}, each function  $f_{\X}$ is valid for all $\X \in \{1\} \times [m]^{2c}$, so \cref{lem:valid_implies_unique_local_minimum} implies that each function $f_{\X}$ has a unique local minimum (at the last vertex of $S_{\X}$).
Therefore by \cref{lem:reduction_to_decision} invoked with $f=f_{\X}$ and $h_b = g_{\X,b}$, it suffices to show a lower bound for the corresponding decision problem: return the hidden bit $b \in \{0,1\}$ given oracle access to the function $g_{\X,b}$.

For each $\cZ \subseteq \cX$, let  
\begin{align}  \label{eq:remind_Z_def_sep}
M(\cZ) = \sum_{F_1 \in \cZ} \sum_{F_2 \in \cX} r(F_1, F_2)\,.  
\end{align}
Since by assumption $n \ge 9$, we may invoke \cref{lem:parameters_to_variant_are_ok_separation} to get a subset $\cZ \subseteq \cX$ with $q(\cZ) > 0$.
% nick{$q(\cZ) > 0$}.
Thus the conditions required by  \cref{thm:our-variant} are met. 
By invoking \cref{thm:our-variant} with the parameters in (a-g), we get that the randomized query complexity of the decision problem, and thus also of local search on $G$, is 
\begin{align} 
\Omega\left(\min_{\cZ \subseteq \cX: q(\cZ) > 0}
% nick{q(\cZ) > 0}}
\frac{M(\cZ)}{q(\cZ)}\right), \text{ where } q(\cZ) = \max_{v \in [n]}{\sum_{F_1 \in \cZ} \sum_{F_2 \in \cZ} r(F_1,F_2) \cdot \mathbbm{1}_{\{F_1(v) \neq F_2(v)\}}}\,. \notag 
\end{align} 

To get an explicit lower bound in terms of congestion, we will upper bound $q(\cZ)$ and lower bound $M(\cZ)$ for  subsets $\cZ \subseteq \cX$ with $q(\cZ) > 0$.
% nick{$q(\cZ) > 0$}.

\medskip 

Fix an arbitrary subset $\cZ \subseteq \cX$ with $q(\cZ) > 0$.
% nick{$q(\cZ) > 0$}.
Since $r(F_1,F_2) = 0$ {when $F_1$ or $F_2$ is bad}, it suffices to consider subsets $\cZ \subseteq \cX$ where each function $F \in \cZ$ is good. 

% By \cref{lem:f_X_is_valid_separation}, we know $f_{\X}$ is valid, so \cref{lem:valid_implies_unique_local_minimum} implies that the last vertex in $S_{\X}$ is the only local minimum of $f_{\X}$.
% Therefore by \cref{lem:reduction_to_decision} invoked with $f=f_{\X}$ and $h_b = g_{\X,b}$, it suffices to show a lower bound for the corresponding decision problem: return the hidden bit $b \in \{0,1\}$ given oracle access to the function $g_{\X,b}$ where for all $v \in [n]$ we have
% \[
%     g_{\X,b} = \begin{cases}
%         (f_{\X}(v), -1) & \text{ if $v$ is not a local minimum of $f_{\X}$.}\\
%         (f_{\X}(v), b) & \text{ if $v$ is a local minimum of $f_{\X}$.}
%     \end{cases}
% \]

% By \cref{lem:parameters_to_variant_are_ok_separation}, there exists a subset $\cZ \subseteq \cX$ such that $M(\cZ) > 0$ and \cref{def:instantiation-for-separation-number} satisfies the requirements of \cref{thm:our-variant}, which tells us that the randomized query complexity is $\Omega(\min_{\cZ \subseteq \cX: M(\cZ) > 0} \frac{M(\cZ)}{q(\cZ)})$, where $q(\cZ) = \max_{v \in [n]}{\sum_{F_1 \in \cZ} \sum_{F_2 \in \cZ} r(F_1,F_2) \cdot \mathbbm{1}_{\{F_1(v) \neq F_2(v)\}}}$.

% To get an explicit lower bound in terms of separation number, we will upper bound $q(\cZ)$ and lower bound $M(\cZ)$ for an arbitrary subset $\cZ \subseteq \cX$ such that $M(\cZ) > 0$.
% Because $r(F_1, F_2) = 0$ when $F_1$ or $F_2$ is bad, it suffices to consider subsets $\cZ \subseteq \cX$ for which all $F_1 \in \cZ$ are good.

\paragraph{Upper bounding $q(\cZ)$.}

Let $v \in [n]$ be arbitrary. 

Fix an arbitrary good function $F_1 = g_{\vec{x}, b_1} \in \mathcal{Z}$ for some good $\X \in \{1\} \times [m]^{2c}$ and $b_1 \in \{0,1\}$.

Since $\cZ \subseteq \cX$ and $\widetilde{r}_v \geq 0$, we have 
\begin{align}
\label{eq:simple_inequality_tilde_r_v_F_1_F_2_sum_over_X_sep}
\sum_{F_2 \in \mathcal{Z}} \widetilde{r}_v(F_1,F_2) 
& \leq \sum_{F_2 \in \mathcal{X}} \widetilde{r}_v(F_1,F_2) \,. 
\end{align}
Using the definition of $\mathcal{X} = \{ g_{\Y,b_2} \mid \Y \in \{1\} \times [m]^{2c}, \; b_2 \in \{0,1\}\}$, the fact that $F_1 = g_{\vec{x},b_1}$, and partitioning the space of functions $F_2 \in \mathcal{X}$ by the length of the prefix that the staircase corresponding to  $F_2$  shares with the staircase corresponding to $F_1$, we can upper bound the right hand of \cref{eq:simple_inequality_tilde_r_v_F_1_F_2_sum_over_X_sep}:
\begin{align} \label{eq:simple_ineq_r_tilde_v_F_1_F_2_in_X_part2_sep}
\sum_{F_2 \in \mathcal{X}} \widetilde{r}_v(F_1,F_2) & = \sum_{\substack{\Y \in \{1\} \times [m]^{2c}, \; b_2 \in \{0,1\}}} \widetilde{r}_v(g_{\X,b_1},g_{\Y,b_2}) \notag \\
& \leq \sum_{j=1}^{2c+1} \sum_{\substack{\Y \in \{1\} \times [m]^{2c}, \; b_2 \in \{0,1\}\\ j = \max\{i \;:\; \text{$i$ is odd}, \X_{1 \to i} = \Y_{1 \to i}\}}} \widetilde{r}_v(g_{\X,b_1},g_{\Y,b_2}) \,. 
\end{align}

Combining \cref{eq:simple_inequality_tilde_r_v_F_1_F_2_sum_over_X_sep} and \cref{eq:simple_ineq_r_tilde_v_F_1_F_2_in_X_part2_sep}, we get

\begin{align} \label{eq:simple_ineq_r_tilde_v_F_1_F_2_in_X_part2_prime_sep}
\sum_{F_2 \in \mathcal{Z}} \widetilde{r}_v(F_1,F_2)
\leq \sum_{j=1}^{2c+1} \sum_{\substack{\Y \in \{1\} \times [m]^{2c}, \; b_2 \in \{0,1\}\\ j = \max\{i \;:\; \text{$i$ is odd}, \X_{1 \to i} = \Y_{1 \to i}\}}} \widetilde{r}_v(g_{\X,b_1},g_{\Y,b_2}) \,.
\end{align}

For each $j \in [2c+1]$, let 
\[ 
T_j = \left|\left\{\Y \in \{1\} \times [m]^{2c} \mid \max\{i : \text{$i$ is odd}, \X_{1 \to i} = \Y_{1 \to i} \} = j \text{ and } v \in Tail(j,S_{\Y}) \right\} \right|\,.
\]
Then for each odd $j \in [2c+1]$ such that $v \not\in N_{x_j}$, we can bound the part of the sum in  \cref{eq:simple_ineq_r_tilde_v_F_1_F_2_in_X_part2_prime_sep} corresponding to index $j$ via the next chain of inequalities:
\begin{align} 
&\; \sum_{\substack{\Y \in \{1\} \times [m]^{2c}, \; b_2 \in \{0,1\} :\\ j = \max\{i \;:\; \text{$i$ is odd}, \X_{1 \to i} = \Y_{1 \to i}\}}} \widetilde{r}_v(g_{\X,b_1},g_{\Y,b_2}) \notag  \\
\leq &\; \sum_{\substack{\Y \in \{1\} \times [m]^{2c} :\\ j = \max\{i \;:\; \X_{1 \to i} = \Y_{1 \to i}\}  \\ v \in Tail(j, S_{\Y}) }} {r}(g_{\X,b_1},g_{\Y,1-b_1}) \explain{By \cref{lem:j_le_ell_implies_v_in_tail_separation} }\\
\leq &\; m^j \cdot |T_j| \explain{Since $r(g_{\X,b_1},g_{\Y,1-b_1}) \leq m^j$ when $j = \max\{i : \text{$i$ is odd}, \X_{1 \to i} = \Y_{1 \to i} \} $} \\
\leq &\; m^j \cdot \left( (2c+1) \cdot m^{2c-j} \right) \explain{By \cref{lem:upper_bound_set_v_in_Tail_Y_sep}}\\
= &\; (2c+1) \cdot m^{2c} \label{eq:sum_for_one_gxb_decomposed_by_j_solved_sep}
\end{align}

Meanwhile, since $\X$ is good, there is at most one odd $j$ such that $v \in N_{x_j}$.
For that index $j$, since $\widetilde{r}_v(g_{\X,b_1}, g_{\Y,b_2}) > 0$ implies  $b_2 = 1 - b_1$ (see observation \ref{obs:r_v_tilde_equal_b}), we have 
\begin{align}
\label{eq:sum_for_one_gxb_decomposed_by_ell_plus_one_solved_sep}
\sum_{\substack{\Y \in \{1\} \times [m]^{2c}, \; b_2 \in \{0,1\} :\\ j = \max\{i \;:\; \text{$i$ is odd}, \X_{1 \to i} = \Y_{1 \to i}\}}} \widetilde{r}_v(g_{\X,b_1},g_{\Y,b_2})
& \leq m^j \cdot | \{\Y \in \{1\} \times [m]^{2c} \mid \X_{1 \to j} = \Y_{1 \to j} \}| \notag \\
& = m^j \cdot m^{2c+1-j} = m^{2c+1}\,.
\end{align}

Summing \cref{eq:sum_for_one_gxb_decomposed_by_j_solved_sep} to  \cref{eq:sum_for_one_gxb_decomposed_by_ell_plus_one_solved_sep}, we can now upper bound the right hand side of \cref{eq:simple_ineq_r_tilde_v_F_1_F_2_in_X_part2_prime_sep} as follows: 
\begin{align}
\sum_{F_2 \in \mathcal{Z}} \widetilde{r}_v(F_1,F_2) &  \leq  \sum_{j=1}^{2c+1} \sum_{\substack{\Y \in \{1\} \times [m]^{2c}, \; b_2 \in \{0,1\}\\ j = \max\{i \;:\; \text{$i$ is odd}, \X_{1 \to i} = \Y_{1 \to i}\}}} \widetilde{r}_v(g_{\X,b_1},g_{\Y,b_2}) \explain{By  \cref{eq:simple_ineq_r_tilde_v_F_1_F_2_in_X_part2_prime_sep}} \\
& \leq (c+1) \cdot (2c+1) \cdot m^{2c} + m^{2c+1} \explain{By \cref{eq:sum_for_one_gxb_decomposed_by_j_solved_sep} and \cref{eq:sum_for_one_gxb_decomposed_by_ell_plus_one_solved_sep}} \\
&   \leq 3 \cdot m^{2c+1}\,.  \explain{Since $c \leq \sqrt{m}-1$}
\end{align}

Thus, for each good function $F_1 \in \mathcal{Z}$, we have  
\begin{align}  \label{eq:good_F_1_sum_over_all_F_2_in_Z_sep}
\sum_{F_2 \in \mathcal{Z}} \widetilde{r}_v(F_1,F_2) \leq  3 \cdot m^{2c+1}\,.
\end{align}

Summing \cref{eq:good_F_1_sum_over_all_F_2_in_Z_sep} over all $F_1 \in \mathcal{Z}$ (each of which is good, since $\mathcal{Z}$ was chosen to have good functions only), and invoking \cref{lem:sum_rv_bounded_by_two_sum_rv_tilde} yields 
\begin{align}
\sum_{F_1,F_2 \in \mathcal{Z}}  r_v(F_1, F_2) & \leq 
 2 \cdot \sum_{F_1,F_2 \in \mathcal{Z}} 
 \widetilde{r}_v(F_1, F_2) \explain{By \cref{lem:sum_rv_bounded_by_two_sum_rv_tilde}}\\
%\leq &\; 2 \cdot \sum_{\substack{\X \in \{1\} \times [m]^{2c}, b_1 \in \{0,1\}}} \sum_{\substack{\Y \in \{1\} \times [m]^{2c}, b_2 \in \{0,1\}}} \widetilde{r}_v(g_{\X,b_1},g_{\Y,b_2})\\
\leq &\; 2 \cdot 3 \cdot m^{2c+1} \explain{By \cref{eq:good_F_1_sum_over_all_F_2_in_Z_sep}}\\
= & \;|\cZ| \cdot 6 m^{2c+1} \,. \label{eq:sum_F_1_F_2_in_cal_Z_r_v_almost_there_for_q_Z_sep}
\end{align}
Since we had considered an arbitrary vertex $v \in [n]$, taking the maximum over all $v \in [n]$ in \cref{eq:sum_F_1_F_2_in_cal_Z_r_v_almost_there_for_q_Z_sep} yields 
\begin{equation}
\label{eq:numerator_bound_sep}
q(\cZ)
= \max_{v \in [n]} \sum_{F_1 \in \cZ} \sum_{F_2 \in \cZ} r_v(F_1,F_2)
\leq |\cZ| \cdot 6 m^{2c+1}\,.
\end{equation}

\paragraph{Lower bounding $M(\cZ)$.}  
    % Assume $s/\Delta \ge 162$.
    Each function $F_1 \in \mathcal{Z}$ is good by choice of $\mathcal{Z}$.
    Additionally, since $c \ge 1$, \cref{lem:M_large_separation} yields
    \begin{align}  \label{eq:simple_bound_implied_by_Lemma_31_for_F1_good_sep}
    \sum_{F_2 \in \cX} r(F_1,F_2) \geq \frac{1}{2e} \cdot (c+1) \cdot m^{2c+1} \qquad \forall F_1 \in \mathcal{Z} \,.
    \end{align}
    Using \cref{eq:simple_bound_implied_by_Lemma_31_for_F1_good_sep} and recalling the definition of $M(\mathcal{Z})$ from \cref{eq:remind_Z_def_sep}, we get 
    \begin{align} 
M(\mathcal{Z}) & = \sum_{F_1 \in \cZ} \sum_{F_2 \in \cX} r(F_1, F_2) \notag  \\
& \geq \frac{|\cZ| }{2e} \cdot (c+1) \cdot m^{2c+1} \,. \label{eq:denominator_bound_sep}
\end{align} 

    \paragraph{Combining the bounds.}
    Combining the bounds from \cref{eq:numerator_bound_sep} and \cref{eq:denominator_bound_sep}, we can now estimate the bound from \cref{thm:our-variant}:
    \begin{align*}
    \min_{\substack{\cZ \subseteq \cX:\\ q(\cZ) > 0}}
    % nick{q(\cZ) > 0}}}
    \frac{M(\cZ)}{q(\cZ)}
    &\geq \frac{ \frac{|\cZ|}{2e} \cdot (c+1)  m^{2c+1}}{|\cZ| \cdot 6m^{2c+1}} \explain{By \cref{eq:numerator_bound_sep} and \cref{eq:denominator_bound_sep}}\\
    &\geq \frac{\sqrt{m}}{24e} \explain{Since $c+1 \geq \sqrt{m}/2$}\\
    &\geq \frac{1}{96} \cdot \left( \frac{s}{\Delta} \right)^{1/4} \explain{By \cref{lem:m_lb}}
    \end{align*}
    
    Therefore, the randomized query complexity of local search is
    \[
    \Omega\left( \min_{\substack{\cZ \subseteq \cX:\\ q(\cZ) > 0}}
    % nick{q(\cZ) > 0}}}
    \frac{M(\cZ)}{q(\cZ)} \right)
    \subseteq \Omega\left( \left( \frac{s}{\Delta} \right)^{1/4} \right)\;.
    \]
    This completes the proof of the theorem.
\end{proof}

\subsection{Helper lemmas}
\label{sec:helper_lemmas_separation}

Here we prove the helper lemmas that are used in the proof of \cref{thm:separation-number-result}. All the lemmas assume the setup of the parameters from \cref{thm:separation-number-result}. 

% Under the instantiation given by \cref{def:instantiation-for-separation-number}, our proof of \cref{thm:high_separation_implies_local_search_is_hard} rely on the following helper lemmas.
% % We state them here and defer their formal proofs to \cref{sec:appendix-separation-number-helper-proofs}.

\begin{restatable}{mylemma}{fXisvalidseparation}
\label{lem:f_X_is_valid_separation}
For each $\X \in \{1\} \times [m]^{2c}$, the function $f_{\X}$ is valid for the staircase $S_{\X}$ induced by $\X$ and the cluster paths.
\end{restatable}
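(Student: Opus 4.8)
The plan is to show that $f_{\X}$ satisfies the three conditions in \cref{def:valid_function} with respect to the walk $W = S_{\X}$, then conclude via \cref{lem:valid_implies_unique_local_minimum}. Writing $S_{\X} = (w_1, \ldots, w_s)$ with $w_1 = v_{start}$, I would argue each condition in turn.

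First, for condition 1 (the monotonicity condition along the walk), take two vertices $u, v \in S_{\X}$ and suppose $\max\{i : v = w_i\} < \max\{i : u = w_i\}$. By the definition of $f_{\X}$ on the staircase in \cref{def:mathcal_X_separation}, $f_{\X}(v) = -i_v$ and $f_{\X}(u) = -i_u$ where $i_v = \max\{i : v = w_i\}$ and $i_u = \max\{i : u = w_i\}$ are the \emph{last} positions at which $v$ and $u$ occur on $S_{\X}$. Since $i_v < i_u$, we get $f_{\X}(v) = -i_v > -i_u = f_{\X}(u)$, as required. This is essentially immediate from the "last index" definition.

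Next, condition 2: for $v \notin S_{\X}$ I need $f_{\X}(v) = dist(v_{start}, v) > 0$. The equality holds by definition of $f_{\X}$ (recall $v_{start} = w_1$, the start of the staircase). Strict positivity holds because $v \notin S_{\X}$ implies in particular $v \neq v_{start}$, so $dist(v_{start}, v) \geq 1 > 0$. For condition 3, I need $f_{\X}(w_i) \leq 0$ for all $i$; but for any vertex $v \in S_{\X}$ we have $f_{\X}(v) = -i$ for some index $i \geq 1$ into the walk, hence $f_{\X}(v) < 0 \leq 0$. The only small point to double-check is that the codomain declared in \cref{def:mathcal_X_separation}, namely $\{-n^2, \ldots, n\}$, actually contains all these values: the staircase $S_{\X}$ has length at most $n^2$ (a crude bound, since each of the $2c$ quasi-segments has at most $n$ vertices and $2c \le n$), so $-i \ge -n^2$ and $dist(\cdot,\cdot) \le n$, so $f_{\X}$ is well-defined into $B$; I would include this remark for completeness but it is routine.

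I do not expect a genuine obstacle here — the lemma is an analogue of \cref{lem:proof_f_X_is_valid} from the congestion section, and the separation-version definition of $f_{\X}$ uses the "last occurrence index on the full walk" rather than the two-tiered $(-in - j)$ encoding, which actually makes condition 1 \emph{easier} to verify (no case split on whether the quasi-segment index is equal or strictly smaller). The one place to be slightly careful is making sure that "$S_{\X}$ is a well-defined walk" is invoked correctly (so that "the $i$-th vertex of $S_{\X}$" makes sense); this is already noted in the text immediately after the definition of the separation-version staircase, so I would just cite it. With all three conditions established, $f_{\X}$ is valid for $S_{\X}$, completing the proof.
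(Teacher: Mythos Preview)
Your proposal is correct and follows essentially the same approach as the paper's proof: verify the three conditions of \cref{def:valid_function} directly from \cref{def:mathcal_X_separation}, with condition~1 being immediate from the ``last occurrence index'' definition and conditions~2 and~3 following straight from the two cases of $f_{\X}$. The only extraneous remark is the reference to \cref{lem:valid_implies_unique_local_minimum} in your plan---that lemma is not needed to establish validity itself (it is invoked later, in the proof of \cref{thm:separation-number-result})---but this does not affect the argument.
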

% \fXisvalidseparation*
\begin{proof}
Let $S_{\X} = (w_1, \ldots, w_s)$ be the vertices of the staircase $S_{\X}$ induced by $\X$ and $\mathcal{P}$.
% , where $w_1 = x_1 = 1$.
We show that all the three conditions required by the definition of a valid function (\cref{def:valid_function}) hold.

To show the first condition of validity, consider two vertices $v_1,v_2 \in S_{\X}$.
Let $i_1$  be the maximum index such that $v_1$ is the $i_1$-th vertex in $S_{\X}$.
Let $i_2$ be defined similarly for $v_2$.
By \cref{def:mathcal_X}, we have $f_{\X}(v_1) = -i_1 < 0$ and $f_{\X}(v_1) = -i_2 < 0$.
Furthermore, if $i_1 < i_2$, then $f_{\X}(v_1) > f_{\X}(v_2)$.
Therefore the first condition of validity is satisfied.

Also by \cref{def:mathcal_X}, of the function $f_{\vec{x}}$, we have that:
\begin{itemize}
\item By definition, $S_{\X}$ starts at $v_{start} \in N_1$
\item $f_{\X}(v) = dist(v, v_{start}) > 0$ for all $v \notin S_{\X}$, so the second condition of validity is satisfied.
\item $f_{\X}(v)  \le 0$ for all $v \in S_{\X}$, so the third condition of validity is satisfied.
\end{itemize}
Therefore $f_{\X}$ is valid for the staircase $S_{\X}$ induced by $\X$ and $\cP$.
\end{proof}

\begin{restatable}{mylemma}{parameterstovariantareokseparation}
\label{lem:parameters_to_variant_are_ok_separation}
If $n \ge 9$, then the next two properties hold:
\begin{itemize}
    \item Let $F_1, F_2 \in \mathcal{X}$. Then $r(F_1,F_2) = 0$ when $\mathcal{H}(F_1) = \mathcal{H}(F_2)$. 
    \item There exists a subset $\mathcal{Z} \subseteq \mathcal{X}$ such that
    \[
    q(\mathcal{Z}) = \max_{v \in [n]} \sum_{F_1 \in \mathcal{Z}} \sum_{F_2 \in \mathcal{Z}} r(F_1,F_2) \cdot \mathbbm{1}_{\{F_1(v) \ne F_2(v)\}} > 0 \;.
    \]
    % nick{$q(\mathcal{Z}) = \max_{v \in [n]} \sum_{F_1 \in \mathcal{Z}} \sum_{F_2 \in \mathcal{Z} r(F_1,F_2) \mathbbm{1}_{\{F_1(v) \ne F_2(v)\}}> 0}$}.
\end{itemize}
\end{restatable}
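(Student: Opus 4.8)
The plan is to prove the two claimed properties of \cref{lem:parameters_to_variant_are_ok_separation} in the same spirit as \cref{lem:parameters_to_new_aaronson_are_ok} from the congestion section, adapting the counting argument to the ``every other index'' structure of the separation construction.

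\textbf{First property.} The claim that $r(F_1,F_2) = 0$ whenever $\mathcal{H}(F_1) = \mathcal{H}(F_2)$ is essentially immediate from the definitions. Given $F_1, F_2 \in \mathcal{X}$ with $\mathcal{H}(F_1) = \mathcal{H}(F_2)$, by definition of $\mathcal{X}$ (\cref{def:mathcal_X_separation}) there are sequences $\X,\Y \in \{1\} \times [m]^{2c}$ and bits $b_1, b_2 \in \{0,1\}$ with $F_1 = g_{\X,b_1}$ and $F_2 = g_{\Y,b_2}$. By \cref{def:map_H_separation}, $\mathcal{H}(g_{\X,b_1}) = b_1$ and $\mathcal{H}(g_{\Y,b_2}) = b_2$, so $b_1 = b_2$. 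Then the first case in \cref{def:r_separation} applies and $r(g_{\X,b_1}, g_{\Y,b_2}) = 0$.

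\textbf{Second property.} Here I would exhibit an explicit $\cZ$. The natural choice is $\cZ = \{g_{W^1,0}, g_{W^2,1}\}$ for two carefully chosen good sequences $W^1, W^2$. I would choose $W^1 = (1, w^1_2, \ldots, w^1_{2c+1})$ and $W^2 = (1, w^2_2, \ldots, w^2_{2c+1})$ so that: (i) all entries within each $W^i$ are distinct (so both sequences are good), and (ii) the two sequences differ already at the second coordinate, i.e. $w^1_2 \ne w^2_2$, so that the maximum \emph{odd} index $j$ with $W^1_{1\to j} = W^2_{1\to j}$ is $j = 1$ and hence $r(g_{W^1,0}, g_{W^2,1}) = m^1 = m > 0$. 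Such sequences exist because we need $2c$ distinct nonzero indices for each of $W^1, W^2$ drawn from $[m]$, and $2c \le \sqrt m - 1 \le m$ for $m \ge 1$ (more precisely $2(2c) \le m$ whenever $m$ is not too small, which holds since $n \ge 9$ forces $m \ge 9$ via \cref{lem:m_lb}, though in fact for the two sequences we only need $W^1$ and $W^2$ to overlap at index $1$, so I can even reuse indices across $W^1$ and $W^2$ as long as each is internally distinct; then the requirement is just $2c \le m-1$, i.e. $c \le (m-1)/2$, which is exactly how $c$ was defined). Finally I would pick a vertex $v$ that distinguishes $g_{W^1,0}$ from $g_{W^2,1}$; the cleanest choice is the last vertex of $S_{W^1}$, call it $v^*$, since $g_{W^1,0}(v^*) = (f_{W^1}(v^*), 0)$ whereas $g_{W^2,1}(v^*)$ has second coordinate either $-1$ (if $v^* $ is not the last vertex of $S_{W^2}$) or $1$ (if it is) — in either case the second coordinates differ, so $g_{W^1,0}(v^*) \ne g_{W^2,1}(v^*)$. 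Then
\[
q(\cZ) \ge r(g_{W^1,0}, g_{W^2,1}) \cdot \mathbbm{1}_{\{g_{W^1,0}(v^*) \ne g_{W^2,1}(v^*)\}} = m > 0.
\]

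\textbf{Main obstacle.} The only mildly delicate point is verifying that the two good sequences $W^1, W^2$ with the required overlap structure actually exist given the available range $[m]$ and length $2c$ — i.e. a clean bookkeeping argument with the inequalities $m \ge 9$, $c = \lfloor \sqrt m / 2 - 1/2 \rfloor$, so that $2c + 1 \le \sqrt m \le m$. I expect this to go through routinely, mirroring the counting in \cref{lem:parameters_to_new_aaronson_are_ok}; the ``odd index'' wrinkle in \cref{def:r_separation} is handled for free because making the sequences differ at coordinate $2$ already forces the largest matching \emph{odd} prefix to be $j=1$. A secondary thing to be careful about is confirming $v^*$ indeed distinguishes the two functions regardless of whether $v^*$ happens to also lie at the end of $S_{W^2}$ — handled above by the case split on the second coordinate.
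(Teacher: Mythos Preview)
Your proposal is correct and mirrors the paper's own proof: the first property is immediate from the definitions, and for the second you exhibit $\cZ = \{g_{W^1,0}, g_{W^2,1}\}$ for two good sequences differing at the second coordinate, using the last vertex of $S_{W^1}$ as the distinguishing query. One minor slip: \cref{lem:m_lb} relates $m$ to $s/\Delta$, not to $n$, so ``$n \ge 9$ forces $m \ge 9$'' is not justified --- but your fallback to the weaker requirement $2c+1 \le m$ (immediate from the definition of $c$) is exactly what is needed, and the paper's own bookkeeping here is comparably loose.
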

% \parameterstovariantareokseparation*
\begin{proof}
We first show that $r(F_1,F_2) = 0$ when $\mathcal{H}(F_1) = \mathcal{H}(F_2)$.
To see this, suppose $\mathcal{H}(F_1) = \mathcal{H}(F_2)$ for some functions $F_1,F_2 \in \mathcal{X}$. Then by definition of the set of functions $\mathcal{X}$, there exist  sequences of vertices $\X,\Y \in \{1\} \times [m]^{2c}$ and bits $b_1, b_2 \in \{0,1\}$ such that $F_1 = g_{\X,b_1}$ and $F_2 = g_{\Y,b_2}$. By definition of $\mathcal{H}$, we have $\mathcal{H}(g_{\X,b_1}) = b_1$ and $\mathcal{H}(g_{\Y,b_2}) = b_2$. Since $\mathcal{H}(F_1) = \mathcal{H}(F_2)$, we have  $b_1 = b_2$. Then  $r(g_{\X,b_1}, g_{\Y,b_2}) = 0$ by definition of $r$, or equivalently, $r(F_1,F_2) = 0$.

Next we show  there is a subset $\mathcal{Z} \subseteq \mathcal{X}$ with $q(\mathcal{Z}) > 0$.
% nick{$q(\mathcal{Z}) > 0$}.
To see this, consider two disjoint sets of vertices 
$U_1, U_2 \subset [m]$ such that $U_1 = \{u_{2}^1, \ldots, u_{2c+1}^1\}$, $U_2= \{u_2^2, \ldots, u_{2c+1}^2\}$,  each vertex $u_j^i$ appears exactly once in $U_i$, and $u_{j}^i \neq 1$ for all $i,j$. 
Such sets $U_1,U_2$ exist since $m \leq n$ and
$$|U_1| + |U_2| + |\{1\}|= 4c + 1
= 4 (\lfloor \sqrt{m} \rfloor - 1) + 1
\leq 4 \sqrt{n} - 3
\leq n
\qquad \text{for $n \geq 9$.}$$
Form the sequences of vertices $\X = (1, u_2^1, \ldots, u_{2c+1}^1)$ and $\Y= (1, u_2^2, \ldots, u_{2c+1}^2)$. 
Then both $\X$ and $\Y$ are good.
Consider now the functions $g_{\X, 0}$ and $g_{\Y, 1}$. By definition of $r$, we have $r(g_{\X, 0}, g_{\Y, 1}) = n$, since the maximum index $j$ for which $\X_{1\to j} = \Y_{1 \to j} $ is $j=1$.
Let $v$ be the last vertex of $S_\X$.
Then 
\[ 
q(\{g_{\X,0}, g_{\Y,1}\}) \ge r(g_{\X, 0}, g_{\Y, 1}) \cdot \mathbbm{1}_{\{g_{\X, 0}(u_{2c+1}^1) \ne g_{\Y, 1}(u_{2c+1}^1)\}} = n > 0 \;.
% nick{q(\{g_{\X,0}, g_{\Y,1}\}) \ge r(g_{\X, 0}, g_{\Y, 1}) \mathbbm{1}_{\{g_{\X, 0}(v) \ne g_{\Y, 1}(v)\}} = n > 0\,.}
\]
Thus there exists a subset $\mathcal{Z} \subseteq \mathcal{X}$ with $q(\mathcal{Z}) > 0$
% nick{$q(\mathcal{Z}) > 0$}
as required.
\end{proof}

\begin{restatable}{mylemma}{rvtildeimpliesseveralthingsseparation}
\label{lem:rv_tilde_implies_several_things_separation}
   Let  $\X,\Y \in \{1\} \times [m]^{2c}$,  $b_1,b_2 \in \{0,1\}$,   $v \in [n]$. Let $j \in [2c+1]$ be the maximum odd index for which $\X_{1 \to j} = \Y_{1 \to j}$.
    Then if $\; \widetilde{r}_v(g_{\X,b_1}, g_{\Y,b_2}) > 0$, then at least one of the next two properties holds:
    \begin{enumerate}[(i)]
        \item $v \in Tail(j, S_{\Y})$. %, where $Tail(j, S_{\Y})$ was defined in \cref{def:tail}.
        \item $\X = \Y$.
    \end{enumerate}
\end{restatable}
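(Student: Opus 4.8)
\textbf{Proof proposal for \cref{lem:rv_tilde_implies_several_things_separation}.}

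The plan is to mimic exactly the structure of the proof of the analogous congestion-version lemma (\cref{lem:rv_tilde_implies_several_things}), adapting only the places where the staircase definition differs. First I would unpack the hypothesis $\widetilde{r}_v(g_{\X,b_1},g_{\Y,b_2}) > 0$. By \cref{def:rv_tilde_sep} this immediately yields two facts: $r_v(g_{\X,b_1},g_{\Y,b_2}) > 0$, and $\multiplicity(S_{\X},v) \le \multiplicity(S_{\Y},v)$. Unpacking $r_v$ further via \cref{def:rv_sep}, we get $r(g_{\X,b_1},g_{\Y,b_2}) > 0$ (so $b_1 \ne b_2$ and both $\X,\Y$ are good) and $g_{\X,b_1}(v) \ne g_{\Y,b_2}(v)$. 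I would record these as the three working hypotheses, exactly as in the congestion proof.

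Then I would split into two cases according to whether $v \in Tail(j,S_{\X})$. \textbf{Case 1: $v \in Tail(j,S_{\X})$.} Since $\X_{1 \to j} = \Y_{1 \to j}$ and $j$ is odd, the prefixes of the staircases agree: $S_{\X_{1\to j}} = S_{\Y_{1 \to j}}$, and hence $S_{\X} = S_{\X_{1\to j}} \circ Tail(j,S_{\X})$ and $S_{\Y} = S_{\Y_{1\to j}} \circ Tail(j,S_{\Y})$ give the same count on the shared prefix. Subtracting, the inequality $\multiplicity(S_{\X},v) \le \multiplicity(S_{\Y},v)$ reduces to $\multiplicity(Tail(j,S_{\X}),v) \le \multiplicity(Tail(j,S_{\Y}),v)$. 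Since $v \in Tail(j,S_{\X})$ the left side is $\ge 1$, so the right side is $\ge 1$, giving $v \in Tail(j,S_{\Y})$, which is property (i). \textbf{Case 2: $v \notin Tail(j,S_{\X})$.} If $\X = \Y$ we are done by (ii), so assume $\X \ne \Y$; then $Tail(j,S_{\Y}) \ne \emptyset$ since not all of $\Y$ agrees with $\X$. I would first argue $v \in S_{\X} \cup S_{\Y}$: if not, then using \cref{def:mathcal_X_separation} both $g_{\X,b_1}(v) = (dist(v,v_{start}),-1)$ and $g_{\Y,b_2}(v) = (dist(v,v_{start}),-1)$ (neither $v$ is the last vertex of its staircase, since $v$ is in neither staircase), contradicting $g_{\X,b_1}(v) \ne g_{\Y,b_2}(v)$. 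Now suppose towards a contradiction that also $v \notin Tail(j,S_{\Y})$. Since $v \notin Tail(j,S_{\X})$, $v$ is not the last vertex of $S_{\X}$, so $g_{\X,b_1}(v) = (f_{\X}(v),-1)$; similarly $g_{\Y,b_2}(v) = (f_{\Y}(v),-1)$. But $v \in S_{\X}\cup S_{\Y}$ together with $v \notin Tail(j,S_{\X}) \cup Tail(j,S_{\Y})$ forces $v$ to lie on the shared prefix $S_{\X_{1\to j}} = S_{\Y_{1\to j}}$ (or to be outside both staircases, already excluded), and there $f_{\X}$ and $f_{\Y}$ are computed identically by \cref{def:mathcal_X_separation}, so $f_{\X}(v) = f_{\Y}(v)$ and hence $g_{\X,b_1}(v) = g_{\Y,b_2}(v)$ — a contradiction. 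Therefore $v \in Tail(j,S_{\Y})$, which is property (i).

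The main obstacle I anticipate is the bookkeeping in Case 2 around the claim ``$v \in S_{\X}\cup S_{\Y}$ and $v \notin Tail(j,S_{\X}) \cup Tail(j,S_{\Y})$ implies $v$ lies on the shared prefix $S_{\X_{1\to j}}$.'' This requires being careful about the exact meaning of $Tail$ in \cref{def:tail_separation} (in particular the removal of the first occurrence of the first vertex of the tail), and about the fact that $f_{\X}(v) = -i$ uses the \emph{maximum} index $i$ at which $v$ occurs on $S_{\X}$, so that agreement of the staircases on the prefix must be leveraged to conclude agreement of the function values. The congestion-version lemma handles an exactly parallel subtlety (its $f$ uses the ``maximum index $i$ with $v \in P^{x_i,x_{i+1}}$''), so I expect the argument to transfer, but the odd-index convention for $j$ and the single-vertex intra-cluster segments in \cref{def:mathcal_X_separation} mean the prefix-agreement step must be re-justified rather than copied verbatim.
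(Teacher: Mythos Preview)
Your proposal is correct and follows essentially the same approach as the paper's own proof: unpack $\widetilde{r}_v>0$ into the multiplicity inequality and $g_{\X,b_1}(v)\ne g_{\Y,b_2}(v)$, then do the same two-case split on whether $v\in Tail(j,S_{\X})$, handling Case 1 by subtracting off the shared prefix in the multiplicity count and Case 2 by the two nested contradictions you describe. The bookkeeping concern you flag about $f_{\X}(v)=f_{\Y}(v)$ on the shared prefix is exactly the point the paper glosses over with a one-line justification, and your expansion of it (that $v$ must lie on $S_{\X_{1\to j}}=S_{\Y_{1\to j}}$, where the maximum-index definition of $f$ then agrees) is the right way to fill it in.
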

% \rvtildeimpliesseveralthingsseparation*
\begin{proof}
We start with a few observations.

Recall from \cref{def:rv_tilde_sep} that 
$\widetilde{r}_v(g_{\X,b_1}, g_{\Y,b_1}) = r_v(g_{\X,b_1}, g_{\Y,b_2}) \cdot \mathbbm{1}_{\{\multiplicity(S_{\X},v) \le \multiplicity(S_{\Y},v)\}}$. By the lemma statement, we have $\widetilde{r}_v(g_{\X,b_1}, g_{\Y,b_2}) > 0$, and so the next two inequalities hold:
    \begin{align} 
    & r_v(g_{\X,b_1}, g_{\Y,b_2}) > 0 \label{eq:r_v_strictly_positive_2_sep}\\
    & \multiplicity(S_{\X},v) \leq \multiplicity(S_{\Y},v)\,. \label{eq:m_SX_v_les_m_SY_v_sep}
    \end{align} 
    Also recall by definition of $r_v$ that $ r_v(g_{\X,b_1}, g_{\Y,b_2}) = r(g_{\X,b_1}, g_{\Y,b_2}) \cdot \mathbbm{1}_{\{g_{\X,b_1}(v) \neq g_{\Y,b_2}(v)\}} \,.$ 
Then \cref{eq:r_v_strictly_positive_2_sep} implies 
 that   
\begin{align} 
& g_{\X,b_1}(v) \neq g_{\Y,b_2}(v) \,.  \label{eq:v_not_equal_gXb1_gXb2_separation}
\end{align}

To prove that $v \in Tail(j, S_{\Y})$ or $\X = \Y$ we consider two cases: 

    \paragraph{Case 1: $v \in Tail(j,S_{\X})$.}
    
    Decomposing the staircase $S_{\X}$ into the initial segment $S_{\X_{1 \to j}}$ and the remainder $Tail(j,S_{\X})$, and similarly the staircase $S_{\Y}$
 into initial segment  $S_{\Y_{1 \to j}}$ and the remainder $Tail(j,S_{\Y})$, we get:
    \begin{align}
         & \multiplicity(S_{\X},v) \le  \multiplicity(S_{\Y},v) \explain{By \cref{eq:m_SX_v_les_m_SY_v_sep}}  \\
        \iff & \multiplicity(S_{\X_{1 \to j}},v) + \multiplicity(Tail(j,S_{\X}),v) \le \multiplicity(S_{\Y_{1 \to j}},v) + \multiplicity(Tail(j,S_{\Y}),v) \notag \\ %\explain{By definition of $Tail(j,S_{\X}),Tail(j,S_{\Y})$.}  \\
         \iff & \multiplicity(Tail(j,S_{\X}),v) \le \multiplicity(Tail(j,S_{\Y}),v) \explain{Since $\X_{1 \to j} = \Y_{1 \to j}$.}
    \end{align}
    But since $v \in Tail(j,S_{\X})$, we have $\multiplicity(Tail(j,S_{\X}),v) \geq 1$, so
    \begin{align*}
          1 \leq  \multiplicity(Tail(j,S_{\Y}),v)
    \end{align*}
    % nick{The last $\iff$ needs to be an $\implies$, since $1$ is only a lower bound on $\multiplicity(Tail(j,S_{\X}),v)$.}
    % davin{Agree. Fixed. We also should fix this in the congestion version.}
    Thus $v \in Tail(j,S_{\Y})$, so property $(i)$ from the lemma statement holds. This completes Case 1.
    
    \paragraph{Case 2: $v \notin Tail(j,S_{\X})$.}
If $\X=\Y$, then property $(ii)$ from the lemma statement  holds. 

Now suppose  $\X \neq \Y$. Then $Tail(j, S_{\Y}) \neq \emptyset$.
%From $r(g_{\X,b_1}, g_{\Y,b_2}) > 0$ we  get that $b_1 \neq b_2$ and both $\X$ and $\Y$ are good. 
We claim  $v \in S_{\X} \cup S_{\Y}$. 

Suppose towards a contradiction that $v \not \in S_{\X} \cup  S_{\Y}$. For each  $b \in \{0,1\}$,  $u \in [n]$, and sequence $\Z = (1, z_2, z_3 \ldots, z_{2c+1}) \in \{1\} \times [m]^{2c}$, we have by \cref{eq:def:g_X_b} (which defines the function $g_{\vec{z}, b}$) that 
\begin{align} 
g_{\Z,b}(u) =
\begin{cases}
(f_{\Z}(u), b) & \text{$u$ is the last vertex of $S_{\Z}$}\\
(f_{\Z}(u), -1) & \text{otherwise}
\end{cases}
\end{align}  

Since $v \not \in Tail(j,S_{\X})$, $v$ is not the last vertex of $S_{\X}$, and so  $g_{\X,b_1}(v) =(f_{\X}(v), -1)$.
 Moreover, since  $x_1 = y_1 = 1$ and $v \not \in S_{\X} \cup S_{\Y}$, we have $f_{\X}(v) = dist(v,x_1) = dist(v,y_1) = f_\Y(v)\,.$ Combining these observations yields  
$g_{\X,b_1}(v) = (f_{\X}(v), -1)  
= (f_\Y(v),-1)  
 = g_{\Y,b_2}(v),$
which contradicts \cref{eq:v_not_equal_gXb1_gXb2_separation}.
Thus the assumption must have been false and $v \in S_{\X} \cup S_{\Y}$.

To summarize, we have  $\X_{1\to j} = \Y_{1 \to j}$, $\X \neq \Y$, $v \in S_{\vec{x}} \cup S_{\vec{y}}$, and $v \not \in Tail(j,S_{\X})$. Suppose towards a contradiction that $v \not \in Tail(j,S_{\Y})$. Then 
\begin{align}
g_{\X,b_1}(v) & =(f_{\X}(v), -1) \explain{Since $v \neq x_{L+1}$, as $v \not \in Tail(j, S_{\X})$.} \\
& = (f_\Y(v), -1) \explain{Since $v \not \in Tail(j,S_{\X})$ and $v \not \in Tail(j,S_{\Y})$.} \\
& = g_{\Y,b_2}(v) \explain{Since $v \neq y_{L+1}$, as $v \not \in Tail(j, S_{\Y})$.},
 \end{align}
 which contradicts \cref{eq:v_not_equal_gXb1_gXb2_separation}.
Thus the assumption must have been false and $v  \in Tail(j,S_{\Y})$, so property $(i)$ from the lemma statement holds. 

We conclude that at least one of properties $(i)$ and $(ii)$ holds. This completes Case 2, as well as the proof of the lemma.
\end{proof}

\begin{restatable}{mylemma}{jleellimpliesvintailseparation}
\label{lem:j_le_ell_implies_v_in_tail_separation}
For each $\X \in \{1\} \times [m]^{2c}$, $b_1 \in \{0,1\}$, odd $j \in [2c+1]$, and $v \in [n]$ such that $v \notin N_{x_j}$, we have 
\[
\sum_{\substack{{\Y} \in \{1\} \times [m]^{2c}, \; b_2 \in \{0,1\} :\\ \max\{i : \text{$i$ is odd}, \X_{1 \to i} = \Y_{1 \to i}\} = j }} \widetilde{r}_v(g_{\X,b_1},g_{\Y,b_2})  \le
\sum_{\substack{{\Y} \in \{1\} \times [m]^{2c} :\\ \max\{i : \text{$i$ is odd}, \X_{1 \to i} = \Y_{1 \to i}\} = j \\ v \in Tail(j, S_{\Y}) }} {r}(g_{\X,b_1},g_{\Y,1-b_1})\,.
\]
\end{restatable}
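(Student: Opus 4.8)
The plan is to mirror the structure of the congestion-version lemma \cref{lem:j_le_ell_implies_v_in_tail}, but with the separation-number objects. First I would fix $\X \in \{1\} \times [m]^{2c}$, $b_1 \in \{0,1\}$, an odd index $j \in [2c+1]$, and a vertex $v \in [n]$ with $v \notin N_{x_j}$. The left-hand sum is over pairs $(\Y, b_2)$ with $\max\{i : \text{$i$ odd}, \X_{1\to i} = \Y_{1\to i}\} = j$, weighted by $\widetilde{r}_v(g_{\X,b_1}, g_{\Y,b_2})$. Since $\widetilde{r}_v \ge 0$, I would restrict the sum to pairs with $\widetilde{r}_v(g_{\X,b_1}, g_{\Y,b_2}) > 0$. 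By \cref{obs:r_v_tilde_equal_b}, $\widetilde{r}_v(g_{\X,b_1}, g_{\Y,b_1}) = 0$, so only the term $b_2 = 1 - b_1$ survives; this collapses the inner sum over $b_2$ and lets me replace $\widetilde{r}_v(g_{\X,b_1}, g_{\Y,b_2})$ with $\widetilde{r}_v(g_{\X,b_1}, g_{\Y,1-b_1})$. Then, unfolding \cref{def:rv_tilde_sep} and \cref{def:rv_sep}, I have $\widetilde{r}_v(g_{\X,b_1}, g_{\Y,1-b_1}) = r(g_{\X,b_1}, g_{\Y,1-b_1}) \cdot \mathbbm{1}_{\{g_{\X,b_1}(v) \ne g_{\Y,1-b_1}(v)\}} \cdot \mathbbm{1}_{\{\multiplicity(S_{\X},v) \le \multiplicity(S_{\Y},v)\}} \le r(g_{\X,b_1}, g_{\Y,1-b_1})$, so it suffices to bound each surviving summand by $r(g_{\X,b_1}, g_{\Y,1-b_1})$ and to argue that every $\Y$ contributing a positive summand also satisfies $v \in Tail(j, S_{\Y})$.

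The crux is exactly that last implication: for each $\Y$ in the restricted index set (odd $j = \max\{i : i \text{ odd}, \X_{1\to i} = \Y_{1\to i}\}$) with $\widetilde{r}_v(g_{\X,b_1}, g_{\Y,1-b_1}) > 0$, we must have $v \in Tail(j, S_{\Y})$. This is where I would invoke \cref{lem:rv_tilde_implies_several_things_separation}, which tells me that positivity of $\widetilde{r}_v$ forces either $v \in Tail(j, S_{\Y})$ or $\X = \Y$. The case $\X = \Y$ must be excluded: if $\X = \Y$ then $\X_{1\to 2c+1} = \Y_{1\to 2c+1}$, so the maximum odd index of agreement would be $2c+1 \ne j$ (since $j \le 2c$ — note $j$ is odd and at most $2c+1$; but if $j = 2c+1$ then again $\X_{1\to 2c+1} = \Y_{1\to 2c+1}$ forces $\X = \Y$, handled separately). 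Actually I should be careful here: when $j = 2c+1$, the condition $\max\{i : i \text{ odd}, \X_{1\to i} = \Y_{1\to i}\} = 2c+1$ does force $\X = \Y$, and then $Tail(2c+1, S_{\Y})$ is empty, so no vertex $v$ lies in it — but also the hypothesis $v \notin N_{x_j}$ together with the structure should make the summand vanish; I would need to check that $\widetilde{r}_v(g_{\X,b_1}, g_{\X, 1-b_1}) = 0$ by \cref{obs:r_v_tilde_equal_b} applied with $\Y = \X$... wait, that observation needs $b_1 = b_2$, which is not the case here. So for $j = 2c+1$ and $\X = \Y$, I instead note $g_{\X,b_1}(v) = g_{\X,1-b_1}(v)$ whenever $v$ is not the last vertex of $S_{\X}$, and $v$ is not the last vertex of $S_{\X} = S_{\Y}$ because that last vertex lies in $N_{x_{2c+1}} = N_{x_j}$ and we assumed $v \notin N_{x_j}$; hence the indicator $\mathbbm{1}_{\{g_{\X,b_1}(v) \ne g_{\X,1-b_1}(v)\}}$ is $0$, killing the summand. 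Thus in all cases $\X = \Y$ contributes nothing, and by \cref{lem:rv_tilde_implies_several_things_separation} every surviving $\Y$ has $v \in Tail(j, S_{\Y})$.

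Putting it together: the restricted left-hand sum is at most $\sum r(g_{\X,b_1}, g_{\Y,1-b_1})$ over $\Y$ with $\max\{i : i \text{ odd}, \X_{1\to i} = \Y_{1\to i}\} = j$ and $v \in Tail(j, S_{\Y})$, which is precisely the right-hand side. I expect the main obstacle to be the bookkeeping around the $j = 2c+1$ / $\X = \Y$ degenerate case and making sure the hypothesis $v \notin N_{x_j}$ is used in exactly the right place — in the congestion analogue this subtlety does not arise because there is no "cluster containing $v$" carve-out, so the separation version needs the extra argument sketched above to rule out the diagonal term. Everything else is a direct transcription of the congestion proof with $n \mapsto m$, $L \mapsto 2c$, and "index" ranging over odd values only. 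No new nontrivial estimate is required; the content is entirely in invoking \cref{lem:rv_tilde_implies_several_things_separation} and \cref{obs:r_v_tilde_equal_b} correctly.
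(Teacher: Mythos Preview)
Your proposal is correct and follows essentially the same approach as the paper: restrict to positive summands, collapse $b_2$ to $1-b_1$, invoke \cref{lem:rv_tilde_implies_several_things_separation}, and use the hypothesis $v \notin N_{x_j}$ to rule out the diagonal $\X = \Y$. The only cosmetic difference is that the paper handles the $\X = \Y$ exclusion in one stroke (if $\X = \Y$ then $g_{\X,b_1}(v) \ne g_{\Y,1-b_1}(v)$ forces $v$ to be the last vertex of $S_\Y$, hence $v \in N_{y_{2c+1}} = N_{x_j}$, contradiction) rather than splitting into the cases $j < 2c+1$ and $j = 2c+1$ as you do; your treatment of the $j = 2c+1$ case is correct and reaches the same conclusion by the contrapositive route.
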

% \jleellimpliesvintailseparation*
\begin{proof}
Let $\X \in \{1\} \times [m]^{2c}$, $b_1 \in \{0,1\}$, odd $j \in [2c+1]$, and $v \in [n]$ such that $v \notin N_{x_j}$.
Using the definitions of $r$, $r_v$, and $\widetilde{r}_v$, we have the following chain of identities:
\begin{small}
\begin{align} 
        & \sum_{\substack{{\Y} \in \{1\} \times [m]^{2c}, \; b_2 \in \{0,1\} :\\ \max\{i : \text{$i$ is odd},\X_{1 \to i} = \Y_{1 \to i}\} = j }} \widetilde{r}_v(g_{\X,b_1},g_{\Y,b_2}) 
          =  \sum_{\substack{{\Y} \in  \{1\} \times [m]^{2c}, \; b_2 \in \{0,1\} :\\ \max\{i : \text{$i$ is odd},\X_{1 \to i} = \Y_{1 \to i}\} = j \\ \widetilde{r}_v(g_{\X,b_1},g_{\Y,b_2}) > 0} } \widetilde{r}_v(g_{\X,b_1},g_{\Y,b_2})  \explain{Since $\widetilde{r}_v$ is non-negative.} \\
         & \qquad \qquad \qquad = \sum_{\substack{{\Y} \in  \{1\} \times [m]^{2c} :\\ \max\{i : \text{$i$ is odd},\X_{1 \to i} = \Y_{1 \to i}\} = j \\ \widetilde{r}_v(g_{\X,b_1},g_{\Y,1-b_1}) > 0} } \widetilde{r}_v(g_{\X,b_1},g_{\Y,1-b_1})  \explain{Since $\widetilde{r}_v(g_{\X,b_1}, g_{\Y,b_1}) = 0$.} \\
         & \qquad \qquad \qquad = \sum_{\substack{{\Y} \in  \{1\} \times [m]^{2c} :\\ \max\{i : \text{$i$ is odd},\X_{1 \to i} = \Y_{1 \to i}\} = j \\ \multiplicity(S_{\X}, v) \leq \multiplicity(S_{\Y}, v)  \\\widetilde{r}_v(g_{\X,b_1},g_{\Y,1-b_1})  > 0 }}{r}_v(g_{\X,b_1},g_{\Y,1-b_1}) \explain{By definition of $\widetilde{r}_v$.} \\
         & \qquad \qquad \qquad = \sum_{\substack{{\Y} \in  \{1\} \times [m]^{2c} :\\ \max\{i : \text{$i$ is odd},\X_{1 \to i} = \Y_{1 \to i}\} = j \\ \multiplicity(S_{\X}, v) \leq \multiplicity(S_{\Y}, v)  \\\widetilde{r}_v(g_{\X,b_1},g_{\Y,1-b_1})  > 0 }} {r}(g_{\X,b_1},g_{\Y,1-b_1}) \cdot \mathbbm{1}_{\{g_{\X,b_1}(v) \neq g_{\Y,1-b_1}(v) \}} \explain{By definition of $r_v$.} \\
          & \qquad \qquad \qquad = \sum_{\substack{{\Y} \in  \{1\} \times [m]^{2c} :\\ \max\{i : \text{$i$ is odd},\X_{1 \to i} = \Y_{1 \to i}\} = j \\ \multiplicity(S_{\X}, v) \leq \multiplicity(S_{\Y}, v) \\g_{\X,b_1}(v) \neq g_{\Y,1-b_1}(v) \\ \widetilde{r}_v(g_{\X,b_1},g_{\Y,1-b_1})  > 0 }} {r}(g_{\X,b_1},g_{\Y,1-b_1}) 
          \label{eq:sum_for_one_gxb_decomposed_by_j_partial_separation}
\end{align}
\end{small}

Consider an arbitrary $\Y \in  \{1\} \times [m]^{2c}$ meeting the properties from the last sum of \cref{eq:sum_for_one_gxb_decomposed_by_j_partial_separation}: 
\begin{itemize}
\item $\max\{i : \text{$i$ is odd},\X_{1 \to i} = \Y_{1 \to i}\} = j$
% \item $v \not\in N_{x_j}$, which implies that $j < 2c+1$ nick{This is neither a property of $\Y$ nor a correct implication; what if $j = 2c+1$ and $v$ is far away? It turns out $v$ can't be far away because of bullet $4$, but it's still unclear.} davin{true, but we should explain why $j < 2c+1$ is the maximum index for which $\X_{1 \to j} = \Y_{1 \to j}$}
\item $ \multiplicity(S_{\X}, v) \leq \multiplicity(S_{\Y}, v) $
\item $g_{\X,b_1}(v) \neq g_{\Y,1-b_1}(v)$
\item $ \widetilde{r}_v(g_{\X,b_1},g_{\Y,1-b_1})  > 0$
\end{itemize}
By \cref{lem:rv_tilde_implies_several_things_separation}, 
the inequality $\widetilde{r}_v(g_{\X,b_1},g_{\Y,1-b_1})>0$ implies that 
 at least one of  $v \in Tail(j, S_{\Y})$ or $\X=\Y$  holds.

 However, $\X$ cannot be equal to such $\Y$. To see this, suppose for sake of contradiction that $\X = \Y$. Then since $g_{\X,b_1}(v) \neq g_{\Y,1-b_1}(v)$ we would have that $v$ must be the last vertex of $S_\Y$. Therefore $v \in N_{y_{2c+1}}$.
 Also, we would have $j = 2c+1$ since $\X = \Y$.
 But then $v \in N_j$, which is a contradiction with the assumption $v \notin N_j$.
 Therefore $\X \ne \Y$, and so $v \in Tail(j,S_{\Y})$.
 
 We can continue to bound the sum from \cref{eq:sum_for_one_gxb_decomposed_by_j_partial_separation} as follows:
\begin{align} 
 \sum_{\substack{{\Y} \in  \{1\} \times [m]^{2c} :\\ \max\{i : \text{$i$ is odd},\X_{1 \to i} = \Y_{1 \to i}\} = j \\ \multiplicity(S_{\X}, v) \leq \multiplicity(S_{\Y}, v) \\g_{\X,b_1}(v) \neq g_{\Y,1-b_1}(v)  \\ \widetilde{r}_v(g_{\X,b_1},g_{\Y,1-b_1})  > 0 }} {r}(g_{\X,b_1},g_{\Y,1-b_1})  \leq \sum_{\substack{{\Y} \in  \{1\} \times [m]^{2c} :\\ \max\{i : \text{$i$ is odd},\X_{1 \to i} = \Y_{1 \to i}\} = j \\ v \in Tail(j, S_{\Y}) }} {r}(g_{\X,b_1},g_{\Y,1-b_1})\,.  \label{eq:sum_for_one_gxb_decomposed_by_j_partial_step2_separation} 
 %\explain{Since $\widetilde{r}_v(g_{\X,b_1}, g_{\Y,1-b_1}) > 0 $ and $j \leq L$ imply $v \in Tail(j,S_{\Y})$.}
 \end{align}
 Combining \cref{eq:sum_for_one_gxb_decomposed_by_j_partial_separation} and \cref{eq:sum_for_one_gxb_decomposed_by_j_partial_step2_separation}, we get the inequality required by the lemma statement:
 \begin{align} 
 \sum_{\substack{{\Y} \in \{1\} \times [m]^{2c}, \; b_2 \in \{0,1\} :\\ \max\{i : \text{$i$ is odd},\X_{1 \to i} = \Y_{1 \to i}\} = j }} \widetilde{r}_v(g_{\X,b_1},g_{\Y,b_2}) \leq   \sum_{\substack{{\Y} \in  \{1\} \times [m]^{2c} :\\ \max\{i : \text{$i$ is odd},\X_{1 \to i} = \Y_{1 \to i}\} = j \\ v \in Tail(j, S_{\Y}) }} {r}(g_{\X,b_1},g_{\Y,1-b_1})\,.  \notag 
 \end{align} 
\end{proof}

\begin{restatable}{mylemma}{upperboundsetvinTailYseparation}
\label{lem:upper_bound_set_v_in_Tail_Y_sep}
Let $\X \in \{1\} \times [m]^{2c}$, odd $j \in [2c+1]$, and $v \in [n]$ such that $v \notin N_{x_i}$.
Then
\begin{small}
\begin{align} 
\Bigl| \Bigl\{ \Y \in \{1\} \times [m]^{2c} \mid \max\{i : \text{$i$ is odd},\X_{1 \to i} = \Y_{1 \to i} \} = j \text{ and } v \in Tail(j,S_{\Y}) \Bigr\}  \Bigr|    \leq  (2c+1) \cdot m^{2c-j}  \,. \notag 
\end{align} 
\end{small}
\end{restatable}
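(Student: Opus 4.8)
The plan is to imitate the congestion analogue \cref{lem:upper_bound_set_v_in_Tail_Y}: bound the count by a union bound over the quasi-segments that make up $Tail(j,S_{\Y})$, showing that each way $v$ can sit on a quasi-segment costs at least one degree of freedom in $\Y$. First I would record the structural facts: $Tail(j,S_{\Y})$ is $S_{\Y,j}\circ S_{\Y,j+1}\circ\cdots\circ S_{\Y,2c}$ with one vertex deleted; for odd $i$ the quasi-segment $S_{\Y,i}$ lies entirely inside the cluster $N_{y_i}$ (for $i=1$ this is $N_1=N_{y_1}$, for odd $i>1$ it is the intra-cluster shortest path in $N_{y_i}$), while for even $i$ it is the inter-cluster path $P_{y_i}(y_{i-1},y_{i+1})$. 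Since $\Y_{1\to j}=\X_{1\to j}$ (the stronger ``$\max$'' condition in the set only shrinks it), we have $y_j=x_j$, so $S_{\Y,j}\subseteq N_{x_j}$; as $v\notin N_{x_j}$ by hypothesis, $v\in Tail(j,S_{\Y})$ forces $v\in S_{\Y,i}$ for some $i\in\{j+1,\ldots,2c\}$. After fixing the prefix, the free coordinates are $y_{j+1},\ldots,y_{2c+1}$, i.e.\ $2c+1-j$ of them, so the unrestricted count is $m^{2c+1-j}$ and we must gain a factor $\approx c/m$.

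The crucial step is to classify the appearances of $v$ so that each class constrains exactly one free coordinate. I would split into: (A) $v\in N_{y_\ell}$ for some \emph{odd} $\ell$ with $j+2\le \ell\le 2c+1$; and (B) $v\in S_{\Y,i}=P_{y_i}(y_{i-1},y_{i+1})$ for some \emph{even} $i$ with $j+1\le i\le 2c$, with $v\notin N_{y_{i-1}}\cup N_{y_{i+1}}$. A short case check shows every $\Y$ in the target set falls into (A) or (B). If $v\in S_{\Y,i}$ with $i$ odd, then $i\ge j+2$ and $S_{\Y,i}\subseteq N_{y_i}$, so (A) holds. If $v\in S_{\Y,i}$ with $i$ even, then by \cref{def:path-arrangement-parameter} either $v$ is an interior vertex of $P_{y_i}(y_{i-1},y_{i+1})$, hence $v\notin N_{y_{i-1}}\cup N_{y_{i+1}}$ and (B) holds, or $v$ is the first or last vertex, hence $v\in N_{y_{i-1}}$ or $v\in N_{y_{i+1}}$; in that case $i-1$ (resp.\ $i+1$) is an odd index, and it is at least $j+2$ because $i-1=j$ would force $v\in N_{x_j}$, contradicting the hypothesis, so (A) holds. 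This is precisely where $v\notin N_{x_j}$ is used, and folding the ``endpoint'' appearances into (A) rather than into (B) is what avoids an extra constant factor.

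Then I would bound the two contributions. For (A): the clusters are pairwise disjoint, so there is at most one index $a(v)$ with $v\in N_{a(v)}$; thus $v\in N_{y_\ell}$ forces $y_\ell=a(v)$, killing one free coordinate and leaving at most $m^{2c-j}$ sequences, and there are $(2c-j+1)/2\le c$ odd values of $\ell$ in $\{j+2,\ldots,2c+1\}$. For (B): the path-arrangement property says that for fixed endpoints $y_{i-1},y_{i+1}$ with $v\notin N_{y_{i-1}}\cup N_{y_{i+1}}$, at most one of $P_1(\cdot,\cdot),\ldots,P_m(\cdot,\cdot)$ contains $v$, so $y_i$ is determined; hence the triple $(y_{i-1},y_i,y_{i+1})$ has at most $m^2$ possibilities (and, when $i=j+1$, at most $m$, since $y_{i-1}=x_j$ is already fixed), again leaving at most $m^{2c-j}$ sequences, and there are $(2c-j+1)/2\le c$ even values of $i$ in $\{j+1,\ldots,2c\}$. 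Summing the two union bounds gives at most $(2c-j+1)\,m^{2c-j}\le (2c+1)\,m^{2c-j}$, as required (and when $j=2c+1$ both index sets are empty and $Tail(j,S_{\Y})$ is empty, so the bound is trivial).

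The main obstacle is the bookkeeping in the classification step — getting the parity ranges for $\ell$ and $i$ right and, above all, ensuring that an appearance of $v$ at an endpoint of an inter-cluster path is billed to the destination-cluster count (A). Once that is set up, the counting in (A) and (B) is routine.
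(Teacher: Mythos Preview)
Your proof is correct and follows essentially the same approach as the paper: a union bound over the quasi-segments of $Tail(j,S_{\Y})$, showing that each possible location of $v$ pins down one free coordinate of $\Y$ and hence costs a factor of $m$. Your split into (A) cluster membership at an odd index $\ell\in\{j+2,\ldots,2c+1\}$ and (B) interior vertex of an inter-cluster path is a mild repackaging of the paper's three cases (odd-index quasi-segment, even-index quasi-segment with endpoints excluded, and last vertex), with the paper's ``last vertex'' case absorbed into your (A) via $\ell=2c+1$; the counting and the use of $v\notin N_{x_j}$ to rule out $\ell=j$ are identical.
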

% \upperboundsetvinTailYseparation*
\begin{proof}
We will now do case analysis on different conditions on $i$ and $j$.
    \begin{enumerate}[(i)] 
    \item For odd $i \in [2c]$ with $i \geq j$, consider the number of $\Y \in \{1\} \times [m]^{2c}$ such that
\begin{equation}
\label{eq:Y_prop_related_to_X_1}
\X_{1 \to j} = \Y_{1 \to j}
\quad \text{and} \quad
v \in S_{\Y,i} \;.
\end{equation}
Since $i$ is odd, $S_{\Y,i}$ is a path within a cluster, and thus can only contain $v$ if $N_{y_i}$ is the one cluster containing $v$.
If $i = j$, then we know by assumption that $N_{y_i}$ does \emph{not} contain $v$.
Otherwise, there is only one choice for $y_i$ while there are $m$ choices for each of the rest of $y_{j+1}, y_{j+2}, \ldots, y_{2c+1}$.
Thus, there are at most $m^{2c-j}$ choices of $\Y$ for which \cref{eq:Y_prop_related_to_X_1} holds.

    \item For even $i \in [2c]$ with $i \geq j$, consider the number of $\Y \in \{1\} \times [m]^{2c}$ such that
\begin{equation}
\label{eq:Y_prop_related_to_X_2}
\X_{1 \to j} = \Y_{1 \to j},
v \in S_{\Y,i}
\text{, and}
\;v \notin S_{\Y,k}\;
\text{for all odd $k \geq j$ and $v$ is not the last vertex of $S_{\Y}$} \;.  
\end{equation}
Since $i$ is even, $S_{\Y,i}$ is an inter-cluster path.
Recall the $m$ inter-cluster paths between any two clusters are disjoint except for their start and endpoints.
Meanwhile $v$ cannot be their start or endpoints since otherwise $v \in S_{\Y,k}$ for an odd $k \geq j$ or $v$ would be the last vertex of $S_{\Y}$.
Therefore, $S_{\Y,i}$ can only contain $v$ for at most one value of $y_i$ while there are $m$ choices for each of the rest of $y_{j+1}, y_{j+2}, \ldots, y_{2c+1}$.
Thus, there are at most $m^{2c-j}$ choices of $\Y$ for which \cref{eq:Y_prop_related_to_X_2} holds.

    \item Finally, consider the number of $\Y \in \{1\} \times [m]^{2c}$ such that
\begin{equation}
\label{eq:Y_prop_related_to_X_3}
\X_{1 \to j} = \Y_{1 \to j}
\quad \text{and} \quad
\text{$v$ is the last vertex of $S_{\Y}$} \;.  
\end{equation}
This cannot occur unless $v \in N_{y_{2c+1}}$.
Therefore, there are at most $m^{2c-j}$ such $\Y$.

For $v$ to be in $Tail(j,S_{\Y})$, we must have $\Y$ satisfying one of \cref{eq:Y_prop_related_to_X_1}, \cref{eq:Y_prop_related_to_X_2}, or \cref{eq:Y_prop_related_to_X_3} for some value of $i$.
Summing over choices of $i$ gives us that
\begin{align}
& \Bigl| \left\{ \Y \in \{1\} \times [m]^{2c} \mid j = \max\{i \;:\; \text{$i$ is odd},\; \X_{1 \to i} = \Y_{1 \to i}\} \text{ and } v \in Tail(j, S_{\Y}) \right\} \Bigr| \notag \\
& \qquad \leq (2c+1) \cdot m^{2c-j} \,.  \notag  
\end{align}
\end{enumerate}

% In all cases, the bound holds.
This completes the proof of the lemma.
\end{proof}

\begin{restatable}{mylemma}{countYdenominatorseparation}
\label{lem:count_Y_denominator_separation}
Let $j \in [2c]$ and $\X \in \{1\} \times [m]^{2c}$ be an arbitrary good sequence of indices. 
   Then %the number of $\Y \in \{1\} \times [m]^{2c}$ such that $\Y$ is good and $j = \max\{i:\X_{1 \to i} = \Y_{1 \to i}\}$ is
    \[
     \Bigl| \Y \in \{1\} \times [m]^{2c} \mid \Y \; \text{is good and } j = \max\{i:\text{$i$ is odd},\X_{1 \to i} = \Y_{1 \to i}\} \Bigr|  =  (m-j-1) \cdot \prod_{i=j+2}^{2c+1} (m-i+1) \,.
    \]
\end{restatable}
% \countYdenominatorseparation*
\begin{proof}
Let $\Y = (y_1, \ldots, y_{2c+1}) \in [m]^{2c+1}$ be such that $x_1 = y_1 = 1$,  $j = \max\{i:\text{$i$ is odd},\X_{1 \to i} = \Y_{1 \to i}\}$, and $\vec{y}$ is good..
Recall that for $\Y$ to be good, each index in $(y_1, \ldots, y_{2c+1})$ has to be distinct.

    We will count by choosing each index of $\Y$ in order from $y_1$ to $y_{2c+1}$.
    There is only one choice for each of $y_1, \ldots, y_j$ since we need $\X_{1 \to j} = \Y_{1 \to j}$.
    
    Consider the number of possible indices for $y_{j+1}$:
    \begin{description}
        \item[$\bullet$] $j$ indices already appeared in $\{y_1, \ldots, y_j\}$ so they cannot be reused, otherwise $\Y$ becomes bad.
        \item[$\bullet$] $j = \max\{i:\text{$i$ is odd},\X_{1 \to i} = \Y_{1 \to i}\}$ means $x_{j+1} \neq y_{j+1}$, so $x_{j+1}$ cannot be used either.
        % we demand that $x_{j+1} \neq y_{j+1}$ as a sufficient (though perhaps not necessary) requirement. nick{This is actually necessary. Yes, $x_{j+1} \ne y_{j+1}$ is a new constraint since $\X$ is good.} davin{sorry, i just copied this from the old version. oops.}
    \end{description}
    Thus, there are $m-j-1$ choices of indices for $y_{j+1}$.
    % There are $n-j-1$ choices of vertices for $\Y_{j+1}$, since of the $n$ vertices, $j$ of them have already been used and $j = \max\{i:\X_{1 \to i} = \Y_{1 \to i}\}$ implies that $\X_{j+1} \neq \Y_{j+1}$.
    
    For all $i > j+1$, there are $m-i+1$ choices of indices for $y_i$ since $i-1$ options have already been used.
    Thus the final count is
$(m-j-1) \cdot \prod_{i=j+2}^{2c+1} (m-i+1)$, as required.
\end{proof}

\begin{restatable}{mylemma}{Mlargeseparation}
\label{lem:M_large_separation}
If $F_1 \in \mathcal{X}$ is good and $c \ge 1$, then 
$ \; 
% M(g_{\X,b_1})
\sum_{F_2 \in \cX} r(F_1,F_2) \geq \frac{1}{2e} \cdot (c+1) \cdot m^{2c+1}\,.
$
\end{restatable}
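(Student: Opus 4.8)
The plan is to mirror the proof of \cref{lem:M_large}, with $m$ playing the role of $n$ and the ``odd shared-prefix length'' playing the role of the milestone index. Since $F_1$ is good, write $F_1 = g_{\X,b_1}$ for some good $\X \in \{1\}\times[m]^{2c}$ and bit $b_1 \in \{0,1\}$. First I would use that $r(g_{\X,b_1},g_{\Y,b_2}) = 0$ whenever $b_2 = b_1$ or $\Y$ is bad (\cref{def:r_separation}) to reduce
\[
\sum_{F_2 \in \cX} r(F_1,F_2) \;=\; \sum_{\substack{\Y \in \{1\}\times[m]^{2c}\\ \Y \text{ good}}} r(g_{\X,b_1}, g_{\Y,1-b_1}),
\]
and then partition the good $\Y$ by $j = \max\{i : i \text{ odd},\ \X_{1\to i} = \Y_{1\to i}\}$, which ranges over the odd values $1,3,\dots,2c+1$. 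By \cref{def:r_separation}, every term in the $j$-th block equals $m^j$.

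Next I would count the good $\Y$ in each block. For $j = 2c+1$ the only good $\Y$ is $\X$ itself, contributing $m^{2c+1}$. For odd $j \le 2c-1$, \cref{lem:count_Y_denominator_separation} gives a count of $(m-j-1)\prod_{i=j+2}^{2c+1}(m-i+1)$, so the $j$-th block contributes $m^j(m-j-1)\prod_{i=j+2}^{2c+1}(m-i+1)$. This is a product of exactly $2c+1$ factors ($j$ copies of $m$, one copy of $m-j-1$, and $2c-j$ further factors $m-i+1$ with $i \le 2c+1$), each of which is at least $m-2c-1$; hence each such block contributes at least $(m-2c-1)^{2c+1}$. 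Since there are $c$ odd values of $j$ in $\{1,\dots,2c-1\}$, together with the $j = 2c+1$ block (which also contributes at least $(m-2c-1)^{2c+1}$), summing gives
\[
\sum_{F_2 \in \cX} r(F_1,F_2) \;\ge\; (c+1)(m-2c-1)^{2c+1}.
\]

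Finally I would convert this to the stated form. Writing $(m-2c-1)^{2c+1} = m^{2c+1}\bigl(1 - \tfrac{2c+1}{m}\bigr)^{2c+1}$ and using $c = \lfloor\sqrt{m}/2 - 1/2\rfloor$, so that $2c+1 \le \sqrt{m}$ and hence $\tfrac{2c+1}{m} \le \tfrac{1}{2c+1}$, we obtain $\bigl(1 - \tfrac{2c+1}{m}\bigr)^{2c+1} \ge \bigl(1 - \tfrac{1}{2c+1}\bigr)^{2c+1} \ge \tfrac{1}{2e}$, where the last inequality uses $2c+1 \ge 3$ (because $c \ge 1$) together with the monotonicity of $(1-1/k)^k$ for $k \ge 2$. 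This yields $\sum_{F_2 \in \cX} r(F_1,F_2) \ge \tfrac{1}{2e}(c+1)m^{2c+1}$, as required. I expect the only delicate point to be the bookkeeping around the odd-index convention: checking that the block decomposition by $j = \max\{i : i \text{ odd},\ \X_{1\to i} = \Y_{1\to i}\}$ correctly partitions the good $\Y$ (in particular that $j$ is always odd and that $j = 2c+1$ forces $\Y = \X$ among good sequences), that each block's factor count is exactly $2c+1$, and that the bound $2c+1 \le \sqrt{m}$ is applied with the correct choice of $c$; the rest is a routine repetition of the congestion argument.
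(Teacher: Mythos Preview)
Your proposal is correct and follows essentially the same approach as the paper's own proof: reduce to good $\Y$ with opposite bit, partition by the maximum odd shared-prefix index $j$, count via \cref{lem:count_Y_denominator_separation}, lower bound each of the $c+1$ blocks by $(m-2c-1)^{2c+1}$, and finish with $2c+1 \le \sqrt{m}$ and $(1-1/k)^k \ge 1/(2e)$ for $k\ge 3$. The bookkeeping points you flag (odd-index partition, $j=2c+1$ forcing $\Y=\X$, the factor count being exactly $2c+1$) are exactly the ones the paper handles, and your treatment of them is sound.
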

% \Mlargeseparation*
\begin{proof}
Since $F_1 \in \cX$, there exists a sequence  
 $\X \in \{1\} \times [m]^{2c}$ and a bit  $b_1 \in \{0,1\}$ such that $F_1 = g_{\vec{x}, b_1}$. Then we have the following chain of identities:
\begin{align}
\sum_{F_2 \in \cX} r(F_1, F_2) & =
\sum_{F_2 \in \cX} r(g_{\X,b_1}, F_2) \notag \\
&= \sum_{\substack{\Y \in \{1\} \times [m]^{2c},\; b_2 \in \{0,1\}}} r(g_{\X,b_1}, g_{\Y,b_2}) \explain{By definition of the set $\mathcal{X}$}\\
&= \sum_{\substack{\Y \in \{1\} \times [m]^{2c}}} r(g_{\X,b_1}, g_{\Y,1-b_1}) \explain{Since $r(g_{\X,b_1}, g_{\Y,b_1}) = 0$}\\
&= \sum_{j=1}^{2c+1} \sum_{\substack{\Y \in \{1\} \times [m]^{2c}\\ j = \max\{i \;:\; \text{$i$ is odd}, \X_{1 \to i} = \Y_{1 \to i}\}}} r(g_{\X,b_1}, g_{\Y,1-b_1}) \,.
 \label{eq:denominator_M_to_sum_nj_0_sep} %\explain{Partitioning by index $j$}\\
 \end{align}
Since $r(g_{\X,b_1}, g_{\Y,1-b_1}) = 0$ if $\vec{y}$ is bad, we have 
 \begin{align}
\sum_{j=1}^{2c+1} \sum_{\substack{\Y \in \{1\} \times [m]^{2c}:\\ j = \max\{i \;:\; \text{$i$ is odd}, \X_{1 \to i} = \Y_{1 \to i}\}}} r(g_{\X,b_1}, g_{\Y,1-b_1}) &= \sum_{j=1}^{2c+1} \sum_{\substack{\Y \in \{1\} \times [m]^{2c}:\\ j = \max\{i \;:\; \text{$i$ is odd}, \X_{1 \to i} = \Y_{1 \to i}\}\\ \Y \text{ is good } }} r(g_{\X,b_1}, g_{\Y,1-b_1})  \,. \label{eq:denominator_M_to_sum_nj_1_sep}
\end{align}
Combining \cref{eq:denominator_M_to_sum_nj_0_sep}  and \cref{eq:denominator_M_to_sum_nj_1_sep}, the last sum in  \cref{eq:denominator_M_to_sum_nj_1_sep} can be written as:
\begin{align}
%
%if $\Y$ is bad, then $g_{\Y,1-b_1}) \notin \cX$}\\
\sum_{F_2 \in \cX} r(g_{\X,b_1}, F_2)= \sum_{j=1}^{2c+1} \sum_{\substack{\Y \in \{1\} \times [m]^{2c}:\\ j = \max\{i \;:\; \text{$i$ is odd}, \X_{1 \to i} = \Y_{1 \to i}\}\\ \Y \text{ is good } }} r(g_{\X,b_1}, g_{\Y,1-b_1})\,.  \label{eq:denominator_M_to_sum_nj_2_sep}
\end{align}

Substituting the definition of $r$ in \cref{eq:denominator_M_to_sum_nj_2_sep}, we get 
\begin{equation}
\label{eq:denominator_M_to_sum_nj_sep}
\sum_{F_2 \in \cX} r(g_{\X,b_1}, F_2)
= \sum_{j=1}^{2c+1} \sum_{\substack{\Y \in \{1\} \times [m]^{2c}:\\ j = \max\{i \;:\; \text{$i$ is odd}, \X_{1 \to i} = \Y_{1 \to i}\}\\ \Y \text{ is good } }} m^j\,.
\end{equation}

From here, we only need to count the number of such $\Y$.
There is exactly one good sequence $\Y \in \{1\} \times [m]^{2c}$ such that $\X_{1 \to 2c+1} = \Y_{1 \to 2c+1}$, namely $\Y = \X$.
Therefore,
\begin{align*}
\sum_{F_2 \in \cX} r(g_{\X,b_1}, F_2)
= &\; \sum_{j=1}^{2c+1} \sum_{\substack{\Y \in \{1\} \times [m]^{2c}:\\ \max\{i \;:\; \text{$i$ is odd}, \X_{1 \to i} = \Y_{1 \to i}\} = j\\ \Y \text{ is good } }} m^j \explain{By \cref{eq:denominator_M_to_sum_nj_sep}}\\
= &\; m^{2c+1} + \sum_{j=1}^{2c} \sum_{\substack{\Y \in \{1\} \times [m]^{2c}:\\ \max\{i \;:\; \text{$i$ is odd}, \X_{1 \to i} = \Y_{1 \to i}\} = j\\ \Y \text{ is good } }} m^j\\
= &\; m^{2c+1} + \sum_{j=1}^{2c} m^j \cdot \mathbbm{1}_{\{\text{$j$ is odd}\}} \cdot (m-j-1) \cdot \prod_{i=j+2}^{2c+1} (m-i+1) \explain{By \cref{lem:count_Y_denominator_separation}}\\
\geq &\; (c+1) \cdot m^{2c+1} \cdot \left( 1-\frac{2c+1}{m} \right)^{2c+1}\\
\geq &\; (c+1) \cdot m^{2c+1} \cdot \left( 1-\frac{1}{2c+1} \right)^{2c+1} \explain{Since $2c+1 \leq \sqrt{m}$.}\\
\geq &\; \frac{1}{2e} \cdot (c+1) \cdot m^{2c+1} \explain{Since $2c+1 \geq 3$, which is more than sufficient. }
\end{align*}
% davin{I actually don't get this $s/\Delta \geq 162$ condition...}
% % simina{Nicholas is there an error above? See Davin's question}
% % nick{$s/\Delta \ge 162$ is an assumption that we need to assert somewhere. Using that assumption, we can conclude $m \ge 9$ and continue the proof.}
% nick{I added the assumption $s/\Delta \ge 162$ to the lemma statement, and in the theorem proof we now assume it. This is free since it's an asymptotic result.}
% davin{I'm not confused why $s/\Delta \geq 162$ is even mentioned. The inequality doesn't involve $s$ nor $\Delta$?}
% nick{$s/\Delta \ge 162 \implies m \ge 9 \implies 2c+1 \ge 3$.}
% davin{Then, I think that we should just write $2c+1 \ge 3$ in the lemma statement instead of $s/\Delta \ge 162$. We should instead only mention $s/\Delta \ge 162 \implies 2c+1 \ge 3$ right before we invoke this lemma.}
Since $F_1 = g_{\vec{x},b_1}$, this is the required inequality, which completes the proof.
\end{proof}

\end{document}